\crefname{hypothesis}{Hypothesis}{Hypotheses}
\title{On the Computation of Schrijver's Kernels}
\author{Vincent Delecroix\thanks{Univ. Bordeaux, CNRS, Bordeaux INP, LaBRI, UMR 5800, F-33400 Talence, France. Email: \email{vincent.delecroix@u-bordeaux.fr}, \email{oscar.fontaine@u-bordeaux.fr}}
  \and Oscar Fontaine\footnotemark[1]    
  \and Francis Lazarus\thanks{G-SCOP/Institut Fourier, CNRS, Université Grenoble Alpes, France. Email: \email{francis.lazarus@grenoble-inp.fr}}}
\date{}
\newcommand{\Z}{\mathbb{Z}}
\newcommand{\N}{\mathbb{N}}
\newcommand{\R}{\mathbb{R}}
\renewcommand{\le}{\leqslant}
\renewcommand{\ge}{\geqslant}
\newcommand{\multiIndex}[1]{\boldsymbol{\underline #1}}
\definecolor{definecolor}{rgb}{0,0.1,0.55}
\def\define#1{\textbf{\textcolor{definecolor}{#1}}}
\begin{document}

\maketitle

\begin{abstract}
  The geometry of a graph $G$ embedded on a closed oriented surface $S$ can be probed by counting the intersections of $G$ with closed curves on $S$. Of special interest is the map $c \mapsto \mu_G(c)$  counting the minimum number of intersections between $G$ and any curve freely homotopic to a given curve $c$.
  Schrijver [\textit{On the uniqueness of kernels}, 1992] calls $G$ a \emph{kernel} if for any proper graph minor $H$ of $G$ we have $\mu_H < \mu_G$. Hence, $G$ admits a minor $H$ which is a kernel and such that $\mu_G = \mu_H$. We show how to compute such a \emph{minor kernel} of $G$ in  $O(n^3 \log n)$ time where $n$ is the number of edges of $G$, and $g\ge 2$ is the genus of $S$. Our algorithm leverages a tight bound on the size of minimal bigons in a system of closed curves. It also relies on several subroutines of independent interest including the computation of the area enclosed by a curve and a test of simplicity for the lift of a curve in the universal covering of $S$.

  As a consequence of our minor kernel algorithm and a recent result of Dubois [\textit{Making multicurves cross minimally on surfaces}, 2024], after a preprocessing that takes $O(n^3 \log n)$ time and $O(n)$ space, we are able to compute $\mu_G(c)$  in $O(g (n + \ell) \log(n + \ell))$ time given any closed walk $c$ with $\ell$ edges. The state-of-the-art algorithm by Colin de Verdière and Erickson [\textit{Tightening non-simple paths and cycles on surfaces}, 2010] would avoid constructing a kernel but would lead to a computation of $\mu_G(c)$  in $O(g n \ell \log(n \ell))$ time (with a preprocessing that takes $O(gn\log n)$ time and $O(gn)$ space). Another consequence of the computation of minor kernels is the ability to decide in polynomial time whether two graph minors $H$ and $H'$ of $G$ satisfy $\mu_H = \mu_{H'}$. 
\end{abstract}

\section{Introduction}
\subsection{Minor Kernels of Embedded Graphs}
Let $G$ be a graph embedded on a closed surface $S$. For a closed curve $c$ in $S$, we let $\mu_G(c)$ count the minimum number of intersections between $G$ and any curve freely homotopic to $c$. Schrijver~\cite{Sch92} calls $G$ a \define{kernel} if it is minor minimal among all graphs determining the same map $c\mapsto \mu_G(c)$. In other words, contracting or deleting any edge of a kernel $G$ changes $\mu_G(c)$ for at least one curve $c$.

In a beautiful series of papers, Schrijver~\cite{Sch89,Sch91,Sch92}, and then de Graff and Schrijver~\cite{ds-chscc-95,gs-dgs-97},
studied the maps $\mu_G$ and kernels in connection with systems of curves on $S$. In particular, they established that $\mu_G(c) = \frac{1}{2} \cdot \nu_{\mathcal C}(c)$ where $\mathcal C$ is a system of closed curves depending on $G$ (but independent of $c$) and $\nu_{\mathcal C}(c)$ counts the minimum number of crossings between $\mathcal C$ and any curve freely homotopic to $c$ in general position with respect to $\mathcal C$. Here, $\mathcal C$ can be viewed as a graph (its subdivision is irrelevant), so that $\nu_{\mathcal{C}}(c)$ can be interpreted as the \define{crossing number} of the system of closed curves $\mathcal C$ and $c$. This lead Schrijver~\cite{Sch92} to make an explicit correspondence between kernels and systems of curves in minimal position in $S$. From this correspondence it follows that kernels are essentially unique: if $G$ and $H$ are two kernels on $S$ with $\mu_G=\mu_H$, then $H$ can be obtained from $G$ by a series of $\Delta Y$-exchanges (see~\cite{Sch92} for a precise definition) possibly taking the (surface) dual of the resulting graph.

The uniqueness of kernels is reminiscent of an analogous property stating that the geometry of Riemannian surfaces with negative scalar curvature is fully determined by their \define{(marked) length spectrum} proven by Otal~\cite{o-smlsc-90}. Here, the length spectrum associates to every closed curve the infimum of the Riemannian length of any curve in its free homotopy class. The function $\mu_G$ associated to an embedded graph $G$ can actually be seen as a length spectrum with respect to a discrete metric. The appropriate geometric notion that captures both the continuous and discrete settings is the definition of \emph{geodesic current} from Bonahon~\cite{b-tgotsvgc-1988}. In accordance with this terminology we will also call $\mu_G$ the \define{$\mu$-spectrum of $G$}.

For an embedded graph $G$, we say that $H$ is a \define{minor kernel} of $G$ if $H$ is a minor of $G$, a kernel and has the same $\mu$-spectrum as $G$. The effective computation of minor kernels is desirable for at least two main reasons. Thanks to the uniqueness of kernels, it allows to decide efficiently if two graphs have the same $\mu$-spectrum (see Theorem~\ref{thm:H-G-equivalence} below) and leads to an efficient computation of $\mu_G$ (see Theorem~\ref{thm:length-spectrum-computation} below). The main purpose of this paper is thus to propose an efficient algorithm that given an embedded graph $G$ computes one of its minor kernels. We assume for all our complexity analyses the standard RAM model of computation with words of logarithmic size with respect to the input. More precisely, we show the following:

\begin{theorem}\label{thm:minor-kernel}
Let $G$ be a graph with $n$ edges cellularly embedded in a closed oriented surface $S$ of genus $g\ge 2$. A minor kernel of $G$ can be computed in $O(n^3 \log n)$ time.
\end{theorem}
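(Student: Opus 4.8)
The plan is to exploit the Schrijver correspondence between minor kernels of $G$ and systems of curves in minimal position, which is recalled in the introduction: if $\mathcal C$ is a system of closed curves with $\mu_G(c) = \tfrac12 \nu_{\mathcal C}(c)$ for all $c$, then $G$ is a kernel precisely when $\mathcal C$ is in minimal position (no monogons, bigons, or reducible configurations), and conversely a minor kernel is obtained by simplifying $\mathcal C$ until it reaches minimal position and translating back to a graph. So the algorithm splits into three conceptual phases: (i) build the system $\mathcal C$ associated to $G$; (ii) put $\mathcal C$ in minimal position by iteratively removing monogons and bigons; (iii) reconstruct the minor kernel $H$ from the simplified system, tracking which edge contractions/deletions of $G$ this corresponds to.

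**The core loop and the role of minimal bigons.**

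First I would construct $\mathcal C$ from the medial-type construction underlying Schrijver's identity; this is combinatorial and costs $O(n)$. The heart of the algorithm is phase (ii). Each simplification step needs to (a) detect whether $\mathcal C$ is already in minimal position, and if not (b) locate a monogon or an \emph{innermost} (minimal) bigon, and (c) perform the homotopy that cancels it, updating the combinatorial map of $\mathcal C$. For detection I would invoke the subroutines advertised in the abstract: testing simplicity of the lift of a curve in the universal cover of $S$ detects monogons and self-bigons, while for bigons between two components one compares lifts. The crucial quantitative input is the \emph{tight bound on the size of minimal bigons} mentioned in the abstract: a minimal bigon in a system with $m$ crossings has boundary complexity $O(m)$ (presumably $O(g m)$ or a linear bound independent of $g$), which means each search for a bigon can be localized and each cancellation changes the system by only $O(m)$ local modifications. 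Since $m = O(n)$ throughout (crossings only decrease), and each cancellation strictly decreases the number of crossings, there are $O(n)$ iterations, each costing $O(n \log n)$ by the simplicity-testing and area-computing subroutines, giving $O(n^2 \log n)$ for phase (ii)\,--- and I would need the extra factor of $n$ to come from either the cost of recomputing/searching over all $O(n)$ pairs of arcs per iteration or from the cost of the universal-cover simplicity test itself, landing at $O(n^3 \log n)$.

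**Reconstruction and correctness.**

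For phase (iii), each bigon/monogon cancellation on $\mathcal C$ corresponds, under the Schrijver dictionary, to an edge contraction or deletion on the current graph, so I would maintain this correspondence incrementally: every homotopy move on $\mathcal C$ is shadowed by the matching minor operation on a graph $G = G_0 \to G_1 \to \cdots \to G_k = H$. Correctness then has two parts. First, $\mu$-spectrum preservation: removing a monogon or bigon of $\mathcal C$ does not change $\nu_{\mathcal C}(\,\cdot\,)$ (classical), hence $\mu_{G_{i+1}} = \mu_{G_i}$. Second, minimality: when the loop terminates, $\mathcal C$ has no monogon and no bigon, so by the characterization of minimal position on a surface of genus $\ge 2$ (where absence of monogons and bigons implies globally minimal position, by the work of Hass--Scott / de Graaf--Schrijver cited), $H$ is a kernel. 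Together these give that $H$ is a minor kernel of $G$.

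**Main obstacle.**

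I expect the principal difficulty to be phase (ii), specifically making the detect-and-cancel step both \emph{correct} and \emph{efficient}: a naive homotopy that removes one bigon may create new monogons or bigons, or may fail to be realizable without increasing crossings elsewhere, so one must cancel an \emph{innermost} bigon and argue this never increases the crossing count\,--- this is exactly where the tight size bound on minimal bigons is needed, both to guarantee such an innermost bigon can be found by a bounded search and to bound the combinatorial cost of the move. The simplicity test in the universal cover and the enclosed-area computation are the tools that make "find an innermost bigon" algorithmic, but wiring them together so that the total iteration count is $O(n)$ and each iteration is $O(n^2 \log n)$ is the delicate part; everything before (building $\mathcal C$) and after (reading off $H$) is bookkeeping by comparison.
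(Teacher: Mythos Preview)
Your overall architecture matches the paper: pass to the medial graph $M(G)$ viewed as a system of closed curves $\mathcal C$, repeatedly simplify $\mathcal C$ until it is a tight system of primitive curves, and track the corresponding minor operations on $G$. The complexity bookkeeping ($O(n)$ iterations, each removing one crossing) is also right. But two points are genuine gaps, not just vagueness.

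\textbf{The simplification step is a smoothing, not a homotopy.} You describe each step as ``perform the homotopy that cancels'' the bigon, and worry that such a homotopy ``may create new monogons or bigons, or may fail to be realizable without increasing crossings elsewhere''. That is the wrong operation. A Reidemeister-type homotopy move on $\mathcal C$ does \emph{not} correspond to a minor operation on $G$; after such a move the resulting $4$-regular graph need not be the medial graph of any minor of $G$. What the paper does instead is \emph{smooth} one corner of an empty monogon or a minimal bigon: delete that vertex of $\mathcal C$ and reconnect the four strands so that the two sides of the bigon are separated there. Each smoothing is an $O(1)$ local modification, it removes exactly one crossing, and it corresponds exactly to contracting or deleting one edge of $G$. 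The nontrivial content is Schrijver's theorem (Proposition~\ref{Smooth spectrum} in the paper) that smoothing a corner of a \emph{minimal} bigon preserves $\nu_{\mathcal C}$; your concern about side effects disappears once you use the right operation.

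\textbf{The area subroutine is what certifies minimality.} You mention the area computation but never say what it is for. The search over all $4n$ sectors, each extended up to length $4n$, produces a collection of \emph{balanced} bigons (two sides of equal length), not minimal ones. The key observation that makes the algorithm correct is that a balanced bigon of \emph{smallest area} among those found is necessarily minimal: were it not, it would strictly contain a minimal bigon, which is itself balanced and (by Theorem~\ref{thm:a-minimal-bigon-has-linear-length}) of length at most $8n$, hence also in the collection, contradicting area-minimality. Without this step you have no way to guarantee that the bigon you smooth is minimal, and Schrijver's spectrum-preservation result does not apply to arbitrary bigons.

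Finally, your accounting for the per-iteration cost is right in outcome but shaky in the derivation: the $O(n^2\log n)$ comes from $O(n)$ sectors times $O(n\log n)$ per sector (an incremental homotopy test over $\le 4n$ steps, one simplicity test on a curve of length $O(n)$, and one area computation on that curve), not from ``recomputing over all $O(n)$ pairs of arcs''.
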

Remark that when $S$ is a sphere, all curves are contractible so that the only possible kernel is reduced to a vertex with no edge. Also, our approach to Theorem~\ref{thm:minor-kernel} can not be extended to the torus case. Nevertheless, using the recent work by Delecroix et al.~\cite{DELY24}, it is possible to compute a kernel (which is not necessarily a minor of $G$) with the same $\mu$-spectrum as $G$  in $O(n^2\log\log n)$ time.

We now state two applications of Theorem~\ref{thm:minor-kernel}.
\begin{theorem}
\label{thm:H-G-equivalence}
Let $G$ be a graph with $n$ edges cellularly embedded in a closed oriented surface $S$ of genus $g\ge 2$. Let $H$ be a graph minor of $G$ given as a list of minor operations on the edges of $G$. Then, we can decide whether $G$ and $H$ have the same $\mu$-spectrum in $O(n^3 \log n)$ time.
\end{theorem}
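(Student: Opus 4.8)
The plan is to reduce the comparison of the $\mu$-spectra of $G$ and $H$ to a comparison of two kernels, and then to recognize those kernels through their associated systems of closed curves, where the relevant invariant becomes transparent.

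First, apply Theorem~\ref{thm:minor-kernel} to compute a minor kernel $K_G$ of $G$ in $O(n^3\log n)$ time, so that $\mu_{K_G}=\mu_G$. Then perform the given list of minor operations on $G$ to obtain $H$ in $O(n)$ time; note that $H$ has at most $n$ edges. The embedding of $H$ in $S$ inherited from that of $G$ may fail to be cellular; computing in linear time the surface $S_H$ in which $H$ embeds cellularly (cap with disks the boundary circles of a regular neighbourhood of $H$, and drop the edgeless closed components), an Euler-characteristic count gives $\operatorname{genus}(S_H)\le g$, with equality exactly when $H$ is already cellularly embedded in $S$. If $\operatorname{genus}(S_H)<g$ we return ``no'': on the one hand $\mu_G$ is unbounded because $G$ is cellular and $g\ge2$ (a curve crossing $G$ a bounded number of times represents an element of bounded word length in $\pi_1(S)$), while on the other hand the topology lost under the capping lets us push some essential closed curve of $S$ off $H$, so that $\mu_H$ vanishes on it; hence the spectra differ. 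Otherwise $H$ is cellularly embedded in $S$ and, having at most $n$ edges, it admits by Theorem~\ref{thm:minor-kernel} a minor kernel $K_H$ computed in $O(n^3\log n)$ time, with $\mu_{K_H}=\mu_H$.

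At this point $\mu_G=\mu_H$ if and only if $\mu_{K_G}=\mu_{K_H}$, so it remains to decide whether two kernels of $S$ have the same $\mu$-spectrum. From a kernel $K$ one reads off, in linear time, Schrijver's associated system $\mathcal{C}_K$ of pairwise non-homotopic, primitive, non-contractible closed curves in minimal position, with $\mu_K=\frac12\,\nu_{\mathcal{C}_K}$. By the uniqueness of kernels, two kernels with equal $\mu$-spectrum differ by finitely many $\Delta Y$-exchanges and surface dualizations, and neither operation alters the multiset of free homotopy classes of the associated curves (a $\Delta Y$-exchange is a triple-point slide, and the medial construction underlying $\mathcal{C}_K$ is dual-invariant). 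Conversely, for $\mathcal{C}_K$ in minimal position $\nu_{\mathcal{C}_K}(c)=\sum_{\gamma}i([\gamma],[c])$, the sum running over the components $\gamma$ of $\mathcal{C}_K$ and $i$ denoting the geometric intersection number of free homotopy classes; since intersection numbers with closed curves separate multicurve currents, $\nu_{\mathcal{C}_K}$---hence $\mu_K$---recovers the multiset $\{[\gamma]:\gamma\in\mathcal{C}_K\}$. Thus $\mu_{K_G}=\mu_{K_H}$ if and only if $\mathcal{C}_{K_G}$ and $\mathcal{C}_{K_H}$ carry the same multiset of free homotopy classes.

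It remains to test this last condition effectively. Realizing $\mathcal{C}_{K_G}$ and $\mathcal{C}_{K_H}$ as explicit curve systems in $S$ relative to one fixed cellular structure (the embeddings of $K_G$ and $K_H$ in $S$ being available, with total complexity $O(n)$), fix a fundamental polygon of the genus-$g$ surface and compute, via Dehn's algorithm in the surface group, the canonical reduced cyclic word of each component of each system, in time linear in its combinatorial length; both systems have total length $O(n)$, so this costs $O(n)$. Comparing the two resulting multisets of cyclic words---say by sorting---takes $O(n^2\log n)$ time, and the whole procedure is dominated by the two kernel computations, i.e.\ $O(n^3\log n)$. I expect the crux to be the middle step: faithfully translating Schrijver's $\Delta Y$-and-duality classification into the statement that the multiset of free homotopy classes of the associated curve system is a complete invariant of the $\mu$-spectrum, together with verifying that $\mathcal{C}_K$ carries no degeneracy (proper powers, repeated or contractible components) that would obstruct reading off and comparing this multiset in polynomial time. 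A direct attempt to test $\Delta Y$-and-duality equivalence of $K_G$ and $K_H$ seems harder, as bounding the number of moves required looks delicate.
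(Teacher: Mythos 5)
Your proposal follows essentially the same strategy as the paper: compute minor kernels via Theorem~\ref{thm:minor-kernel}, pass to the associated tight systems of primitive curves via Schrijver's correspondence, and decide equality of $\mu$-spectra by testing whether the two curve systems are freely homotopic up to orientation, comparing canonical forms of the component curves. A few genuine points of divergence are worth noting. You explicitly guard against $H$ failing to remain cellularly embedded in $S$ and handle that case by a genus comparison; the paper's proof of Theorem~\ref{thm:H-H'-equivalence} applies Theorem~\ref{thm:minor-kernel} to $H$ without addressing this possibility, so your check is a useful addition rather than a detour. Where you re-derive the key reduction (that the $\nu$-spectrum of a tight system of primitive curves determines the multiset of free homotopy classes of its components) through a geodesic-current argument, the paper simply cites Proposition~7 of~\cite{Sch92}; both are valid, though your version silently uses that for a tight system the crossing number with any class equals the sum of pairwise geometric intersection numbers, which is part of what tightness buys and is worth stating. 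Finally, for the canonical-form comparison the paper rewrites the curve systems as closed walks in $M(G)$ (a step of cost $O(n^2)$, since each edge of the kernel's medial may expand to a walk of length $O(n)$ in $M(G)$ — your estimate of ``total length $O(n)$'' understates this, though the final $O(n^3\log n)$ bound is unaffected), computes canonical representatives in a system of quads, and compares multisets with a prefix tree; you use Dehn's algorithm on a fundamental polygon and sorting, which is equivalent in spirit and also within budget.
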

A similar but different question is to decide whether $\mu_G = \mu_H$ for two graphs $G$ and $H$ embedded on the same surface $S$. However doing so requires a way to compute an homotopy equivalence between $G$ and $H$. This situation is discussed in more details in Section~\ref{ssec:spectrum-equality-test} 
 
\begin{theorem}\label{thm:length-spectrum-computation}
Let $G$ be a graph with $n$ edges cellularly embedded in a closed oriented surface $S$ of genus $g\ge 2$. After a preprocessing in $O(n^3 \log n)$ time, we can compute $\mu_G(c)$ for every closed walk $c$ of length $\ell$ in the dual of $G$ in $O(g (n+\ell) \log(n+\ell))$ time.
\end{theorem}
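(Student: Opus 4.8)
\emph{Overview.} The plan is to reduce the computation of the $\mu$-spectrum to an intersection-number computation against a crossing-minimal system of curves: Theorem~\ref{thm:minor-kernel} produces that system during preprocessing, and the recent algorithm of Dubois answers the queries against it.

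\emph{Preprocessing.} First I would apply Theorem~\ref{thm:minor-kernel} to compute a minor kernel $H$ of $G$ in $O(n^3 \log n)$ time. By definition of a minor kernel, $H$ is a minor of $G$ with $\mu_G(c) = \mu_H(c)$ for every closed curve $c$, and $H$ has at most $n$ edges. Since $H$ is a kernel, Schrijver's explicit correspondence~\cite{Sch92} associates to it a system $\mathcal C$ of closed curves on $S$ which is in minimal position (no self- or pairwise bigons), has combinatorial complexity $O(n)$, and satisfies $\mu_H(c) = \tfrac{1}{2}\,\nu_{\mathcal C}(c)$; building $\mathcal C$ from $H$ is a purely combinatorial step of cost $O(n\log n)$. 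Next I would fix a single combinatorial model of the surface $S$ recording simultaneously $G$, its dual $G^*$, and the curves of $\mathcal C$, together with the overlay data describing how $\mathcal C$ is positioned with respect to $G$, and run Dubois's preprocessing on the crossing-minimal multicurve $\mathcal C$. All of this is dominated by the $O(n^3\log n)$ term, and the data structures kept for queries use $O(n)$ space.

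\emph{Query.} Given a closed walk $c$ of length $\ell$ in $G^*$, I would realize it on $S$ as a closed curve crossing $G$ exactly $\ell$ times; pushing this curve through the overlay recorded during preprocessing yields a representative of the same free homotopy class in the fixed model of $S$, of combinatorial complexity $O(n+\ell)$. Handing this representative and the preprocessed multicurve $\mathcal C$ to Dubois's algorithm computes $\nu_{\mathcal C}(c)$ in $O(g(n+\ell)\log(n+\ell))$ time, and I would output $\tfrac{1}{2}\,\nu_{\mathcal C}(c) = \mu_H(c) = \mu_G(c)$.

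\emph{Main obstacle.} The crux is the interface between the two combinatorial coordinate systems: the query is given as a walk in $G^*$, whereas Dubois's algorithm works relative to the fixed model carrying $\mathcal C$. I expect the careful part to be performing this translation in $O(n+\ell)$ time without inflating the complexity of the query curve --- for instance by first simplifying $c$ against $H$ to a walk of length at most $\ell$ in $H^*$ and only then passing to Dubois's model, the additive $O(n)$ per query absorbing the relative position of $\mathcal C$ and $H$. One must also verify that Dubois's hypotheses and running time apply verbatim to $\mathcal C$: that a system of closed curves, possibly with parallel copies, already crossing-minimal and of complexity $N = O(n)$, is a legal input, and that a query runs in $O(g(N+L)\log(N+L))$ time with $L = O(n+\ell)$ the complexity of the query curve. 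The remaining points --- that the walk-to-curve realization has exactly $\ell$ crossings with $G$, that $\mu_G = \mu_H$ holds as an identity of maps (immediate from the definition of a minor kernel), and the routine complexity bookkeeping --- are straightforward.
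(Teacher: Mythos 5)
Your proposal follows the same high-level route as the paper: compute a minor kernel $K$ during preprocessing, use Schrijver's identity $2\mu_G = 2\mu_K = \nu_{M(K)}$ to reduce the query to a crossing-number computation against the tight system of curves $\mathcal C = M(K)$, and hand that to the Despré--Lazarus/Dubois algorithm. You also correctly single out the translation of the query walk into the combinatorial model carrying $\mathcal C$ as the crux. The gap is precisely there: you leave the translation as a sketch (``push through the overlay'', ``simplify $c$ against $H$ to a walk in $H^*$''), and the sketch is slightly off target because Dubois's algorithm wants the query expressed as a walk in $M(K)^*$, not in $H^*$, and it is not explained how the overlay or the ``simplification'' produces such a walk in $O(n+\ell)$ time while preserving the free homotopy class.

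The paper's resolution is more elementary and avoids overlays altogether: it works entirely in the medial-dual coordinate system. Each smoothing of $M(G)$ used to pass to $M(K)$ (at the corner of an empty monogon or minimal bigon) acts on the dual $M(G)^*$ by merging certain pairs or triples of edges into one, leaving the other edges untouched, and this edge-merging map preserves free homotopy classes of closed walks. Composing these maps over the $O(n)$ smoothings gives a table, computable in $O(n)$ time during preprocessing, sending each edge of $M(G)^*$ to the corresponding edge of $M(K)^*$. At query time one rewrites the walk in $G^*$ as a walk of length $O(\ell)$ in $M(G)^*$ and applies the table edge by edge, producing in $O(\ell)$ time a freely homotopic walk of length $O(\ell)$ in $M(K)^*$, which is exactly the format Dubois's algorithm accepts. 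This smoothing-induced, homotopy-preserving edge map on the medial dual is the concrete mechanism that your ``main obstacle'' paragraph acknowledges but does not supply.
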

It follows from previous work by Colin de Verdière and Erickson~\cite{CdVE10}
that $\mu_G(c)$ can be computed in $O\left(g n \ell \log(n \ell)\right)$ time
after a precomputation in $O(g  n \log ng)$ time and $O(gn)$ space. Our complexity improvement for the computation of $\mu_G(c)$ relies on
the precomputation of a minor kernel of $G$ in $O(n^3 \log n)$ time. We then take advantage on
Schrijver's correspondence between kernels and system of closed curves together with more recent results for computing crossing numbers of systems of closed curves by Despré and Lazarus~\cite{DL19} as improved by Dubois~\cite{Dub24}. Note, however, that~\cite{CdVE10} applies to a more general setting. See Section~\ref{ssec:relations-between-spectra} for details.

\subsection{Technical Overview and Article Organization}

There are in fact two natural ways of defining the length of a curve $c$ in terms of intersections with an embedded graph $G$ on a closed surface $S$. One may consider the minimum number of intersections between $G$ and \emph{any} curve homotopic to $c$ as for $\mu_G$. Another option is to restrict the set of admissible curves to be in \emph{general position} with respect to $G$, i.e. to be transverse to the edges of $G$ and to avoid its vertices. This second option induces another notion of length spectrum, which we denote by $\nu_G$.

As recalled in Section~\ref{sec:length-spectra}, the $\mu$-spectrum of a graph $G$ is equal to the $\nu$-spectrum (up to a constant factor 2) of its medial graph $M(G)$. The medial graph being 4-regular it can be decomposed into a system of closed curves whose crossings are the vertices of $M(G)$. The minor operations on $G$ then correspond to \emph{smoothings} of $M(G)$ viewed as a system of closed curves, which consist in deleting vertices and reconnecting their incident strands.
It appears to be more convenient to work with $\nu$-spectra and smoothings rather than with $\mu$-spectra and minors. Our proof of Theorem~\ref{thm:minor-kernel} relies on the simplification of a system of closed curves by smoothing it as much as possible without changing its $\nu$-spectrum. The resulting system of closed curves then corresponds to a minor kernel.
The main difficulty in computing a minor kernel of $G$ thus resides in the detection of smoothings of $M(G)$ that do not change the $\nu$-spectrum. To this end, we rely on the work of Schrijver~\cite{Sch91} implying that we can greedily smooth the corners of empty monogons and minimal bigons as defined in Section~\ref{sec:preliminaries}. Finding and smoothing empty monogons can easily be done in linear time. Finding minimal bigons then becomes our main goal. The proof that we can search for them efficiently lies at the core of our algorithm.

\bigskip

We first prove in Section~\ref{sec:length-bigon} that the length of minimal bigons must be linear in the number of edges of $G$ (see Theorem~\ref{thm:a-minimal-bigon-has-linear-length}). To do so, we associate with every minimal bigon a subgroup of the fundamental group of $S$. This subgroup has two generators corresponding to the sides of the bigon. As a subgroup of a surface group of genus $\ge 2$, it is necessarily isomorphic to either the trivial group, or an infinite cyclic group, or a free group of rank two. Each case requires a distinct proof. The trivial group case is straightforward. The infinite cyclic group case follows from the analysis of  (periodical) curve configurations in the plane and relies mostly on the Jordan curve theorem. The free group case is arguably the most technical one. The proof in this case proceeds by associating with each vertex on the boundary of the minimal bigon an element in the free group. Using the expression of these elements as words on the generators we argue that most of them are distinct and hence correspond to distinct vertices of $M(G)$. Since the number of vertices of $M(G)$ is the size of $G$, we finally deduce 
a linear bound on the length of minimal bigons.

\bigskip

We then develop in Section~\ref{sec:computation-area} an algorithm to compute the \emph{area} of a bigon, i.e. the number of faces circumscribed by the bigon, in time proportional to its boundary length (see Theorem~\ref{thm:linear-time-area-computation}). This area computation is a special case of an algorithm for computing the (signed) area circumscribed by the lift in the universal covering of $S$ of any contractible closed walk on $M(G)$. The algorithm relies on the standard reduction of embedded graphs to one-vertex and one-face graphs via the tree-co-tree decomposition of Eppstein~\cite{Epp02}. For these simpler graphs, we provide a discrete analogue of Stokes' theorem showing that the area we are interested in (a discrete 2-dimensional integral) is equal to a certain sum along the closed walk (a discrete 1-dimensional integral).

\bigskip

The last tool in our bigon detection algorithm is an efficient test for the simplicity of the lift of a curve of length $\ell$ in the universal covering of $S$. We  provide an $O(\ell \log(\ell+g))$ time algorithm for this task in Section~\ref{sec:simplicity-test} where $g$ is the genus of $S$. The idea beneath the algorithm is to compute geodesic representatives for \emph{all} the prefix paths of a pointed curve and observe that these geodesic representatives can be organized into a star-shaped domain.

\bigskip

The linear size bound of minimal bigons, the area computation for the interior of bigons, and the simplicity test in the universal covering are the three main ingredients of our minimal bigon detection algorithm that we develop in Section~\ref{sec:kernel-minor-algo}. The first step is to collect the balanced bigons of linear size in $M(G)$. Here, a bigon is balanced if it is bounded by two curves of the same length. To do so, we check for each vertex $v$ of $M(G)$ whether we can find a balanced bigon of linear size with corner $v$.
There is at most one such bigon per sector incident to $v$. Collecting these bigons requires $O(n \log n)$ time per sector and hence $O(n^2 \log n)$ time in total. By our linear size bound, this collection must contain a minimal bigon, if any. In order to identify one we rely on the simple observation that a balanced bigon with minimal area must be a minimal bigon. We thus compute the area of each bigon in the collection and find one with minimal area. This way, we are able to either certify that there is no bigon or find a minimal bigon in $O(n^2 \log n)$ time.

From the minimal bigon detection, it is  straightforward to compute a minor kernel as we explain in Section~\ref{sec:kernel-minor-algo}.
Namely, we pick an empty monogon or a minimal bigon, smooth one of its corners and repeat the operation until there is none left. This operation is repeated at most $n$ times (the number of vertices of $M(G)$). Hence, the total time for the computation of a minor kernel is $O(n^3 \log n)$.

\bigskip

We finally discuss in Section~\ref{sec:applications} how to use Theorem~\ref{thm:minor-kernel} to test the equality of $\mu$-spectra of a graph and one of its minors (Theorem~\ref{thm:H-G-equivalence}) and to compute $\mu_G$ (Theorem~\ref{thm:length-spectrum-computation}). 
Thanks to Schrijver's correspondence, the equality of spectra can be reduced to the problem of homotopy equivalence of systems of curves. The latter can be solved efficiently (see Theorem~\ref{thm:homotopy-system-of-curves}) using the homotopy test for curves from~\cite{LR12,EW13} (see also Section~\ref{sec:simplicity-test}) as well as standard algorithms on strings. The main cost of our algorithm for Theorem~\ref{thm:H-G-equivalence} is the kernel minor computation.
As we already mentioned, the computation of the $\mu$-spectrum is a direct application of our minor kernel computation and of the works in~\cite{DL19,Dub24} providing efficient algorithms to compute $\mu_G$ when $G$ is a kernel.

\section{Preliminaries}\label{sec:preliminaries}

We introduce in this section basic topological and combinatorial definitions that will be used throughout the text. For reference for topology of surfaces, see for example~\cite{Mas91}.

\paragraph{Combinatorial Surfaces}
Let $G=(V,E)$ be an undirected graph with vertex set $V$ and edge set $E$. We allow $G$ to have loops and multiple edges. A \define{combinatorial surface} $(S,G)$ is an embedding of $G$ on some closed oriented surface $S$ such that $S\setminus G$ is a collection of disjoint open disks. In this case, $G$ is \define{cellularly} embedded in $S$. 
A combinatorial surface can be encoded by a \define{rotation system} describing the circular orderings of the (oriented) edges around each vertex. The counterclockwise sequence of arcs (i.e., oriented edges) in the boundary of a face is called a boundary, or \define{facial walk}. See~\cite{mt-gs-01} for a general description. 
In this paper we only consider graphs embedded on oriented surfaces and often consider the embedding as implicit. 

\paragraph{Curves} A \define{curve} is a continuous map $c:[0,1]\rightarrow S$. It is \define{closed} if $c(0)=c(1)$, in which case we also write $c:\mathbb R/\mathbb Z\rightarrow S$ to emphasize this property. A curve is \emph{pointed} at $v\in S$ if $c(0)=v$. 
Two closed curves $c,c':\mathbb R/\mathbb Z\rightarrow S$ are \emph{equal as non-pointed curves} if they are equal as maps up to orientation preserving reparametrization. 
Given $c:[0,1]\rightarrow S$ and $c':[0,1]\rightarrow S$ such that $c(1)=c'(0)$, their \emph{concatenation} $c\cdot c'$ is the curve defined by $c\cdot c'(t)=c(2t)$ if $t<\frac{1}{2}$ and $c\cdot c'(t)=c'(2t-1)$ otherwise. If $c:\mathbb R/\mathbb Z\rightarrow S$ is a closed curve, we denote by $c^k$ the concatenation of $k$ copies of $c$. Moreover $c^\infty:\mathbb R\to  \mathbb R/\mathbb Z\to S$ is the composition of $c$ with the projection of $\mathbb R$ on $\mathbb R/\mathbb Z$. A curve $c:[0,1]\rightarrow S$ is \emph{simple} if $\forall i<j$ with $(i,j)\neq(0,1)$ we have $c(i)\neq c(j)$.

\paragraph{System of Closed Curves} Let $\mathcal C = (c_1,\ldots, c_k)$ be a system of closed curves on an oriented (but not necessarily closed) surface $S$, i.e. a sequence of maps $c_i:\mathbb R/\mathbb Z\rightarrow S$. The system of closed curves $\mathcal C$ is in \emph{general position} if every intersection of the curves of $\mathcal C$ is crossed transversely by exactly two curves $c_i$ and $c_j$ (possibly $i=j$). A closed curve $c$ is in \emph{general position with respect to} $\mathcal C$ if $\mathcal C\cup\{c\}$ is in general position. Similarly, two closed curves $c$ and $c'$ are in \emph{general position} if the system $(c,c')$ is in general position. An intersection point of a system of closed curves in general position is called a \define{crossing} to emphasize that the curves intersect themselves transversely.

If a system of closed curves $\mathcal{C} = (c_1, \ldots, c_k)$ on a closed surface $S$ is in general position, its image on $S$ is a $4$-regular graph $M$ whose set $V$ of vertices is the set of crossings of the curves $c_i$ in $\mathcal C$ and whose edges are the connected components of $\cup_{i=1}^k c_i\setminus V$.  Conversely, to an embedded $4$-regular graph $M$ in $S$, we can associate a system of closed curves $\mathcal C$ as follows. Each curve in $\mathcal C$ is a closed walk in $M$, each edge of $M$ is traversed exactly once by these curves and for each vertex of $M$ with incident edges $e_1,e_2,e_3,e_4$ in cyclic order, $e_1,e_3$  are traversed consecutively by a curve and so are $e_2,e_4$. Note that the system of closed curve $\mathcal{C}$ is only well defined up to a choice of ordering and orientation of these closed walks. For most considerations in this article, this choice of ordering and orientation does not matter and we often identify $4$-regular embedded graphs and systems of curves.

A system of closed curves in general position is \define{filling} if the induced graph is cellularly embedded in $S$. Note that in this case the graph must be connected. In this paper, we only consider system of closed curves in general position and filling. We may thus refer to their vertices, edges and faces as for any combinatorial surface.

\paragraph{Homotopy and Fundamental Group} Two closed curves $c$ and $c'$ on $S$ are \define{freely homotopic} if there is a continuous map $h:[0,1]\times \mathbb R/\mathbb Z\rightarrow S$ such that $h(0,t)=c(t)$ and $h(1,t)=c'(t)$.
Two curves $c: [0,1] \to S$ and $c':[0,1] \to S$ such that $c(0) = c'(0)$ and $c(1) = c'(1)$ are \define{homotopic} if there is a continuous map $h:[0,1]\times \mathbb R/\mathbb Z\rightarrow S$ such that $h(0,t)=c(t)$, $h(1,t)=c'(t)$ and both $h(\cdot,0)$ and $h(\cdot,1)$ are constant. A curve $c$ is \define{contractible} if it is freely homotopic to a constant curve. The set of homotopy classes of closed curves pointed at $v$ on $S$ endowed with the concatenation forms a group denoted  $\pi_1(S,v)$ and called the \define{fundamental group} of $S$. The set of all free homotopy classes of curves is in bijection with the conjugacy classes of $\pi_1(S,v)$ and is thus denoted by $\pi_1(S)^{conj}$.

\paragraph{Universal Covering}
A continuous map $p:S'\rightarrow S$ between two surfaces is a \define{covering map} if for every point $v\in S$ there is a small open neighborhood $v\in O$ such that $p^{-1}(O)$ is a non-empty finite or countable union of disjoint open sets $(O_i)_{i\in I}$ and the restriction of $p$ to $O_i$ is a homeomorphism for all $i\in I$. We will abusively identify the covering map $p$ with its domain $S'$.
The covering $p$ is \define{universal} if $S'$ is simply connected, i.e. every closed curve in $S'$ is contractible. Every surface has a universal covering which is unique up to homeomorphism. We will reserve the notation $p:\widetilde S\rightarrow S$ for the universal covering of $S$. An \define{automorphism} of $p: \widetilde{S} \to S$ is a homeomorphism $a: \widetilde S\to \widetilde S$ such that $p\circ a=p$.

\paragraph{Lifts of Curves}
Consider a curve $c:[0,1]\to S$ and a covering $p: S'\to S$. A \define{lift} of $c$ in $S'$ is a curve $c':[0,1]\rightarrow S'$ such that $p\circ c'=c$.
We will omit to mention the covering when it is clear from the context.
If $v$ is a point of the universal covering $p: \widetilde S\to S$ and  $c:\mathbb{R}/\mathbb Z\rightarrow S$ is a closed curve pointed at $p(v)$, then the lift of $c$ \emph{starting at $v$} is the curve $\widetilde c:[0,1]\rightarrow \widetilde S$ such that $\widetilde c(0)=v$ and $p\circ \widetilde c=c$.
%The lift of the concatenation $c^k$ starting at $v$ is denoted $\widetilde c^k$. Similarly, $\widetilde c^\infty: \mathbb R\rightarrow \widetilde S$ denotes the lift of $c^\infty$ with $\widetilde c^\infty|_{[0,1]}=\widetilde c$.
 Since the point $\widetilde c(1)$ only depends on the homotopy class of $c$ we may define a bijection between homotopy classes in $\pi_1(S,p(v))$ and automorphisms of $\widetilde S$: given a  homotopy class $\tau$ of closed curves pointed at $p(v)$ and a point $x$ in $\widetilde S$ we set $\tau(x) = (\widetilde{c\cdot p\circ r})(1)$, where $c$ is any representative of $\tau$ and $r$ is any path from $v$ to $x$ in $\widetilde S$. 
We can thus associate to a subgroup $\Gamma$ of $\pi_1(S,p(v))$ the covering
 $\widetilde{S} /\Gamma \to S \simeq \widetilde{S}/\pi_1(S,p(v))$.
 
 Now assume that $S$ has genus at least $2$. A non-contractible homotopy class of curves generates an infinite cyclic group $\Gamma=\langle \tau\rangle$. The quotient $\widetilde{S} / \langle \tau \rangle$ is then homeomorphic to the standard cylinder $Z=\mathbb R/\mathbb Z\times\mathbb R$. We denote by $p_Z: \mathbb R^2 \to Z, (x,y)\mapsto (x\mod 1, y)$ the universal covering map of $Z$.

\begin{lemma}\label{lem:simple-lift-square}
Let $c$ be a closed curve in $Z$. If $c^2$ has a  simple lift in $\R^2$, then so is every lift of $c^\infty$.
\end{lemma}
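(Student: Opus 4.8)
The plan is to pass to the universal cover $\mathbb R^2$ of $Z$ and reduce the lemma to an elementary—but not entirely trivial—statement about a planar curve that is equivariant under a horizontal translation.

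First I would set up notation. Write $p_Z:\mathbb R^2\to Z$ for the universal covering; its deck group is the group of horizontal integer translations. Any two lifts of $c^\infty$, and likewise any two lifts of a chosen path parametrization of $c^2$, differ by a deck transformation, which is a self-homeomorphism of $\mathbb R^2$; hence whether such a lift is simple is independent of the lift, and it suffices to treat one lift of each. Fix a lift $\gamma:\mathbb R\to\mathbb R^2$ of $c^\infty$, let $w$ be the winding number of $c$ around $Z$, and set $v=(w,0)$. Then $\gamma(t+1)=\gamma(t)+v$ for all $t$, and, after reparametrizing, $\gamma|_{[0,2]}$ is a lift of $c^2$. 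Thus the lemma becomes: \emph{if $\gamma|_{[0,2]}$ is injective, then $\gamma$ is injective.} Note $w\neq 0$: if $w=0$ then $\gamma|_{[1,2]}=\gamma|_{[0,1]}$, so $\gamma|_{[0,2]}$ retraces an arc and is not injective, against the hypothesis. Replacing $t$ by $-t$ if necessary, assume $w\ge 1$. (One could further reduce to $w=1$ by lifting $c$ to the $w$-fold cylinder cover of $Z$, but this is not needed.)

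Next I would study the self-intersections of $\gamma$. For every $k\in\mathbb Z$ the map $\gamma|_{[k,k+2]}$ is $\gamma|_{[0,2]}$ composed with a parameter translation by $k$ and the deck translation by $kv$, hence injective. Consequently any coincidence $\gamma(s)=\gamma(t)$ with $s<t$ satisfies $t-s>1$: otherwise, translating the parameter by an integer, $[s,t]$ would lie in some window $[k,k+2]$ (any closed interval of length $\le 1$ does), contradicting injectivity there. So the set of gaps $\{t-s:\gamma(s)=\gamma(t),\ s<t\}$ is bounded below by $1$; if it is nonempty, a compactness argument (translate so that $s\in[0,1)$; then $t$ is bounded, and the coincidence set is closed) shows its infimum $\Delta$ is attained by a pair $(s^\ast,t^\ast)$. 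Minimality of $\Delta$ forces $\ell:=\gamma|_{[s^\ast,t^\ast]}$ to be a simple closed curve, and since $\gamma(s^\ast+k)=\gamma(s^\ast)+kv\neq\gamma(s^\ast)$ for $k\in\mathbb Z\setminus\{0\}$ we get $\Delta\notin\mathbb Z$; in particular $\Delta>1$.

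It then remains to derive a contradiction from the existence of such a minimal loop. Let $P=\gamma(s^\ast)=\gamma(t^\ast)$ and let $D$ be the closed disk bounded by the Jordan curve $\ell$ (Jordan curve theorem). Because $\Delta>1$, the collinear points $P+v=\gamma(s^\ast+1)$ and $P-v=\gamma(t^\ast-1)$ lie on $\ell$, and the translate $\ell+v=\gamma|_{[s^\ast+1,t^\ast+1]}$ is a Jordan curve that shares with $\ell$ the sub-arc $\gamma|_{[s^\ast+1,t^\ast]}$ and passes through $P+v$. The two tails $\gamma|_{(-\infty,s^\ast]}$ and $\gamma|_{[t^\ast,+\infty)}$ attach to $\ell$ at $P$, and since $w\ge 1$ their $x$-coordinates tend to $-\infty$ and $+\infty$, so each tail begins, resp. ends, in the unbounded complementary region of $\ell$. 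The plan is to combine these facts: tracking how $P$, $P\pm v$ appear in the cyclic order along $\ell$, how the translated disks $D+kv$ are nested or glued along the shared arcs $\gamma|_{[s^\ast+1,t^\ast]}+kv$, and where the two tails must go, one shows that $\gamma$ is forced to meet $\ell$ or $\ell+v$ again, producing a coincidence $\gamma(a)=\gamma(b)$ with $0<b-a<\Delta$ — contradicting minimality of $\Delta$. Hence $\gamma$ has no self-intersection, proving the lemma. The main obstacle is exactly this last step: turning the qualitative planar picture (a minimal self-intersection loop, its horizontal translates, and two tails escaping to opposite ends of the cylinder) into a rigorous argument that extracts a strictly shorter self-intersection. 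This is the cylinder analogue of the Jordan-curve arguments used elsewhere in the paper, and the bookkeeping of which translated disk lies inside which — complicated by the possible presence of further minimal loops besides $\ell$ — is where the real care is needed.
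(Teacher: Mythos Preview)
Your setup is correct and the reduction to the planar, equivariant curve $\gamma$ is exactly right, as is the observation that any self-intersection must have parameter gap strictly larger than $1$. The problem is that you stop precisely at the hard step: you yourself flag the final contradiction (``one shows that $\gamma$ is forced to meet $\ell$ or $\ell+v$ again, producing a coincidence $\gamma(a)=\gamma(b)$ with $0<b-a<\Delta$'') as a plan and call it ``the main obstacle''. As written this is not a proof; the Jordan-curve bookkeeping you gesture at---how the translated disks $D+kv$ nest, how multiple minimal loops may coexist, where the tails go---is genuinely delicate, and it is not clear your minimal-gap loop $\ell$ alone forces a strictly shorter gap without further case analysis.

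The paper sidesteps this entirely by minimizing a different quantity. Instead of the infimal parameter gap $\Delta$, it takes the minimal \emph{integer} $i\ge 2$ such that the single period $\widetilde c=\gamma|_{[0,1]}$ meets its translate $\theta^i(\widetilde c)$ (here $\theta$ is the deck translation by $v$). The contradiction is then obtained by a barrier argument: pick points $m,M$ on $\widetilde c$ of minimal and maximal $y$-coordinate and let $g\subset\widetilde c$ be a subpath from $m$ to $M$ containing no other extremal-$y$ point. Then $\widetilde c^\infty$ lies in the horizontal strip between the $y$-levels of $m$ and $M$, and $g$ separates this strip. Since $c^2$ has a simple lift, $g$ and $\theta(g)$ are disjoint, so $g$ and $\theta^i(g)$ lie on opposite sides of $\theta(g)$; but then $\theta^i(\widetilde c)$, which meets $\widetilde c$ and contains $\theta^i(g)$, must cross $\theta(g)\subset\theta(\widetilde c)$. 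Translating back, $\widetilde c$ meets $\theta^{i-1}(\widetilde c)$, contradicting the minimality of $i$. This is a two-line Jordan argument in the strip, with no nested-disk analysis and no tails to track; the key simplification is that the translated barriers $\theta^k(g)$ are linearly ordered in the strip, which your loops $\ell+kv$ are not.
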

\begin{proof}
  First note that the lifts in $\R^2$ of a curve in $Z$ are (horizontal) translates of each other, so that a curve has a simple lift if and only if all its lifts are simple. Fix $v\in\R^2$ so that $p_Z(v) = c(0)$. Let $\widetilde c$ be the lift of $c$ starting at $v$, and assume by contradiction that the lift $\widetilde c^\infty$ of $c^\infty$ extending $\widetilde c$ is not simple. We can associate to the curve $c$ pointed at $p_Z(v)$ an automorphism $\theta$ of $\R^2$, in fact an integer translation along the $x$-axis, such that $\theta(v) = \widetilde{c}(1)$. Since $\theta(\widetilde c^\infty)=\widetilde c^\infty$, the lift $\widetilde c$ must contain a self-intersection of $\widetilde c^\infty$. Without loss of generality we can assume that $\widetilde c$ intersects $\theta^{j}(\widetilde c)$ for some $j> 0$,  reverting the orientation of $c$ if necessary. We let $i$ be the minimal such $j$'s. Since $c^2$ has a simple lift, we must have $i\ge 2$.
  
  Let $m$ and $M$ be points of $\widetilde c$ with minimal and maximal $y$-coordinates, respectively. We choose $m$ and $M$ such that the  subpath $g$ of $\widetilde c$ between $m$ and $M$ does not contain other points with minimal or maximal $y$-coordinate\footnote{In the trivial case where $m$ and $M$ are equal, $\widetilde{c}$ must be a horizontal segment, in which case the lemma is trivial.}. Then $\widetilde c^\infty$ is contained in the strip bounded by the horizontal lines through $m$ and $M$, and $g$ splits this strip. Since $c^2$ has a simple lift, the paths $g$ and $\theta(g)$ are disjoint, while $g$ and $\theta^i(g)$ lie on either sides of $\theta(g)$ in the strip. Using again that $c^2$ has a simple lift,  we deduce that $\widetilde{c}$ and $\theta^i(g)$ also lie on either sides of $\theta(g)$. Since $\theta^{i}(\widetilde c)$ intersects $\widetilde c$ and contains $\theta^i(g)$, we infer that $\theta^i(\widetilde c)$ must intersect $\theta(g)$. This implies that $\widetilde c$ and $\theta^{i-1}(\widetilde c)$ intersect, in contradiction with the minimality of $i$.
\begin{figure}[ht!]
    \centering
	\includegraphics[height=3cm]{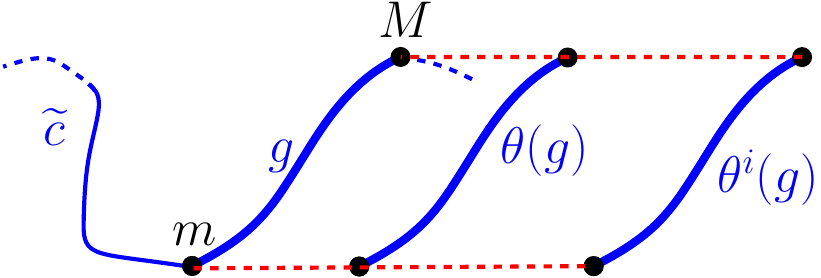}
	\caption{The region defined by $c'$}
	\label{Simple}
\end{figure}
\end{proof}
\begin{corollary}\label{cor:simple-lift-square}
  Let $c$ be a closed curve in $S$. If $c^2$ has a  simple lift in $\widetilde S$, then every lift of $c^\infty$ is simple.
\end{corollary}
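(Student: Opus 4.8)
The plan is to factor the universal covering $p\colon\widetilde S\to S$ through the cylinder cover determined by $c$ and then invoke Lemma~\ref{lem:simple-lift-square}. First I would dispose of the degenerate case: if $c$ were contractible (in particular if it were constant), then the lift of $c^\infty$ would be $1$-periodic, so any lift of $c^2$ would visit the same point at two distinct parameters and fail to be simple; thus the hypothesis already forces $c$ to be non-contractible, and I would assume this from now on.

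Next I would build the tower of covers. Fix $v\in\widetilde S$ over $c(0)$, let $\widetilde c$ be the lift of $c$ starting at $v$, and let $\tau\in\pi_1(S,c(0))$ be the class with $\tau(v)=\widetilde c(1)$. As $c$ is non-contractible and surface groups are torsion-free, $\Gamma:=\langle\tau\rangle$ is infinite cyclic, so (as recalled just before the lemma) $\widetilde S/\Gamma$ is homeomorphic to the standard cylinder $Z$. Let $q\colon\widetilde S\to Z$ and $p'\colon Z\to S$ be the associated covering maps, so that $p=p'\circ q$. Since $\widetilde S$ is simply connected, $q$ is a universal covering of $Z$, and since any two universal coverings are isomorphic I may identify $q$ with $p_Z\colon\mathbb R^2\to Z$; in particular this identification preserves simplicity of lifts.

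Then I would chase the relevant curves down the tower. Set $c_Z:=q\circ\widetilde c$; since $\tau\in\Gamma$ one has $c_Z(1)=q(\tau(v))=q(v)=c_Z(0)$, so $c_Z$ is a closed curve in $Z$ with $p'\circ c_Z=c$. Writing $\widetilde{c^\infty}$ for the lift of $c^\infty$ extending $\widetilde c$, an easy induction gives $\widetilde{c^\infty}(k)=\tau^k(v)$, whence $q\circ\widetilde{c^\infty}$ restricted to each $[k,k+1]$ is the unique lift of $c$ in $Z$ based at $q(\tau^k(v))=q(v)$, that is $c_Z$ itself; therefore $q\circ\widetilde{c^\infty}=(c_Z)^\infty$. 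Consequently, after the identification above, $\widetilde{c^\infty}$ becomes a lift of $(c_Z)^\infty$ in $\mathbb R^2$, and the restriction $\widetilde{c^2}$ of $\widetilde{c^\infty}$ to two periods becomes a lift of $(c_Z)^2$ there.

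Finally I would propagate simplicity. All lifts of $c^2$ in $\widetilde S$ are automorphism-images of $\widetilde{c^2}$, so the hypothesis says $\widetilde{c^2}$ is simple; hence $(c_Z)^2$ has a simple lift in $\mathbb R^2$. By Lemma~\ref{lem:simple-lift-square}, every lift of $(c_Z)^\infty$ in $\mathbb R^2$ is simple, so $\widetilde{c^\infty}$ is simple. Since every lift of $c^\infty$ in $\widetilde S$ is based over $c(0)$ and hence equals $a\circ\widetilde{c^\infty}$ for some automorphism $a$ of $p$, every lift of $c^\infty$ is simple. I expect the only delicate point to be the bookkeeping in the middle step — checking that the cylinder cover $q$ is genuinely the universal cover of $Z$ and that $q\circ\widetilde{c^\infty}$ equals $(c_Z)^\infty$ on the nose — with everything else following formally once the tower $\widetilde S\to Z\to S$ is in place.
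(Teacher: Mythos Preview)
Your proof is correct and follows essentially the same approach as the paper: factor $p$ through the cylinder cover $\widetilde S\to\widetilde S/\langle\tau\rangle\simeq Z\to S$, observe that the lifts of $c^\infty$ in $\widetilde S$ coincide with the lifts of the closed projection $c_Z$ to $Z$, and then apply Lemma~\ref{lem:simple-lift-square}. You are slightly more careful than the paper in explicitly disposing of the contractible case and in spelling out the identification of $q\colon\widetilde S\to Z$ with the standard universal cover $p_Z$, but the argument is the same.
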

\begin{proof}
As in the cylinder case, note that the lifts in $\widetilde{S}$ of a curve in $S$ are related by automorphisms of $\widetilde{S}$, so that a curve in $S$ has a simple lift in $\widetilde{S}$ if and only if all its lifts are simple. 

Let $\tau$ be the homotopy class of $c$ with basepoint $c(0)$. Remark that $p: \widetilde{S}\to S$ factors as $\widetilde{S}\xrightarrow{r} \widetilde{S}/\langle \tau\rangle \xrightarrow{q} S$, where $q, r$ are covering maps. Consider a closed lift $\overline{c}$ of $c$ in $\widetilde{S}/\langle \tau\rangle$. By the previous remark, every lift of $\overline{c}^2$ in $\widetilde S$ is a lift of $c^2$ and is thus simple. Since $\widetilde{S}/\langle \tau\rangle$ is isomorphic to the cylinder $Z$, it follows from Lemma~\ref{lem:simple-lift-square} that the lifts of $\overline{c}^\infty$ are simple. Using the same remark again, these lifts are also the lifts of  $c^\infty$, proving the lemma.
\end{proof}

\paragraph{Preimage of Graphs}
If $G$ is a graph cellularly embedded in $S$ and $p': S'\to S$ a covering of $S$, then the \emph{preimage} of $G$ is the graph $G'=p'^{-1}(G)$, which is cellularly embedded in $S'$. Note that $G'$ may have an infinite number of vertices, edges and faces. We define similarly the preimage of a system of closed curves as $\mathcal C' = p'^{-1}(\mathcal C)$. If $p'$ has infinite degree, the preimage of a closed curve in $\mathcal C$ is a union of closed curves and bi-infinite rays (i.e. curves parametrized by $\mathbb{R}$).

When the covering of $S$ is the universal covering $p: \widetilde S \to S$, the preimage of the graph $G$ is denoted by $\widetilde G=p^{-1}(G)$ and the one of the system of curves $\mathcal C$ by $\widetilde {\mathcal C}=p^{-1}(\mathcal C)$.

\paragraph{Bigons and Monogons}Let $\mathcal C=(c_1,\dots, c_k)$ be a system of closed curves in $S$ and $\widetilde {\mathcal C}$ its preimage in the universal covering of $S$. A \define{bigon} of $\widetilde{\mathcal C}$ is a pair of two interior disjoint non-trivial compact segments $U\subset \widetilde{c_i^\infty}$ and $D\subset \widetilde{c_j^\infty}$ contained in some lifts of curves $c_i$ and $c_j$ in $\mathcal{C}$ (possibly with $i=j$) so that $U \cup D$ is the boundary of a closed disk in $\widetilde S$. The segments $U$ and $D$ are called the \define{sides} of the bigon and their endpoints its \define{corners}.
A \define{monogon} of $\widetilde{\mathcal C}$ is a non-trivial compact segment of a lift of curve $U\subset \widetilde{c_i^\infty}$ with coincident endpoints bounding a disk in $\widetilde S$. The segment $U$ is called the \define{side} of the monogon and its endpoint its \define{corner}.

\paragraph{Smoothing}A \define{smoothing} at a vertex $v$ of the system of closed curves $\mathcal C$ is the system of closed curves obtained from $\mathcal C$ by removing the vertex $v$ and reconnecting the incident strands in one of two ways as in Figure~\ref{Opening}. Note that a smoothing may change the number of components.

\begin{figure}[ht!]
    \centering
	\includegraphics[height=3cm]{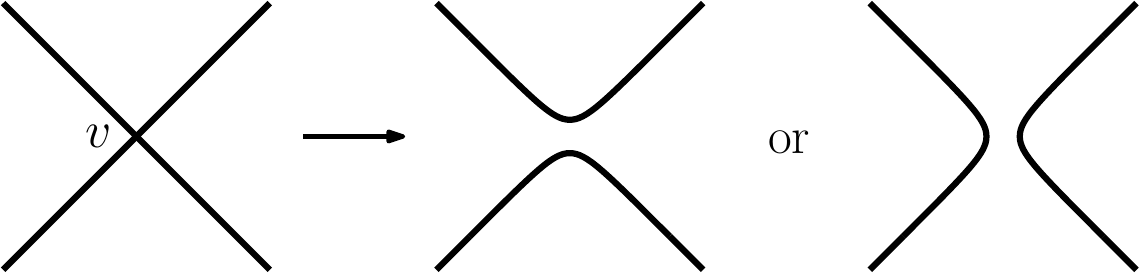}
	\caption{A smoothing at $v$}
	\label{Opening}
\end{figure}

Let $v$ be the corner of a monogon $m$ in $\widetilde{\mathcal C}$. The smoothing of $\mathcal C$ at $p(v)$ such that the two strands of $p(m)$ are not reconnected is denoted $\mathcal C_{v,m}$. We say that $\mathcal C_{v,m}$ is obtained from $\mathcal C$ by \emph{smoothing the monogon $m$}.

Similarly, given a corner $v$ of a bigon $b$ in $\widetilde{\mathcal C}$, there is a single smoothing at $v$ in $\tilde{S}$ that does not reconnect the sides of $b$ which defines by projection a smoothing at $p(v)$ in $S$. We denote the obtained system of closed curves\footnote{If the two corners $v$ and $w$ of a bigon have the same projection (i.e.: $p(v)=p(w)$), Schrijver showed in \cite{Sch91} that $C_{v,b}$ and $C_{w,b}$ coincide.} $\mathcal C_{v,b}$ and say that it is obtained from $\mathcal C$ by \emph{smoothing the bigon $b$ at $v$}.

\paragraph{Tight System of Closed Curves}
A system of closed curves $\mathcal C$ on a closed surface $S$ is \define{tight} if each curve $c_i$ in $\mathcal C$ has the minimum number of self-crossings among all closed curves freely homotopic to $c_i$ and every two $c_i$ and $c_j$, $i\neq j$ have the minimum number of crossings with each other among all closed curves freely homotopic to $c_i$ and $c_j$ respectively. A \define{tight system of primitive curves} is a tight system of closed curves where all the curves are primitive curves. The next proposition is proved in~\cite{Sch91}.

\begin{proposition}
\label{pr:Minimal2}
Let $\mathcal C$ be a system of closed curves in general position on a closed oriented surface $S$. Let $\widetilde{\mathcal C}$ be the preimage of $\mathcal C$ in the universal covering of $S$. Then $\mathcal C$ is not a tight system of primitive curves if and only one of the following holds:

\begin{enumerate}
    \item some curve in $\mathcal C$ is contractible, or
    \item $\widetilde{\mathcal C}$ has a monogon, or
    \item $\widetilde{\mathcal C}$ has a bigon.
\end{enumerate}
\end{proposition}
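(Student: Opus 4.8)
I would prove the equivalent statement that $\mathcal C$ \emph{is} a tight system of primitive curves if and only if no $c_i$ is contractible, $\widetilde{\mathcal C}$ has no monogon, and $\widetilde{\mathcal C}$ has no bigon, assembling it from three ingredients. First, a contractible curve is never primitive, since it represents the identity of $\pi_1(S)$; so condition~(1) already rules out a system of primitive curves. Second, the \emph{bigon criterion}: a system of non-contractible primitive closed curves in general position is tight if and only if its preimage in the universal covering contains no monogon and no bigon (this is classical; see~\cite{Sch91} and the references therein, going back to Hass and Scott). Third --- the part I would actually prove --- a non-contractible, non-primitive curve always forces a bigon in $\widetilde{\mathcal C}$.

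Granting these, the ``if'' direction is short: if~(1) holds some $c_i$ is not primitive, and if~(2) or~(3) holds then by the easy half of the bigon criterion $\mathcal C$ cannot be simultaneously primitive and tight; in all cases $\mathcal C$ is not a tight system of primitive curves. For the ``only if'' direction I argue the contrapositive: assuming no $c_i$ is contractible and $\widetilde{\mathcal C}$ has neither a monogon nor a bigon, the third ingredient forces every $c_i$ to be primitive, and then the bigon criterion makes $\mathcal C$ tight.

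\textbf{Proof of the third ingredient.}
Since $\widetilde{\mathcal C}$ has no monogon, the lift $L_0:=\widetilde{c_i^\infty}$ is a simple (bi-infinite) curve: a self-crossing of $L_0$ is a transverse double point, and an innermost return loop of $L_0$ at such a point is a simple closed curve in the simply-connected surface $\widetilde S\cong\R^2$, hence (Jordan--Schoenflies) bounds a disk, i.e.\ is a monogon. Consequently every lift of $c_i^\infty$ is simple. Now suppose $c_i$ is freely homotopic to $d^{\,k}$ with $k\ge 2$ and $d$ primitive (such $d$ exists as $c_i$ is non-contractible). Pass to the intermediate cylinder $Z_d:=\widetilde S/\langle d\rangle\cong Z$; the curve $c_i$ lifts there to a closed curve $\bar c_i$ representing $k$ times the generator of $\pi_1(Z_d)\cong\Z$, and $\widetilde S$ is also the universal covering of $Z_d$. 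Let $P$ be the automorphism of $\widetilde S$ associated with $d$, so $T:=P^{k}$ is the one associated with $c_i$ and $L_0$ is $T$-invariant; then the preimage of $\bar c_i$ in $\widetilde S$ consists of the $k$ curves $L_0,P(L_0),\dots,P^{k-1}(L_0)$, which are lifts of $c_i^\infty$ and are pairwise distinct since $\bar c_i$ wraps exactly $k$ times. If these $k$ simple curves were pairwise disjoint, their union --- the preimage of the image of $\bar c_i$ --- would be a disjoint union of simple arcs; as $\widetilde S\to Z_d$ is a covering and $\bar c_i$ is in general position, this would make the image of $\bar c_i$ a simple closed curve in the cylinder $Z_d$, contradicting that $\bar c_i$ wraps $k\ge 2$ times (a simple closed curve in a cylinder wraps at most once). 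Hence $L_0$ and some $P^{j}(L_0)$, $1\le j\le k-1$, cross transversely; their intersection set is $T$-invariant and $T$ acts freely, so it is infinite and discrete, and two of its points consecutive along $L_0$ cut out a subarc of $L_0$ meeting $P^{j}(L_0)$ only at its endpoints which, together with the bounded arc of $P^{j}(L_0)$ between them, forms a simple closed curve, hence (Jordan--Schoenflies) bounds a disk in $\widetilde S$ --- a bigon of $\widetilde{\mathcal C}$, contradicting our hypothesis. Thus every $c_i$ is primitive.

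\textbf{Main obstacle.}
The genuinely delicate point --- the reason I would quote the bigon criterion rather than redo it --- lies in its \emph{hard} direction: from a non-tight primitive system one extracts an innermost monogon or bigon and must verify that its $\pi_1(S)$-translates are interior-disjoint, so that the evident crossing-reducing homotopy performed equivariantly in $\widetilde S$ descends to $S$; it is precisely primitivity and non-contractibility of the components that rule out the overlaps that would otherwise wreck this descent. By contrast, the argument above for the third ingredient is elementary once one observes that the absence of monogons already makes each individual lift of $c_i^\infty$ simple, which is exactly what lets the planar (Jordan-curve) reasoning go through.
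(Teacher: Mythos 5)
The paper states Proposition~\ref{pr:Minimal2} with a bare citation to~\cite{Sch91} and gives no proof of its own, so there is no argument in the paper to compare yours against. What you propose is therefore a route the paper declines to spell out: isolate the classical bigon criterion for a system of \emph{non-contractible primitive} curves (the core content, in the universal-cover form due to~\cite{Sch91} with roots in~\cite{HS85}), and add the reduction that a non-contractible, non-primitive component, in the absence of monogons, forces a bigon. That decomposition is sound and it clarifies exactly where primitivity and contractibility get folded into the monogon/bigon dichotomy. Two steps in your third ingredient are asserted more strongly than you have justified, and both need general position invoked explicitly rather than left implicit.

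First, you claim $L_0, P(L_0), \dots, P^{k-1}(L_0)$ are pairwise distinct ``since $\bar c_i$ wraps exactly $k$ times.'' The wrapping number alone does not exclude $P^j(L_0)=L_0$ for some $0<j<k$. If that held for the minimal such $j\mid k$, then conjugating $P^j\vert_{L_0}$ through the simple parametrization $\widetilde{c_i^\infty}\colon\R\to L_0$ (which is injective, hence a homeomorphism onto its image) would produce a self-homeomorphism of $\R$ whose $(k/j)$-th power is the unit shift and under which $c_i^\infty$ is invariant; hence $c_i$ would factor as a $(k/j)$-fold traversal of a shorter closed curve, so every point of its image would be a non-transverse multiple point, violating general position. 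The same hypothesis is also what rules out $L_0$ and $P^j(L_0)$ sharing an arc, which you implicitly need before concluding that, once they meet, they cross transversely. Second, in the pairwise-disjoint branch, ``the image of $\bar c_i$ is a simple closed curve $\gamma$'' and ``$\bar c_i$ wraps $k\ge 2$ times'' are not immediately in conflict, since $\bar c_i$ could a priori traverse $\gamma$ several times. You should either compute directly that $[\gamma]=k$ in $\pi_1(Z_d)\cong\Z$ --- once distinctness and disjointness are granted, the restricted covering $L_0\to\gamma$ has deck group exactly $\langle T\rangle=\langle P^k\rangle$, so $\gamma$ represents $k$ times the generator, contradicting $|[\gamma]|\le 1$ for a simple closed curve in the cylinder --- or observe that a multi-fold traversal of $\gamma$ by $\bar c_i$ again violates general position. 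With these two repairs the trichotomy (coincide / disjoint / cross transversely) is exhaustive and your construction of a bigon from two consecutive transverse crossings along $L_0$ goes through.
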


\paragraph{Empty Monogon and Minimal Bigon} %A component $C_i$ of a system of closed curves $\mathcal C$ is an \emph{isolated curve} if it is a contractible curve disjoint from the other curves in $\mathcal C$. 
Let $\mathcal C$ be a system of closed curves on a closed oriented surface $S$. Let $\widetilde{\mathcal C}$ be the preimage of $\mathcal C$ in the universal covering $p: \widetilde{S} \to S$. A monogon in $\widetilde{\mathcal C}$ is \define{empty} if the only vertex on its boundary is its corner. A bigon $b$ in $\widetilde{\mathcal C}$ is \define{minimal} if the disk bounded by $b$ does not contain another closed curve, a monogon or a bigon. In a minimal bigon, every curve entering through one side must exist through the opposite side. In particular, its two sides have the same length.
A more convenient version of Proposition~\ref{pr:Minimal2} is the following
\begin{proposition}\label{prop:minimal}
Let $\mathcal C$ be a filling system of closed curves in general position on a closed surface $S$ and let $\widetilde C$ be its preimage in the universal covering $\widetilde S$ of $S$. Then $\mathcal C$ is not a tight system of primitive curves if and only if one of the following holds:

\begin{enumerate}
    \item $\widetilde{\mathcal C}$ has an empty monogon, or
    
    \item $\widetilde{\mathcal C}$ has a minimal bigon.
\end{enumerate}
\end{proposition}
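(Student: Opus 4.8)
The plan is to derive Proposition~\ref{prop:minimal} from Proposition~\ref{pr:Minimal2}. One implication is immediate: an empty monogon is in particular a monogon and a minimal bigon is in particular a bigon, so if either occurs in $\widetilde{\mathcal C}$, then condition~2 or~3 of Proposition~\ref{pr:Minimal2} holds and $\mathcal C$ is not a tight system of primitive curves. For the converse, Proposition~\ref{pr:Minimal2} tells us that $\widetilde{\mathcal C}$ contains a monogon, a bigon, or a lift of a contractible curve, so it suffices to show, using that $\mathcal C$ is filling, that in each of these cases $\widetilde{\mathcal C}$ in fact contains an empty monogon or a minimal bigon.

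First I would reduce away the contractible case. If some $c_i$ is contractible (and non-constant, which we may assume), then every lift $\widetilde{c_i}$ is a non-trivial closed curve in $\widetilde S$. If $\widetilde{c_i}$ is not simple, an innermost self-intersection cuts out a simple sub-loop, which bounds a disk and hence is a monogon. If $\widetilde{c_i}$ is simple, it bounds a disk $D$ whose interior must meet $\widetilde{\mathcal C}$: otherwise $D$ would be a single face, forcing $c_i$ to be an isolated component of the (connected) graph induced by the filling system $\mathcal C$, hence $\mathcal C=\{c_i\}$ and $S$ a sphere, which is excluded. A standard outermost-arc argument inside $D$ then produces a monogon or a bigon. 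In all cases $\widetilde{\mathcal C}$ has a monogon or a bigon.

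Now assume $\widetilde{\mathcal C}$ has a monogon or a bigon, and let $\mathcal F$ be the (nonempty) set of disks in $\widetilde S$ bounded by the side of a monogon or by the union of the two sides of a bigon; each member of $\mathcal F$ contains at least one face of $\widetilde{\mathcal C}$. Pick $D_0\in\mathcal F$ enclosing the fewest faces. I claim $\partial D_0$ is an empty monogon or a minimal bigon. If $\partial D_0$ is the side $U$ of a monogon that is not empty, then $U$ carries a vertex other than its corner; since the sides of a monogon are simple, this vertex is a transverse crossing of $U$ with another strand of $\widetilde{\mathcal C}$, which therefore enters the interior of $D_0$. An outermost arc of $\widetilde{\mathcal C}$ inside $D_0$, together with the sub-arc of $U$ it cuts off, forms a bigon whose disk is a proper subset of $D_0$ and hence encloses strictly fewer faces, a contradiction. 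If $\partial D_0$ bounds a bigon that is not minimal, then, by definition, the interior of $D_0$ contains a closed curve, a monogon, or a bigon; applying the reduction of the previous paragraph in the first case, we obtain in every case a disk of $\mathcal F$ lying in the interior of $D_0$, again enclosing strictly fewer faces, a contradiction. Hence $\partial D_0$ is an empty monogon or a minimal bigon, which proves the proposition.

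The routine but delicate part, and the main obstacle, is the sequence of innermost/outermost extractions: each is justified by the Jordan curve theorem in $\widetilde S$, one must check that the extracted configuration genuinely meets the definition of a monogon or bigon (simple sides, interior-disjoint, bounding a disk), and, crucially, one must verify that the enclosed face count strictly decreases --- which rests on the extracted disk being a proper subset of the previous one and on the complement of the $1$-skeleton being a disjoint union of open faces. The contractible-curve reduction also needs just enough care to exclude the degenerate single-simple-curve configuration, which is exactly where the filling hypothesis (together with $g\ge 2$) intervenes.
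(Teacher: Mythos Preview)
Your proof is correct and follows essentially the same route as the paper: reduce the contractible case to the existence of a monogon or bigon in $\widetilde{\mathcal C}$ (using the filling hypothesis to rule out an isolated simple contractible component), then select a monogon or bigon enclosing the fewest faces and argue that it must be empty, respectively minimal. The paper packages the innermost/outermost extractions by citing a result of Hass and Scott, whereas you carry them out directly; your closing paragraph correctly identifies that the only care needed is to allow for the possibility that an extracted arc is non-simple (yielding a monogon rather than a bigon), which is exactly the level of detail the paper gives.
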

Note that one can formulate a slightly different proposition by omitting the filling assumption on $\mathcal C$ and adding a third case in the conclusion: the possible presence of isolated contractible curves.
\begin{proof}
The reverse implication follows from Proposition~\ref{pr:Minimal2}.
In order to prove the direct implication, we use a result from Hass and Scott~\cite{HS85}: if a closed curve on an orientable surface has self-crossings while being homotopic to a simple curve, then it includes a segment forming a monogon or two segments forming a bigon.

Suppose that $\mathcal C$ is not a tight system of primitive curves.
We first show that $\widetilde{\mathcal C}$ necessarily contains a monogon or a bigon. Assume by contradiction that $\widetilde{\mathcal C}$ does not. Then by Proposition~\ref{pr:Minimal2}, there is a curve $c_i$ in $\mathcal{C}$ that is contractible. This curve must be simple. Otherwise, by~\cite{HS85} it would include a segment forming a monogon (in $S$) or two segments forming a bigon (in $S$), in contradiction with our hypothesis. So $c_i$ is simple and bounds a topological disk. Now, if $c_i$ is intersected by another curve $c_j$, we consider a maximal arc of $c_j$  inside the disk bounded by $c_i$. On the one hand, if this arc is simple then it forms a bigon with $c_i$, in contradiction with our hypothesis.
On the other hand, if the arc is not simple it must include a segment forming a monogon or two segments forming a bigon, again leading to a contradiction.
We infer from this discussion that $c_i$ is a simple contractible curve which is not intersected by any other curve. As $S$ is not a sphere, one of its adjacent faces is not a disk. So $\mathcal{C}$ is not filling, contradicting the hypothesis.
We conclude that $\widetilde{\mathcal C}$ necessarily contains a monogon or a bigon.

We finally prove that $\widetilde{\mathcal C}$ actually contains an empty monogon or a minimal bigon.
Let us define the area of a monogon or of a bigon in $\widetilde{\mathcal C}$ as the number of faces of $\widetilde{\mathcal C}$ it encloses. We claim that a monogon or a bigon of minimal area among all monogons and bigons in $\widetilde {\mathcal C}$ is necessarily an empty monogon or a minimal bigon:
If the minimal area is realized by a bigon, then this bigon, call it $b$ can not contain any other monogon or bigon as they would have smaller area. $b$ can neither contain a closed curve since, by an argument analogous to the previous discussion, it would also form a monogon or a bigon inside $b$. It follows that $b$ is minimal.
Finally, if the minimal area is realized by a monogon $m$ we claim that it must be empty. For otherwise $m$ would be crossed by some curve $\widetilde{c_j}$ and an argument similar as above (applied in $\widetilde S$) shows the existence of a monogon or a bigon inside $m$. See Figure~\ref{fig:Monogon}.
\begin{figure}[ht!]
    \centering
    \includegraphics[height=4cm]{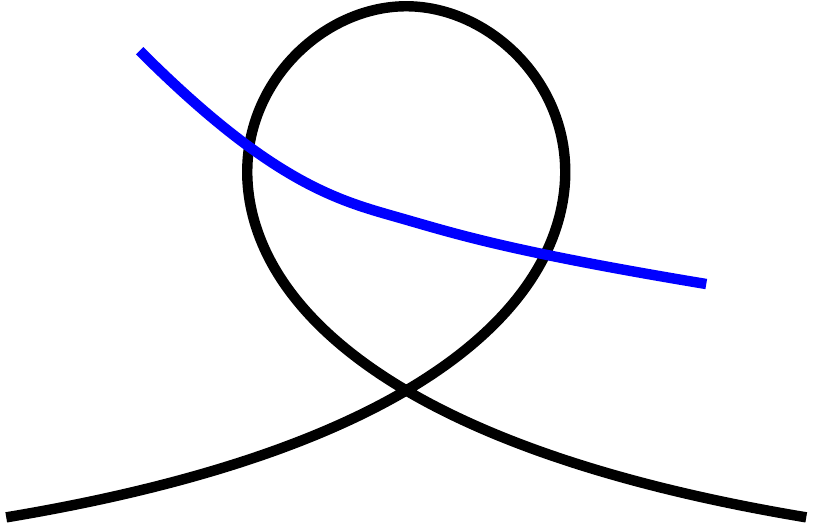}
    \caption{A non-empty monogon}
    \label{fig:Monogon}
\end{figure}
Such a monogon or bigon would have smaller area than $m$, in contradiction with the minimal area of $m$. Hence, $\widetilde {\mathcal C}$ contains an empty monogon or a minimal bigon as desired.
\end{proof}

\section{Length Spectra}\label{sec:length-spectra}
There are several natural combinatorial notions of curves in a combinatorial surface $S$ with underlying graph $G$. One may consider walks in $G$, walks in $G^*$, or sequences of diagonals in the faces of $G$. These different settings appear in the work of Schrijver~\cite{Sch91, Sch92} under a continuous disguise and give rise to different length notions for curves and length spectra of graphs.

\paragraph{The Cross Metric and $\nu$-Spectrum}
Schrijver considers continuous curves in $S$ in general position with respect to $G$. The length of such a curve $c$ is its number of crossings with $G$ and is denoted $\operatorname{cr}(c,G)$. This way of measuring length is sometimes referred to a \define{cross metric}. It is easy to see that a curve $c$ in general position with respect to $G$ can be pushed onto the edges of the graph $G^*$ dual to $G$ in $S$ so that $\operatorname{cr}(c,G)$ is the number of edges of the resulting walk in $G^*$. Conversely, every walk in $G^*$ with $\ell$ edges may be seen as a continuous curve with $\ell$ crossings with $G$. This leads Schrijver to define\footnote{$\nu_G$ is denoted $\mu'_G$ in~\cite{Sch92}.} the \define{$\nu$-spectrum} of $G$ with respect to the cross metric as the map
\[\nu_G:\pi_1(S)^{conj}\rightarrow \mathbb N, \gamma \mapsto \min_{c\in \gamma }\operatorname{cr}(c,G),
\]
where $c$ runs over all the curves representatives in the free homotopy class $\gamma$ in general position with respect to $G$.
  
\paragraph{The Vertex Metric and $\mu$-Spectrum}
Schrijver also considers continuous curves on $S$ not necessarily in general position and defines their lengths as their numbers of intersections with $G$. The \define{$\mu$-spectrum} corresponding to this \emph{vertex metric} is defined as the map
\[\mu_G:\pi_1(S)^{conj}\rightarrow \mathbb N, \gamma \mapsto \min_{c\in \gamma }\operatorname{cr}(c,G),
\]
where $c$ runs over all the curves representatives in the free homotopy class $\gamma$. This time $c$ may pass through vertices of $G$. In fact, it is easy to see that we can always choose a minimizer to intersect $G$ only at its vertices. Hence, $\mu_G(\gamma)$ coincides with the length of such a minimizer viewed as a sequence of diagonals in the faces of $G$.

\bigskip

To end this section, let us mention that in a more geometric context and at about the same time, the \define{geometric intersection number} between two free homotopy classes of curves $\gamma$ and $\gamma'$ has been introduced by Thurston in his study of the Nielsen classification of homeomorphisms of surfaces, see~\cite[Exposé 3]{FLP}. It is equal to $\min \operatorname{cr}(c,c')$ where $c$ and $c'$ runs through $\gamma$ and $\gamma'$. This geometric intersection has later been extended from curves to geodesic currents~\cite{b-tgotsvgc-1988}.

\subsection{Relations between the Two Notions of Length Spectra} \label{ssec:relations-between-spectra}
In general allowing curves in general position with respect to $G$ to pass through vertices may decrease their number of crossings with $G$, so that the $\nu$-spectrum and the $\mu$-spectrum are a priori not equal. They are however closely related.
Indeed, as already noted by Schrijver~\cite{Sch92} the $\mu$-spectrum of a graph $G$ may be obtained, up to a constant factor 2, as the $\nu$-spectrum of its medial graph\footnote{The medial graph is denoted $H(G)$ in~\cite{Sch92}.} $M(G)$: $\nu_{M(G)}=2\mu_G$. The \define{medial graph} $M(G)$ of $G$ is the graph with a vertex at the midpoint of every edge of $G$ and two midpoints are connected by an edge of $M(G)$ whenever their supporting edges in $G$ are consecutive in a face of $G$. In the other direction, we show below that any $\nu$-spectrum is equal to the $\mu$-spectrum of some other graph.

\begin{lemma}
  Given a combinatorial surface $(S,G)$, there exists a combinatorial surface $(S,H)$ such that $\mu_H=\nu_G$.
\end{lemma}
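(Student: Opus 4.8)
The plan is to take $H:=M(G)$, the medial graph of $G$, which is again cellularly embedded in $S$; the identity to prove, $\mu_H=\nu_G$, is then the exact companion of the already-recalled $\nu_{M(G)}=2\mu_G$. I will use that the vertices of $M(G)$ are the edge-midpoints $m_e$ of $G$, and that the faces of $M(G)$ are of two kinds: a \emph{vertex-face} $U_v$ around each vertex $v$ of $G$ (its boundary being crossed exactly by the half-edges of $G$ at $v$), and a \emph{face-face} $U_f$ inside each face $f$ of $G$, with corners the midpoints of the edges of $\partial f$ and with interior disjoint from $G$. I will establish $\mu_{M(G)}(\gamma)\le\nu_G(\gamma)$ and $\nu_G(\gamma)\le\mu_{M(G)}(\gamma)$ for every $\gamma\in\pi_1(S)^{conj}$.

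For $\mu_{M(G)}\le\nu_G$: I would start from a curve $c$ realizing $\nu_G(\gamma)$, slide each of its crossings with $G$ along its edge to the edge-midpoint (a homotopy preserving the number of crossings), so that $c$ meets $G$ only at vertices of $M(G)$. Between consecutive crossings the arc of $c$ misses $G$, hence lies in a single face $f$ of $G$ and joins two corners of $U_f$; since $U_f$ is a disk I would push this arc (rel endpoints, inside $f$) into $U_f$, so that it now meets $M(G)$ only at its two endpoints. Doing this for all arcs yields a curve homotopic to $\gamma$ meeting $M(G)$ in exactly the $\nu_G(\gamma)$ midpoints used (counted with multiplicity), giving the inequality.

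For $\nu_G\le\mu_{M(G)}$: I would take $c'$ realizing $\mu_{M(G)}(\gamma)$, homotoped (again by sliding) so that it meets $M(G)$ only at vertices, at $N:=\mu_{M(G)}(\gamma)$ points with multiplicity. At each such point $m_e$ the curve either crosses $e$ transversally (passing between the two faces of $G$ bordering $e$) or runs tangent to $e$ — the latter happening precisely when $c'$ passes from $U_u$ to $U_w$ for the two endpoints $u,w$ of $e$ — and a tangency is removable by a small push of $c'$ off $m_e$ into an adjacent face-face, introducing no intersection with $G$. Between consecutive vertices $c'$ crosses a face of $M(G)$: either a face-face, meeting $G$ nowhere, or a vertex-face $U_v$, where $c'$ may incidentally cross some half-edges of $G$ at $v$; each such stray crossing, sitting on an edge $h$ of $G$, I would remove by pushing the arc past the midpoint $m_h$ into a face-face on the other side, which is a homotopy rel the remaining intersections (the old and new portions cobound a disk containing only a subarc of $h$ about $m_h$). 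The cleaned-up curve is homotopic to $\gamma$, in general position with $G$, and its crossings with $G$ are among the at most $N$ transverse crossings found above, so $\nu_G(\gamma)\le N$.

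The main obstacle is exactly this last cleanup: one must argue that the local pushes near the $m_h$'s and the tangency points can be organized globally along $c'$ so as to keep the curve homotopic to $\gamma$ while genuinely deleting intersections with $G$ and never creating new ones (in particular, that the successive pushes do not interfere with one another). The homotopical input is mild — essentially that a loop bounding a disk around a point or an arc is inessential — but the bookkeeping that turns these moves into a clean bound $\mathrm{cr}(c',G)\le N$ is where the work lies. A secondary point worth isolating is that one genuinely needs the passages through vertex-faces to contribute nothing to $\nu_G$; this is what prevents $\mu_{M(G)}$ from dropping below $\nu_G$ by ``shortcutting'' around high-degree vertices of $G$, and it is the reason the medial graph (rather than, say, $G$ itself or its dual) is the right choice of $H$.
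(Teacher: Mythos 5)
Your choice $H=M(G)$ does not satisfy $\mu_H=\nu_G$ in general, so the statement you set out to prove is actually false for the medial graph, and the paper is forced to do something more elaborate. The first direction $\mu_{M(G)}\le\nu_G$ is fine, as is the removal of tangencies at $m_e$, but the cleanup of the ``stray crossings'' inside a vertex-face $U_v$ is where the argument breaks down, and this is precisely the point you flag at the end as the crux. Concretely: suppose the arc of $c'$ inside $U_v$ crosses the half-edge of $h$ at a single interior point $p$. You propose to push the arc past $m_h$ into a face-face, so that the old and new arcs cobound a disk $D$ containing $m_h$ and a small subarc of $h$. But $h$ enters and exits $D$ (it passes through $m_h$ and continues toward the other endpoint $u$), so $h$ crosses $\partial D$ an even number of times; since the old arc crosses $h$ once, the new arc must also cross $h$ an odd number of times. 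The crossing is displaced (e.g.\ into $U_u$ or onto $m_h$ itself), never destroyed. To genuinely avoid $h$ the arc would have to go around both endpoints of $h$, a non-local move that can create as many new crossings as it removes. As a result, a $\mu_{M(G)}$-minimizer can freely ``cut diametrically across'' a high-degree vertex-face $U_v$ at cost $2$, whereas any curve in general position with respect to $G$ that separates the same pair of sectors at $v$ must cross the intervening half-edges. So $\mu_{M(G)}(\gamma)<\nu_G(\gamma)$ can occur as soon as $G$ has a vertex of degree $\ge 5$.

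The paper's proof is built exactly to neutralise this: it first doubles $G$ to get $G'$ (so $\nu_{G'}=2\nu_G$ and faces are properly $2$-coloured), decomposes $G'$ into curves, and then \emph{blows up every vertex into a generic line arrangement} to obtain $G''$ with $\nu_{G''}=\nu_{G'}$; only then is $H$ defined (by starring white faces) so that $M(H)=G''$. The blow-up replaces each degree-$d$ vertex by $\Theta(d^2)$ low-degree vertices, so that no vertex-face of $M(H)$ admits a shortcut; this is the source of the quadratic size $\sum_v d_v(d_v-1)/2$ that the paper remarks on, and it is incompatible with your linear-size candidate $M(G)$. So the gap is not just a bookkeeping issue: no local isotopy of $c'$ can realise $\nu_G\le\mu_{M(G)}$, and a different graph $H$ is genuinely required.
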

\begin{proof}
  First take the doubling $G'$ of $G$ by adding an edge parallel to every edge in $G$. Every vertex of $G'$ has even degree and $\nu_{G'}=2\nu_G$. We color in black the faces of $G'$ of length 2 corresponding to the doubling of the edges and we color in white the other faces of $G'$. We obtain this way a proper bicoloring of the faces. Decompose the graph $G'$ into curves $c_1,c_2,\dots,c_k$ so that every edge is traversed exactly once by these curves and for each 
  vertex $v$ with incident edges $e_1,e_2,\dots,e_{2d}$ in cyclic order around $v$, the edges $e_i$ and $e_{i+d}$ are traversed consecutively by a curve for all $1\le i\le d$. We next perturb the curves $c_i$ by blowing up every vertex into a non degenerate arrangement of lines inside a small neighborhood of that vertex (see Figure~\ref{blow-up}).
\begin{figure}[ht!]
    \centering
    \includegraphics[width=8cm]{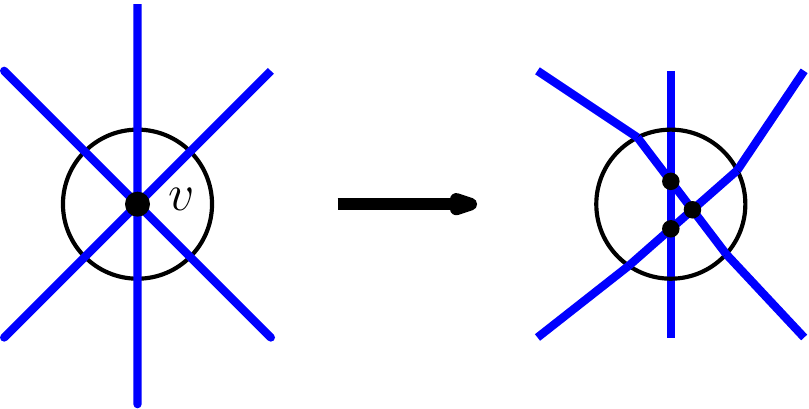}
    \caption{Blowing up a vertex.}
    \label{blow-up}
\end{figure}
The union of the perturbed curves form a graph $G''$ whose faces again admit a proper bicoloring. Moreover, we have $\nu_{G''}=\nu_{G'}$ since we can move every curve in general position with respect to $G''$ outside the vertex neighborhoods without increasing their crossing length.
We finally construct $H$ as follows. We star each white face by connecting every of its vertices to a point in the interior of that face, and finally delete all the edges of $G'$. The remaining vertices of degree two are irrelevant and may be contracted to any of their adjacent vertices. By construction $G''$ is the medial of $H$. From the above discussion we have $\nu_{G''}=2\mu_H$. It follows that $2\nu_G = \nu_{G'} = \nu_{G''}=2\mu_H$, implying the lemma.
\end{proof}
Remark that the number of vertices of the graph $H$ constructed in the proof of this lemma is $\sum_{v\in V(G)} d_v(d_v-1)/2$, where $d_v$ is the degree of $v$ in $G$. $H$ may thus have a quadratic complexity with respect to $G$. 

\subsection{Schrijver's Correspondence}
\label{sec:schrijver-s-correspondence}

Recall that a kernel $K$ on a closed oriented surface $S$ is an embedded graph $K$ such that for every proper minor $H$ of $K$, there is a closed curve $c$ satisfying $\mu_H([c])<\mu_K([c])$. If $G$ is an embedded graph, a \define{minor kernel} $K$ of $G$ is a minor of $G$ which is a kernel and such that $\mu_G=\mu_K$. Moreover, any minor operation on a graph $G$ may be obtained as a smoothing of a vertex of its medial graph $M(G)$ (See Figure~\ref{smoothing}).

\begin{figure}[ht!]
    \centering
    \includegraphics[width=13cm]{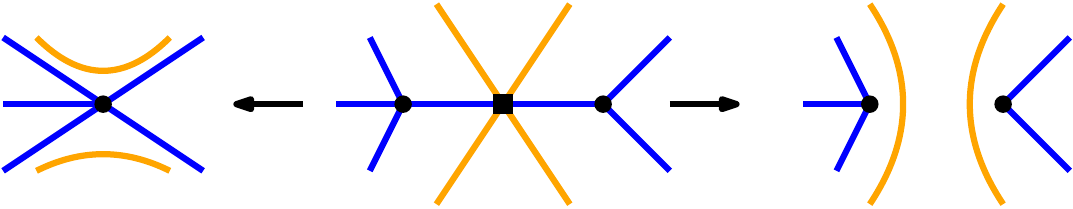}
    \caption{A graph minor operation in $G$ (blue) and the corresponding smoothings in $M(G)$ (orange)}
    \label{smoothing}
\end{figure}

Schrijver showed in~\cite{Sch92} the following correspondence between system of closed curves and graphs.

\begin{proposition}\label{pr:schrijver-s-correspondence}
Let $G$ be a graph embedded on a surface $S$. The following assertions are equivalent:
\begin{itemize}
    \item $G$ is a kernel,
    \item $M(G)$ is a tight system of primitive curves. 
\end{itemize}
\end{proposition}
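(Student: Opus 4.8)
The plan is to transport the statement to the medial graph $M(G)$, regarded as a filling system of closed curves $\mathcal{C}:=M(G)$, and to use two dictionaries set up above: the identity $\nu_{M(G)}=2\mu_G$, and the correspondence of Figure~\ref{smoothing} between an elementary minor operation on $G$ (contraction or deletion of one edge $e$) and a smoothing of $\mathcal{C}$ at the midpoint of $e$; the two smoothings at that vertex are exactly $M(G/e)$ and $M(G\setminus e)$, so single-vertex smoothings of $\mathcal{C}$ correspond to elementary minor operations on $G$. I would record two elementary facts. First, every curve of $\mathcal{C}$ is primitive, since in a medial graph each edge is traversed exactly once by the system while a curve $d^k$ with $k\ge 2$ would traverse each edge of its image $k$ times; hence here ``tight system of primitive curves'' means simply ``tight''. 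Second, for any minor $H$ of $G$ one has $\mu_H\le\mu_G$ pointwise — deleting an edge cannot increase the intersections of a fixed curve with $G$, and contracting $e$ is realized by a quotient map collapsing $e$ that can only identify or delete intersection points — so every smoothing $\mathcal{C}'$ of $\mathcal{C}$ satisfies $\nu_{\mathcal{C}'}\le\nu_{\mathcal{C}}$ pointwise, and, $\mu$ being non-increasing along a chain of elementary minor operations, $G$ is a kernel if and only if no single-vertex smoothing of $\mathcal{C}$ has the same $\nu$-spectrum as $\mathcal{C}$. (After first pruning the vertices of $G$ of degree at most $1$, which do not affect $\mu_G$, the system $M(G)$ is filling and in general position, so Proposition~\ref{prop:minimal} applies to it.)

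For ``$G$ is a kernel $\Rightarrow$ $\mathcal{C}$ is tight'' I argue contrapositively: if $\mathcal{C}$ is not tight then Proposition~\ref{prop:minimal} produces an empty monogon or a minimal bigon in $\widetilde{\mathcal{C}}$; by Schrijver~\cite{Sch91} the corresponding smoothing $\mathcal{C}_{v,m}$ or $\mathcal{C}_{v,b}$ preserves the $\nu$-spectrum, and being a single-vertex smoothing it equals $M(H)$ for the proper minor $H\in\{G/e,\,G\setminus e\}$ produced by the corresponding minor operation, whence $\mu_H=\frac12\nu_{M(H)}=\frac12\nu_{\mathcal{C}}=\mu_G$ and $G$ is not a kernel.

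For the converse, ``$\mathcal{C}$ is tight $\Rightarrow$ $G$ is a kernel'', again contrapositively: if $G$ is not a kernel then there is already a single-vertex smoothing $\mathcal{C}'$ of $\mathcal{C}$ with $\nu_{\mathcal{C}'}=\nu_{\mathcal{C}}$ and $|V(\mathcal{C}')|=|V(\mathcal{C})|-1$. Starting from $\mathcal{C}'$ I iterate a $\nu$-preserving simplification: as long as the current system is not a tight system of primitive curves, the filling-free variant of Proposition~\ref{prop:minimal} noted after its statement applies, so either its preimage has an empty monogon or a minimal bigon — which I smooth, $\nu$-invariantly by~\cite{Sch91} — or it has an isolated contractible curve — which I delete, $\nu$-invariantly since a contractible curve is avoided by every test curve. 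Each step strictly decreases the pair (number of crossings, number of curves) lexicographically, so the process terminates at a tight system of primitive curves $\mathcal{C}^*$ with $\nu_{\mathcal{C}^*}=\nu_{\mathcal{C}}$ and $|V(\mathcal{C}^*)|<|V(\mathcal{C})|$. But Schrijver's uniqueness results for tight systems of primitive curves~\cite{Sch92} (see also~\cite{ds-chscc-95}) show that such a system is determined by its $\nu$-spectrum up to $\Delta Y$-exchanges and surface duality, operations which preserve the number of crossings of the underlying $4$-regular graph (a $\Delta Y$-exchange trades three edges for three edges, and the medial graph of a surface dual is unchanged); so if $\mathcal{C}$ were tight we would get $|V(\mathcal{C})|=|V(\mathcal{C}^*)|$, a contradiction. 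Hence $\mathcal{C}$ is not tight.

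The substantive point is the converse, and within it the assertion that a tight system of primitive curves admits no $\nu$-spectrum-preserving smoothing; my route reduces this to the uniqueness results for systems of curves via the remark that the number of crossings of a tight primitive system is a function of its $\nu$-spectrum, which is essentially Schrijver's own argument, so the proof may alternatively just cite~\cite{Sch92} at this point. The one further technicality is that the intermediate systems produced while simplifying $\mathcal{C}'$ need not stay filling, which is exactly why the filling-free variant of Proposition~\ref{prop:minimal} (allowing isolated contractible curves) must be invoked in that step rather than Proposition~\ref{prop:minimal} itself.
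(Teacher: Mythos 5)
The paper does not prove Proposition~\ref{pr:schrijver-s-correspondence} at all: it is stated with the remark ``Schrijver showed in~\cite{Sch92}\dots'' and nothing more. Your attempt is therefore necessarily a genuinely different route, namely an attempt to \emph{reconstruct} a proof from the tools the paper sets up (Proposition~\ref{prop:minimal}, Proposition~\ref{Smooth spectrum}, the monogon-smoothing proposition, and the medial/minor correspondence). That reconstruction is partially successful, so let me separate what works from what does not.

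The implication ``$G$ is a kernel $\Rightarrow$ $M(G)$ is a tight system of primitive curves'' is done cleanly and is a good contribution: the contrapositive via Proposition~\ref{prop:minimal} (which yields an empty monogon or minimal bigon whenever $M(G)$ fails to be a tight system of primitive curves), then spectrum preservation under smoothing by Schrijver's~\cite{Sch91} results and the medial/minor dictionary, gives a proper minor $H$ with $\mu_H=\mu_G$. Note, though, that you should run this argument directly against the hypothesis ``$M(G)$ is not a \emph{tight system of primitive curves}'' rather than ``not tight'', because your preliminary claim that the two are equivalent for medials is incorrect: the observation that a closed walk in $M(G)$ traverses each edge once shows it is not \emph{combinatorially} a power $d^k$, but primitivity here is a free-homotopy notion, and a curve traversing each edge once can still be freely homotopic to a proper power. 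Fortunately this does not affect the logic since Proposition~\ref{prop:minimal} already covers the non-primitive case, so the primitivity paragraph is best simply removed.

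The converse direction is where the attempt stops being a proof. After producing a tight system of primitive curves $\mathcal{C}^*$ with $\nu_{\mathcal{C}^*}=\nu_{\mathcal{C}}$ and strictly fewer crossings, you invoke the statement that the crossing number of a tight system of primitive curves is determined by its $\nu$-spectrum. But the uniqueness result you cite in support of this --- that kernels with equal $\mu$-spectra differ by $\Delta Y$-exchanges and surface duality --- is the \emph{uniqueness of kernels}, which in~\cite{Sch92} is derived \emph{from} the very correspondence you are trying to prove; invoked here it is either circular or, at best, an appeal to~\cite{Sch92} at a point where the paper itself simply cites~\cite{Sch92}. What you really need is the curve-level rigidity statement that two tight systems of primitive curves with the same crossing-number function are freely homotopic up to orientation and permutation (so in particular have the same number of crossings); this is indeed in Schrijver's and de Graaf--Schrijver's work, but it is not among the tools restated in this paper, and as written your argument substitutes the kernel-level corollary for the curve-level input. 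So the converse half remains, as you yourself acknowledge, essentially a citation, and the $\Delta Y$/duality phrasing should be replaced by the curve-level uniqueness to avoid the appearance of circularity.

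Two smaller points. The pruning of degree-$\le 1$ vertices needed so that $M(G)$ is filling is itself a place where you could short-circuit: if $G$ has a degree-$\le 1$ vertex, then deleting the pendant edge (or the isolated vertex) already produces a proper minor with the same $\mu$-spectrum, so $G$ is not a kernel and $M(G)$ is not filling, and both directions of the equivalence are trivially satisfied; it is cleaner to dispose of this case outright than to ``prune and continue''. Finally, in the termination argument for the simplification of $\mathcal{C}'$, smoothing an empty monogon or a minimal bigon strictly decreases the crossing count while deletion of an isolated contractible curve keeps it fixed and decreases the number of components, so the lexicographic argument is fine; that part is correct as stated.
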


We thus define the analogue notion of kernel for a system of closed curves. If $\mathcal C$ is a system of closed curves on a closed surface $S$, a \define{smoothing minimal} system $\mathcal K$ for $\mathcal C$ is a system of closed curves $\mathcal K$ obtained from $\mathcal C$ by a sequence of smoothing such that $\mathcal K$ is a tight system of primitive curves and $\nu_\mathcal{K}=\nu_\mathcal{C}$. Proposition~\ref{pr:schrijver-s-correspondence} together with the fact that any minor operation on $G$ corresponds to a smoothing in $M(G)$ allows us to reformulate the computation of a minor kernel of a graph into the computation of a smoothing minimal system of closed curves.

Recall that Proposition~\ref{prop:minimal} provides a reformulation of tightness of a system of closed curves in terms of absence of empty monogons and minimal bigons. One can actually go further and show that any such empty monogons or minimal bigons can be smoothened without changing the $\nu$-spectrum.
\begin{proposition}[\protect{Schrijver~\cite[Th. 5]{Sch91}}]
\label{Smooth spectrum}
Let $\mathcal C$ be a system of closed curves in general position on a closed surface $S$. Let $b$ be a minimal bigon of $\widetilde{\mathcal C}$ and $v$ be the projection on $S$ of one of its corners. Then $\nu_{\mathcal C}=\nu_{\mathcal C_{v,b}}$.
\end{proposition}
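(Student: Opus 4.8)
The plan is to establish the two inequalities $\nu_{\mathcal C_{v,b}}\le\nu_{\mathcal C}$ and $\nu_{\mathcal C}\le\nu_{\mathcal C_{v,b}}$ separately, by transporting an optimal curve from one system to the other. Both arguments are local at $v$: fix a small open disk $D\subset S$ around $v$ on which $\mathcal C$ restricts to four arcs meeting transversally at $v$ and $\mathcal C_{v,b}$ restricts to the two disjoint arcs obtained by the resolution that does not reconnect the sides of $b$, while $\mathcal C$ and $\mathcal C_{v,b}$ agree outside $D$. Given a closed curve $c$ in general position with respect to one of the two systems, one may freely reroute $c$ inside $D$ (not changing $c$ outside $D$ nor its free homotopy class), so it suffices to compare, chord by chord of $c\cap D$, the minimal number of crossings with $\mathcal C\cap D$ against the minimal number of crossings with $\mathcal C_{v,b}\cap D$. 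A finite case analysis over the chord types shows these minima coincide for every type except one: the chords that near $v$ run from the corner region of $b$ to the region opposite to it; such a chord crosses $\mathcal C$ exactly twice but can be routed to avoid $\mathcal C_{v,b}$ entirely. Rerouting an optimal curve for $\mathcal C$ chord by chord inside $D$ therefore yields $\nu_{\mathcal C_{v,b}}(\gamma)\le\nu_{\mathcal C}(\gamma)$ for every free homotopy class $\gamma$.

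For the reverse inequality, fix $\gamma$ and take $c$ optimal for $\mathcal C_{v,b}$; rerouting inside $D$, we may assume $c\cap D$ is in minimal position with respect to $\mathcal C_{v,b}\cap D$, so that $\operatorname{cr}(c,\mathcal C)=\operatorname{cr}(c,\mathcal C_{v,b})+2t$ where $t$ counts the ``bad'' chords identified above. If $t=0$ we are done, so assume $t>0$ and use the minimality of $b$. Lift to the universal covering; let $\tilde v$ be the corner of $b$, $B$ the disk it bounds, and $T$ the face of $\widetilde{\mathcal C}$ incident to $\tilde v$ inside $B$. Since $b$ is minimal, every curve meeting $\operatorname{int} B$ crosses it from one side of $b$ to the other and no two such arcs cross twice, so $T$ is a topological disk bounded by a short piece of each side of $b$ together with at most one transversal arc. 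Resolving the crossing at $\tilde v$ merges $T$ with the face of $\widetilde{\mathcal C}$ on the far side of $\tilde v$ into a single face $f$ of $\widetilde{\mathcal C_{v,b}}$: thus $\operatorname{int} f\cap\mathcal C_{v,b}=\emptyset$, while $\mathcal C\cap\bar f$ differs from $\mathcal C_{v,b}\cap\bar f$ only by the small cross at $\tilde v$. A bad chord is part of an excursion of $c$ entering the ``bay'' $T$ through the resolved neck, and by optimality of $c$ for $\mathcal C_{v,b}$ this excursion crosses neither the sides of $b$ nor the transversal arc, so it stays inside $\bar f$. One then homotopes $c$, supported in $\bar f$ and fixing $c\cap\mathcal C_{v,b}$, so that the excursion no longer enters $T$; this is possible because $f$ is a disk. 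Such a homotopy leaves $\operatorname{cr}(c,\mathcal C_{v,b})$ unchanged and decreases $\operatorname{cr}(c,\mathcal C)$ by $2$ per excursion, so iterating produces $c'$ freely homotopic to $c$ with $\operatorname{cr}(c',\mathcal C)=\operatorname{cr}(c,\mathcal C_{v,b})=\nu_{\mathcal C_{v,b}}(\gamma)$, whence $\nu_{\mathcal C}(\gamma)\le\nu_{\mathcal C_{v,b}}(\gamma)$.

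I expect the second step to be the main obstacle. The delicate points are: (i) using minimality of $b$ to guarantee that the bad chords of a $\mathcal C_{v,b}$-optimal curve only reach the single clean corner face $T$, so that pulling them back creates no new crossings; and (ii) realizing the pull-back as a homotopy supported in one face of $\mathcal C_{v,b}$ — which is precisely what makes it an honest homotopy on $S$ (not merely on $\widetilde S$) and keeps $\operatorname{cr}(\cdot,\mathcal C_{v,b})$ constant. The degenerate situation where the two corners of $b$ project to the same point of $S$ should also be checked, though by the remark preceding the statement $\mathcal C_{v,b}$ is well defined there. As the proposition is exactly Schrijver's Theorem~5 in~\cite{Sch91}, one may alternatively simply invoke it.
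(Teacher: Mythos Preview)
The paper does not prove this proposition: it is stated with attribution to Schrijver~\cite[Th.~5]{Sch91} and no argument is given. So there is no ``paper's own proof'' to compare against; the paper simply invokes the cited result, exactly as you suggest in your final sentence.

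Your sketch has the right overall shape---the inequality $\nu_{\mathcal C_{v,b}}\le\nu_{\mathcal C}$ is immediate, and the reverse requires using the minimality of $b$ to push an optimal curve for $\mathcal C_{v,b}$ away from the corner---but the heart of the reverse inequality is not established. The assertion ``by optimality of $c$ for $\mathcal C_{v,b}$ this excursion crosses neither the sides of $b$ nor the transversal arc'' is not justified: optimality is a global minimality condition and does not by itself forbid $c$ from crossing $\mathcal C_{v,b}$ inside the bigon region. An excursion entering $T$ through the neck may perfectly well exit through $U$, $D$, or the transversal arc while $c$ remains globally optimal, since rerouting around $T$ could cost as many or more crossings elsewhere. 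Moreover, the homotopy you describe is supported in a face $f$ of $\widetilde{\mathcal C_{v,b}}$, but you need a homotopy on $S$; the projection of $f$ to $S$ need not be embedded (after smoothing, the merged face can fail to be a disk in $S$), so ``$f$ is a disk'' in $\widetilde S$ does not automatically give you a homotopy of $c$ downstairs. You flag both of these as ``delicate points'' yourself, which is accurate---they are the actual content of the result, and Schrijver's argument handles them by a different route (working with doubled curves and a counting argument rather than a direct excursion-pushing homotopy). Citing~\cite{Sch91} is the right move here.
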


\begin{proposition}
Let $\mathcal C$ be a system of closed curves in general position on a closed surface $S$. Let $m$ be an empty monogon of $\widetilde{\mathcal C}$ and $v$ be the projection of its corner. Then $\nu_{\mathcal C}=\nu_{\mathcal C_{v,m}}$.
\end{proposition}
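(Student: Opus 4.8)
The plan is to mirror the strategy used for the bigon case in Proposition~\ref{Smooth spectrum}. Since $\nu_{\mathcal C}$ and $\nu_{\mathcal C_{v,m}}$ are functions on $\pi_1(S)^{conj}$, it suffices to fix a free homotopy class $\gamma$ and prove the two inequalities $\nu_{\mathcal C}(\gamma)\le \nu_{\mathcal C_{v,m}}(\gamma)$ and $\nu_{\mathcal C_{v,m}}(\gamma)\le \nu_{\mathcal C}(\gamma)$. The first step is to describe the geometry of the empty monogon on $S$. Let $U\subset \widetilde{c_i^\infty}$ be the side of $m$ and $\Delta$ the closed disk it bounds in $\widetilde S$. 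Because $m$ is empty, no strand of $\widetilde{\mathcal C}$ meets $U$ away from $v$, so (using the Jordan curve theorem together with the fact that the only vertex on $U$ is $v$) the open disk $\operatorname{int}\Delta$ contains no point of $\widetilde{\mathcal C}$; a short deck-transformation argument then shows that $p$ is injective on $\Delta$, so that $D:=p(\Delta)$ is an embedded closed disk in $S$ whose boundary $p(U)\subset c_i$ runs through the self-crossing $p(v)$ and whose interior is disjoint from $\mathcal C$. Moreover $\mathcal C$ and $\mathcal C_{v,m}$ coincide as subsets of $S$ outside an arbitrarily small disk $N$ around $p(v)$: passing to $\mathcal C_{v,m}$ merely resolves the self-intersection $p(v)$ of $c_i$ by rerouting $c_i$ along $p(U)$, which does not change the free homotopy class of any curve of the system since $p(U)$ is contractible in $S$ (it bounds $D$).

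With this picture in hand, I would prove $\nu_{\mathcal C_{v,m}}(\gamma)\le\nu_{\mathcal C}(\gamma)$ as follows. Pick a curve $c$ realizing $\nu_{\mathcal C}(\gamma)$ and, by a standard taut-representative argument (see~\cite{HS85} or the discussion in the proof of Proposition~\ref{prop:minimal}), choose it in minimal position with respect to $\mathcal C$, meaning that $\widetilde{\mathcal C}\cup\widetilde{c}$ has no monogon and no bigon involving a lift of $c$. I claim $c$ can then be taken disjoint from $\operatorname{int}D$: an innermost arc of $c$ inside $D$ has both endpoints on $p(U)\subset c_i$ and an interior disjoint from $\mathcal C$, hence lifts to an arc inside $\Delta$ cutting off a sub-disk of $\Delta$ bounded by this arc and a subarc of $U$ — that is, a bigon between a lift of $c$ and a lift of $c_i$, contradicting minimal position. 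Since a transverse crossing of $c$ with $p(U)=\partial D$ would force $c$ into $\operatorname{int}D$, we may in fact assume $c$ disjoint from $\bar D$, and then from the small disk $N$ as well. Consequently $c$ is in general position with respect to $\mathcal C_{v,m}$ and $\operatorname{cr}(c,\mathcal C_{v,m})=\operatorname{cr}(c,\mathcal C)$ because the two systems agree off $N$, giving $\nu_{\mathcal C_{v,m}}(\gamma)\le\operatorname{cr}(c,\mathcal C_{v,m})=\nu_{\mathcal C}(\gamma)$.

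The reverse inequality is symmetric: starting from a curve $c'$ realizing $\nu_{\mathcal C_{v,m}}(\gamma)$ in minimal position with respect to $\mathcal C_{v,m}$, I would observe that after the rerouting the subarc of the modified curve running along $p(U)$ still bounds the disk $D$ with interior free of $\mathcal C_{v,m}$, so the same innermost-bigon argument lets us take $c'$ disjoint from $\bar D$ and from $N$; then $\operatorname{cr}(c',\mathcal C)=\operatorname{cr}(c',\mathcal C_{v,m})$, whence $\nu_{\mathcal C}(\gamma)\le\nu_{\mathcal C_{v,m}}(\gamma)$. Combining the two inequalities yields $\nu_{\mathcal C}=\nu_{\mathcal C_{v,m}}$.

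I expect the main obstacle to be the push-off step, i.e. making rigorous that a taut representative can be homotoped off the disk $D$ while exactly preserving its crossing count with the system near the corner $p(v)\in\partial D$; this is precisely the technical heart of the bigon case (Proposition~\ref{Smooth spectrum}, \cite[Th.~5]{Sch91}), and the present argument should be modeled on it. A minor point to address along the way is the a priori possibility that $\mathcal C$ contains a contractible simple closed component lying in $\operatorname{int}D$; such a component carries no crossing and can be discarded at the outset (or $\operatorname{int}D$'s emptiness can be invoked directly), so it does not affect the conclusion.
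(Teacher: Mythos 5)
You take a genuinely different, and considerably heavier, route than the paper, and the harder half of your argument has a real hole that you in fact flag yourself as the ``main obstacle''.

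The paper's proof is short and asymmetric. The inequality $\nu_{\mathcal C}\ge\nu_{\mathcal C_{v,m}}$ is a one-line observation valid for any smoothing of any vertex: a minimizer for $\mathcal C$ is in general position with $\mathcal C$, hence avoids the vertex $p(v)$, hence avoids a small enough disk around $p(v)$ in which the smoothing takes place, and therefore has the same crossing count with both systems. No taut representative and no bigon argument is needed, so your innermost-bigon reduction for this inequality is superfluous. The reverse inequality $\nu_{\mathcal C_{v,m}}\ge\nu_{\mathcal C}$ is where the emptiness of the monogon is actually used: the paper takes a minimizer for $\mathcal C_{v,m}$ and, if it wanders into the small disk around $p(v)$, perturbs it locally so that it avoids the contractible loop $p(m)$, at no cost precisely because the monogon disk is empty of $\mathcal C$. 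This is an elementary local move near the corner (Figure~\ref{Monogon2}).

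Your version of the reverse inequality is where the gap lies. You write that after the rerouting ``the subarc of the modified curve running along $p(U)$ still bounds the disk $D$ with interior free of $\mathcal C_{v,m}$'', and then apply ``the same innermost-bigon argument'' to make $c'$ disjoint from $\bar D$ and from $N$. But after the smoothing, $\partial D$ is \emph{not} contained in $\mathcal C_{v,m}$: a small piece of $\partial D$ near $p(v)$ has been removed from the system, leaving a door through which a curve can enter $D$ without crossing $\mathcal C_{v,m}$ at all. An arc of $c'$ entering $D$ through this door has at most one endpoint on the $\mathcal C_{v,m}$-part of $\partial D$, and so does not bound a bigon; the innermost-bigon reduction does not apply to it. Worse, $p(v)$ is not a vertex of $\mathcal C_{v,m}$, so general position does not force $c'$ away from $p(v)$: even after removing all actual bigons, $c'$ may run arbitrarily close to $p(v)$ through the channel opened by the smoothing on the side opposite $D$, and then ``disjoint from $N$'' simply fails. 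These configurations are exactly the ones handled by the paper's local perturbation, exactly the ones your bigon machinery cannot see, and they are the whole content of the proposition. A secondary point: the taut representative you invoke (a minimizer $c$ for which $\widetilde{\mathcal C}\cup\widetilde c$ has no bigon involving a lift of $c$) is a nontrivial bigon criterion for a curve against a system that the paper's proof does not need at all.
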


\begin{proof}
Since smoothing a vertex may only reduce the length of a curve in general position, we have $\nu_{\mathcal C}\ge \nu_{\mathcal C_{v,m}}$. We show the opposite inequality.
Let $c$ be a closed curve in $S$ in general position with respect to $\mathcal C_{v,m}$ and minimizing $\operatorname{cr}(c, \mathcal C_{v,m})$ in its homotopy class. Either $c$ avoids a small neighborhood of $v$, or $c$ crosses $v$. In the former case, $c$ is in general position with respect to $\mathcal C$ so that $\nu_{\mathcal C}([c])\le \operatorname{cr}(c, \mathcal C_{v,m})=\nu_{\mathcal C_{v,m}}([c])$.
In the latter case, each time $c$ crosses $v$, we may perturb $c$ to avoid the contractible loop $m$ without changing its crossing number (see Figure~\ref{Monogon2}). The perturbed curve $c'$ is in general position with respect to $\mathcal C$ and satisfies $\nu_{\mathcal C}([c'])\le \operatorname{cr}(c,  \mathcal C_{v,m})=\nu_{\mathcal C_{v,m}}([c])$.
\begin{figure}[ht!]
    \centering
    \includegraphics[height = 4cm]{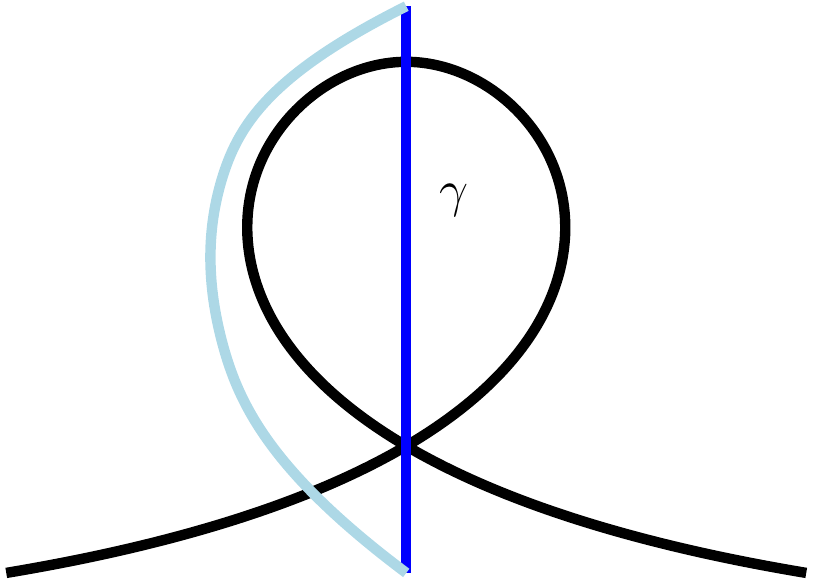}
    \caption{A curve crossing an empty monogon and its perturbation.}
    \label{Monogon2}
\end{figure}
\end{proof}
Let us remark that it may be not possible to smooth a corner of a general monogon or a general bigon without changing the length spectrum of $\mathcal C$.

\section{Length of Minimal Bigons}\label{sec:length-bigon}
Let $\mathcal C$ be a system of closed curves in general position on a closed oriented surface $S$. It induces a graph with vertex set $V$. The length of a walk $c$ in $\mathcal C$ or $\widetilde{\mathcal C}$, i.e. its number of edges, is denoted by $|c|$. We also define the length of a bigon in $\widetilde{\mathcal C}$ as the number of edges on its boundary.
The goal of this section is to show the following.
\begin{theorem}
\label{thm:a-minimal-bigon-has-linear-length}
Let $S$ be a closed orientable surface of genus $g\ge 2$. Let $\mathcal C$ be a system of closed curves on $S$. Then the length of every minimal bigon of $\widetilde{\mathcal C}$ is at most $8n$ where $n$ is the number of crossings of $\mathcal C$.
\end{theorem}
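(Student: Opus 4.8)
The plan is to follow the route sketched in the technical overview: associate with a minimal bigon $b$ the subgroup $\Gamma \le \pi_1(S)$ generated by the two sides of $b$, split into cases according to the isomorphism type of $\Gamma$ (trivial, infinite cyclic, or free of rank two, by the classification of subgroups of surface groups of genus $\ge 2$), and in each case produce enough distinct vertices of $\mathcal C$ among the $2(|U|+|D|) \ge$ (roughly) boundary-length-many vertices on $\partial b$. Since $\mathcal C$ has $n$ crossings, this forces the boundary length to be $O(n)$, and a careful count should give the bound $8n$.

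First I would set up the picture in the universal cover. A minimal bigon has sides $U \subset \widetilde{c_i^\infty}$ and $D \subset \widetilde{c_j^\infty}$ of equal length $\ell$, so $|b| = 2\ell$; the goal is $\ell \le 4n$. Let $\gamma_U, \gamma_D \in \pi_1(S)$ be the deck transformations such that $U$ and $D$ are the portions of the axes (or translates) realizing $\gamma_U$, $\gamma_D$ respectively — more precisely, since $U$ lives on a lift of $c_i^\infty$, the primitive root of the class of $c_i$ acts on this lift, and similarly for $D$. Set $\Gamma = \langle \gamma_U, \gamma_D\rangle$. Any subgroup of a genus-$\ge 2$ surface group is free (surface groups are locally free / subgroups of surface groups are surface groups or free), so $\Gamma$ is trivial, $\Z$, or $F_2$.

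The trivial case cannot occur once we know the curves are noncontractible after removing empty monogons — actually if $\gamma_U$ is trivial then $c_i^\infty$ projects to a contractible curve, and a minimal bigon on such a configuration would contain a monogon; so this case is vacuous or immediate. For the $\Z$ case, $\gamma_U$ and $\gamma_D$ are powers of a common primitive $\tau$, and the bigon lifts to the cylinder $Z = \widetilde S/\langle\tau\rangle$; here I would argue directly in the plane with the Jordan curve theorem: the two sides, being arcs of lifts of $c_i^\infty, c_j^\infty$ in $Z$ (which by Corollary~\ref{cor:simple-lift-square} have simple lifts unless there is a bigon one can reduce), bound a disk, and a planar/periodicity argument shows each crossing along $U$ is matched with a crossing along $D$ and these come from distinct vertices of $\mathcal C$; counting within one period and using that there are $n$ vertices total bounds $\ell$. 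The hard case, and the one I expect to be the main obstacle, is $\Gamma \cong F_2$ with basis related to $\gamma_U, \gamma_D$. Here I would assign to the $k$-th vertex $v_k$ along $U$ (respectively $D$) the group element $w_k \in \Gamma$ obtained by reading the "turning pattern" of the lift from a fixed basepoint — essentially the deck transformation taking a reference lift of the strand through $v_k$ to the given one, expressed as a reduced word in $\gamma_U^{\pm 1}, \gamma_D^{\pm 1}$. The key claim is that distinct reduced words $w_k$ force distinct vertices $p(v_k)$ of $\mathcal C$ (because if $p(v_k) = p(v_m)$ then $w_m w_k^{-1}$ would be a nontrivial deck transformation fixing a strand direction, forcing it to be a power of that strand's primitive element — a relation that a careful normal-form / ping-pong argument rules out unless $k=m$), and that at most a bounded number of the $v_k$'s can carry the same word; this is where the factor $8$ (as opposed to a larger constant) will come from, via a cancellation analysis showing consecutive $w_k$'s differ by appending or cancelling a single generator. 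I would then conclude: the number of distinct vertices of $\mathcal C$ is at least (length of bigon)$/8$, hence $|b| \le 8n$.

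The main obstacle is thus the $F_2$ case: making the word-assignment precise (so that it is well defined, changes by one letter per step along a side, and is genuinely injective onto vertices modulo the allowed multiplicity), and extracting the sharp constant. The other cases reduce to planar topology (Jordan curve theorem, simple lifts in the cylinder via Corollary~\ref{cor:simple-lift-square}) and should be routine by comparison.
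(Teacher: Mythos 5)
Your high-level plan follows the paper: associate with the minimal bigon $b$ the subgroup $\langle\tau_u,\tau_d\rangle \le \pi_1(S)$ generated by the classes of the two supporting curves, split into the cases trivial, $\Z$, and $F_2$, and use word/periodicity arguments in each case. The trivial and $\Z$ cases are essentially in line with the paper's treatment (though the trivial case is not vacuous: the paper handles it directly by observing that if $C_u$ is contractible then $\widetilde{C_u}$ is a closed curve, and since $U$ is simple it fits inside a single period, so $\ell_u = 0$). The real issue is in the $F_2$ case, and here your proposed mechanism has a genuine gap.

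Your plan is to assign a reduced word $w_k$ to each vertex $v_k$ along $U$ (resp.\ $D$), show that distinct reduced words correspond to distinct vertices of $\mathcal C$, and argue that consecutive $w_k$'s differ by one letter so that at most $O(1)$ of the $v_k$'s share a word. This kind of ``linear walk along the boundary'' counting cannot give a linear bound in $n$. The reason is that when $\ell_u$ or $\ell_d$ is large, the boundary of $b$ passes many times over the \emph{same} vertices of $\mathcal C$: the vertices along $U$ project to vertices of the single curve $C_u$, and there are at most $n_u \le 2n$ of them, each repeated up to $\ell_u+1$ times. So distinctness of the $p(v_k)$'s simply fails, and there is no way to use ``number of distinct vertices on $\partial b$'' to bound $\ell$. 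Moreover, the strand through $v_{k+1}$ is generically an unrelated lift of $C_u^\infty$, so there is no reason the corresponding deck transformations differ by a single generator; your ``cancellation analysis'' step is unsupported. The paper's fix is structurally different. It first reduces the theorem (via the lemma you do not mention) to showing $\ell_u \le 1$ or $\ell_d \le 1$, where $\ell_u$ (resp.\ $\ell_d$) is the number of full periods of $C_u$ (resp.\ $C_d$) fitting in $U$ (resp.\ $D$). Then in the $F_2$ case it builds a \emph{recursively branching} family of crossings inside $b$, indexed by multi-indices, which alternately bounce from $U$ to $D$ and back by applying minimality of the bigon. The associated reduced words in $\tau_u, \tau_d$ are pairwise distinct and each one names a self-crossing of $C_u$ (there are at most $2n$ of these). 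The family has size at least $\ell_d(\ell_u-1)^m(\ell_d-1)^m$ at depth $m$, so boundedness by $2n$ forces $\ell_u \le 1$ or $\ell_d \le 1$ (with a separate short argument to exclude $(\ell_u,\ell_d)=(2,2)$). The exponential growth is precisely what turns the injectivity into the desired bound, and it is the ingredient your boundary-walk scheme is missing. A linear word-assignment along $\partial b$, even if made rigorous, would at best bound $\ell_d$ by $O(n)$ (the paper in fact notes this as an intermediate observation), which only gives a quadratic bound $|b| = O(\ell_d \cdot n_d) = O(n^2)$, not $8n$.
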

The proof of this theorem follows from Lemma~\ref{lem:a-minimal-bigon-has-linear-length} applied to Propositions~\ref{prop:abelian-group-case} and~\ref{prop:free-group-case} below. We fix some notations. In the sequel 
$b$ is a minimal bigon of $\widetilde{\mathcal{C}}$ with corners $v$ and $w$. We denote by $U$ and $D$ the two sides of $b$ and by $\widetilde{C_u^\infty}$ and $\widetilde{C_d^\infty}$ the two curves in $\widetilde{\mathcal{C}}$ supporting $U$ and $D$, respectively.
We also denote by $C_u$ and $C_d$ the two (non-periodic) curves pointed at $p(v)$ in $\mathcal{C}$ such that $C_u^\infty=p(\widetilde{C_u^\infty})$ and $C_d^\infty=p(\widetilde{C_d^\infty})$. The orientation of $D$ from $v$ to $w$ induces an orientation of $C_d$. If $C_u$ coincides with $C_d$ as non-oriented and non-pointed curves, we orient $C_u$ and $C_d$ the same way. Otherwise, we orient $C_u$ consistently with the orientation of $U$ from $v$ to $w$. These orientations induce orientations of $\widetilde{C_u^\infty}$ and $\widetilde{C_d^\infty}$. See Figure~\ref{Long} for an example. 

% Let $\widetilde C_u$ and $\widetilde C_d$ be the respective lifts of $C_u$ and $C_d$ starting at $v$. 

\begin{figure}[ht!]
    \centering
	\includegraphics[height=4cm]{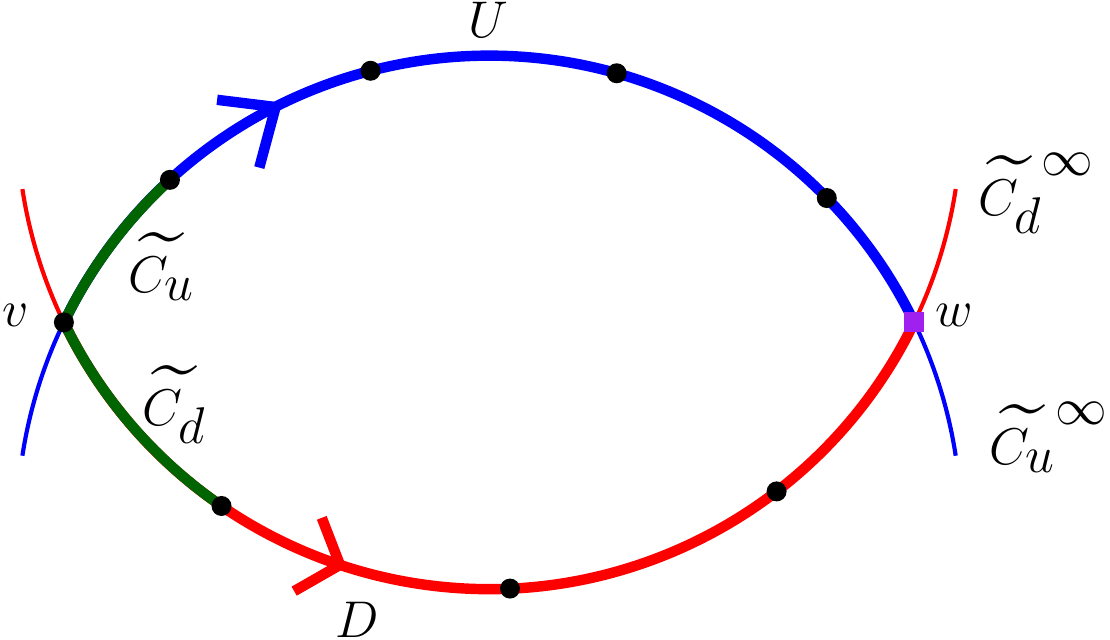}
	\hspace{0.5cm}
	\includegraphics[height=4cm]{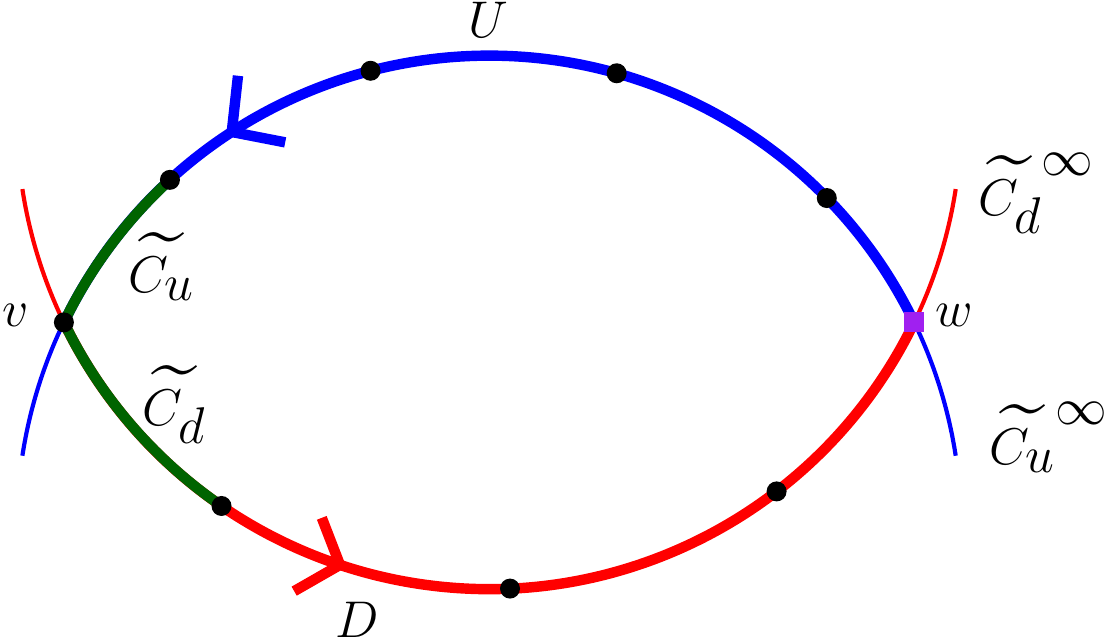}
	\caption{The respective lifts $\widetilde{C_u}$ and $\widetilde{C_d}$ of $C_u$ and $C_d$, starting at $v$. All black vertices are lifts of $p(v)$. The right figure corresponds to the case where $C_u$ and $C_d$ coincide as non-oriented and non-pointed curves.}
	\label{Long}
\end{figure}

Let $n_u=|C_u|$ and $n_d=|C_d|$ be the respective lengths of $C_u$ and $C_v$ as closed walks in $\mathcal C$.  Let $\ell_u= \lfloor (|U|-1)/n_u \rfloor$ be the number of lifts of $C_u$ that fit into $U\setminus\{w\}$. Similarly, we set $\ell_d= \lfloor (|D|-1)/n_d\rfloor$. A bound on either $\ell_u$ or $\ell_d$ bounds the length of $b$ as stated in the following lemma.

\begin{lemma}\label{lem:a-minimal-bigon-has-linear-length}
If $\ell_u\leqslant 1$ or $\ell_d\leqslant 1$ then the length of $b$ is at most $8n$.
\end{lemma}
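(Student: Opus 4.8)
The plan is to bound the total length $|U| + |D|$ of the bigon $b$ in terms of $n_u$, $n_d$, $\ell_u$, $\ell_d$ and then use the minimality of $b$ to control these quantities. Since $b$ is a minimal bigon, every curve of $\widetilde{\mathcal C}$ entering one side must leave through the opposite side (otherwise it would contribute a sub-bigon or a monogon inside the disk bounded by $b$, contradicting minimality). In particular $|U| = |D|$, so it suffices to bound $|U|$. By definition $|U| \le (\ell_u + 1) n_u$ and likewise $|D| \le (\ell_d+1) n_d$. The crux is therefore to bound $n_u$, $n_d$ (the lengths of the primitive-or-not closed walks $C_u$, $C_d$ in $\mathcal C$) and to exploit the hypothesis $\ell_u \le 1$ or $\ell_d \le 1$.

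First I would observe that $n_u \le n$ and $n_d \le n$: indeed $C_u$ and $C_d$ are closed walks in $\mathcal C$ traversing distinct edges, wait — more carefully, a closed walk in the $4$-regular graph $M = \mathcal C$ that never repeats an edge has length at most the number of edges of $M$, and the number of edges of $M$ equals $2n$ where $n$ is the number of vertices (crossings); but a single curve $c_i$ of the system, traversed once, has length equal to the number of edges it covers, and the curves partition the edge set, so $n_u, n_d \le 2n$. Actually since $C_u^\infty$ and $C_d^\infty$ are distinct lifts (or the same) of curves in the system, and each curve of the system has length at most $2n$, we get $n_u, n_d \le 2n$, and if $C_u = C_d$ then also $n_u = n_d \le 2n$. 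Next, assume WLOG $\ell_u \le 1$ (the case $\ell_d\le 1$ is symmetric, using $|U|=|D|$). Then $|U| \le (\ell_u+1)n_u \le 2 n_u \le 4n$, and since $|D| = |U|$, the length of $b$, which is $|U| + |D| = 2|U| \le 8n$, is as claimed.

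The one step that needs care is the bound $n_u \le 2n$, i.e. verifying that $C_u$ is a closed walk that does not traverse any edge of $M = \mathcal C$ more than once. This follows from the construction of the system of closed curves associated to the $4$-regular graph $M$: by definition each edge of $M$ is traversed exactly once across all the curves $c_1,\dots,c_k$, hence in particular each individual $c_i$ traverses each edge at most once, so $|c_i| \le |E(M)| = 2n$. Since $C_u$ and $C_d$ are (the non-periodic representatives of) such curves $c_i$, we obtain $n_u, n_d \le 2n$. I expect the main (very mild) obstacle to be keeping the bookkeeping between $|U|$, $\ell_u$, $n_u$ consistent — in particular using the definition $\ell_u = \lfloor (|U|-1)/n_u\rfloor$ correctly to get $|U| \le (\ell_u+1)n_u$ — together with the appeal to minimality of $b$ giving $|U| = |D|$, which is precisely the sentence ``its two sides have the same length'' recorded right after the definition of minimal bigon.
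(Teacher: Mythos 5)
Your proof is correct and follows essentially the same route as the paper: use minimality of $b$ to get $|U|=|D|$, use the hypothesis together with the definition of $\ell_u$ (or $\ell_d$) to bound one side by $2n_u \le 4n$, and conclude. The only cosmetic difference is the justification of $n_u,n_d\le 2n$: the paper counts vertex-visits (each vertex crossed at most twice since $\mathcal C$ is in general position), while you count edge-traversals (each edge of the $4$-regular graph traversed at most once across the system), which in a $4$-regular graph with $n$ vertices and $2n$ edges gives the same bound.
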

\begin{proof}
  By hypothesis, one side of $b$ contains at most two lifts of $C_u$ or of $C_d$. Since $\mathcal C$ is in general position each curve goes at most twice through a vertex, so that $n_u\le 2n$ and $n_d\le 2n$. It follows that one side of $b$, say $U$, has at most $4n$ edges. By minimality of $b$, every curve entering through $U$ must exist through $D$. Hence $U$ and $D$ have the same length. The lemma follows.
\end{proof}

In order to prove Theorem~\ref{thm:a-minimal-bigon-has-linear-length}, it is enough to prove that $\ell_u\le 1$ or $\ell_d\le 1$. We show this in different cases. We denote by $\tau_u\in\pi_1(S,p(v))$ the homotopy class of $C_u$ based at $p(v)$. Similarly we denote by $\tau_d$ the homotopy class of $C_d$ based at $p(v)$. As $S$ is oriented and not a torus, the subgroup $\langle\tau_u,\tau_d\rangle$ of $\pi_1(S,p(v))$ generated by these two elements is isomorphic either to the free group $F_2$ of rank $2$, or to $\mathbb Z$, or to the trivial group. We consider these three cases separately.

\subsection{The trivial group case}
If $\langle\tau_u,\tau_d\rangle$ is the trivial group, then $C_u$ and $C_d$ are contractible in $S$. Equivalently $\widetilde{C_u}$ and $\widetilde{C_d}$ are closed curves in $\widetilde S$. Since $U$ is a simple path, it must be a subset of $\widetilde{C_u}$. It follows that $\ell_u = 0$.
In fact, since each vertex of $V$ appears at most twice on $U$, and since the same is true for $D$, we deduce that the length of $b$ is at most $4n$.

\subsection{The $\Z$ case}
If $\langle \tau_u,\tau_d \rangle \simeq \mathbb Z$, there is a primitive element $\tau\in\pi_1(S,p(v))$ such that $\tau_u=\tau^{k_u}$ and $\tau_d=\tau^{k_d}$ for some integers $k_u$ and $k_d$. 

\begin{proposition}
\label{prop:abelian-group-case}
If $\langle \tau_u,\tau_d \rangle \simeq \mathbb Z$, then $\ell_u\le 1$ or $\ell_d\le 1$.
\end{proposition}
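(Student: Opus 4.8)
The plan is to work in the cylinder $Z = \widetilde S / \langle \tau \rangle$, where $\tau$ is the primitive element with $\tau_u = \tau^{k_u}$, $\tau_d = \tau^{k_d}$. Since the lifts $\widetilde{C_u^\infty}$ and $\widetilde{C_d^\infty}$ are invariant under $\tau_u$ and $\tau_d$ respectively, and both $\tau_u,\tau_d$ are powers of $\tau$, both curves descend to closed curves $\overline{C_u}, \overline{C_d}$ in $Z$ that wrap around the cylinder $|k_u|$ and $|k_d|$ times. The minimal bigon $b$, being a disk in $\widetilde S$, projects injectively to $Z$ (the covering $\widetilde S \to Z$ restricted to the disk $b$ is injective because $\tau$ has infinite order and any two distinct lifts of a point in $b$ differ by a nontrivial power of $\tau$, which would force the disk to be non-simply-connected after identification — more carefully, a translate $\tau^m(b)$ meeting $b$ would produce a $\tau$-periodic pattern contradicting that $b$ bounds a disk). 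So we get a bigon $\bar b$ in $Z$ bounded by an arc $\bar U \subset \overline{C_u^\infty}$ and an arc $\bar D \subset \overline{C_d^\infty}$.

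Next I would exploit the planarity of $Z \subset \mathbb{R}^2$ (or rather, argue directly on the cylinder). The key structural fact is that $\bar U$ and $\bar D$ both have endpoints at the projections of the two corners $p(v), p(w)$, and together they bound a disk $\bar b$ in $Z$ that contains no other curve, monogon or bigon (minimality of $b$ is inherited since the projection is injective on $b$ and a smaller bigon/monogon inside $\bar b$ would lift back into $b$). Now suppose for contradiction that $\ell_u \ge 2$ and $\ell_d \ge 2$, i.e. $\bar U$ contains at least two full copies of $\overline{C_u}$ and similarly for $\bar D$. I would consider how $\overline{C_u}$ and $\overline{C_d}$ wind around the cylinder core: a closed curve in $Z$ that is not contractible has a well-defined winding number (up to sign), here $k_u$ for $\overline{C_u}$ and $k_d$ for $\overline{C_d}$. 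Using the Jordan curve theorem inside the cylinder, a full periodic copy of $\overline{C_u^\infty}$ contained in $\bar U$ separates the cylinder, and I would track where the side $\bar D$ must go — it cannot cross $\bar U$ in the interior of $\bar b$, so $\bar D$ is trapped on one side of each full copy of $\overline{C_u^\infty}$ lying in $\bar U$. Counting how $\bar D$ threads between these copies while itself making two full turns forces a self-overlap or a crossing that produces a smaller bigon or monogon inside $\bar b$, contradicting minimality.

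The main obstacle I expect is making the "counting of turns forces a sub-bigon" argument precise: one has to carefully set up the picture of two bi-infinite $\tau$-periodic lines in $\mathbb{R}^2$ (the lifts of $\overline{C_u^\infty}, \overline{C_d^\infty}$ to the universal cover of $Z$, which is just $\mathbb{R}^2$ again), observe that they are graphs of periodic multi-valued functions winding with slopes determined by $k_u, k_d$, and deduce that if each side of the bigon spans two periods then the two lines must cross at least three times along the span, and consecutive crossings cut out a bigon of smaller area — here Lemma~\ref{lem:simple-lift-square} and Corollary~\ref{cor:simple-lift-square} are the natural tools to rule out degenerate self-tangencies of $\overline{C_u}$ or $\overline{C_d}$ so that "two periods" genuinely means the arc is not simple and hence not a side of a minimal bigon. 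I would structure the final contradiction as: either $\bar U$ (or $\bar D$) fails to be simple, immediately contradicting that it is a side of a bigon, or the winding forces an interior crossing of $\bar U$ with $\bar D$, again contradicting that $b$ is a bigon. Hence $\ell_u \le 1$ or $\ell_d \le 1$.
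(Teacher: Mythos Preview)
Your proposal has a genuine gap at the very first step: the claim that the bigon $b$ projects injectively to the cylinder $Z=\widetilde S/\langle\tau\rangle$ is false precisely in the situation you want to analyse. If $\ell_u\ge 2$ and $k_u\neq 0$, then the side $U$ contains both $v$ and $\tau^{k_u}(v)=\tau_u(v)$, so $b$ and $\tau^{k_u}(b)$ already share a boundary point (and in fact overlap along a whole arc of $U$). Your parenthetical justification (``a translate $\tau^m(b)$ meeting $b$ would produce a $\tau$-periodic pattern contradicting that $b$ bounds a disk'') is not an argument: nothing prevents a disk in $\widetilde S$ from meeting its own translates. So the picture of a clean bigon $\bar b$ in $Z$ bounded by arcs of two closed curves is not available, and the downstream ``threading/winding'' argument has no object to work with.

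The paper sidesteps this by never projecting to $Z$. It stays in $\widetilde S$, identified with $\R^2$ via an equivariant homeomorphism sending $\tau$ to the unit horizontal translation $\theta$, and does a case split on $(k_u,k_d)$. The cases $k_uk_d=0$, $k_u=k_d$, and $k_u=-k_d$ are each disposed of in a few lines (the last using Corollary~\ref{cor:simple-lift-square} and Jordan, close to what you suggest). The substantive case $0<|k_d|<k_u$ is handled by a concrete geometric claim: taking $g\subset\widetilde{C_u}$ the subarc between its $y$-extrema, one shows that a translate $\theta^{\pm k_d}(\widetilde{C_u^2})$ has a subpath with interior inside the component $H_u$ of $\R^2\setminus\widetilde{C_u^\infty}$ containing $b$ and with endpoints on the relative interior of $\widetilde{C_u^2}\subset U$. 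Since this translate avoids $\widetilde{C_d^\infty}$ (by $\theta^{k_d}$-invariance of $\widetilde{C_d^\infty}$ and $\ell_u\ge 2$), the subpath lands inside $b$ and cuts out a smaller bigon with $U$, contradicting minimality. Your final paragraph drifts back to $\R^2$ and gestures at this, but the ``winding forces three crossings'' heuristic is not the actual mechanism; what is needed is to locate a specific translate of one curve entering and exiting through $U$ without touching $D$, and that requires the extremal-subarc construction, not a winding count.
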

\begin{proof}
  The proof proceeds by case analysis on $k_u$ and $k_v$.
 \begin{itemize}
      \item If $k_u=0$ or $k_d=0$, then $\tau_u$ or $\tau_d$ is contractible. In this case, $U$ or $D$ contains less than a lift of $C_u$ or $C_d$, whence $\ell_u=0$ or $\ell_d=0$.  This proves the proposition in this case. 
      \item If $k_u = k_d\neq 0$, then $\tau_u=\tau_d$. It follows that $\widetilde{C_u^\infty}$ and $\widetilde{C_d^\infty}$ cross at $\tau_u\cdot v=\tau_d\cdot v$, implying $\ell_u\le 1$ and $\ell_d\le 1$. This proves the proposition also in that case.
      \item If $k_u = - k_d\neq 0$, then $\tau_u=\tau_d^{-1}$, implying that $\tau_u(\widetilde{C_d^\infty}) = \widetilde{C_d^\infty}$ crosses $\tau_u(\widetilde{C_u^\infty}) = \widetilde{C_u^\infty}$ at $\tau_u(v)$. Assume for a contradiction $\ell_u>1$ and $\ell_d>1$. Then, the lift of $C_d^2$ from $v$ being part of $D$ it must be simple and it follows from Corollary~\ref{cor:simple-lift-square} that $\widetilde{C_d^\infty}$ is also simple. It also ensues from $\ell_u>1$ that $\tau_u(v)$ lies in the relative interior of $U$. Since every crossing is transverse, this means that $\widetilde{C_d^\infty}$ enters (or exit) the bigon $b$ at $\tau_u(v)$. By the Jordan curve theorem $\widetilde{C_d^\infty}$ must cross the boundary of $b$ once again. Being simple $\widetilde{C_d^\infty}$ cannot cross $D$, so it crosses $U$ again. This however creates a smaller bigon included in $b$ in contradiction with the minimality of $b$.
      \item It remains to consider the case where $k_u$ and $k_v$ are distinct nonzero integers. Exchanging the roles of $C_u$ and $C_d$, or reverting their orientations if necessary, we may assume $0<|k_d|< k_u$. We again argue by contradiction, assuming that $\ell_u>1$ and $\ell_d>1$. As above this implies that $\widetilde{C_u^\infty}$ and $\widetilde{C_d^\infty}$ are simple curves. We consider the translation $\theta: \R^2\to\R^2, (x,y)\mapsto (x+1,y)$. Viewing $\tau$, pointed at $p(v)$, as an automorphism of $\widetilde S$, we let $\varphi: \widetilde S\to \R^2$ be an (equivariant) homeomorphism satisfying\footnote{To construct $\varphi$, one can endow $S$ with a hyperbolic metric. Then $\tau$ acts as a hyperbolic translation of the hyperbolic plane $\widetilde S$. We can then set $\varphi(\tau^y(v_x)) := (x,y)$, where $v_x$ is the point with abscissa $x$ on the hyperbolic line through $v$ orthogonal to the axis of $\tau$, and $\tau^y$ is the hyperbolic translation with the same axis as $\tau$ and (algebraic) translation length $y$.} $\varphi\circ\tau=\theta\circ \varphi$. It appears more convenient to straighten the lifts  $\widetilde{C_u^\infty}$ and $\widetilde{C_d^\infty}$ by taking their images by $\varphi$. We still denote these images by  $\widetilde{C_u^\infty}$ and $\widetilde{C_d^\infty}$ and remark that they are left globally invariant by $\theta^{k_u}$ and $\theta^{k_d}$, respectively. Also remark that $\theta^{k_d}(\widetilde{C_u^\infty})$ and $\widetilde{C_u^\infty}$ cannot coincide. Otherwise, $\widetilde{C_u^\infty}$  would be $|k_d|$-periodical. However, as $|k_d|<k_u$, this would imply that two edges of $\widetilde{C_u}$ project onto a same edge, in contradiction with the fact that $C_u$ is traversed only once by definition. In turn, this implies that $\theta^{k_d}(\widetilde{C_u^\infty})$ and $\widetilde{C_u^\infty}$ may only have transverse crossings. The same is true for $\theta^{-k_d}(\widetilde{C_u^\infty})$ and $\widetilde{C_u^\infty}$.

    Let $m$ and $M$ be points of $\widetilde{C_u}$ with minimal and maximal $y$-coordinates, respectively. We choose $m$ and $M$ such that the  subpath $g$ of $\widetilde{C_u}$ between $m$ and $M$ does not contain other points with minimal or maximal $y$-coordinate. Then $\widetilde{C_u^\infty}$ is contained in the strip $B$ bounded by the horizontal lines through $m$ and $M$. Remark that $g$ is a simple path that divides $B$ into two connected components.  Changing the orientation of the $y$-axis if necessary, we may assume that $m$ comes first along $\widetilde{C_u}$. We denote by $H_u$ the connected components of $\mathbb R^2\setminus \widetilde{C_u^\infty}$ containing the bigon $b$. We also denote by $\widetilde{C_u^2}$ the lift of $C_u^2$ starting at $v$.
    \begin{claim}\label{cl:2-crossings}
      There is a subpath of either $\theta^{k_d}(\widetilde{C_u^2})$ or $\theta^{-k_d}(\widetilde{C_u^2})$ whose relative interior is contained in $H_u$, and whose endpoints lie in the relative interior of $\widetilde{C_u^2}$.
    \end{claim}
    \begin{proof}[Proof of the claim]
      First observe that $k_u$ and $k_d$ being distinct, the curves $C_u$ and $C_d$ are also distinct. It follows that the lifts of $C_u$ can either coincide or cross away from the lifts of $p(v)$. In particular, $\theta^{k_d}(\widetilde{C_u^2})$ and $\widetilde{C_u^2}$ can only cross in their relative interiors. The same is true for $\theta^{-k_d}(\widetilde{C_u^2})$ and $\widetilde{C_u^2}$. There are  two cases to consider.
      \begin{itemize}
      \item If $H_u$ lies below $\widetilde{C_u^\infty}$, then $H_u$ lies to the right of $g$. Denote by $h$ the subpath of $\widetilde{C_u^2}$ between $M$ and $\theta^{k_u}(m)$.  On the one hand, since $\theta^{-|k_d|}(m)$ and $\theta^{k_u-|k_d|}(m)$ lie respectively to the left and right of $g$, the relative interior of the path $\theta^{-|k_d|}(g\cup h)$ must cross $g$ from left to right, thus entering $H_u$. On the other hand, $\theta^{k_u-|k_d|}(m)$ lies to the left of $\theta^{k_u}(m)$, while $\theta^{k_u-|k_d|}(M)$ lies to the right of $M$. It follows that $\theta^{k_u-|k_d|}(g)$ crosses $h$ (possibly at $\theta^{k_u-|k_d|}(m)$ or $\theta^{k_u-|k_d|}(M)$). We infer from these two crossings the existence of a subpath of $\theta^{-|k_d|}(g\cup h\cup \theta^{k_u}(g))$ with relative interior contained in $H_u$ and with endpoints on $g\cup h$. Since $g\cup h\cup \theta^{k_u}(g)\subset \widetilde{C_u^2}$, this proves the claim in this case.
      \item If $H_u$ lies above $\widetilde{C_u^\infty}$, then $H_u$ lies to the left of $\theta^{k_u}(g)$.
        On the one hand, since $\theta^{|k_d|}(M)$ and $\theta^{k_u+|k_d|}(m)$ lie respectively to the left and right of $\theta^{k_u}(g)$, we infer that the relative interior of $\theta^{|k_d|}(h)$ must cross $\theta^{k_u}(g)$ from left to right, thus exiting $H_u$. On the other hand, since $M$ and $\theta^{k_u}(m)$ lie respectively to the left and right of $\theta^{|k_d|}(g)$, we have that $h$ and $\theta^{|k_d|}(g)$ intersect. We infer from these two crossings the existence of a subpath of $\theta^{|k_d|}(g\cup h)$ with relative interior contained in $H_u$ and with endpoints on $h\cup \theta^{k_u}(g)$. This proves the claim also in this case, and ends the proof of the claim.
      \end{itemize}
    \end{proof}

\begin{figure}[ht!]
    \centering
	\includegraphics[height=4cm]{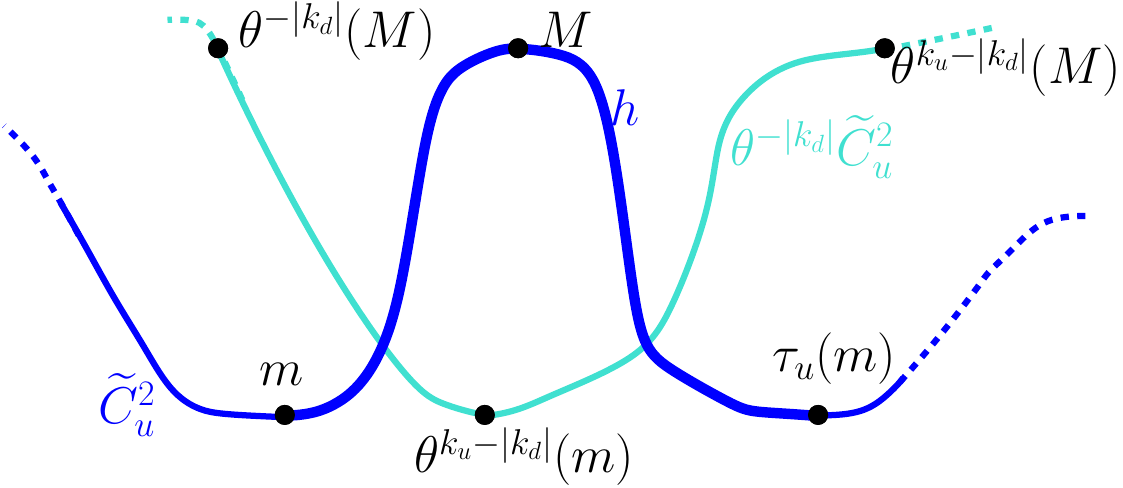}
	\caption{The bigon in $H_u$ when $H_u$ is below $\widetilde{C^\infty_u}$.}
	\label{HBigon}
\end{figure}
Now, remark that $\ell_u\ge 2$ implies that the path $\widetilde{C_u^2}$ is a part of $U$, so that its relative interior avoids $\widetilde{C_d^\infty}$. Since $\widetilde{C_d^\infty}$ is globally invariant under the action of $\theta^{k_d}$, the relative interiors of $\theta^{k_d}(\widetilde{C_u^2})$ and $\theta^{-k_d}(\widetilde{C_u^2})$ also avoid $\widetilde{C_d^\infty}$.

By Claim~\ref{cl:2-crossings} there is a subpath $\pi$ of either $\theta^{k_d}(\widetilde{C_u^2})$ or $\theta^{-k_d}(\widetilde{C_u^2})$ with endpoints in the relative interior of $U$, and otherwise contained in $H_u$. By the previous remark $\pi$ does not cross $\widetilde{C_d}^\infty$, hence does not cross $D$. It follows from the Jordan curve theorem that $\pi$ is contained in $b$, thus forming a bigon with the subpath of $U$ between the endpoints of $\pi$. This however contradicts the minimality of $b$. Hence the assumption $\ell_u>1$ and $\ell_d >1$ cannot occur, ending the proof of Proposition~\ref{prop:abelian-group-case}.
\end{itemize}
\end{proof}

\subsection{The $F_2$ Case}
We now assume $\langle \tau_u,\tau_d\rangle\simeq F_2$ and fix an orientation of $S$ once for all.
\begin{proposition}
\label{prop:free-group-case}
If $\langle \tau_u,\tau_d \rangle \simeq F_2$, then $\ell_u\le 1$ or $\ell_d\le 1$.
\end{proposition}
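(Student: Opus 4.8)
The overall strategy mirrors the $\Z$ case but is more delicate because $\langle\tau_u,\tau_d\rangle\simeq F_2$ acts on $\widetilde S$ without an invariant axis. The plan is to argue by contradiction: assume $\ell_u > 1$ and $\ell_d > 1$. As in the previous cases, since the lift of $C_u^2$ starting at $v$ is a subpath of $U$ it is simple, so by Corollary~\ref{cor:simple-lift-square} the whole bi-infinite lift $\widetilde{C_u^\infty}$ is simple; likewise $\widetilde{C_d^\infty}$ is simple. The aim is then to exhibit, inside the disk bounded by $b$, either a closed curve, a monogon, or a bigon, contradicting the minimality of $b$. Rather than doing this by a Jordan-curve argument on straightened strips (which worked because of the cyclic symmetry), I would take the combinatorial route advertised in the technical overview: associate to each crossing on the boundary of $b$ an element of the free group $\langle\tau_u,\tau_d\rangle\le\pi_1(S,p(v))$, and show that if $b$ were too long these labels would force two distinct boundary crossings to be lifts of the same vertex of $\mathcal C$ in a configuration producing a sub-bigon.

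The key steps, in order, would be as follows. First, set up coordinates: fix the lift $\widetilde{C_u}$ of $C_u$ starting at $v$ and the lift $\widetilde{C_d}$ of $C_d$ starting at $v$, so that $U$ runs through the translates $\tau_u^k(v)$ and $D$ through the translates $\tau_d^k(v)$; under the $\ell_u,\ell_d>1$ hypothesis both $U$ and $D$ contain at least two such consecutive translates in their relative interiors. Second, for a crossing $x$ lying on $U$ (resp.\ $D$), let $\alpha_x$ (resp.\ $\delta_x$) be the prefix subword of $U$ (resp.\ $D$) from $v$ to $x$, read as a path, and consider its endpoint in $\widetilde S$; two such endpoints coincide iff the corresponding group elements, expressed as words in $\tau_u,\tau_d$ after closing up through the basepoint, are equal in $F_2$. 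Third, the central counting argument: since $\widetilde{C_u^\infty}$ is simple and not globally invariant under any power of $\tau_d$ (because $\langle\tau_u,\tau_d\rangle$ is free, so $\tau_d^m\tau_u\tau_d^{-m}\ne\tau_u$ for $m\ne 0$), the translates $\tau_d^m(\widetilde{C_u^\infty})$ are pairwise distinct and each meets $\widetilde{C_u^\infty}$ transversely; one shows that as $x$ ranges over the crossings of $U$, the reduced words labeling the corresponding points of $\widetilde{\mathcal C}$ are "mostly" distinct — concretely, a word of the form $\tau_u^{a}$ or $\tau_d^{b}$ can recur only boundedly often (at most twice, by general position), and any repetition of a non-cyclic-reduced label would already close a loop inside $b$. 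Fourth, since every crossing on $\partial b$ projects to one of the $n$ vertices of $\mathcal C$ and each vertex of $\mathcal C$ has at most two preimages appearing on a simple boundary arc, a boundary of length exceeding $8n$ forces a pigeonhole collision: two crossings $x\ne x'$ on $U$ (or on $D$, or one on each) with $p(x)=p(x')$ and equal group labels, hence $x=x'$ in $\widetilde S$ — impossible since $U$ and $D$ are simple — or else with labels differing by an element fixing the relevant lift, which produces a closed subcurve or a sub-bigon of $b$. Either way we contradict minimality, so $\ell_u\le 1$ or $\ell_d\le 1$.

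The main obstacle, as the introduction itself flags, is Step three: making precise the claim that the free-group words attached to boundary crossings are "essentially all distinct," and in particular ruling out pathological recurrences coming from the two passes a single curve may make through a vertex and from cyclic conjugation. The delicate point is that the natural labels live in $\pi_1(S,p(v))$, not in the abstract $F_2$, so one must check that the inclusion $\langle\tau_u,\tau_d\rangle\hookrightarrow\pi_1(S,p(v))$ lets us read off coincidences of lifted points purely from equality of reduced words — this is where freeness of the subgroup (guaranteed by the hypothesis, $S$ oriented and not a torus) is used essentially — and then to convert a word-coincidence into an honest topological sub-bigon, which requires a careful analysis of on which side of $U$ (the side containing $b$) the translated arc lies, much as $H_u$ was used in the $\Z$ case. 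I would isolate this as a separate claim (the analogue of Claim~\ref{cl:2-crossings}) and prove it by tracking, for each boundary crossing, the first letter of the reduced word and showing the induced map to the boundary circle of $b$ is injective outside a controlled exceptional set of size $O(n)$.
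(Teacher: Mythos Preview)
Your plan captures the right starting intuition --- associate group-theoretic labels to boundary crossings and exploit uniqueness of reduced words in $F_2$ --- but diverges from the paper's argument at the crucial step, and the divergence is where your plan has a real gap.

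The paper does \emph{not} try to produce a sub-bigon inside $b$. Instead it constructs, for each multi-index $\multiIndex{i}=(i_0,\ldots,i_{2m})$ in a set $I_m$, a point $y_{\multiIndex{i}}$ on $\partial b$ obtained by repeatedly ``bouncing'' between $U$ and $D$: start at the $i_0$-th $\tau_d$-translate of $v$ on $D$, follow the unique lift of $C_u^\infty$ through that point until it hits $U$ (this uses minimality of $b$), take the $i_1$-th $\tau_u$-translate of that hit along $U$, follow the lift through it back to $D$, and so on. Each $y_{\multiIndex{i}}$ with $|\multiIndex{i}|$ odd corresponds to a self-crossing of $C_u$, and the associated element $\gamma_{\multiIndex{i}}\in\pi_1(S,p(v))$ has an explicit form
\[
\gamma_{\multiIndex{i}}=\tau_u^{A}\,\tau_d^{-i_0}\Big(\prod_{q=1}^m\tau_u^{*}\tau_d^{*}\Big)\tau_u^{B},
\]
where the starred exponents encode $i_1,\ldots,i_{2m}$. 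Uniqueness of reduced forms in $F_2$ makes $\multiIndex{i}\mapsto\gamma_{\multiIndex{i}}$ injective on $I_m$. But each $\gamma_{\multiIndex{i}}$ is some $\gamma^U_{s,t}$ attached to a self-crossing of $C_u$, and there are at most $2n$ such, so $|I_m|\le 2n$ for all $m$. A direct count gives $|I_m|\ge\ell_d(\ell_u-1)^m(\ell_d-1)^m$, forcing $(\ell_u-1)(\ell_d-1)\le 1$; a short extra argument with $\bigcup_k I_{2k}$ rules out $(\ell_u,\ell_d)=(2,2)$.

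Your Step four --- pigeonhole on projections of boundary vertices, then extract a sub-bigon --- is where the plan breaks down. Two crossings on $U$ with the same projection to $S$ are abundant (that is just the $\tau_u$-periodicity of $U\subset\widetilde{C_u^\infty}$) and do not by themselves yield a sub-bigon: the deck transformation relating them is a power of $\tau_u$, which stabilises $\widetilde{C_u^\infty}$ and need not carry any arc into the interior of $b$. What one actually needs is a collision among labels of the kind the paper uses, and the mechanism that makes such labels scarce is precisely the recursive bouncing construction you do not yet have. Your proposed ``injective map to the boundary circle outside a controlled exceptional set'' is too vague to supply the exponential-versus-bounded tension $|I_m|\ge\ell_d(\ell_u-1)^m(\ell_d-1)^m$ that drives the paper's argument, and nothing in the outline indicates how a coincidence of labels would be converted into an honest sub-bigon on the correct side of $U$.
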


\begin{proof}
  We index the vertices of $C_u$ by $\Z/n_u\Z$ so that
  $C_u(0),C_u(1),\ldots,C_u(n_u-1)$ is the sequence of vertices along $C_u$. Note that each vertex that is a self-crossing of $C_u$ appears twice in this list. We similarly index the vertices of $C_d$ by $\Z/n_d\Z$. In the case $C_u=C_d$ as non-oriented and non-pointed curves, we use the same indexation, i.e. $C_u(i)=C_d(i)$ for all $i$. The indexation of $C_u$ determines an indexation of any lift of $C_u^\infty$ up to a shift by a multiple of $n_u$. The same holds for the indexations of  $C_d$ and of the lifts of $C_d^\infty$.
 If $s$ is the index of a vertex of $\widetilde{C_u^\infty}$ we denote by $\overline{s}=s+n_u\Z$ the class of $s$ in $\Z/n_u\Z$, which we also view as an element of $\{0,1,\dots n_u-1\}$. We similarly write $\overline{s}=s+n_d\Z$ if $s$ is an index along $\widetilde{C_d^\infty}$.
  Let $s_0$ and $t_0$ be the indices of  $v$ on $\widetilde{C_u^\infty}$ and $\widetilde{C_d^\infty}$, respectively. 
  
  To every index $s\in \Z/n_u\Z$ we associate the path $g_s := C_u([\overline{s_0},s])$. Similarly for every $t\in \Z/n_d\Z$ we set $h_t := C_d([\overline{t_0},t])$.  We then define $\gamma_{s,t}^U\in \pi_1(S, p(v))$ for every $(s,t)\in \Z/n_u\Z\times \Z/n_u\Z$ such that $C_u(s)=C_u(t)$ as the homotopy class of the loop $g_s\cdot g_t^{-1}$. We also define $\gamma_{s,t}^D$  for every $(s,t)\in \Z/n_u\Z\times \Z/n_d\Z$ such that $C_u(s)=C_d(t)$ as the homotopy class of the loop $g_s\cdot h_t^{-1}$. 

\begin{figure}[ht!]
    \centering
	\includegraphics[height=5cm]{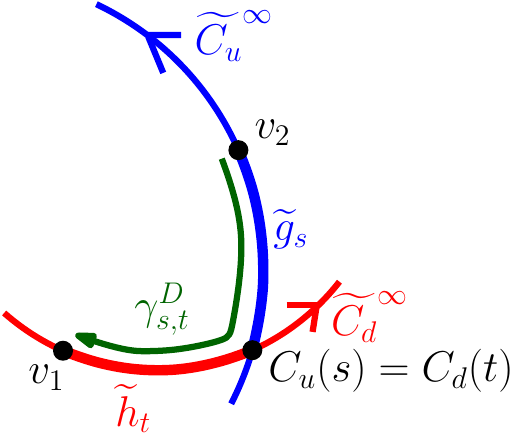} \hspace{1 cm} \includegraphics[height=5cm]{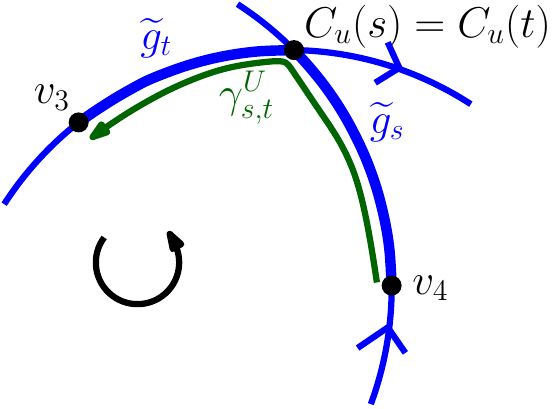}
	\caption{Definition of $\gamma^U_{s,t}$ and $\gamma^D_{s,t}$  in the universal covering.}
	\label{Gamma}
\end{figure}

Let $v_i$ be the vertex with index $t_0+i.n_u$ on $\widetilde{C_d^\infty}$. Hence, $v_0=v$ and 
$v_1,\ldots, v_{\ell_d}$ lie in the relative interior of $D$. Fix $1\le i\le\ell_d$. There is a unique lift $\widetilde{C_i}$ of $C_u^\infty$ crossing $D$ at $v_i$. By minimality of $b$, this lift must cross $U$ at some vertex $y_i$. Let $s_i, t_i$ be the indices of $y_i$ along $\widetilde{C_i}$ and $U\subset\widetilde{C_u^\infty}$, respectively. The lift of $g_{\overline{s_i}}\cdot g_{\overline{t_i}}^{-1}$ passing through $y_i$ may be completed in a closed curve formed by the segment of $D$ between $v$ and $v_i$ then the segment of $\widetilde{C_i}$ from $v_i$ to $y_i$ and the segment of $U$ between $y_i$ and $v$ (see Figure~\ref{Longmax}). It ensues that $\gamma_{\overline{s_i},\overline{t_i}}^U=\tau_u^{\alpha_i}\tau_d^{-i}\tau_u^{\beta_i}$ for some $\alpha_i, \beta_i\in\Z$.
\begin{figure}[ht!]
    \centering
	\includegraphics[height=4cm]{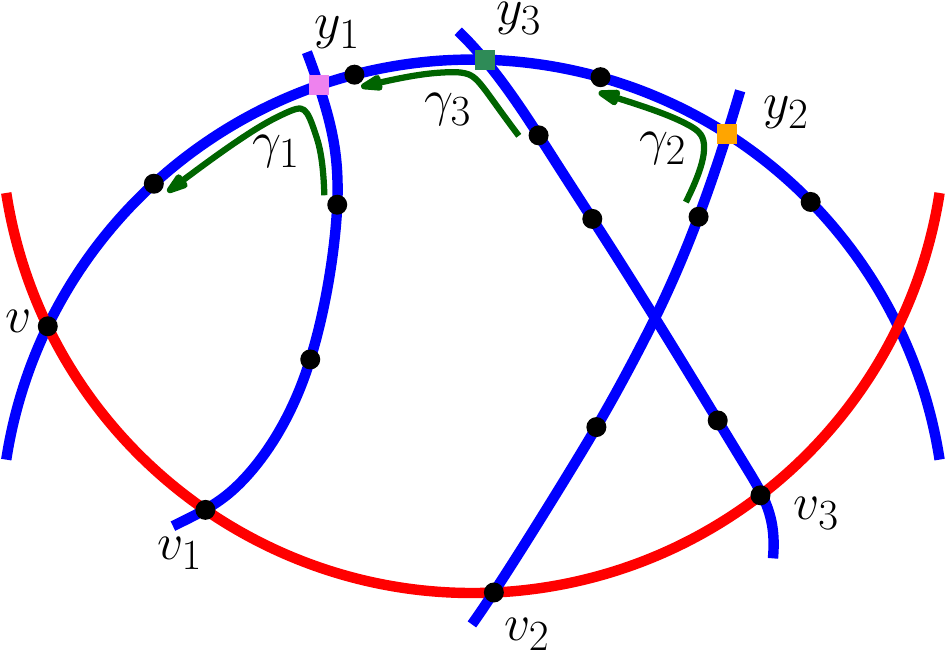}
	\caption{The $y_i$'s and $\gamma_i$'s. All dark dots are lifts of $v$.}
	\label{Longmax}
\end{figure}
From the uniqueness of reduced forms in a free group, we deduce $\gamma_{\overline{s_i},\overline{t_i}}^U\neq \gamma_{\overline{s_j},\overline{t_j}}^U$. Since each $(\overline{s_i},\overline{t_i})$ corresponds to a self-crossing of $C_u$, the index $\overline{t_i}$ is determined by $\overline{s_i}$ and we infer that $\ell_d\le n_d \le 2 n$. To prove the proposition, we shall iterate the same sort of argument.

For a multi-index $\multiIndex{i}=(i_0,\ldots,i_m)\in \N^{m+1}$ and an index $j$, we set $\multiIndex{i}\cdot j:=(i_0,\ldots,i_m,j)$. Moreover, we let $|\multiIndex{i}|:=m+1$ denote the length of $\multiIndex{i}$,  and for $q\in\{0,\ldots m\}$ we define $\multiIndex{i}(q)$ as the restriction of $\multiIndex{i}$ to its first $q+1$ values.
We recursively define $y_{\multiIndex{i}}$ as an intersection of some lift of $C_u^{\infty}$ with either $U$ (when $|\multiIndex{i}|$ is even) or $D$ (when $|\multiIndex{i}|$ is odd) by the following construction. These definitions will be valid for all $\multiIndex{i}$ such that $1\le i_0\le \ell_d$, and such that $i_q$ lies in some interval $J_{\multiIndex{i}}$ of size $\ell_d$ or $\ell_d+1$ when $q$ is even, and of size $\ell_u$ or $\ell_u+1$ when $q$ is odd.
%$0\le i_q< \ell_d$ if $q$ is even and not $0$ and $0\le i_q< \ell_u$ if $q$ is odd.
\begin{itemize}
\item If $|\multiIndex{i}|=1$, then $y_{\multiIndex{i}}=y_{i_0}$ as defined previously.
  \end{itemize} 
Suppose that $y_{\multiIndex{i}}$ is already defined. We define $y_{\multiIndex{i}\cdot j}$ for $j$ such that $\multiIndex{i}\cdot j$ is a valid index.
  \begin{itemize}
  \item If $|\multiIndex{i}|$ is odd, then by assumption $y_{\multiIndex{i}}$ is a crossing of $U$ with some lift of $C_u^\infty$. Let $t_{\multiIndex{i}}$ be the index of $y_{\multiIndex{i}}$ along $U\subset \widetilde{C_u^\infty}$. We let $z_{j}=\widetilde{C_u^\infty}(t_{\multiIndex{i}}+j.n_u)$ be the $j$'th ``copy'' of $y_{\multiIndex{i}}$ along $\widetilde{C_u^\infty}$. We denote by $J_{\multiIndex{i}}$ the set of $j$'s such that $z_j$ lies in the relative interior of $U$. For $j\in J_{\multiIndex{i}}$ the vertex $z_j$ is crossed by some lift of $C_u^\infty$ that must cross $D$ by minimality of $b$. We finally define $y_{\multiIndex{i}\cdot j}$ as this last crossing (see left Figure~\ref{Ys}).

\begin{figure}[ht!]
    \centering
	\includegraphics[height=4.3cm]{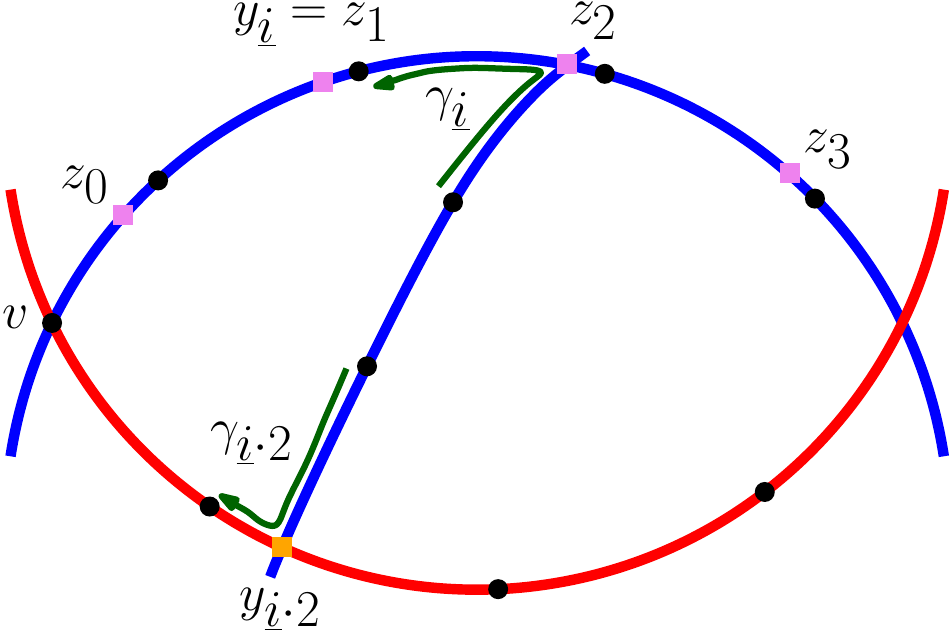}\hspace{0,5cm} \includegraphics[height=4.3cm]{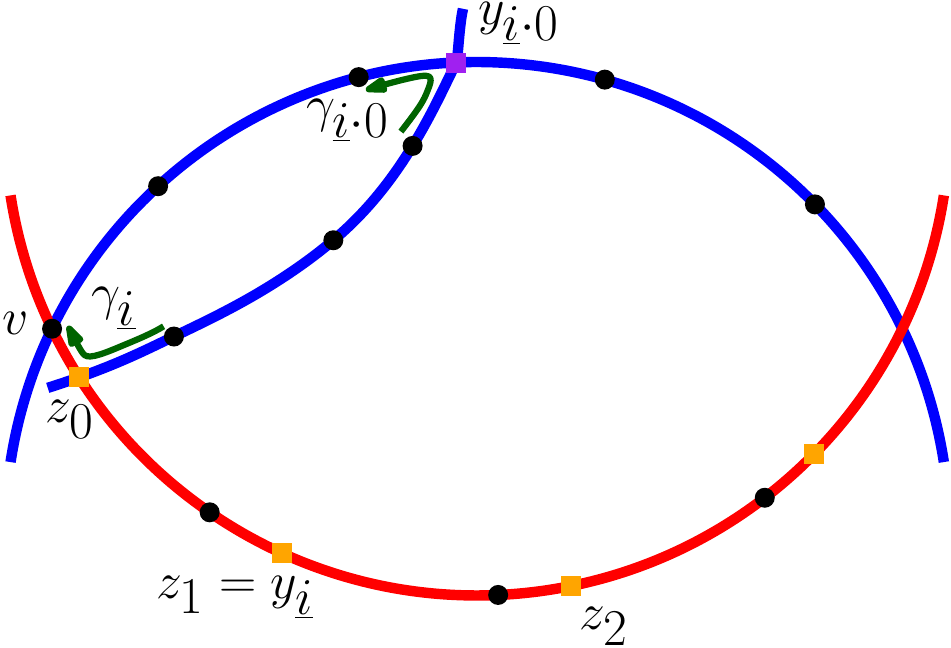}
	\caption{Definition of $y_{\multiIndex{i}\cdot j}$ from $y_{\multiIndex{i}}$}
	\label{Ys}
\end{figure}

\item Similarly, if $|\multiIndex{i}|$ is even, then $y_{\multiIndex{i}}$ is a crossing of $D$ with some lift of $C_u^\infty$. Let $s_{\multiIndex{i}}$ be the index of $y_{\multiIndex{i}}$ along $D\subset \widetilde{C_d^\infty}$. We let $z_{j}=\widetilde{C_d^\infty}(s_{\multiIndex{i}}+j.n_d)$ be the $j$'th ``copy'' of $y_{\multiIndex{i}}$ along $\widetilde{C_d^\infty}$. We again denote by $J_{\multiIndex{i}}$ the set of $j$'s such that $z_j$ lies in the relative interior of $U$. For such indices $z_j$ is crossed by some lift of $C_u^\infty$ that also crosses $U$ by minimality of $b$. We let $y_{\multiIndex{i}\cdot j}$ be this crossing (see left Figure~\ref{Ys}).
\end{itemize}
We now define $\gamma_{\multiIndex{i}}\in \pi_1(S, p(v))$ for all valid multi-indices.
\begin{itemize}
\item If $|\multiIndex{i}|$ is odd, then $y_{\multiIndex{i}}$ lies at the intersection of $\widetilde{C_u^\infty}$ with some other lift of $C_u^\infty$. We let $s_{\multiIndex{i}}, t_{\multiIndex{i}}$ be the respective indices of $y_{\multiIndex{i}}$ along these two lifts. 
We then put $\gamma_{\multiIndex{i}} = \gamma^U_{\overline{s_{\multiIndex{i}}}, \overline{t_{\multiIndex{i}}}}$.
  \item If $|\multiIndex{i}|$ is even, then $y_{\multiIndex{i}}$ lies at the intersection of some lift of $C_u^\infty$ with $\widetilde{C_d^\infty}$. We let $s_{\multiIndex{i}}, t_{\multiIndex{i}}$ be the respective indices of $y_{\multiIndex{i}}$ along these two curves. 
We then put $\gamma_{\multiIndex{i}} = \gamma^D_{\overline{s_{\multiIndex{i}}}, \overline{t_{\multiIndex{i}}}}$.
\end{itemize}
The definition of $\gamma_{\multiIndex{i}}$ induces the following recursive formulas, where $\alpha_{\multiIndex{i}\cdot j}$, $\beta_{\multiIndex{i}\cdot j}$ and $f_{\multiIndex{i}}(j)$ are integers:
\begin{itemize}
    \item If $|\multiIndex{i}|$ is odd, $\gamma_{\multiIndex{i}\cdot j}=\tau_u^{\alpha_{\multiIndex{i}\cdot j}}\gamma_{\multiIndex{i}}\tau_u^{f_{\multiIndex{i}}(j)}\tau_d^{\beta_{\multiIndex{i}\cdot j}}$. Here, $f_{\multiIndex{i}}(j)$ has the form $j_{\multiIndex{i}}-j$ or $j_{\multiIndex{i}}+j$ (for some constant $j_{\multiIndex{i}}$) depending on whether the orientation of $\widetilde{C_u^\infty}$ agrees with the orientation of $U$ from $v$ to $w$, or not.
    
    \item If $|\multiIndex{i}|$ is even, $\gamma_{\multiIndex{i}\cdot j}=\tau_u^{\alpha_{\multiIndex{i}\cdot j}}\gamma_{\multiIndex{i}}\tau_d^{f_{\multiIndex{i}}(j)}\tau_u^{\beta_{\multiIndex{i}\cdot j}}$. Here, $f_{\multiIndex{i}}(j)$ has the form $j_{\multiIndex{i}}-j$.
\end{itemize}
So if $|\multiIndex{i}|$ is odd, $\gamma_{\multiIndex{i}\cdot j\cdot h}=\tau_u^{\alpha_{\multiIndex{i}\cdot j\cdot h}+\alpha_{\multiIndex{i}\cdot j}}\gamma_{\multiIndex{i}}\tau_u^{-j}\tau_d^{\beta_{\multiIndex{i}\cdot j}-h}\tau_u^{\beta_{\multiIndex{i}\cdot j\cdot h}}$.
By induction, we can write an explicit formula for $\gamma_{\multiIndex{i}}$ with $|i|=2m+1$:
\begin{align}
    \label{Equa} \gamma_{\multiIndex{i}}=\tau_u^{\sum_{q=0}^{2m}\alpha_{\multiIndex{i}(q)}}\tau_d^{-i_0}\left(\prod_{q=1}^m\tau_u^{\beta_{\multiIndex{i}(2q-2)}-f_{\multiIndex{i}(2q-2)}(i_{2q-1})}\tau_d^{\beta_{\multiIndex{i}(2q-1)}-f_{\multiIndex{i}(2q-1)}(i_{2q})}   \right)\tau_u^{\beta_{\multiIndex{i}}}
\end{align}
Let $I_m$ denote the set of all valid $\multiIndex{i}$ of length $2m+1$, i.e. such that 
\begin{itemize}
  \item $1\le i_0\le \ell_d$,
\item $\forall q\in\{1,\ldots, 2m\}$, $i_q\in J_{\multiIndex{i}(q-1)}$, and
  \item $\forall q\in\{1,\ldots, 2m\}$, $f_{\multiIndex{i}(q-1)}(i_q)\neq\beta_{\multiIndex{i}(q-1)}$,
  % \item $0\le i_q< \ell_d$ if $q$ is even and not $0$, and
  % \item  $0\le i_q< \ell_u$ if $q$ is odd.
\end{itemize}
Then the following claim holds :
\begin{claim}\label{cl:Gamma_m-injective}
The map $\Gamma_m:I_{m}\rightarrow \pi_1(S,x),\, \multiIndex{i}\mapsto\gamma_{\multiIndex{i}}$ is injective.
\end{claim}
\begin{proof}[Proof of the claim]
  Let $\multiIndex{i}$ and $\boldsymbol{\underline j}$ be two elements of $I_{m}$ such that $\gamma_{\multiIndex{i}}=\gamma_{\boldsymbol{\underline j}}$. By Formula \eqref{Equa} and by definition of $I_m$, the expression of $\gamma_{\multiIndex{i}}$ and $\gamma_{\boldsymbol{\underline j}}$ in the basis $\tau_u,\tau_d$ are reduced. By uniqueness of reduced form in a free group, we must have $i_0=j_0$ and for all $1\le q\le 2m$, \[\beta_{\multiIndex{i}(q-1)}-f_{\multiIndex{i}(q-1)}(i_{q}) = \beta_{\multiIndex{j}(q-1)}-f_{\multiIndex{j}(q-1)}(j_{q}).
    \]
Let us prove by induction that $\multiIndex{i}(q)=\boldsymbol{\underline j}(q)$ for all $q\le 2m$.
We already know that $i_0=j_0$, i.e. $\multiIndex{i}(0)=\boldsymbol{\underline j}(0)$. 
For the inductive step, let $0< q\le 2m$. Assume that $\multiIndex{i}(q-1)=\boldsymbol{\underline j}(q-1)$. Then $\beta_{\multiIndex{i}(q-1)}=\beta_{\boldsymbol{\underline j}(q-1)}$. So $f_{\multiIndex{i}(q-1)}(i_{q}) =f_{\multiIndex{j}(q-1)}(j_{q})$, implying $i_q=j_q$. Whence $\multiIndex{i}(q)=\boldsymbol{\underline j}(q)$ as desired.
It follows that $\multiIndex{i}=\multiIndex{i}(2m)=\boldsymbol{\underline j}(2m)=\boldsymbol{\underline j}$, which proves the claim.
\end{proof}

\begin{claim}%Ou we claim that si la preuve fait une ligne.
\label{cl:Im lower bound}
For all $m\ge 0$, we have $2n \ge |I_{m}|\ge \ell_d(\ell_u-1)^m(\ell_d-1)^m$.
\end{claim}
\begin{proof}[Proof of the claim]
  Since $\Gamma_m(\multiIndex{i}) = \gamma^U_{\overline{s_{\multiIndex{i}}}, \overline{t_{\multiIndex{i}}}}$ and since every index pair $(\overline{s_{\multiIndex{i}}}, \overline{t_{\multiIndex{i}}})$ corresponds to a self-crossing of $C_u(\overline{s_{\multiIndex{i}}}) = C_u(\overline{t_{\multiIndex{i}}})$, the map $\Gamma_m$ may take at most $2n$ distinct values. By Claim~\ref{cl:Gamma_m-injective}, the map $\Gamma_m$ is injective. It follows that $|I_{m}|\le 2n$, thus proving
  the left inequality in the claim.
  
  We use induction on $m$ for the right inequality. We have $|I_{0}|=|\{1\le i_0\le \ell_d\}|=\ell_d$, implying the claim for $m=0$. For the inductive step, suppose that $|I_{m}|\ge \ell_d(\ell_u-1)^m(\ell_d-1)^m$ for some $m\ge 0$. Let $\multiIndex{i}\in I_m$. Then for any $i_{2m+1}\in J_{\multiIndex{i}}$ such that $i_{2m+1}\neq \beta_{\multiIndex{i}}$ and for any $i_{2m+2}\in J_{\multiIndex{i}\cdot i_{2m+1}}$ such that $i_{2m+2}\neq \beta_{\multiIndex{i}\cdot i_{2m+1}}$, we have $\multiIndex{i}\cdot i_{2m+1}\cdot i_{2m+2}\in I_{m+1}$. As previously noted,  $J_{\multiIndex{i}}$ contains $\ell_d$ or $\ell_d+1$ indices and, similarly, $|J_{\multiIndex{i}\cdot i_{2m+1}}|\ge \ell_u$. It ensues that  $|I_{m+1}|\ge |I_m|(\ell_u-1)(\ell_d-1)\ge \ell_d(\ell_u-1)^{m+1}(\ell_d-1)^{m+1}$, concluding the induction step.
\end{proof}

By Claim~\ref{cl:Im lower bound}, we must have $\ell_d(\ell_u-1)^m(\ell_d-1)^m\le 2n$ for all $m\ge 0$. The only possible values for $(\ell_u,\ell_d)$ are thus $(2,2)$, $(*,1)$, $(1,*)$, $(0,*)$ and $(*,0)$, where the star denotes an arbitrary natural number. Hence, Proposition~\ref{prop:free-group-case} will be proved if we can exclude the case $(\ell_u,\ell_d)=(2,2)$.
To this end, first note that for $\multiIndex{i}\in I_m$ the expression in \eqref{Equa} is reduced, ignoring the leftmost or rightmost factors whenever $\sum_{q=0}^{2m}\alpha_{\multiIndex{i}(q)}$ or $\beta_{\multiIndex{i}}$ is zero. Depending on these last two conditions, this reduced expression may have $2m+1$, $2m+2$ or $2m+3$ factors. So, if $k\ge m+2$, then\footnote{In fact, we also have $\Gamma_m(I_m)\cap\Gamma_{m+1}(I_{m+1})=\varnothing$.} $\Gamma_m(I_m)\cap\Gamma_{k}(I_k)=\varnothing$. It ensues that the map $\Gamma: \bigcup_{k=0}^\infty I_{2k}\to \pi_1(S,x),\multiIndex{i}\mapsto\gamma_{\multiIndex{i}}$ is injective. Now, if $\ell_u=\ell_d=2$, then $|I_m|\ge 2$ for all $m\in\mathbb N$, and $\bigcup_{k=0}^\infty I_{2k}$ contains infinitely many multi-indices.  However, this contradicts the fact that the injective map $\Gamma$ may take at most $2n$ distinct values. We conclude that the case $(\ell_u,\ell_d)=(2,2)$ does not occur, thus ending the proof of Proposition~\ref{prop:free-group-case}.
\end{proof}

\section{Computation of the Area of a Bigon}
\label{sec:computation-area}
We describe an efficient algorithm to compute the area of a bigon of $\widetilde{\mathcal C}$ for a system of closed curves $\mathcal C$ in $S$ in general position. More generally, our formula applies to any closed walk $c$ in $\widetilde{\mathcal{C}}$ and computes some signed area of $c$ described below. Up to the sign, this signed area coincides with the area enclosed by $c$ when $c$ is simple.

Let $H$ be a locally finite connected graph cellularly embedded in the plane. By locally finite, we mean that each vertex and face of $H$ have a finite degree. For every face $f$ of $H$ and every  (oriented) closed walk $c$ of $H$, the winding number $wind(c,f)$ of $c$ around $f$ is the number of times $c$ turns counterclockwise around $f$~\cite{Wiki:WN}. Formally, $wind(c,f)$ is the degree of the application $c:S^1\rightarrow \mathbb R^2\backslash\{x\}$ where $x$ is any interior point of $f$. Note that $wind(c,f)=0$ for all faces $f$ of $H$ except for a finite number of them. We may thus define the \define{signed area} of $c$ by $A(c)=\sum_{f}wind(c,f)$ where the sum is over all faces of $H$.
Now, consider a contractible closed walk in the graph $G$ of a combinatorial surface.
Any of its lifts $\widetilde c$ is a closed walk in $\widetilde G=p^{-1}(G)$ (recall that $p$ is the universal covering map) and we define the \define{signed area} of $c$ as the signed area of $\widetilde c$. Since the lifts of $c$ are related by automorphisms of $\widetilde G$, this definition does not depend on the choice of the lift $\widetilde c$.

\begin{theorem}\label{thm:linear-time-area-computation}
Let $(S,G)$ be a combinatorial surface of genus $g\ge 2$. After a precomputation in $O(n)$ time, the signed area of any oriented contractible closed walk $c$ of length $\ell$ in $G$ can be computed in $O(\ell)$ time.
\end{theorem}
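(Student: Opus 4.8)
The plan is to reduce the computation of the signed area to a one-dimensional "boundary integral" along the walk $c$, by exploiting a tree--cotree decomposition of $G$ to collapse the combinatorial surface to a single vertex and a single face, where the discrete Stokes identity becomes transparent. First I would set up a tree--cotree decomposition of $(S,G)$ in the sense of Eppstein~\cite{Epp02}: choose a spanning tree $T$ of $G$, a spanning tree $T^*$ of the dual $G^*$ disjoint from $T$, and let $X$ be the remaining $2g$ edges. Contracting the edges of $T$ and deleting the edges of $T^*$ turns $G$ into a one-vertex, one-face graph $G_0$ with $2g$ loops (a bouquet), and this whole computation is the $O(n)$ precomputation. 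I would record, for each edge $e$ of $G$, the reduced word in the generators of $\pi_1$ (equivalently, the sequence of $X$-edges) obtained by following $e$ together with the unique tree paths in $T$ joining its endpoints to the root; this is the standard way a walk in $G$ maps to a walk in $G_0$, and it preserves contractibility and signed area because contracting a tree edge and deleting a cotree edge each induce a homeomorphism-compatible deformation that does not change winding numbers around the surviving faces.

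The core step is a discrete analogue of Stokes' theorem on $G_0$ (and its universal cover). For the bouquet $G_0$ with $2g$ loops $a_1,b_1,\dots,a_g,b_g$ bounding the single relator face, the universal cover $\widetilde{G_0}$ is the Cayley graph of the surface group, whose faces are in bijection with group elements; I would assign to each directed edge a scalar "primitive" $\omega(e)$ so that for every face $f$ the sum of $\omega$ over the boundary of $f$ equals $1$ (this is exactly the discrete $1$-form whose coboundary is the fundamental class, and it exists because $\widetilde{G_0}$ is planar and simply connected so $H^2$ is generated by the area cochain). Then for any closed walk $\widetilde c$ in $\widetilde{G_0}$, expanding $\sum_f wind(\widetilde c,f)$ as a telescoping sum over faces and using that $wind(\widetilde c,f)$ only jumps by $\pm 1$ across an edge of $\widetilde c$, one gets $A(\widetilde c)=\sum_{e\in\widetilde c}\omega(e)$. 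Concretely, because $\widetilde{G_0}$ has one vertex-orbit, $\omega$ can be described by a handful of rational constants (one per generator, plus corrections from the cyclic order in the relator), precomputed once; evaluating $\sum_{e\in c}\omega(e)$ along a walk of length $\ell$ in $G$ then costs $O(\ell)$, after translating each edge of $G$ into its $O(1)$-amortized contribution via the precomputed tree-path data. I would take care that the translation from a walk in $G$ to a walk in $G_0$ does not blow up the length: each edge of $G$ contributes a bounded-on-average number of $X$-edges only if one sums contributions cleverly, so rather than literally expanding $c$ into a walk of length $\Theta(\ell g)$ in $G_0$, I would precompute for each edge $e$ of $G$ a single number $\Omega(e)$ equal to the total $\omega$-contribution of $e$ plus the half-edge corrections at its endpoints relative to the tree $T$, so that $A(c)=\sum_{e\in c}\Omega(e)$ directly, giving the clean $O(\ell)$ bound.

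The main obstacle I anticipate is twofold. First, making the "half-edge correction" bookkeeping rigorous: when one contracts $T$, an edge $e=(x,y)$ of $G$ becomes a loop whose homotopy class is $w_x^{-1}\cdot e\cdot w_y$ where $w_x,w_y$ are the tree words; the naive sum $\sum\omega$ over these loop-words telescopes only if the $w_x$ contributions cancel in consecutive edges of $c$, which they do since $c$ is a walk, but one must check the edge and corner terms align — this is where a careful definition of $\Omega(e)$ and a proof that $\sum_{e\in c}\Omega(e)=A(\widetilde c)$ is needed, and it is the heart of the Stokes argument. Second, one must verify that contraction and deletion genuinely preserve the signed area, i.e. that the winding numbers around the surviving faces are unchanged: deleting a cotree edge merges two faces $f_1,f_2$ into $f_1\cup f_2$, and since $c$ does not cross the deleted edge, $wind(c,f_1)=wind(c,f_2)=wind(c,f_1\cup f_2)$, while contracting a tree edge is an ambient homeomorphism near that edge and leaves all windings fixed — these are intuitively clear but deserve a line each. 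Everything else (building $T$, $T^*$, $X$ in $O(n)$, rooting $T$, storing tree words implicitly via parent pointers and lengths, and the $O(\ell)$ scan) is routine. I would also note, as in the theorem's hypothesis $g\ge2$, that $S$ being neither the sphere nor the torus guarantees $\widetilde S$ is the hyperbolic plane and hence genuinely planar and simply connected, which is all the discrete Stokes argument uses; the case $g=0,1$ is either trivial or excluded.
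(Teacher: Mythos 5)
Your high-level strategy --- tree--cotree reduction to a one-vertex graph and a discrete Stokes formula evaluated by a single $O(\ell)$ scan --- matches the paper, but two of your concrete claims are false, and they are the crux of the matter.

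\textbf{Deletion does not preserve the signed area.} You argue that deleting a cotree edge merges two faces $f_1,f_2$ and $wind(c,f_1)=wind(c,f_2)=wind(c,f_1\cup f_2)$. That equality of winding numbers is correct, but the signed area is the \emph{sum} over all faces of the graph. A region of winding number $k$ covered by two faces contributes $2k$ before the deletion and only $k$ after it. So the signed area with respect to the bouquet $G_0$ is strictly smaller than the one you want. The paper avoids this: it contracts $T$ to get $G'=G/T$ (contraction genuinely preserves face count, so $A(c)=A'(c/T)$), works with $G_L$ only as an auxiliary graph, and then recovers $A'$ from $A_L$ by multiplying by $|F|$ (each $G_L$-face contains $|F|$ faces of $G'$) and subtracting a correction $\sum_{e\in c/T}A(e)$ for the pieces of $G_L$-faces cut off by the cotree side of the walk.

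\textbf{No equivariant $1$-cochain $\omega$ with $\delta\omega\equiv 1$ exists.} The unique face of the reduced graph $G_L$ (or your bouquet) has a boundary walk that traverses each of the $2g$ edges exactly twice, with opposite orientations. For any co-boundary $\omega$ (i.e.\ any map with $\omega(\bar e)=-\omega(e)$) the sum around that face is therefore $0$, never $1$. This is a genuine cohomological obstruction: a $1$-cochain on $\widetilde{G_L}$ with $\delta\omega\equiv 1$ does exist, but it cannot be invariant under the deck transformations, so it cannot be recorded by ``a handful of constants, one per generator.'' This is why the paper introduces \emph{two} co-boundaries $\omega,\eta$ with $\omega\wedge\eta=2$, a potential $\alpha$ of $\omega$ defined on the vertices of $\widetilde{G_L}$ (which is non-equivariant and grows linearly), and the Stokes formula $A_L(\widetilde c)=\sum_{e=xy\in\widetilde c}\tfrac{\alpha(x)+\alpha(y)}{2}\eta(p(e))$. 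Consequently, there is no fixed per-edge weight $\Omega(e)$ on $G$ with $A(c)=\sum_{e\in c}\Omega(e)$: the contribution of an occurrence of an edge depends on the potential value accumulated along the prefix of the walk. Your linear scan must carry this running state --- precisely the variable $\alpha$ updated by $\alpha\gets\alpha+\omega(e)$ at each step of the paper's Algorithm~\ref{Algo area} --- and a pure sum of precomputed weights cannot work.
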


We describe an algorithm to compute the signed area inspired by Erickson and Fox and Lkhamsuren~\cite{EFL18}. The proof of Theorem~\ref{thm:linear-time-area-computation} follows from Proposition~\ref{pr:correct2} stating that the algorithm is correct and from its complexity analysis established in Proposition~\ref{prop:area}. We first discuss the case where $G$ is a reduced graph with a single vertex and a single face. In this case, the signed area of a closed walk $c$ can be computed from $c$ itself without traversing the regions it encompasses. For this we rely on a discrete Stokes formula\footnote{We actually refer to the following version of Stokes formula. Let $R$ be a planar region and let $\omega$ and $\eta$ be two $1$-forms on $R$ such that $\omega=d\alpha$ for some function $\alpha$, and such that $d\eta=0$. Then, $\int_R\omega\wedge\eta=\int_{\partial R}\alpha\eta$. When $\omega\wedge\eta$ is the volume form, the right-hand size computes the area by contour integral. Our notations in~\ref{ssse:reduced} intend to suggest a parallel with the discrete case.}. We then reduce the case of general graphs to the reduced case using a tree-cotree decomposition~\cite{Epp02}.

\subsection{The Reduced Graph Case}
\label{ssse:reduced}

We first discuss the computation of the signed area in the case where $G$ has only one vertex and only one face. Let $v$ be this vertex and $L$ be the set of edges of $G$. By Euler formula, $L$ consists of $2g$ simple loop-edges of $S$ such that $S\backslash L$ is an open disk. Then $p^{-1}(L)$ is the $1$-skeleton of a tiling of $\widetilde S$ by fundamental domains. We denote by $\overset{\leftrightarrow}{L}$ the set of arcs of $L$ ; it contains two opposite arcs per edge of $L$. Consider a domain $D$  in this tiling and let $(e_1,\ldots, e_{4g})$ be the list of arcs of the counterclockwise boundary $\partial D$ of $D$, starting from an arbitrary vertex. 

A \define{co-boundary} $\omega:\overset{\leftrightarrow}{L}\rightarrow \mathbb R$ is a map such that $\omega(\overline e)=-\omega(e)$ where $\overline e$ is the arc opposite to $e$. 
For concision, we identify the arcs of $\partial D$ with their projections, thus writing $\omega(e_i)$ for $\omega(p(e_i))$.
Since in a reduced graph every arc and its opposite appears on $\partial D$, we have $\sum_{e\in\partial D}\omega(e)=0$, justifying the terminology.
We define the \define{wedge product} of two co-boundary functions $\omega$ and $\eta$ as the number:
\[\omega\wedge\eta=
  \sum_{1\le i<j\le 4g} \big(\omega(e_i)\eta(e_j)-\omega(e_j)\eta(e_i)\big).
\]

\begin{lemma}
\label{le:wedge}
The definition of $\omega \wedge \eta$ is independent of the choice of the starting vertex used to list the edges of $\partial D$, namely for any $k \in \mathbb Z$ we have :
\begin{align}
\label{eq:wedge}
\omega\wedge\eta=\sum_{1\le i<j\le 4g}\big(\omega(e_{i+k})\eta(e_{j+k})-\omega(e_{j+k})\eta(e_{i+k})\big) 
\end{align}
where indices are considered modulo $4g$.
\end{lemma}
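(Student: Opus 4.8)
The plan is to reduce to the case $k=1$ and then to compare the two sums term by term, using the single algebraic fact that a co-boundary sums to zero over the boundary of $D$. Since a shift of the starting vertex by $k$ is a composition of $k$ elementary shifts by one (and of inverses thereof for $k<0$), it is enough to prove \eqref{eq:wedge} for $k=1$, i.e. to show that relisting the arcs of $\partial D$ as $(e_2,e_3,\ldots,e_{4g},e_1)$ does not change the value of $\omega\wedge\eta$.

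The key input I would record first is that, $G$ being reduced (one vertex, one face), the facial walk $\partial D$ traverses every edge of $L$ exactly twice, once in each direction. Hence the $4g$ arcs $e_1,\ldots,e_{4g}$ are precisely the arcs of $\overset{\leftrightarrow}{L}$, each occurring exactly once, and pairing each arc with its opposite together with $\omega(\overline e)=-\omega(e)$ yields $\sum_{i=1}^{4g}\omega(e_i)=0$, and likewise $\sum_{i=1}^{4g}\eta(e_i)=0$. This is exactly the point where reducedness of $G$ is used, and where the definition of a co-boundary, rather than an arbitrary function on arcs, matters.

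Then I would carry out the comparison. Writing $a_i=\omega(e_i)$ and $b_i=\eta(e_i)$, the summand $a_ib_j-a_jb_i$ attached to an unordered pair $\{e_i,e_j\}$ merely picks up a sign according to which of the two arcs is listed first. Passing from the listing starting at $e_1$ to the one starting at $e_2$ preserves the relative order of every pair avoiding $e_1$, and reverses the order of every pair $\{e_1,e_j\}$ with $2\le j\le 4g$; consequently the difference between the two sums is $\sum_{j=2}^{4g}2(a_1b_j-a_jb_1)=2\bigl(a_1\sum_{j\ge 2}b_j-b_1\sum_{j\ge 2}a_j\bigr)$. Substituting $\sum_{j\ge 2}a_j=-a_1$ and $\sum_{j\ge 2}b_j=-b_1$ from the previous paragraph, this difference equals $2(-a_1b_1+a_1b_1)=0$, so the two sums coincide; iterating over single shifts then gives \eqref{eq:wedge} for every $k\in\mathbb Z$. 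The computation is entirely routine, and I do not anticipate a genuine obstacle: the only thing to get right is the observation that the correction term produced by one shift is a multiple of the total sums $\sum\omega$ and $\sum\eta$, and hence is annihilated by the co-boundary condition.
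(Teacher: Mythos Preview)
Your proof is correct and follows essentially the same approach as the paper's: both reduce to the case of a single cyclic shift, observe that only the pairs involving the moved arc change sign, and then kill the resulting correction term using the co-boundary condition $\sum_i\omega(e_i)=\sum_i\eta(e_i)=0$. The paper's version is slightly more formal in its bookkeeping (it introduces index sets $P_k$ and the antisymmetric quantities $\zeta_{i,j}$), but the underlying computation is the same as yours.
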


\begin{proof}
We prove \eqref{eq:wedge} by induction on $k$. The case $k=0$ is a tautology. Suppose \eqref{eq:wedge} holds for some $k\ge 0$. To simplify the notation, we put $\omega_i=\omega(e_i)$ and $\eta_i=\eta(e_i)$ for all $1\le i\le 4g$, and we introduce the quantities $\zeta_{i,j} = \omega_i\eta_j - \omega_j\eta_i$.
Let $P_k=\{(i+k, j+k)\mid 1\le i<j\le 4g\}$, where the indices in the pairs are taken modulo $4g$. In other words, $P_k = \{(i+k,j+k)\mid (i,j)\in P_0\}$. With these notations, the right-hand side of \eqref{eq:wedge} writes $\sum_{(i,j)\in P_k} \zeta_{i,j}$. We claim that 
\[P_k = \big(P_{k+1}\setminus \{(1+k,j+k)\mid 2\le j\le 4g\}\big)\cup \{(j+k,1+k)\mid 2\le j\le 4g\}
\]
To see this, let $(i+k,j+k)\in P_k$ with $(i,j)\in P_0$. If $i\ge 2$, then $(i-1,j-1)\in P_0$, whence $(i+k,j+k)=((i-1)+(k+1),(j-1)+(k+1))\in P_{k+1}$. Moreover, $(1,j)\in P_0\iff (j-1,4g)\in P_0$, so that $(1+k,j+k)\in P_k\iff (j+k,1+k) = ((j-1)+(k+1),4g + (k+1))\in P_{k+1}$. The claim follows. We then compute
\begin{align*}
    \sum_{(i,j)\in P_k} \zeta_{i,j} = \sum_{(i,j)\in P_{k+1}} \zeta_{i,j} - \sum_{j=2}^{4g} \zeta_{1+k,j+k} + \sum_{j=2}^{4g} \zeta_{j+k,1+k}
    = \sum_{(i,j)\in P_{k+1}} \zeta_{i,j} - 2 \sum_{j=2}^{4g} \zeta_{1+k,j+k}
\end{align*}
where we used $\zeta_{j,i} = -\zeta_{i,j}$ for the second equality. Since, $\zeta_{i,i}=0$, we also compute
\begin{align*} 
\sum_{j=2}^{4g} \zeta_{1+k,j+k} &= \sum_{j=1}^{4g} \zeta_{1+k,j+k} = \sum_{j=1}^{4g} (\omega_{1+k}\eta_{j+k} - \omega_{j+k}\eta_{1+k})\\
&= \omega_{1+k} (\sum_{j=1}^{4g} \eta_{j+k}) - (\sum_{j=1}^{4g} \omega_{j+k})\eta_{1+k} = 0
\end{align*}
where we used that $\omega$ and $\eta$ are co-boundaries for the last equality. 
We infer that $\sum_{(i,j)\in P_k} \zeta_{i,j} = \sum_{(i,j)\in P_{k+1}}\zeta_{i,j}$, showing that \eqref{eq:wedge} holds at order $k+1$. The lemma follows.
\end{proof}

Let $\widetilde{V}=p^{-1}(\{v\})$ be the set of vertices of $\widetilde G$. A co-boundary $\omega$ derives from a \define{potential} function $\alpha:\widetilde V\rightarrow \mathbb R$ if for every arc $v_1v_2$ in the tiling: $\omega(p(v_1v_2))=\alpha(v_2)-\alpha(v_1)$.

\begin{lemma}
\label{unique}
Every co-boundary derives from a potential which is unique up to a constant function.
\end{lemma}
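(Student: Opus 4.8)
The plan is to prove uniqueness and existence separately. I first record that $\widetilde G$ is connected: since $p\colon\widetilde S\to S$ is a covering with $\widetilde S$ connected and $\widetilde S\setminus\widetilde G$ a disjoint union of open disks, any two vertices of $\widetilde V$ are joined by a walk in $\widetilde G$. Now suppose $\alpha$ and $\alpha'$ are two potentials for the same co-boundary $\omega$. For every arc $v_1v_2$ of the tiling we have $\alpha(v_2)-\alpha(v_1)=\omega(p(v_1v_2))=\alpha'(v_2)-\alpha'(v_1)$, so $\alpha-\alpha'$ assigns the same value to the two endpoints of every arc; by connectedness of $\widetilde G$ it is constant. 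This gives uniqueness up to an additive constant.

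For existence, fix a base vertex $v_0\in\widetilde V$, put $\alpha(v_0)=0$, and for any $u\in\widetilde V$ choose a walk $P=(a_1,\dots,a_k)$ in $\widetilde G$ from $v_0$ to $u$ with the arcs oriented along $P$, and set $\alpha(u)=\sum_{i=1}^k\omega(p(a_i))$. If this quantity does not depend on the chosen walk, then $\alpha\colon\widetilde V\to\mathbb R$ is well defined and, by construction, $\omega(p(v_1v_2))=\alpha(v_2)-\alpha(v_1)$ for every arc $v_1v_2$, so $\omega$ derives from $\alpha$. Since reversing an arc negates $\omega$, independence of the walk is equivalent to the vanishing of $\sum_{a\in\gamma}\omega(p(a))$ for every closed walk $\gamma$ in $\widetilde G$ (for two walks $P,P'$ with the same endpoints, apply this to $\gamma=P\cdot(P')^{-1}$).

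It remains to prove this vanishing, which is the crux. Here I would use that $\widetilde S$ is a disk tiled by fundamental domains, with $\widetilde G$ as $1$-skeleton, so that the $2$-cells are precisely the faces of $\widetilde G$, each a lift of the unique face of $G$. View $\gamma$ as a cellular $1$-chain. Each arc $a$ of $\widetilde G$ lies between two faces, $f_\ell$ on its left and $f_r$ on its right, and the jump of the winding number $wind(\gamma,f_\ell)-wind(\gamma,f_r)$ equals, up to the fixed orientation convention, the algebraic number of times $\gamma$ traverses $a$. Consequently one obtains the chain identity $\gamma=\partial\!\left(\sum_f wind(\gamma,f)\,f\right)$, the sum being finite since only finitely many winding numbers are nonzero. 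Regarding $\omega$ as a $1$-cochain through $a\mapsto\omega(p(a))$ — legitimate because $\omega(\overline e)=-\omega(e)$ — and pairing:
\[
\sum_{a\in\gamma}\omega(p(a))=\sum_f wind(\gamma,f)\sum_{a\in\partial f}\omega(p(a)).
\]
For each face $f$, the arcs of $\partial f$ project onto the facial walk of the single face of $G$, so $\sum_{a\in\partial f}\omega(p(a))=\sum_{e\in\partial D}\omega(e)=0$ by the co-boundary identity established just before the lemma. Hence the left-hand side is $0$, which proves independence of the walk and completes the argument.

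The step I expect to be the main obstacle is the chain identity $\gamma=\partial\!\left(\sum_f wind(\gamma,f)\,f\right)$, i.e. matching the jump of the winding number across an arc with the signed traversal count of that arc; this is a classical planar fact but needs a careful choice of orientation conventions before it can be applied. If one prefers to avoid winding numbers, one may instead invoke $H_1(\widetilde S)=0$: the cellular $1$-cycle $\gamma$ is then a cellular boundary $\partial\!\left(\sum_f n_f f\right)$ for integers $n_f$ (finitely many nonzero, since $H_2(\widetilde S)=0$ forces $\partial_2$ to be injective on finitely supported chains), and the same pairing argument goes through verbatim; keeping the winding numbers has the advantage of an explicit formula consistent with the rest of the section.
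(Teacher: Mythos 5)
Your proof is correct and follows the same overall strategy as the paper: define $\alpha$ by summing $\omega$ along a path from a fixed base vertex and check path-independence, with uniqueness following from the connectedness of $\widetilde G$. Your explicit justification that $\omega$ sums to zero on every closed walk in $\widetilde G$ — via the vanishing of $\omega$ on each facial walk together with $H_1(\widetilde S)=0$ (or equivalently the winding-number chain identity $\gamma=\partial\bigl(\sum_f wind(\gamma,f)\,f\bigr)$) — correctly supplies the argument that the paper states as a parenthetical assertion.
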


\begin{proof}
Let $\omega$ be a co-boundary. Let us fix $v_0\in\widetilde V$. We set $\alpha(v_0)=0$ and for every $x\in\widetilde{V}$ we let $\alpha(x)=\sum_{e\in P}\omega(p(e))$ for some oriented path $P$ from $v_0$ to $x$. Note that $\alpha(x)$ does not depend on the choice of $P$ since $\omega$ is a co-boundary (it sums up to zero on every closed walk). By construction, $\omega$ derives from the potential $\alpha$.

For the uniqueness, let $\omega$ derives from a potential $\alpha$ and from another potential $\alpha'$. Let $v_0,v_1\in \widetilde V$. By an easy induction, $\alpha(v_1)=\alpha(v_0)+\sum_{e\in P}\omega(p(e))$ for any oriented path $P$ from $v_0$ to $v_1$ (such a path exists due to the connectivity of $\widetilde G$). The same identity is true for $\alpha'$. Thus $\alpha(v_1)-\alpha'(v_1)=\alpha(v_0)-\alpha'(v_0)$. showing that $\alpha$ and $\alpha'$ differ by a constant function.
\end{proof}

We now choose two co-boundaries $\omega$ and $\eta$ satisfying $\omega\wedge\eta=2$. Since $D$ projects to the unique face of $S$, there must be four arcs $e_i, e_j, e_k, e_\ell$ in its boundary $\partial D=(e_1,e_2,\dots,e_{4g})$, such that $e_k=\overline{e_i}$ and $e_\ell=\overline{e_j}$ with $1\le i<j<k<\ell\le 4g$ (see~\cite[Sec 1.3.5]{Sti93}).
We define $\omega$ and $\eta$ by
\[\omega(p(e_i)) = - \omega(p(e_k))=1 \quad\text{and}\quad \omega(p(e_t))=0, \, \forall t\neq i,k\]
\[\eta(p(e_j))=-\eta(p(e_\ell))=1 \quad\text{and}\quad \eta(p(e_t))=0, \, \forall t\neq j,\ell\]
An easy computation shows that $\omega\wedge\eta=2$.

\begin{proposition}[Discrete Stokes formula]
\label{alpha}
Let $\omega$ derives from a potential $\alpha$. Let $c$ be a closed walk in $\widetilde G$. We have:

\begin{align}
    \label{eq:Stokes}
    A(c)=\sum_{e=xy\in c}\frac{\alpha(x)+\alpha(y)}{2}\eta(p(e)) 
\end{align}

\end{proposition}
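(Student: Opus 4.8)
The plan is to prove the discrete Stokes formula \eqref{eq:Stokes} by reducing it, face by face, to the single identity $\omega\wedge\eta=2$. The right-hand side of \eqref{eq:Stokes} is a sum over the arcs of the closed walk $c$; since both sides are additive in $c$ in the obvious sense (concatenating two closed walks with a common vertex adds both the signed areas and the contour sums), and since $A(c)$ is the winding-number sum $\sum_f wind(c,f)$, it suffices to check the formula on a generating family of closed walks. The natural generators are the boundary walks $\partial D_f$ of the faces of $\widetilde G$: every closed walk in $\widetilde G$ is homologous to an integer combination $\sum_f wind(c,f)\,\partial D_f$, and because $\eta$ is a co-boundary, the contour sum $\sum_{e=xy\in c}\tfrac{\alpha(x)+\alpha(y)}2\eta(p(e))$ depends only on the homology class of $c$ (adding a backtracking pair $e,\overline e$ contributes nothing, and it is invariant under the elementary homotopies across a face precisely by the face identity we will establish). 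So the whole statement collapses to: for every face $D$ of $\widetilde G$ (a translate of the fundamental domain), the boundary walk $\partial D$ satisfies $A(\partial D)=1$ and $\sum_{e=xy\in\partial D}\tfrac{\alpha(x)+\alpha(y)}2\eta(p(e)) = 1$.

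First I would record that $A(\partial D)=1$: the boundary of a single face, traversed counterclockwise once, winds once around that face and zero times around every other face. Then I would compute the contour integral around $\partial D=(e_1,\dots,e_{4g})$ with vertices $x_0,x_1,\dots,x_{4g}=x_0$, where $e_t=x_{t-1}x_t$. Using that $\omega$ derives from $\alpha$, we have $\alpha(x_t)=\alpha(x_0)+\sum_{s\le t}\omega(p(e_s))$; writing $\omega_s=\omega(p(e_s))$ and $\eta_s=\eta(p(e_s))$ and using $\sum_s\omega_s=0=\sum_s\eta_s$ (co-boundary property on the closed walk $\partial D$), the additive constant $\alpha(x_0)$ drops out and a telescoping/Abel-summation rearrangement turns $\sum_t\tfrac{\alpha(x_{t-1})+\alpha(x_t)}2\eta_t$ into $\tfrac12\sum_{i<j}(\omega_i\eta_j-\omega_j\eta_i)=\tfrac12(\omega\wedge\eta)$. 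Here Lemma~\ref{le:wedge} is what guarantees this is well defined and insensitive to the chosen starting vertex, and $\omega\wedge\eta=2$ gives the value $1$, matching $A(\partial D)$.

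To assemble the argument cleanly I would proceed as follows. Fix a lift $\widetilde c$ of $c$ (a closed walk in $\widetilde G$); set $F(c):=\sum_{e=xy\in\widetilde c}\tfrac{\alpha(x)+\alpha(y)}2\eta(p(e))$. Show $F$ is invariant under (i) insertion/deletion of a spur $e\overline e$ — immediate since the two arc contributions cancel by $\eta(\overline e)=-\eta(e)$ and the endpoint average is unchanged — and (ii) replacing a subwalk $x\!\to\!y$ by the complementary subwalk around one face $D$ — this changes $F$ by exactly $\pm F(\partial D)=\pm 1$ by the face computation above, and changes $A$ by exactly $\pm 1$ as well since it changes $wind(\cdot,D)$ by $\pm1$ and no other winding number. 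Since any two closed walks in $\widetilde G$ with the same winding-number function differ by a finite sequence of moves (i) and (ii) (a standard fact for closed walks in a planar graph: homotopy/homology in the plane is generated by spurs and face-crossings), and since $A(c)$ and $F(c)$ agree on the constant walk (both zero), they agree for all $c$.

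The main obstacle I expect is the bookkeeping in step (ii) and in the Abel-summation identity — specifically, making sure the orientation conventions are consistent so that ``crossing the face $D$'' adds $+1$ to $A$ and $+1$ to $F$ with matching signs, and that the rearrangement of $\sum_t\tfrac{\alpha(x_{t-1})+\alpha(x_t)}2\eta_t$ really produces $\tfrac12\sum_{i<j}\zeta_{i,j}$ and not some boundary term; this is where the co-boundary relations $\sum\omega_s=\sum\eta_s=0$ get used, and where Lemma~\ref{le:wedge} is essential to justify that the resulting expression does not depend on where we started reading $\partial D$. Everything else is routine, but these sign and indexing checks are the delicate part.
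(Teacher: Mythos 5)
Your proposal is correct and follows essentially the same strategy as the paper's proof: reduce to the face identity $\omega\wedge\eta=2$ via the elementary moves of spur insertion and face-boundary insertion (your face-crossing move is equivalent, modulo spurs), verify that each move changes both sides of \eqref{eq:Stokes} by the same amount, and do the Abel-summation computation around $\partial D$ using the coboundary relations and Lemma~\ref{le:wedge}. The only cosmetic difference is that the paper inducts from the trivial walk while you phrase it as invariance of the difference $A(c)-F(c)$ under moves; the key computations are identical (and your first-paragraph parenthetical claiming the contour sum is ``invariant'' under face crossings is imprecise — it changes by $\pm 1$, exactly matching $A$ — but your cleaned-up version states this correctly).
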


\begin{proof}
First observe that this identity is true for the trivial walk $c=(v_0)$. Indeed, $A(c)=0$, and the sum on the right-hand side being over an empty set is also equal to zero. 
In order to prove the proposition, we rely on the fact that any closed walk in the plane can be obtained from the trivial walk by a sequence of elementary moves. There are two types of moves to consider (and their inverses):
\begin{itemize}
    \item the spur insertion, corresponding to inserting in a closed walk an arc $a$ followed by $\overline{a}$,
    \item the face boundary insertion, corresponding to inserting in a closed walk the sequence of arcs of the boundary of a face in clockwise or counterclockwise order.
\end{itemize}
We prove that these moves preserves the identity~\eqref{eq:Stokes}.
Let $c$ be a closed walk in $\widetilde G$.
\begin{itemize}
    \item 
Suppose that $c'$ is obtained from $c$ by the insertion of the spur $a\overline a$ for some arc $a$. Then, $wind(c,f)=wind(c',f)$ for any face $f$ in $\widetilde G$, whence $A(c)=A(c')$. Moreover, since $\eta(p(a))=-\eta(p(\overline a))$, the sum over $c$ or $c'$ in the right-hand side of~\eqref{eq:Stokes} is the same. Hence, $c$ satisfies \eqref{eq:Stokes} if and only if $c'$ satisfies it.
\item
Now suppose that $c'$ is obtained from $c$ by the insertion of the boundary of some face $f$, say in counterclockwise order. Then, the winding number being additive with respect to the concatenation of curves, we have $wind(c',f)=wind(c,f)+1$. Moreover, for any face $f'\neq f$ of $\widetilde G$, $wind(c',f')=wind(c,f')$. It follows that $A(c')=A(c)+1$. Let $(v_0,\varepsilon_1,v_1,\ldots,v_{4g-1},\varepsilon_{4g},v_{4g})$ be the boundary walk of $f$, oriented counterclockwise. Then, writing $\eta(e)$ instead of $\eta(p(e))$ for concision, and similarly for $\omega$, we have 
\begin{align}\label{eq:face-boundary-case}
    \sum_{e=xy\in c'}\frac{\alpha(x)+\alpha(y)}{2}\eta(e) - \sum_{e=xy\in c}\frac{\alpha(x)+\alpha(y)}{2}\eta(e) =\sum_{i=1}^{4g}\frac{\alpha(v_{i-1})+\alpha(v_i)}{2}\eta(\varepsilon_i)
\end{align}
As $\omega$ derives from $\alpha$, we have for all $1\le i\le 4g$, \[\alpha(v_i)=\alpha(v_0)+\sum_{j=1}^{i}\omega(\varepsilon_j)=\alpha(v_0)-\sum_{j=i}^{4g}\omega(\varepsilon_j).
\]
Note that the right-hand side of formula~\eqref{eq:Stokes} is left unchanged when adding a constant function to $\alpha$. We can thus assume $\alpha(v_0)=0$, and compute \begin{align*}
\sum_{i=1}^{4g}\frac{\alpha(v_{i-1})+\alpha(v_i)}{2}\eta(\varepsilon_i) &=\frac{1}{2}\sum_{i=1}^{4g}\Big(\sum_{j=1}^{i}\omega(\varepsilon_j)-\sum_{j=i+1}^{4g}\omega(\varepsilon_j)\Big)\eta(\varepsilon_i) \\
%&=\frac{1}{2} \Big( \sum_{i=1}^{4g}\sum_{j=1}^{i}\omega(\varepsilon_j)\eta(\varepsilon_i) - \sum_{i=1}^{4g}\sum_{j=i+1}^{4g}\omega(\varepsilon_j)\eta(\varepsilon_i)\Big)\\
&= \frac{1}{2} \Big( \sum_{1\le j<i\le 4g} \omega(\varepsilon_j)\eta(\varepsilon_i) -  \sum_{1\le i<j\le 4g}\omega(\varepsilon_j)\eta(\varepsilon_i)\Big)\\
&= \frac{1}{2} \sum_{1\le i<j\le 4g}\Big( \omega(\varepsilon_i)\eta(\varepsilon_j) -  \omega(\varepsilon_j)\eta(\varepsilon_i)\Big)
\end{align*}
Now, every face of $\widetilde G$ projecting to the same face as $D$, we have by Lemma~\ref{le:wedge} that this last sum does not depend on the starting arc used to insert $f$. It follows that the right-hand side of \eqref{eq:face-boundary-case} is equal to $\frac{1}{2}\omega\wedge\eta$, which is 1 by our choice of $\omega $ and $\eta$. 
It ensues that $c$ satisfies identity~\eqref{eq:Stokes} if and only if $c'$ satisfies it. A similar computation and conclusion holds in the case of the insertion of a clockwise boundary.
\end{itemize}
The proposition now follows by induction on the number of elementary moves to transform the trivial path to $c$.
\end{proof}

\subsection{The non-Reduced Graph Case}

We now consider the case where $G$ is a graph cellularly embedded in $S$ with no other restriction. Denote by $E$ the set of edges of $G$. Let $T$ be a spanning tree of $G$, let $C^*$ be a spanning tree of the dual $G^*$ of $G$ satisfying $T\cap C=\varnothing$, and let $L=E\backslash(C\cup T)$ be the set of leftover edges. The triple $(T,L,C)$ is called a \define{tree-cotree decomposition}~\cite{Epp02} and can be computed in linear time. By Euler's formula, $|L|=2g$. Moreover, $S$ cut through the edges of $T\cup L$ is a topological disk.

Let $G'=G/T$ be the graph obtained from $G$ by contracting all the edges of $T$. Let $A(\cdot)$ and $A'(\cdot)$ denotes the signed areas in respectively $\widetilde G$ and $\widetilde{G'}$. As already noted, $A({\widetilde c})$ does not depend on the chosen lift ${\widetilde c}$ of a contractible closed walk $c$. We thus define $A(c)$ as  the signed area $A({\widetilde c})$. Similarly, we set $A'(c') = A'(\widetilde{c'})$ for a contractible closed walk $c'$ in $G'$ with lift $\widetilde{c'}$ in $\widetilde{G'}$. We denote by $c/T$ the contraction in $G'$ of a walk $c$ in $G$. We claim that for every contractible closed walk $c$ in $G$, we have
\begin{align}\label{eq:contract-T}
  A(c)=A'(c/T). 
\end{align}
Indeed, there is a deformation retraction of the plane that maps $\widetilde G$ onto $\widetilde G'$. This retraction sends faces of $\widetilde G$ to faces of $\widetilde G'$ preserving the winding number of the faces. The claim follows by summation of the winding numbers over all the faces and over their retracted images.

Note that $G'$ is a graph with a single vertex $v$ and that the subgraph $G_L$ of $G'$ induced by the edges of $L$ is a reduced graph. We may thus apply  the results of Section~\ref{ssse:reduced} to compute $\omega$ and $\eta$ for $G_L$ in $O(g)$ time. Similarly to the above, we denote by $A_L(\cdot)$ the signed area of a closed walk in $\widetilde G_L$, and set $A_L(c)=A_L(\widetilde{c})$ for a contractible closed walk in $G_L$. Now, for each arc $e$ of $G'$ we define the following:
\begin{itemize}
    \item $c_e$ is the path in $L$ obtained as the subpath of the (unique) facial walk of $G_L$ with the same endpoints as $e$, and lying to the right of $e$ (see Figure~\ref{domain2}). When $e$ is a loop arc, $c_e$ can be either the trivial walk or the whole facial walk, depending on its orientation. When $e\in \overset{\leftrightarrow}L$, we simply have $c_e=e$.
    \item $A(e)$ is the positive area of the region bounded by $e$ and $c_e$. In particular, $A(e)=0$ for every $e\in \overset{\leftrightarrow}L$.
    \item $\omega(e):=\sum_{e'\in c_e}\omega(e')$
    \item $\eta(e):=\sum_{e'\in c_e}\eta(e')$
    \item $I(e):=\sum_{xy\in \widetilde c_e}\frac{\alpha_\omega(x)+\alpha_\omega(y)}{2}\eta(p(xy))$, where $\widetilde c_e$ is a lift of $c_e$ with initial vertex $v_0$ and $\omega$ derives from the potential $\alpha_\omega$ such that $\alpha_\omega(v_0)=0$. Note that for $e\in \overset{\leftrightarrow}L$ this summation reduces to $I(e)=\frac{\omega(e)}{2}\eta(e)$.
\end{itemize}
\begin{figure}[ht!]
    \centering
    \includegraphics[height=4cm]{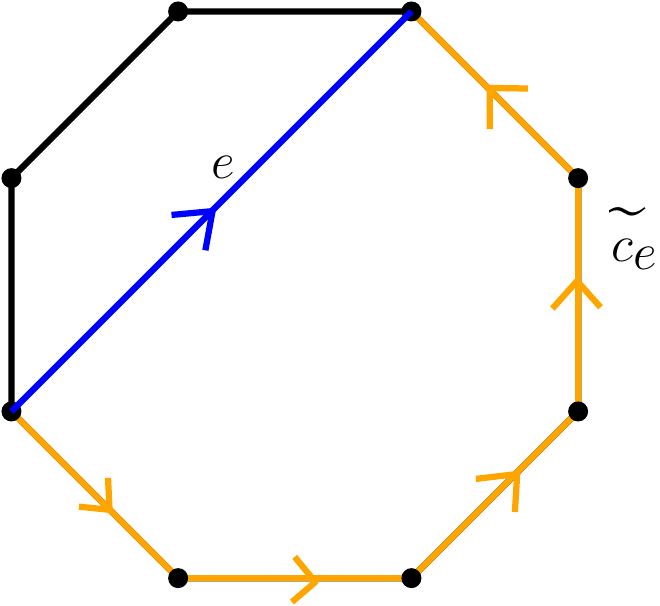}
    \caption{The path in $L$ corresponding to $e$}
    \label{domain2}
\end{figure}
After pre-computing  $A(e), \omega(e), \eta(e), I(e)$ for each arc $e$ of $G'$, Algorithm~\ref{Algo area} below describes how to compute the signed area $A(c)$ of any contractible closed walk $c$ in $G$.
\begin{algorithm}[ht!]
\caption{Area computation}
\label{Algo area}
\begin{algorithmic}[1]
  \Require A combinatorial surface with graph $G$ and $|F|$ faces together with a tree-cotree decomposition $(T,C,L)$. For each arc $e$ of $G'=G/T$ the tuple $(A(e), \omega(e), \eta(e), I(e))$.
A contractible closed walk $c$ in $G$.
    \Ensure $A(c)$ the signed area of $c$
    \State $dom \gets 0$
    \State $area \gets 0$
    \State $\alpha\gets 0$
    \For{each arc $e$ of $c$ not in $T$}
        \State $dom\gets dom + I(e)+\alpha \cdot \eta(e)$ \label{li:dom augment}
        \State $area\gets area + A(e)$ \label{li:area sum}
        \State $\alpha\gets \alpha+\omega(e)$ \label{li:alpha augment}
    \EndFor
    \State \Return $dom\cdot|F|-area$ \label{li:return}
\end{algorithmic}
\end{algorithm}
We let $c'$ be the walk in $G_L$ obtained by concatenation of the $c_e$'s for $e\in c/T$.
\begin{lemma}
\label{le:dom}
At the end of Algorithm~\ref{Algo area}, we have $dom=A_L(c')$.
\end{lemma}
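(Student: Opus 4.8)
The plan is to identify the number $dom$ accumulated by Algorithm~\ref{Algo area} with the right-hand side of the discrete Stokes formula (Proposition~\ref{alpha}) applied to a lift of $c'$ in the reduced graph $G_L$. Write $c/T=e^{(1)}\cdots e^{(k)}$ for the sequence of arcs of $c$ not in $T$, so that $c'=c_{e^{(1)}}\cdots c_{e^{(k)}}$.

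First I would check that $A_L(c')$ is well-defined, i.e. that $c'$ is contractible. Each arc $e^{(j)}$ of $G'$ is homotopic rel endpoints in $S$ to $c_{e^{(j)}}$, since the region to the right of $e^{(j)}$ bounded by $e^{(j)}$ and $c_{e^{(j)}}$ is a disk contained in the closure of the unique face of $G_L$. Concatenating these homotopies gives $c'\simeq c/T$ in $S$; and since contracting the tree $T$ is realized by an isotopy of $S$, we have $c/T\simeq c$, which is contractible. Hence $c'$ is contractible, its lift $\widetilde{c'}$ starting at the chosen basepoint $v_0$ is a closed walk in $\widetilde{G_L}$, and Proposition~\ref{alpha} yields
\[A_L(c')=\sum_{xy\in\widetilde{c'}}\frac{\alpha_\omega(x)+\alpha_\omega(y)}{2}\,\eta(p(xy)),\]
where $\alpha_\omega$ is the potential of $\omega$ normalized by $\alpha_\omega(v_0)=0$.

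The core step is the following equivariance observation: for any vertex $u$ of $\widetilde{G_L}$, the lift of $c_e$ starting at $u$ contributes $I(e)+\alpha_\omega(u)\,\eta(e)$ to a sum of the above form. Indeed, let $\psi$ be the automorphism of $\widetilde{G_L}$ with $\psi(v_0)=u$; then this lift is $\psi(\widetilde{c_e})$, where $\widetilde{c_e}$ is the lift of $c_e$ at $v_0$ used to define $I(e)$. Since $p\circ\psi=p$, the function $x\mapsto \alpha_\omega(\psi(x))-\alpha_\omega(x)$ takes the same value at the two endpoints of every arc of $\widetilde{G_L}$, hence is constant by connectedness, equal to $\alpha_\omega(u)$. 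Substituting $\alpha_\omega\circ\psi=\alpha_\omega+\alpha_\omega(u)$, using $p\circ\psi=p$ once more, and invoking $\sum_{e'\in c_e}\eta(e')=\eta(e)$ gives the claim (the term $I(e)$ being the case $u=v_0$, $\psi=\mathrm{id}$). Applying this to the decomposition of $\widetilde{c'}$ into the consecutive lifts of $c_{e^{(1)}},\dots,c_{e^{(k)}}$, the $j$-th of which starts at the vertex $w_{j-1}$ reached after the first $j-1$ blocks (with $w_0=v_0$), we obtain $A_L(c')=\sum_{j=1}^k\bigl(I(e^{(j)})+\alpha_\omega(w_{j-1})\,\eta(e^{(j)})\bigr)$.

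Finally, applying the potential property along the prefix of $\widetilde{c'}$ from $v_0$ to $w_{j-1}$, together with $\alpha_\omega(v_0)=0$ and $\sum_{e'\in c_{e^{(i)}}}\omega(e')=\omega(e^{(i)})$, shows $\alpha_\omega(w_{j-1})=\sum_{i=1}^{j-1}\omega(e^{(i)})$, which is exactly the value of the variable $\alpha$ at the start of the $j$-th iteration of the loop in Algorithm~\ref{Algo area}. Since that iteration adds $I(e^{(j)})+\alpha\cdot\eta(e^{(j)})$ to $dom$, the final value of $dom$ equals $\sum_{j}\bigl(I(e^{(j)})+\alpha_\omega(w_{j-1})\eta(e^{(j)})\bigr)=A_L(c')$, which is the claim. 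I expect the main obstacle to be the equivariance bookkeeping of the core step: keeping track of the fact that $I(e)$ is computed from a lift based at the fixed basepoint $v_0$, whereas the block of $\widetilde{c'}$ corresponding to $e$ starts at a translated vertex $w_{j-1}$, and recognizing that precisely this discrepancy is what the running variable $\alpha$ records.
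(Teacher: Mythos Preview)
Your proof is correct and follows essentially the same route as the paper: both compute the contribution of the block corresponding to an arc $e$ as $I(e)+\alpha_\omega(\text{start})\cdot\eta(e)$ and match the running variable $\alpha$ with the potential at the current vertex. The only cosmetic difference is that the paper organizes this as an induction on the loop index and phrases the key step as a shift of potential $\alpha'_\omega=\alpha_\omega-\alpha_\omega(v_\ell)$, whereas you isolate it as an equivariance observation via a deck transformation $\psi$; these are two ways of saying the same thing.
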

\begin{proof}
  First note that if $c$ is contractible, so are $c/T$ and  $c'$.
For $\ell\in\mathbb N$ we denote by $c'(\ell)$ the sub-walk obtained by the concatenation of the walks $c_e$ corresponding to the $\ell$ first arcs of $c/T$. Let $v_0$ be a lift of $v$ and let $\widetilde c'$ be the lift of $c'$ starting from $v_0$. Recall the co-boundaries $\omega, \eta$ computed with respect to $G_L$. We consider the potential $\alpha_\omega$ such that $\omega$ derives from $\alpha_\omega$ and $\alpha_\omega(v_0)=0$. We denote by $dom_\ell$ and $\alpha_\ell$ the respective values of $dom$ and $\alpha$ in Algorithm~\ref{Algo area} at the beginning of the $\ell$-th iteration of the for loop (hence $\alpha_0=0$ and $dom_0=0$), ignoring the arcs of $c$ in $T$. Let $v_\ell$ be the $\ell+1$-th vertex of the lift of $c/T$ starting from $v_0$.
We prove by induction on $\ell$ that
\
\begin{align}\label{eq:algo1-induction}
  dom_\ell=\sum_{xy\in \widetilde{c'(\ell)}}\frac{\alpha_\omega(x)+\alpha_\omega(y)}{2}\eta(p(xy))\quad \text{and} \quad \alpha_\ell=\alpha_\omega(v_\ell)
\end{align}
for every $\ell$ smaller than or equal to the length of $c/T$.
The formula is trivially true for $\ell=0$.
For the inductive step, assume~\eqref{eq:algo1-induction} holds for some $\ell$ strictly smaller than the length of $c/T$. Then Line~\ref{li:dom augment} implies $dom_{\ell+1}=dom_\ell+I(e)+\alpha_\ell\cdot \eta(e)$, where $e=p(v_\ell v_{\ell+1})$ and where, by the induction hypothesis, $\alpha_\ell=\alpha_\omega(v_\ell)$ and
\[I(e) = \sum_{xy\in \widetilde{c_e}}\frac{\alpha'_\omega(x)+\alpha'_\omega(y)}{2}\eta(p(xy))
\]
for the potential $\alpha'_\omega=\alpha_\omega-\alpha_\omega(v_\ell)$ of $\omega$ satisfying $\alpha'_\omega(v_\ell)=0$. We infer
\begin{align*}
  I(e)+\alpha_\ell\cdot\eta(e) &=\sum_{xy\in \widetilde{c_e}}\frac{\alpha_\omega(x)+\alpha_\omega(y)}{2}\eta(p(xy)) -\sum_{xy\in \widetilde{c_e}}\alpha_\omega(v_\ell) \eta(p(xy))+\alpha_\omega(v_\ell)\eta(p(xy))\\
                               &=\sum_{xy\in \widetilde{c_e}}\frac{\alpha_\omega(x)+\alpha_\omega(y)}{2}\eta(p(xy)) - \alpha_\omega(v_\ell)\sum_{xy\in \widetilde{c_e}\cup \overline{e}}\eta(p(xy))\\
                               &=\sum_{xy\in \widetilde{c_e}}\frac{\alpha_\omega(x)+\alpha_\omega(y)}{2}\eta(p(xy))
\end{align*}
where we used that $\eta$ is a co-boundary for the last equality. It follows that
\begin{align}\label{eq:dom}
  dom_{\ell+1}=dom_\ell+\sum_{xy\in \widetilde{c_e}}\frac{\alpha_\omega(x)+\alpha_\omega(y)}{2}\eta(p(xy)) = \sum_{\widetilde{c'(\ell+1)}} \frac{\alpha_\omega(x)+\alpha_\omega(y)}{2}\eta(p(xy)).
\end{align}
Moreover, we have by Line~\ref{li:alpha augment} that $\alpha_{\ell+1}=\alpha_\ell+\omega(e)$. We thus get from the induction hypothesis and the definition of $\omega(e)$:
\[\alpha_{\ell+1}=\alpha_\omega(v_\ell)+\sum_{e'\in c_e}\omega(e')=\alpha_\omega(v_{\ell+1}),
\]
where the last equality follows from the definition of a potential. Together with~\eqref{eq:dom}, this confirms that~\eqref{eq:algo1-induction} holds at order $\ell+1$. By induction, the variable $dom$ thus contains 
$\sum_{xy\in \widetilde{c'}}\frac{\alpha_\omega(x)+\alpha_\omega(y)}{2}\eta(p(xy))$ at the end of Algorithm~\ref{Algo area}.
The lemma now follows from Proposition~\ref{alpha}.
\end{proof}

\begin{proposition}
\label{pr:correct2}
Algorithm~\ref{Algo area} computes the signed area $A(c)$ of $\widetilde c$.
\end{proposition}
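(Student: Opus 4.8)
The plan is to combine Lemma~\ref{le:dom} with the identity~\eqref{eq:contract-T} and the relationship between $A_L$ and $A'$. Recall that by~\eqref{eq:contract-T} we have $A(c) = A'(c/T)$, so it suffices to show that Algorithm~\ref{Algo area} returns $A'(c/T)$. The walk $c/T$ lives in $G'$, which has a single vertex $v$ and whose edge set decomposes into the $2g$ loop-edges of $L$ (forming the reduced graph $G_L$) together with the other arcs, each of which is homotopic rel endpoints to its associated path $c_e$ in $G_L$. Thus $c/T$ is freely homotopic (in fact homotopic rel basepoint, after choosing lifts) to the walk $c'$ obtained by concatenating the $c_e$'s. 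Since homotopic closed walks in the plane bound the same winding numbers around every face of $\widetilde{G_L}$, we get $A'(c/T) = A_L(c')$ once we account for how faces of $\widetilde{G'}$ subdivide into faces of $\widetilde{G_L}$.

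First I would make precise the passage from $A'$ to $A_L$. Each face of $\widetilde{G_L}$ is a lift of the unique face of $G_L$, and the graph $\widetilde{G'}$ is obtained from $\widetilde{G_L}$ by adding back the lifts of the non-$L$ arcs of $G'$; each such arc, together with its path $c_e$, bounds a region that is tiled by exactly $A(e)\cdot|F|$ faces of $\widetilde{G_L}$ (here $|F|$ is the number of faces of $G$, which equals the number of faces of $G_L$ contained in one fundamental domain, since contracting $T$ does not change the face count and $G'=G/T$ has the same faces as $G_L$ plus the loop structure — more carefully, each face of $G'$ other than the outer structure corresponds to $|F|$ faces after refining by $L$; this bookkeeping is exactly what the term $A(e)$ and the final multiplication by $|F|$ encode). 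Summing the winding-number contributions, $A'(\widetilde{c/T}) = \sum_{f \text{ face of }\widetilde{G_L}} wind(\widetilde{c/T}, f)$, and each face $f$ inside the region of some arc $e$ traversed by $c/T$ contributes an extra $\pm 1$ relative to what $c'$ sees, because replacing $e$ by $c_e$ changes the bounded region by exactly that tiled pocket. This yields $A_L(c') = A'(c/T) + (\text{number of faces in the pockets, signed}) = A_L(c') $, and rearranging gives $A'(c/T) = A_L(c') - \sum_{e \in c/T} A(e)\cdot(\text{sign})$, but since $A(e)$ is defined as a \emph{positive} area and the pocket is always traversed coherently with $c_e$ lying to the right of $e$, the correction is simply $+\sum_{e} A(e)$ on the $c'$ side. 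Thus $A'(c/T) = A_L(c') \cdot |F|^{0}\ldots$ — at this point I need to reconcile units: $A_L$ counts faces of $G_L$, while $dom$ in the algorithm, by Lemma~\ref{le:dom}, already equals $A_L(c')$ \emph{as a number of $G_L$-faces}, and the returned value is $dom\cdot|F| - area$. So the correct statement is $A'(c/T) = A_L(c')\cdot|F| - \sum_{e\in c/T}A(e)$, where the factor $|F|$ converts ``number of $G_L$-faces'' into ``number of $G$-faces'' (since each fundamental domain of $\widetilde{G_L}$ refines into $|F|$ faces of $\widetilde{G}$), and $area = \sum_e A(e)$ by the accumulation on Line~\ref{li:area sum}.

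Then I would assemble the proof: by Lemma~\ref{le:dom}, at termination $dom = A_L(c')$; by Line~\ref{li:area sum}, $area = \sum_{e\in c/T} A(e)$; hence the algorithm returns $A_L(c')\cdot|F| - \sum_{e\in c/T}A(e)$, which by the face-counting identity above equals $A'(c/T)$, which by~\eqref{eq:contract-T} equals $A(c) = A(\widetilde c)$. I would also remark that the arcs of $c$ lying in $T$ are correctly skipped, since contracting them does not affect $c/T$ and they contribute trivial pockets.

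The main obstacle I expect is the face-counting bookkeeping: pinning down precisely why a pocket bounded by an arc $e$ and its path $c_e$ contains exactly $A(e)\cdot|F|$ faces of $\widetilde{G}$ (equivalently, why $A(e)$ as defined — the area in $\widetilde{G_L}$ — scales by exactly $|F|$ when passing to $\widetilde G$), and handling the sign/orientation conventions (``$c_e$ lies to the right of $e$'') so that all corrections add rather than cancel. The winding-number additivity and the deformation-retraction argument for~\eqref{eq:contract-T} are routine, but getting the constant $|F|$ and the sign of the $area$ subtraction exactly right is where care is needed.
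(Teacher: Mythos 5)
Your overall plan matches the paper's proof: reduce to $G'=G/T$ via equation~\eqref{eq:contract-T}, compare $A'(c/T)$ with $A'(c')$ where $c'$ is the concatenation of the $c_e$'s, convert $A'(c')$ to $A_L(c')\cdot|F|$ via~\eqref{eq:unreduce}, and invoke Lemma~\ref{le:dom} to identify $dom$ with $A_L(c')$. The final formula $A'(c/T)=A_L(c')\cdot|F|-\sum_{e\in c/T}A(e)$ you arrive at is exactly the one in the paper, and your note that arcs in $T$ are correctly skipped is right.

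However, the middle of your argument contains a step that is wrong as stated and needs to be replaced. You write ``Since homotopic closed walks in the plane bound the same winding numbers around every face of $\widetilde{G_L}$, we get $A'(c/T)=A_L(c')\ldots$'' Winding numbers are \emph{not} invariant under homotopies that sweep across the point in question, and the homotopy from $\widetilde{c/T}$ to $\widetilde{c'}$ (pushing each arc $e$ across to $c_e$) does sweep through precisely the pockets $F_e$. Indeed, that sweep is what makes $A'(c/T)$ and $A'(c')$ \emph{differ} by $\sum_e A(e)$. The paper avoids this by directly comparing winding numbers face by face: $\mathrm{wind}(\widetilde{c/T},f)=\mathrm{wind}(\widetilde{c'},f)-1$ for $f\in F_e$ and equality elsewhere, then inducting over the arcs of $c/T$ in $\overset{\leftrightarrow}{C}$. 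You should make that comparison explicit rather than appeal to homotopy invariance. You also misidentify the pocket: the region bounded by $e$ and $c_e$ contains $A(e)$ faces of $\widetilde{G'}$, all lying inside a \emph{single} face of $\widetilde{G_L}$ --- not ``$A(e)\cdot|F|$ faces of $\widetilde{G_L}$.'' The factor $|F|$ enters only through the separate identity $A'(c')=A_L(c')\cdot|F|$, which holds because $c'$ is supported on edges of $L$, so every face of $\widetilde{G'}$ has the same winding number as the face of $\widetilde{G_L}$ containing it, and each face of $\widetilde{G_L}$ contains $|F|$ faces of $\widetilde{G'}$. You self-correct to the right final expression, but the chain of reasoning leading to it should be tightened along these lines.
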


\begin{proof}
  Recall from~\eqref{eq:contract-T} that $A(c)=A'(c/T)$. We can view any closed walk of $\widetilde G_L$ as a walk in $G_L$ supported by edges of $L$. With respect to such a walk, any face of $\widetilde G'$ has the same winding number as the face of $\widetilde G_L$ that contains it. Since every face of $\widetilde G_L$ contains $|F|$ faces of $\widetilde G'$, we deduce that
  \begin{align}\label{eq:unreduce}
    A'(c')=A_L(c')\cdot |F|,  
  \end{align}
  where $c'$ en $|F|$ are defined as in Lemma~\ref{le:dom} and Algorithm~\ref{Algo area}.

  Suppose exactly one arc $e$ of $c/T$ is in $\overset{\leftrightarrow}{C}$. Then $c'$  is obtained from $c/T$ by replacing $e$ with $c_e$. Denote by $F_e$ the set of faces of $\widetilde G'$ enclosed by any lift of the simple loop formed by $c_e$ concatenated with the arc opposite to $e$. Recall that $A(e)$ is the number of faces in $F_e$. Let $\widetilde{c/T}$ and  $\widetilde{c'}$ be lifts of respectively $c/T$ and $c'$, starting from the same vertex in $\widetilde G'$. Clearly, for each face $f$ of $\widetilde G'$ we have $wind(\widetilde{c/T},f)=wind(\widetilde{c'},f)-1$ if $f\in F_e$, and  $wind(\widetilde{c/T},f)=wind(\widetilde{c'},f)$ otherwise. Hence, $A'(c/T)=A'(c')-A(e)$.
  By an easy induction on the number of arcs of $c/T$ in $\overset{\leftrightarrow}{C}$, we deduce that $A'(c/T)=A'(c')-\sum_{e\in c/T}A(e)$ for any contractible closed walk $c$. So, by~\eqref{eq:contract-T} and~\eqref{eq:unreduce}, we get
  \[ A(c) = A'(c/T)=A_{L}(c')\cdot|F|-\sum_{e\in c/T}A(e).
  \]
  On the other hand, we easily infer from Line~\ref{li:area sum} of Algorithm~\ref{Algo area} that $area$ is equal to $\sum_{e\in c/T}A(e)$ at the end of the algorithm. Moreover, we have by Lemma~\ref{le:dom} that $dom = A_L(c')$ at the end of the algorithm. We infer that the expression returned by Algorithm~\ref{Algo area} at Line~\ref{li:return} is indeed $A(c)$.
\end{proof}
We now analyse the complexity of the area computation.
\begin{lemma}
\label{le:pre-computation}
The quantities $A(e),\omega(e),\eta(e),I(e)$ can be computed in $O(n)$ time where $n$ is the number of edges of $G$.
\end{lemma}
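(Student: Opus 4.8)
The plan is to show that after a linear-time preprocessing, each of the four quantities $A(e), \omega(e), \eta(e), I(e)$ associated with an arc $e$ of $G' = G/T$ can be amortized to constant time per arc, for a total of $O(n)$ time. The key structural fact to exploit is that the graph $G_L$ (the subgraph of $G'$ induced by the $2g$ leftover edges $L$) has a single vertex and a single face, so its facial walk $W$ has length exactly $4g = O(g) = O(n)$. The defining data of each arc $e$ is the subpath $c_e$ of $W$ lying to the right of $e$; these subpaths are nested in a predictable way as $e$ ranges over the arcs of $G'$. So first I would compute the tree-cotree decomposition $(T, L, C)$ in linear time~\cite{Epp02}, the facial walk $W$ of $G_L$, the co-boundaries $\omega, \eta$ on the $4g$ arcs of $W$ as prescribed in Section~\ref{ssse:reduced} (again $O(g)$ time), and a potential $\alpha_\omega$ for $\omega$ on the vertices of $W$ by a single traversal of $W$.

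Next I would compute, for every \emph{prefix} $W_k$ of the facial walk $W$ (there are $4g+1$ of them), the four partial sums: the number of faces of $\widetilde{G}'$ enclosed by the lift of $W_k$ closed up appropriately, the sum of $\omega$ along $W_k$, the sum of $\eta$ along $W_k$, and the Stokes-type sum $\sum_{xy \in \widetilde{W_k}} \frac{\alpha_\omega(x)+\alpha_\omega(y)}{2}\eta(p(xy))$. Each of these is a running sum updated in $O(1)$ per arc of $W$, so all prefix tables are built in $O(g)$ time. The point is that for an arbitrary arc $e$ of $G'$, the path $c_e$ is a \emph{contiguous block} of $W$ (a cyclic interval), since $c_e$ is the portion of the unique facial walk of $G_L$ between the two occurrences of the endpoints of $e$ on the right side. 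Hence $\omega(e)$, $\eta(e)$ and $I(e)$ are differences of two prefix-table entries — constant time each, once we know the starting and ending positions of $c_e$ in $W$. For $I(e)$ one must be slightly careful because the formula uses the potential normalized to vanish at the \emph{initial} vertex of $c_e$ rather than at $v_0$; but this is a shift by the constant $\alpha_\omega(\text{start of } c_e)$ times $\sum_{e'\in c_e}\eta(e') = \eta(e)$, already available, exactly the correction already performed inside the proof of Lemma~\ref{le:dom}. The quantity $A(e)$, the number of enclosed faces, is likewise a difference of prefix ``enclosed-area'' entries for the facial walk.

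The one genuinely non-routine point is locating, for each of the $O(n)$ arcs $e$ of $G'$, the endpoints of $c_e$ inside the length-$4g$ facial walk $W$ in total linear time, and handling the loop-arc cases where $c_e$ is either trivial or all of $W$ depending on orientation. I would handle this by walking once around $W$ and recording, for each vertex of $G_L$, the positions of its occurrences; then each arc $e$ — whether it lies in $\overset{\leftrightarrow}{L}$, in $\overset{\leftrightarrow}{C}$, or is an image of a contracted edge — has its endpoints among these recorded vertices, and the ``to the right of $e$'' rule picks out which occurrence and which of the two cyclic intervals to take, in $O(1)$ per arc using the rotation system. (The arcs of $\overset{\leftrightarrow}{C}$ are the ones for which $c_e$ is a genuine nontrivial block; for $e \in \overset{\leftrightarrow}{L}$ one has $c_e = e$ and all four quantities are read off directly, with $A(e)=0$.) Summing the constant-time work over the $O(n)$ arcs of $G'$, and adding the $O(n)$ preprocessing, gives the claimed $O(n)$ bound. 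I expect the bookkeeping of the cyclic-interval endpoints and the loop-arc orientation cases to be the main obstacle to a fully rigorous write-up; the arithmetic with prefix sums is entirely routine.
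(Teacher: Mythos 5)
Your approach to $\omega(e)$, $\eta(e)$, and $I(e)$ via prefix sums along the facial walk of $G_L$ is correct and matches the paper's proof essentially line by line (including the normalization correction for $I(e)$ via $\alpha_\omega(\text{start of }c_e)\cdot\eta(e)$ and the cyclic wrap-around case). The extraction of the endpoint positions of $c_e$ from a single pass around the facial walk plus the rotation system is also fine.

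The gap is in the claim that $A(e)$, ``the number of enclosed faces, is likewise a difference of prefix enclosed-area entries for the facial walk.'' Unlike the other three quantities, which are genuine $1$-dimensional sums over the cyclic interval $c_e$, $A(e)$ is a $2$-dimensional count: it is the number of faces of $G'$ in the region of the fundamental-domain disk bounded by the chord $e$ on one side and the boundary path $c_e$ on the other. This count depends on where the chord $e$ actually sits inside the disk, not only on the positions of its endpoints on the facial walk. Concretely, two distinct arcs $e_1, e_2 \in \overset{\leftrightarrow}{C}$ may have the same pair of endpoint corners of the disk and the same $c_e$, yet enclose strictly nested regions separated by a nonempty strip of faces, so $A(e_1)\neq A(e_2)$. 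Any formula expressing $A(e)$ as a function of prefix sums indexed by the endpoint positions of $c_e$ would give the same answer for $e_1$ and $e_2$ and hence cannot be correct; the phrase ``closed up appropriately'' is precisely where this information would have to enter, and it is not available from facial-walk prefix data. The paper instead uses the cotree $C^*$: this spanning tree of the dual of $G'$ lives inside the disk, removing the dual arc $e^*$ disconnects $C^*$ into two components, and $A(e)$ is the number of vertices (i.e.\ faces of $G'$) in the component to the right of $e$. All these subtree sizes follow from a single linear-time traversal of $C^*$. You need a tree-based argument of this kind for $A(e)$; the facial-walk prefix sums only take care of $\omega(e)$, $\eta(e)$, and $I(e)$.
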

\begin{proof}
We first compute a tree-cotree decomposition of $G$ in $O(n)$ time~\cite{Epp02}. From this decomposition, we obtain representations of $G'$ and $G_L$ as rotation systems in linear time. Let $(e_1\ldots e_{4g})$ be the facial walk of $G_L$.
The computation of $\omega$ and $\eta$ in $G_L$ requires finding four indices $1\le i<j<k<\ell\le 4g$ such that $e_k=\overline{e_i}$ and $e_j=\overline{e_\ell}$. This can be done in $O(g)$ time. 
Consider the vertices $v_0,v_1,\ldots, v_{4g-1},v_0$ of a lift of the facial walk of $G_L$ where $p(v_{i-1}v_{i})=e_{i}$. In order to compute $\omega(e),\eta(e), I(e)$ for $e\in\overset{\leftrightarrow}{C}$, we first compute and store in $O(g)$ time all the partial sums $\alpha(v_i):=\sum_{j=1}^{i}\omega(e_j)$ and $\beta(v_i):=\sum_{j=1}^{i}\eta(e_j)$. Note that $\alpha$ coincides with the potential of $\omega$ that cancels at $v_0$. We then compute and store in $O(g)$ time all the partial sums $I_i=\sum_{j=1}^i\frac{\alpha(v_j)+\alpha(v_{j+1})}{2}\eta(e_j)$. For each arc $e$ of $G'$ with lift $v_iv_j$ (when $e$ is a loop edge we may have $j=i$ or $j=i+4g$ depending on the orientation of the loop) we can now compute $\omega(e)=\alpha(v_j)-\alpha(v_i)$ and $\eta(e)=\beta(v_j)-\beta(v_i)$ in constant time per arc. Moreover, we can compute $I(e)$ in constant time thanks to the formulas
\begin{align*}
    I(e)& = I_j-I_i+\alpha(v_i)(\beta(v_j)-\beta(v_i)) & \text{if } i\le j\\
    I(e)& = I_{4g}+I_j-I_i+\alpha(v_i)(\beta(v_j)-\beta(v_i)) & \text{if }  i>j
\end{align*}
By construction, the spanning cotree $C^*$ is the graph dual to the combinatorial surface $G'$ cut along $L$. This combinatorial surface is topologically a disk. Each arc $e\in \overset{\leftrightarrow}C$ cuts this disk into two parts. The combinatorial area $A(e)$ of the right part is thus equal to the number of vertices of the corresponding component of $C^*-e^*$. 
It follows that the $A(e)'s$ may be obtained by computing the sizes of the subtrees of $C^*$ in total linear time, thus ending the proof of the lemma.
\end{proof}
Thanks to Lemma~\ref{le:pre-computation} and Algorithm~\ref{Algo area}, we obtain:
\begin{proposition}
\label{prop:area}
After a pre-computation in $O(n)$ time, Algorithm~\ref{Algo area} computes the signed area of a walk $c$ in $G$ in time linear in the length of $c$.
\end{proposition}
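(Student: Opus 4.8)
Since Proposition~\ref{pr:correct2} already establishes that Algorithm~\ref{Algo area} returns the signed area $A(c)$, the plan is to prove only the running-time claim, which I would break into the cost of the pre-computation, the number of elementary steps of one run of the algorithm, and a check that each step operates on numbers small enough to be handled in constant time under the RAM model used in the paper.

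For the pre-computation, I would first compute a tree-cotree decomposition $(T,C,L)$ of $G$ in $O(n)$ time~\cite{Epp02} and form $G'=G/T$, then invoke Lemma~\ref{le:pre-computation} to obtain the tuples $(A(e),\omega(e),\eta(e),I(e))$ for all arcs $e$ of $G'$ in $O(n)$ total time. In the same linear-time pass I would also prepare the two data structures the algorithm implicitly relies on for constant-time access: a Boolean marker on each edge of $G$ telling whether it lies in $T$, and an array indexed by the arcs of $G'$ — equivalently, by the arcs of $G$ not in $T$ — holding the four pre-computed values. This yields the announced $O(n)$ pre-processing bound.

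For a single execution on a contractible closed walk $c$ of length $\ell$, I would observe that the main loop of Algorithm~\ref{Algo area} traverses the arcs of $c$ once, skipping in $O(1)$ time (thanks to the marker) those belonging to $T$, so it runs at most $\ell$ iterations; each iteration performs the constant-size updates of Lines~\ref{li:dom augment}--\ref{li:alpha augment}, namely a fixed number of table look-ups, additions, and one multiplication, and Line~\ref{li:return} performs one further multiplication and subtraction. Hence the algorithm executes $O(\ell)$ arithmetic operations and memory accesses. The last thing to verify is that these are operations on ``small'' numbers: by (the proof of) Lemma~\ref{le:pre-computation} every pre-computed quantity is a rational with denominator at most $2$ and numerator polynomially bounded in $n$ (for instance $A(e)\le|F|$, $|\omega(e)|,|\eta(e)|\le 4g$ and $|I(e)|=O(g^2)$), and a straightforward induction on the loop shows that the accumulators $dom$, $area$, $\alpha$ — each being a sum of at most $\ell$ such terms, some of them multiplied by a previously accumulated value — stay bounded by a polynomial in $n$ and $\ell$; so does the returned value $dom\cdot|F|-area=A(c)$. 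Since the paper's RAM model uses words of $\Theta(\log(n+\ell))$ bits, all these numbers fit in $O(1)$ words, every arithmetic operation costs $O(1)$, and the bounds $O(n)$ for the pre-computation and $O(\ell)$ for the query follow. I do not expect a genuinely hard step here: the only mildly delicate points are arranging the $O(1)$-access data structures during the pre-computation and carrying out this word-size bookkeeping, both routine.
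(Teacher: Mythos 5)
Your proposal is correct and takes essentially the same approach as the paper, which gives no explicit proof text for this proposition and simply derives it from Lemma~\ref{le:pre-computation} (providing the $O(n)$ pre-computation) together with the observation that Algorithm~\ref{Algo area} performs $O(1)$ work per arc of $c$. The additional word-size bookkeeping you carry out — bounding $A(e)$, $\omega(e)$, $\eta(e)$, $I(e)$ and the accumulators $dom$, $area$, $\alpha$ by polynomials in $n$ and $\ell$ so that every quantity fits in $O(1)$ machine words — is a correct and welcome elaboration of what the paper leaves to the reader under its stated RAM model.
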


\section{Homotopy and Simplicity Test}\label{sec:simplicity-test}
In this section we build on the works~\cite{LR12,EW13} that provide linear time contractibility test and linear time free homotopy test for walks in graphs embedded in surfaces. As a direct consequence of their construction, we provide an online algorithm for geodesic tightening of walks, see Proposition~\ref{prop:Q-pcan}. Additionally, we consider the problem of testing whether a walk in a graph cellularly embedded on a surface has a simple lift in the universal covering of that surface (see Theorem~\ref{th:simplicity}). These refinements are crucial in our efficient detection of balanced bigons in Section~\ref{sec:kernel-minor-algo}. Note that the free homotopy test is also used in Section~\ref{ssec:spectrum-equality-test} when considering homotopy of systems of curves.

\begin{theorem}
\label{th:simplicity}
Let $(S,G)$ be a combinatorial surface with $n$ edges and genus $g \ge 2$. After a precomputation in $O(n)$ time, we can test whether a walk $c$ in $G$ of length $\ell$ has a simple lift in the universal covering $\widetilde S$ in $O(\ell \log(\ell+g))$ time.
\end{theorem}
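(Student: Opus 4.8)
The plan is to reduce the problem to a computation in a bounded-genus cover where the lift becomes a simple (or non-simple) closed-ish configuration that can be detected by organizing geodesic representatives of all prefixes of the walk.

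First I would recall that by the works of \cite{LR12,EW13}, after an $O(n)$-time precomputation, the surface $(S,G)$ can be equipped with a system of quad-like ``reduction'' moves (a CAT(0)-like or hyperbolic-like combinatorial structure) so that every walk $c$ of length $\ell$ has a canonical geodesic representative $\operatorname{can}(c)$ computable in time roughly $O(\ell \log(\ell+g))$, and two walks are homotopic iff they have the same canonical form. The key refinement, stated as Proposition~\ref{prop:Q-pcan} in this section, is that this tightening can be performed \emph{online}: as we scan $c$ arc by arc, we can maintain the canonical geodesic of each prefix path $c([0,i])$, with each update costing amortized $O(\log(\ell+g))$ time. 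I would use this to compute, for the pointed walk $c$ with basepoint $\widetilde c(0)=v$ in $\widetilde S$, the geodesic representatives $\widetilde\gamma_i$ of all prefixes, each starting at $v$ and ending at $\widetilde c(i)$.

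The heart of the argument is the geometric observation that the union $\bigcup_i \widetilde\gamma_i$ of these prefix geodesics, all emanating from the common basepoint $v$, forms a \emph{star-shaped} (tree-like) domain: two prefix geodesics $\widetilde\gamma_i$ and $\widetilde\gamma_j$ with $i<j$ share a common initial segment (being both geodesics from $v$, in a CAT(0)/hyperbolic-type metric they fellow-travel and in fact, up to the combinatorial normalization, agree until they split once and never rejoin). Consequently the combined picture is a planar tree $\mathcal T$ rooted at $v$ with the endpoints $\widetilde c(i)$ as leaves, together with the homotopies (bigons) between consecutive geodesics $\widetilde\gamma_{i}$ and $\widetilde\gamma_{i+1}$ which sweep across the edge $\widetilde c(i)\widetilde c(i+1)$. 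The lift $\widetilde c$ is simple if and only if no two of these sweeping bigons overlap in an incompatible way — equivalently, if and only if the cyclic/linear order in which the leaves $\widetilde c(0),\widetilde c(1),\dots,\widetilde c(\ell)$ appear around the boundary of the star domain is consistent with them being the successive vertices of an embedded path. I would make this precise by showing that a self-intersection of $\widetilde c$ forces two of the prefix-difference bigons to cross, which is detectable as an order inversion among the leaves; and conversely an embedded tree with a consistent leaf order yields a simple $\widetilde c$. Maintaining the leaf order as leaves are inserted one at a time is a standard ordered-structure (balanced BST / Euler-tour) operation costing $O(\log \ell)$ per insertion, and detecting an inversion is then immediate.

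The main obstacle I expect is proving the fellow-traveling/tree structure cleanly in the purely combinatorial setting of \cite{LR12,EW13} rather than in a genuine hyperbolic metric: one must argue that the \emph{canonical} (combinatorially normalized) geodesics of two prefixes, after the normalization that \cite{LR12,EW13} impose to make canonical forms unique, literally coincide on a common prefix and split exactly once, so that the ``star-shaped domain'' is honestly a planar tree and not merely a homotopy-theoretic one. A secondary subtlety is handling the passage from the pointed walk to the \emph{lift}: $\widetilde c$ is a path, not a loop, so there is no periodicity to exploit as in Lemma~\ref{lem:simple-lift-square}; instead simplicity is a statement purely about the finite tree $\mathcal T$ and the order of its $\ell+1$ leaves, which is why the online prefix computation suffices. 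Assembling these pieces — online geodesic tightening from Proposition~\ref{prop:Q-pcan}, the tree structure of the star domain, reduction of simplicity to a leaf-order consistency check, and the $O(\log(\ell+g))$-per-step ordered-structure maintenance — gives the claimed $O(\ell\log(\ell+g))$ running time after the $O(n)$ precomputation.
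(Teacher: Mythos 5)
Your first steps are on target and align with the paper: compute a system of quads $Q$ with the map $q$ (Proposition~\ref{prop:Q-q}), process $c$ prefix by prefix using online geodesic tightening (Proposition~\ref{prop:Q-pcan}), and accumulate the union of prefix geodesics into the star-shaped subgraph $\widetilde Q(q(c))$ of the universal cover $\widetilde Q$ (this is exactly Proposition~\ref{pr:tilde-Q-update}). But your reduction of simplicity to a \emph{leaf-order consistency / bigon-inversion} check is where the argument goes off track, for two reasons.

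First, the notion of simplicity at stake here is \emph{vertex-injectivity of the lifted walk}: $\widetilde c$ fails to be simple exactly when $\widetilde c(i) = \widetilde c(j)$ for some $i < j$, i.e.\ when some subwalk $c[i\dots j]$ is a contractible closed walk in $S$. This is a \emph{coincidence} of two leaf positions, not a transversal crossing of two bigons. There is no reason that such a coincidence shows up as an ``order inversion'' among distinct leaves, and conversely it is unclear that an inversion of leaves (if the leaves are distinct vertices of $\widetilde Q$) has anything to do with a vertex repetition in $\widetilde G$. You appear to be reasoning as if $\widetilde c$ were a smooth curve whose self-intersections are transversal crossings detectable from combinatorics of sweeping bigons; that is the wrong model for a combinatorial walk. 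The paper's Lemma~\ref{le:simplicity-equivalence} gives the correct and much simpler criterion: $\widetilde c$ is simple iff the map $k \mapsto (v(k), w(k))$ is injective, where $v(k)\in V(G)$ is the $k$th vertex of $c$ and $w(k)\in V(\widetilde Q(q(c)))$ is the endpoint of the lift of $q(c(k))$. Detecting a collision among these $\ell+1$ pairs is an elementary sorting/dictionary problem and costs $O(\ell\log\ell)$ after $\widetilde Q(q(c))$ is built.

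Second, your proposal drops the coordinate $v(k)$ entirely: your ``leaves'' live only in the prefix-geodesic tree, i.e.\ in $\widetilde Q$. But $q$ is built from a tree-cotree contraction, so the vertex map from $\widetilde G$ to $\widetilde Q$ is far from injective; two genuinely distinct vertices $\widetilde c(i)\neq\widetilde c(j)$ of $\widetilde G$ can land on the same $w(i)=w(j)\in\widetilde Q$, and two lifts at the same projected $G$-vertex can land at the same $\widetilde Q$-vertex precisely when the corresponding prefixes are homotopic. You need both pieces of data. Without the $v(k)$ coordinate, a test based only on the tree of prefix geodesics in $\widetilde Q$ gives false positives (declaring a non-simple lift when only the $Q$-shadow collides). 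Once you switch to the paper's criterion, your data structure ideas (balanced trees etc.) are perfectly adequate and the running time $O(\ell\log(\ell+g))$ comes out as claimed, with the $\log g$ from $\widetilde Q(q(c))$ maintenance and the $\log\ell$ from the collision test.
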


Let us mention that for the special case of the $1\times1$ grid in genus $g=1$, Brlek, Koskas and Proven\c{c}al~\cite{bkp-altasafdpi-2009} provides an optimal $O(\ell)$ time algorithm for this problem.
The simplicity of a lift is intimately related to homotopy. Namely, the lift of a walk $c$ of length $\ell$ in $G$ to the universal covering $\widetilde S$ is simple if and only if there does not exist a pair of indices $1 \le i < j \le \ell$ such that the lift of the subwalk $c[i\dots j]$ of $c$ between its $i$th and $j$th vertices is closed in $\widetilde S$, or equivalently such that the walk $c[i\dots j]$ is closed and contractible in $S$.

%\begin{theorem}\label{th:online-homotopy}
%Let $G$ be a cellularly embedded graph with $n$ edges on a genus $g \ge 2$ closed surface. After a precomputation in $O(n)$ time, for a walk $W$ in $G$ of length $\ell$ and any family $W_1\subset W_2\ldots \subset W_\ell$ of strictly increasing subwalk of $W$, testing if there is $1\le i\le \ell$ such that $W_i$ is contractible may be done in $O(\ell)$.
%\end{theorem}

\subsection{System of quads and Homotopy Tests}
Even though it is hidden in the statement of Theorem~\ref{th:simplicity}, the construction of~\cite{LR12,EW13} that we will use to prove involves a simplified embedded graph called a \define{system of quads}, see Section~3 in~\cite{EW13}. Systems of quads are closely related to the reduced graphs considered in the previous section. Given a combinatorial surface $(S,G)$ with $n$ edges, a system of quads $Q$ (cellularly embedded in $S$) can be computed in $O(n)$ time from a tree-co-tree decomposition of $(S,G)$ and comes with a map $q$ from the edges in $G$ to the walks in $Q$ whose main properties are recalled in the following statement.
\begin{proposition} \label{prop:Q-q}
Let $(S,G)$ be a combinatorial surface of genus $g \ge 2$. There exists a system of quads $Q$ on $S$ and a map $q$ from the edges of $G$ to walks in $Q$ such that
\begin{itemize}
\item $q$ is compatible with the concatenation: if $c$ and $c'$ are two walks in $G$ that can be concatenated, then $q(c \cdot c') = q(c) \cdot q(c')$. In the case where $c'=e$ is a single edge, then $q(e)$ has length zero or two.  The collection of all $q(e)$ can be precomputed in $O(n)$ time and stored in a table of linear size. As a consequence, $q(c)$ can be computed in $O(|c|)$ time. 
\item Let $c$ and $c'$ be two walks in $G$ with the same endpoints. Then $c$ and $c'$ are homotopic (with fixed endpoints in $G$) if and only if $q(c^{-1}.c')$ is (freely) homotopic to a trivial walk.
\end{itemize}
\end{proposition}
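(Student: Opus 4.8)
The plan is to recall the system-of-quads construction from~\cite{LR12,EW13} in a form from which the four asserted properties can be read off, and to isolate the step carrying the real content. A concrete model to keep in mind --- which one may take as the system of quads, up to the extra bounded local processing that~\cite{LR12,EW13} perform to enforce the structural invariants their linear-time reduction scheme needs --- is the following. From a tree-cotree decomposition of $(S,G)$, computed in $O(n)$ time by Eppstein's algorithm~\cite{Epp02}, extract the spanning tree $T$ of $G$ and contract it to obtain the one-vertex graph $G/T$, cellularly embedded in $S$ (contracting an embedded tree yields a homeomorphic surface). Let $Q$ be the radial graph of $G/T$, i.e. its vertex--face incidence graph: the vertices of $Q$ are the single vertex of $G/T$ together with the faces of $G/T$, and every edge $e$ of $G/T$ gives one quadrilateral face of $Q$ having $e$ as a diagonal. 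Then $Q$ is a cellular embedding in $S$ of complexity $O(n)$ all of whose faces are quadrilaterals. Define $q$ on arcs by: $q(e)$ is the trivial walk at the vertex of $G/T$ when $e$ is an arc of $T$, and otherwise $q(e)$ is one of the two length-two boundary walks of the quadrilateral of $Q$ containing $e$ that join the endpoints of $e$ (the two choices are homotopic rel endpoints in $S$ since together they bound that quadrilateral; fix one, say via the orientation of $S$). Finally set $q(\overline e)=q(e)^{-1}$ and extend $q$ to all walks multiplicatively.

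Granting this, three of the four assertions are immediate. The identity $q(c\cdot c')=q(c)\cdot q(c')$ holds because $q$ is the multiplicative extension of its values on arcs, and by construction $q(e)$ has length $0$ when $e\in T$ and length $2$ otherwise. The graph $Q$ together with the table $\{q(e)\}_{e\in E(G)}$ is produced from the tree-cotree decomposition in $O(n)$ time and $O(n)$ space; afterwards $q(c)$ is the concatenation of $|c|$ precomputed blocks of length at most two, computed in $O(|c|)$ time.

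The substance is the homotopy assertion. Since $q(c^{-1}\cdot c')=q(c)^{-1}\cdot q(c')$ and since a closed walk is freely homotopic to a trivial walk exactly when it is contractible, it suffices to prove that for every closed walk $w$ in $G$, \emph{$w$ is contractible in $S$ if and only if $q(w)$ is contractible in $S$}. The geometric heart is that $q(e)$ is homotopic to $e$ rel endpoints in $S$: the length-two walk $q(e)$ and the diagonal $e$ together bound a disk inside the closed quadrilateral of $Q$ containing $e$, so they are homotopic rel their common endpoints (vacuously for $e\in T$, which has been contracted). Contracting $T$ inside $S$ is realized by a homeomorphism $S\to S/T$ carrying $w$ to a closed curve freely homotopic to $w/T$, and $q(w)$ is obtained from $w/T$ by replacing each of its arcs with a path homotopic to it rel endpoints; hence $w$ and $q(w)$ are freely homotopic closed curves in $S$, so one is contractible exactly when the other is. Applied to $w=c^{-1}\cdot c'$, this gives the assertion.

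The point requiring the most care --- and where one appeals to~\cite{LR12,EW13} rather than argue locally --- is the homotopy-reflecting direction, namely that $q(w)$ contractible forces $w$ contractible even though $q$ is far from injective on walks; this rests entirely on the displayed fact $q(w)\simeq w$ in $S$, hence on checking $q(e)\simeq e$ in $S$ in every (including degenerate) case of the actual construction, and on verifying that the resulting system of quads carries the extra structure required by the linear-time algorithms built on it (in particular Theorem~\ref{th:simplicity} and the homotopy tests used later). Given the construction of~\cite{LR12,EW13}, the remaining checks --- the concatenation identity, the length-two bound, and the $O(n)$ preprocessing with $O(|c|)$ queries --- are routine.
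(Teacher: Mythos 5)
Your $Q$ (the radial graph of $G/T$) is \emph{not} a system of quads in the sense the paper uses: the appendix defines a system of quads as the quadrangulation with exactly two vertices, $4g$ edges and $2g$ square faces, obtained by further collapsing the radial graph of the \emph{reduced} graph $G_L$ (one vertex, one face, $2g$ loops), i.e.\ after contracting both the primal spanning tree $T$ \emph{and} the dual cotree $C^*$. Your construction only contracts $T$, so your $Q$ retains one vertex per face of $G$ and has $\Theta(n)$ vertices and edges. The step you dismiss as ``extra bounded local processing'' --- collapsing all the cotree quadrilaterals --- is precisely the non-trivial content of the construction in~\cite{LR12,EW13}: it is a \emph{global} reduction from $\Theta(n)$ to $\Theta(g)$ complexity, and it is not obvious how to carry along the property that each $q(e)$ (in particular for cotree edges $e \in C$) still has length zero or two in the collapsed graph while remaining compatible with homotopy in $S$. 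Your rel-endpoints homotopy argument and the $O(n)$/$O(|c|)$ complexity bounds are correct for \emph{your} $Q$, but you have essentially proved a different statement about a different graph.

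This is not a purely terminological quibble, because the rest of Section~\ref{sec:simplicity-test} and Appendix~B is built on the two-vertex structure. The $O(\log g)$ bound on the primitives in Proposition~\ref{pr:tilde-Q-update} relies on $Q$ having maximum degree $4g$; your $Q$ has a vertex of degree $\Theta(n)$, degrading the bound to $O(\log n)$. Claim~\ref{cl:quad-face-oriented-degrees} and the isoperimetric Theorem~\ref{th:isoperimetric-quad-system} invoke facts specific to the two-vertex quad system (bipartiteness with the forbidden $(i-1,i,i-1,i)$ face type, out-degree $1$ or $2$, all faces of $\widetilde Q$ arranged in concentric rings), citing~\cite{LR12}; the radial graph of $G/T$ may have face-vertices of very small degree (monogon faces of $G/T$ give degenerate corners) and those structural lemmas would have to be reproved. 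Finally, the constant-time turn updates of Proposition~\ref{prop:Q-pcan} encode walks by turns over the $8g$ arcs of the small $Q$. So the construction must be pushed all the way to the genuine two-vertex system of quads; your proposal stops one step short, and that missing step is where the real work lies.
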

Proposition~\ref{prop:Q-q} lies at the start of the construction in~\cite{LR12}. It allows to perform homotopy tests in $Q$ rather than in $G$.
The homotopy test in the system of quads $Q$ is done by computing a unique geodesic representative for each homotopy class of walks. In fact, as we only need to test contractibility of walks, in this paper we only need to compute a geodesic representative. Recall from~\cite{EW13, DL19} that a walk is geodesic if and only if it does not contain spurs or brackets.
In order to achieve linear time complexity one can encode paths in $Q$ as sequences of turns in run-length encoding form, see Section~4.1 in~\cite{EW13}. In this encoding, a spur corresponds to a turn $0$ and a bracket of length $k$ to the compressed sequence of turns $(1,2^k,1)$ or $(\overline{1}, \overline{2}^k, \overline{1})$.
%The main algorithm of~\cite{EW13} can be summarized as follows.
%\begin{proposition}\label{prop:Q-can}
%Let $Q$ be a system of quads. There exists a map $\can$ transforming walks of even lengths in $Q$ into canonical ones such that
%\begin{itemize}
%\item Given a closed walk $c$ of length $2\ell$, its canonical form $\can(c)$ can be computed in $O(\ell)$ time. Moreover, $c$ and $\can(c)$ are freely homotopic on $S$.
%\item Two closed walks $c$ and $c'$ on $Q$ are freely homotopic if and only if $\can(c)$ is a cyclic permutation of $\can(c')$.
%\end{itemize}
%\end{proposition}
We now provide a refinement of the results in~\cite{LR12,EW13} that follows from their techniques. 
\begin{proposition}\label{prop:Q-pcan}
Let $w$ be a geodesic walk in $Q$ represented as a sequence of turns and the first and last edges. Given an edge $e$ in $Q$ such that $w\cdot e$ is a walk (respectively $e\cdot w$), we can compute in $O(1)$ time the sequence of turn with the first and the last edge of a geodesic $w'$ homotopic to $w\cdot e$ (resp. $e\cdot w$) as paths with fixed end points.
%Let $Q$ be a system of quads. There exists a map $pcan$ on walks in $Q$ such that
%\begin{itemize}
%    \item Given a geodesic walk $w=pcan(w)$ represented by a sequence of turns $s$ with starting edge $f$  and an edge $e$ in $Q$,  then $s$ can be modified to represent $pcan(e \cdot s)$ or $pcan(s \cdot e)$ in $O(1)$ time by updating respectively its prefix or its suffix.
%    \item A pair of sequence of turns $s$ and $s'$ are homotopic (with fixed endpoints) if and only if $pcan(s) = pcan(s')$. \todo{NO!, you need to start with the same edge.}
%\end{itemize}
\end{proposition}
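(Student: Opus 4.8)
The plan is to maintain the geodesic $w$ in the run-length-encoded turn sequence format of~\cite[Sec.~4.1]{EW13} and to show that appending a single edge $e$ can only disturb the encoding in a bounded neighborhood of the affected endpoint. Concretely, I would argue as follows. Write $w = v_0 e_1 v_1 \cdots e_k v_k$ and suppose we append $e$ at the end so that $w\cdot e$ is a walk. Since $w$ is geodesic it contains no spur and no bracket; the only place where $w\cdot e$ can fail to be geodesic is at the new last turn, i.e.\ the turn at $v_k$ formed by $e_k$ and $e$. If this turn is not $0$ and does not start a bracket pattern $(1,2^j,1)$ or $(\overline1,\overline2^j,\overline1)$, then $w\cdot e$ is already geodesic and we are done in $O(1)$ time by simply appending the new turn and updating the last edge. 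Otherwise we must cancel a spur or reduce a bracket, and here the key point is that reducing the rightmost bracket or spur can expose at most a bounded amount of new structure.

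The main step is therefore a \emph{locality lemma}: after one spur cancellation or one bracket reduction at the right end of a geodesic, the resulting walk differs from a geodesic only in a bounded-size suffix of the turn sequence (in run-length-encoded form, a bounded number of runs). For a spur at $v_k$ (turn $0$), cancelling it removes $e_k$ and $e$ and leaves $w' = v_0 e_1\cdots v_{k-1}$ followed by nothing, which is geodesic since $w$ was; so the encoding update is $O(1)$. For a bracket, appending $e$ turns a suffix $(\ldots, 1, 2^j)$ of $w$ together with the new turn $1$ into a bracket $(1,2^j,1)$, whose reduction replaces it by $(\overline 2, \overline 1)$ attached to whatever preceded it (see~\cite{EW13, DL19}); the point is that this replacement touches only the last two or three runs of the RLE sequence, and because $w$ was geodesic the part of $w$ before the bracket contained no spur or bracket, so no new bracket of the \emph{same} orientation can propagate leftward beyond these few runs. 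One must check the two boundary sub-cases where the reduction empties a run (so two adjacent runs merge) and where it produces a new turn $0$ (a spur) at the junction; in each case at most one further $O(1)$ cleanup is needed, and then the sequence is geodesic. The left-append case $e\cdot w$ is symmetric after reversing orientations.

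I expect the main obstacle to be the careful bookkeeping of these boundary sub-cases in the run-length encoding — verifying that a single edge insertion cannot trigger a cascade of bracket reductions but only a constant number of them. This is exactly the phenomenon exploited (amortized) in the linear-time homotopy test of~\cite{LR12,EW13}; here we need the stronger \emph{worst-case} $O(1)$ statement per edge, which follows because we are inserting only one edge at a time rather than reducing an arbitrary walk. The proof then amounts to enumerating the possible turn values at the modified junction ($0$, $\pm1$, $\pm2$, and ``large''), invoking the spur/bracket characterization of geodesics, and checking that in each case the update to the encoded sequence is confined to $O(1)$ runs. Combined with Proposition~\ref{prop:Q-q}, which lets us translate between $G$ and $Q$ at the cost of a length-two walk per edge, this yields the claimed $O(1)$ per-edge update.
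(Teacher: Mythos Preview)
Your approach is essentially the same as the paper's: append the new turn, observe that at most one spur or one bracket can appear at the affected end, remove it in $O(1)$ time using the run-length encoding, and update the last edge. The paper's sketch is shorter only because it resolves your ``main obstacle'' by citation rather than by case analysis: after removing the single spur or bracket created by the new turn, the paper invokes \cite[Cor.~5.1]{GS90} (Gersten--Short) to conclude directly that the resulting sequence is already geodesic, with no cascade. This is precisely the non-cascade statement you propose to verify by enumerating the junction turn values; your plan would work, but the citation is the cleaner route and spares you the boundary sub-cases you flag as delicate.
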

%Note that in the statement, in order for the updates to be done in constant time an appropriate data structure such as double ended queues must be used for the sequences of turns. Alternatively, the algorithm can provide the $O(1)$ size list of things to be updated.

\begin{proof}[Sketch]
%Recall from~\cite{LR12,EW13} that the construction of $can$ is done by first finding a geodesic representative (see Section~4.3 in~\cite{EW13}) and then a canonical choice among such geodesics (see Section~5.1 in~\cite{EW13}). Both steps can be performed on walks in $Q$ rather than cycles and the second item of the statement follows.
The case of appending at the end or the beginning of $w$ are symmetric and we consider only the former. Let $t$ be the turn between the last edge of $w$ and $e$. Adding the turn $t$ to the sequence of turns of $w$ does not necessarily result in a geodesic. Though, it can only create a single spur (of length 2) or a bracket (that can be arbitrarily long, but in the run-length encoding it has constant size). Removing either of them we obtain a sequence of turns of a geodesic representative for $w \cdot e$ by~\cite[Cor 5.1]{GS90}. Moreover, we can update the last edge of this geodesic in constant time.
\end{proof}

We now provide the main construction at the heart of the simplicity test of Theorem~\ref{th:simplicity}. Given a system of quads $Q$ in a closed oriented surface $S$ we show how to explore the preimage $\widetilde Q$ of $Q$ in the universal covering $\widetilde S$ of $S$. Let us recall that $Q$ and hence $\widetilde{Q}$ are bipartite graphs. Hence, given a root vertex $v_0$ in $\widetilde{Q}$, for each edge $e$ of $\widetilde{Q}$ the distances from $v_0$ to the endpoints of $e$ must differ by $1$. As a consequence, there exists a canonical orientation of the edges of $\widetilde{Q}$ directed towards the root. From any given vertex $v$ of $\widetilde{Q}$, any oriented path from $v$ to $v_0$ is a geodesic. We note that it follows from~\cite{LR12} that the outgoing degree of any vertex $v$ of $\widetilde{Q}$ different from $v_0$ is either $1$ or $2$. In the case where $v$ has degree $2$ its two outgoing edges form the corner of a square of $\widetilde{Q}$.

Let $c$ be a walk in $Q$ and let $v_0$ be a vertex in $\widetilde Q$ that is a lift of the start of the walk. We consider $v_0$ as a root in $\widetilde Q$ and consider its edges oriented accordingly. Let $v_1, \ldots, v_k$ be the other vertices of the lift $\widetilde c$ of $c$ in $\widetilde Q$ starting at $v_0$. We define the oriented graph $\widetilde Q(c)$ as the subgraph of $\widetilde Q$ induced by the union of all geodesics from $v_0$ to some $v_k$. In other words $\widetilde{Q}(c)$ is the smallest star-shaped subset of $\widetilde{Q}$ containing $\widetilde{c}$. From the definition of $\widetilde{Q}(c)$, it follows directly that if $c$ and $c'$ are two walks in $Q$ that can be concatenated then $\widetilde{Q}(c) \subset \widetilde{Q}(c \cdot c')$. The main result about $\widetilde{Q}(c)$ is that it can be built in almost linear time.

\begin{proposition}\label{pr:tilde-Q-update}
Let $Q$ be a system of quads embedded in a closed oriented surface $S$ of genus $g$. Let $p: \widetilde{Q} \to Q$ be its universal cover and $v_0$ a root in $\widetilde{Q}$. Let $c$ be a walk of length $\ell$ in $Q$ starting at $p(v_0)$. Then calling Algorithm~\ref{Algo edge insertion} successively on the edges of $c$ updates $\{v_0\}$ as $\widetilde{Q}(c)$ in $O(\ell\ \log g)$ total time.
\end{proposition}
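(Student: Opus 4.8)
The plan is to describe Algorithm~\ref{Algo edge insertion} itself --- since the statement references it --- and then prove correctness and the $O(\ell\log g)$ time bound by induction on the prefixes of $c$. Concretely, I would maintain the star-shaped graph $\widetilde{Q}(c')$ for the current prefix $c'$ of $c$ together with, for each vertex $v$ already inserted, its distance $d(v)$ from the root $v_0$ and its (one or two) outgoing edges towards $v_0$; I would also keep a pointer to the current endpoint $v_{k}$ of the lift $\widetilde{c'}$. When processing the next edge $e$ of $c$, with combinatorial position determined by the turn $t$ at the current endpoint, I need to locate the other endpoint $v_{k+1}$ of the lift of $e$: either it is a vertex already present in $\widetilde{Q}(c')$, or it is new. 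The key subroutine is to find, among the star-shaped structure, whether the quad incident to $v_k$ in the direction prescribed by $t$ has its opposite corner already materialized, and if not, to create the missing at most three vertices and four edges of that quad, orienting the new edges towards the root using the rule from~\cite{LR12} (out-degree $1$ or $2$, the degree-$2$ case forming a square corner).

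The correctness argument has two halves. First, \textbf{invariant maintenance}: after processing the $k$-th edge, the maintained graph equals $\widetilde{Q}(c[0\dots k])$. This follows from the observation, already recorded before the proposition, that $\widetilde{Q}(c')\subset\widetilde{Q}(c'\cdot e)$, so inserting $e$ only \emph{adds} material; and the material added is exactly the union of the geodesics from $v_0$ to the new endpoint $v_{k+1}$. Because $\widetilde{Q}$ is bipartite and directed towards the root, the set of such geodesics is obtained by following outgoing edges from $v_{k+1}$ until reaching an already-present vertex (at which point all geodesics below merge into the existing star-shaped set by the induction hypothesis). So I only ever walk down a path --- or a pair of paths when a degree-$2$ vertex is met --- until hitting $\widetilde{Q}(c')$, and every vertex/edge so encountered genuinely belongs to $\widetilde{Q}(c'\cdot e)$. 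Second, \textbf{locating $v_{k+1}$}: here I use that consecutive edges of the lift sit in a common quad of $\widetilde Q$; given the local rotation system of $Q$ and the turn $t$, the quad and the target corner are determined, and I can test in $O(\log g)$ time whether that corner already exists by searching a balanced-tree index of the vertices of $\widetilde Q(c')$ keyed by their (geodesic) combinatorial address, which is essentially the geodesic word produced by Proposition~\ref{prop:Q-pcan}.

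For the complexity, the decisive point is amortization. Each call performs $O(\log g)$ work for the dictionary lookup/insertion (the $\log g$, not $\log\ell$, because the branching of $\widetilde Q$ is governed by $g$ and the relevant search trees have size polynomial in $g$ locally --- this is exactly the gain over a naive $\log\ell$), plus $O(1)$ work per vertex it newly creates. The total number of vertices ever created is $O(\ell)$: indeed $|\widetilde{Q}(c)|$ is $O(\ell)$ since it is a star-shaped union of $\ell$ geodesics, each of combinatorial length $O(1)$ in the quad metric after the reductions of Proposition~\ref{prop:Q-pcan}, and no vertex is created twice because of the dictionary. Hence the creation cost telescopes to $O(\ell)$ and the lookup cost to $O(\ell\log g)$, giving the claimed $O(\ell\log g)$ total.

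The main obstacle I anticipate is the bookkeeping for ``which quad and which corner'' when inserting $e$: one must argue carefully that the combinatorial address of $v_{k+1}$ can be derived in $O(\log g)$ time from the address of $v_k$ and the turn $t$, using only local data of the rotation system of $Q$, and that the down-walk from $v_{k+1}$ to the existing star never revisits an edge (so the per-step creation cost is $O(1)$ amortized rather than $O(\text{depth})$). This is where the structural facts imported from~\cite{LR12} --- bipartiteness of $\widetilde Q$, the canonical root-ward orientation, the out-degree dichotomy, and the fact that oriented root-ward paths are geodesics --- do the real work; the rest is careful but routine data-structure engineering.
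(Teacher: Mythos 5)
Your high-level plan (maintain the star-shaped set incrementally, argue that each insertion only adds material, amortize the cost over vertices/edges created, pay $\log g$ per lookup) matches the paper's strategy, but the argument breaks down at the single most important step: bounding $|\widetilde{Q}(c)|$.

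You claim $|\widetilde{Q}(c)| = O(\ell)$ because it is "a star-shaped union of $\ell$ geodesics, each of combinatorial length $O(1)$ in the quad metric after the reductions of Proposition~\ref{prop:Q-pcan}." That is false: the geodesic from $v_0$ to $v_k$ in $\widetilde{Q}$ has length equal to the graph distance $d(v_0,v_k)$, which can be as large as $\Theta(k)$ (Proposition~\ref{prop:Q-pcan} gives $O(1)$ \emph{incremental update time}, not $O(1)$ geodesic length). So naively the union of $\ell$ geodesics, each of length up to $\ell$, could a priori have size $\Theta(\ell^2)$, and your amortization then only gives $O(\ell^2 \log g)$. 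What actually saves the argument in the paper is Theorem~\ref{th:isoperimetric-quad-system} (an edge-isoperimetric inequality for systems of quads, $|E(H)|\le \tfrac{5}{2}|\partial H|$), combined with the observation that each edge insertion increases the perimeter $|\partial H|$ by at most $2$, so $|\partial H|\le 2\ell$ and hence $|E(\widetilde{Q}(c))|\le 5\ell$. Without invoking this isoperimetric bound (or a substitute for it), your size estimate is unsupported and the $O(\ell\log g)$ total falls apart.

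A smaller remark: your locating step (a global dictionary keyed by a "geodesic combinatorial address" per vertex of $\widetilde{Q}(c)$) is more machinery than the paper uses. The paper just stores, for each vertex $u$ of $H$, its incident lifted edges in a balanced BST of size $O(g)$, giving $O(\log g)$ membership tests and insertions with purely local data; no global address index is needed. Your version would also have to argue that the addresses can be compared/updated in $O(\log g)$ time, which you acknowledge as a "main obstacle" but do not resolve. The paper's data structure sidesteps this entirely.
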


Before making the proof of this proposition, we specify the data structure and primitives that we use. The incremental update in Algorithm~\ref{Algo edge insertion} works for more general star-shaped subgraphs $H$ of $\widetilde{Q}$ and we work in this context.
The vertices of $H \subset \widetilde{Q}$ are indexed by integers starting from $0$ representing the root to $n-1$ where $n$ is the number of vertices of $H$. Each edge is represented by a triple $(u,v,e)$ where $u,v \in [0,n-1]$ represent vertices of $H$ and $e$ is an edge in $Q$. As we explain below, we can assume the following primitives to be available
\begin{itemize}
\item \texttt{add\_vertex}$(H)$: add a vertex to $H$ and return its index in $O(1)$ time.
\item \texttt{rotate}$(H, u, e, t)$: return the edge in $E(Q)$ that is obtained by rotating around $p(u)$ by a turn $t$.
\item \texttt{contains\_edge}$(H, u, e)$: return whether $H$ contains the lift of the edge $e \in E(Q)$ from $u \in V(H)$ in $O(\log g)$ time.
\item \texttt{opposite\_vertex}$(H, u, e)$: return the vertex which is the other end of the lift of $e \in E(Q)$ starting at $u \in V(H)$ in $O(\log g)$ time.
\item \texttt{add\_edge}$(H, u, v, e)$: add an edge to $H$ from the vertex $u \in V(H)$ to the vertex $v \in V(H)$ that projects to the edge $e \in E(Q)$ in $O(\log g)$ time.
\item \texttt{turn}$(H, u, e)$: return the pair $(t, e')$ made of the turn $t$ with smallest absolute value between the lift of the edge $e \in E(Q)$ based at $u \in V(H)\setminus\{0\}$ and the (at most 2) outgoing edges at $u$ in $O(1)$ time. Also return the edge $e'$ of $Q$ which realizes this turn. If $u$ is the root, then return a special value ($+\infty$, \texttt{None}).
\end{itemize}
We now explain how to implement our data structure.
We store the edges adjacent to $v \in V(H)$ in a self-balancing binary search tree so that edge search and addition is done in $O(\log g)$ time. We also store the outgoing edges of $H$ for each vertex. Moreover, we index the edges of $Q$ cyclically around its two vertices so that the turn between two edges around a vertex of $Q$ can be obtained in constant time from their indices. This allows \texttt{turn}$(H, u, e)$ and \texttt{rotate}$(H, u, e, t)$ to run in $O(1)$ time .

\begin{algorithm}[ht!]
\caption{Edge insertion with star-shaped saturation}
\label{Algo edge insertion}
\begin{algorithmic}[1]
  \Require A non-empty star-shaped subgraph $H$ of $\widetilde{Q}$, a vertex $u \in H$ and an edge $e \in Q$ that has a lift $\widetilde{e}$ in $\widetilde{Q}$ starting at $u$.
    \Ensure Update $H$ with $\widetilde{e}$ and all geodesics containing its other endpoint.

    \If{\texttt{has\_edge}$(H, u, e)$} \label{algo step:edge present start}
        \State \Return
    \EndIf \label{algo step:edge present end}
    \State $(t,e_1) \gets $ \texttt{turn}$(H, u, e)$ \label{algo step:turn definition}
    \State $v \gets$ \texttt{new\_vertex}$(H)$
    \If{$t = \pm 1$} \label{algo step:large turn start}
        \State $u_1\gets$ \texttt{opposite\_vertex}$(H,u,e_1)$ 
        \State $e_2\gets$ \texttt{rotate}$(H,u_1,-t)$
         \State Recursively call Algorithm~\ref{Algo edge insertion} with $H,u_1,e_2$ \label{algo step:recursion}
         \State $u_2\gets$ \texttt{opposite\_vertex}$(H,u_1,e_2)$
         \State $e_3\gets$ \texttt{rotate}$(H,u_2,e_2,-t)$
         \State \texttt{add\_edge}$(H,v,u_2,e_3)$
    \EndIf \label{algo step:large turn end}
    \State \texttt{add\_edge}$(H, v, u, e)$ \label{algo step:add edge e}
    \State \Return $H$
\end{algorithmic}
\end{algorithm}

\begin{figure}[ht!]
    \centering
    \includegraphics[width=7cm]{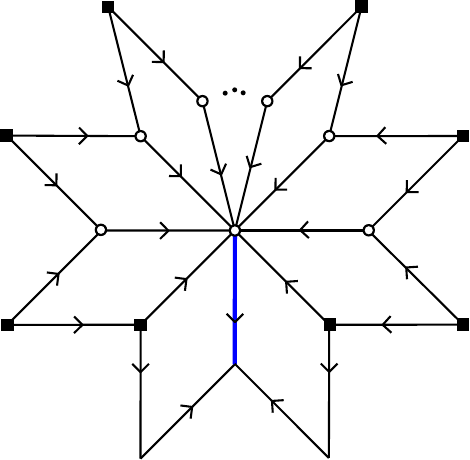}
    \includegraphics[width=7cm]{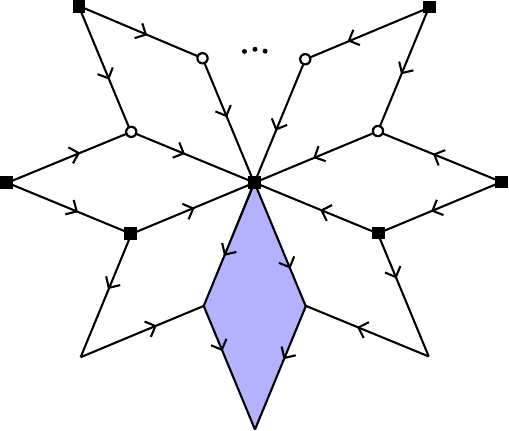}
    \caption{The two possible orientations of the edges in the neighborhood of a non-root vertex in $\widetilde{Q}$. The black squared and white circle vertices have respectively outgoing degree two and one. The outgoing degrees of the unmarked vertices are not determined by the outgoing degree of the central vertex.}
    \label{fig:Qtilde-neighborhoods}
\end{figure}

\begin{proof}
We start from $H_0=\{v_0\}$ and define $H_i, v_i$ the state of $H$ after calling Algorithm~\ref{Algo edge insertion} on $H_{i-1}, v_{i-1}$ and $c[i]$ where $c[i]$ is the $i$-th edge of $c$ and $v_{i}$ is \texttt{opposite\_vertex}$(H_i,v_{i-1},c[i])$.
We first prove the correctness of Algorithm~\ref{Algo edge insertion}: $H_i$ is the star-shaped closure of the prefix of length $i$ of the lift of $c$ starting at $v_0$. We proceed by induction.

The case of $u=v_{i-1}$ being the root is elementary and correctly treated by either the presence detection at Lines~\ref{algo step:edge present start}-\ref{algo step:edge present end} or the edge addition at Line~\ref{algo step:add edge e}. There is no need to saturate in that case. From now on we assume that the input $u$ is not a root vertex and that the lift $\widetilde{e}$ of $e=c[i]$ at $u$ is not present in $H_{i-1}$. Let us note that this edge is added to $H_{i}$ at the very end of the algorithm at Line~\ref{algo step:add edge e}.
The neighborhood of $u$ in $\widetilde{Q}$ can be of two different kinds with respect to edge orientations as can be seen in Figure~\ref{fig:Qtilde-neighborhoods}. Let $t$ and $e_1$ denote as in Line~\ref{algo step:turn definition} the turn between the outgoing edges at $u$ and $\widetilde{e} \in E(\widetilde{Q})$ and the corresponding outgoing edge. Then $t = \pm 1$ if and only if the outgoing degree of $v$ at the other end of $e$ is two, see Figure~\ref{fig:Qtilde-neighborhoods}.
\begin{itemize}
\item If the turn is large, that is $t \not= \pm 1$, then we only need to add the vertex $v$ and the edge $\widetilde{e}$ to $H$.
\item If the turn is small, that is $t = \pm 1$, we refer to Figure~\ref{fig:insert_edge} and Lines~\ref{algo step:large turn start}-\ref{algo step:large turn end}. The algorithm adds to $H$ the whole square adjacent to $e$ after having called recursively the edge insertion with $(u_1, e_2)$. By induction, one sees that the updated graph $H$ is indeed star-shaped. This concludes the correctness.
\end{itemize}
\begin{figure}[ht!]
    \centering
    \includegraphics[width=4cm]{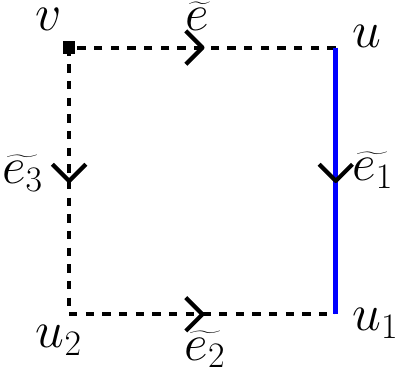}
    \caption{Illustration of the step at Lines~\ref{algo step:large turn start}:\ref{algo step:large turn end} of Algorithm~\ref{Algo edge insertion}.}
    \label{fig:insert_edge}
\end{figure}

We now prove the complexity upper bound. Each call to Algorithm~\ref{Algo edge insertion} ignoring the recursion at Line~\ref{algo step:recursion} either returns immediately (if the edge $\widetilde{e}$) is present or adds between 1 and 3 edges to $H$. In particular, including the recursion step, the call to Algorithm~\ref{Algo edge insertion} takes a time proportional to the number of edges added to $H$ times $\log g$ (recall that our primitives for modifying the graph $H$ takes either $O(1)$ or $O(\log g)$ time).

The complexity upper bound is now obtained by showing that the size of $\widetilde{Q}(c)$ is $O(\ell)$. By Theorem~\ref{th:isoperimetric-quad-system}, it reduces to prove that the perimeter of $H$ is bounded by $2\ell$. Indeed, the perimeter of $H_i$ has at most $2$ more edges compared to $H_{i-1}$. This shows that the number of edges in $\widetilde{Q}(c)$ is $5\ell$. Thus the complexity to compute $\widetilde{Q}(c)$ is $O(\ell\log g)$.
\end{proof}
 
\subsection{Simplicity}
The goal of this section is to prove Theorem~\ref{th:simplicity}.

Let $(S,G)$ be a combinatorial surface of genus $g \ge 2$. We first prove a lemma that shows how to use the construction from Proposition~\ref{pr:tilde-Q-update} to build coordinates for $\widetilde G$ in the universal covering $\widetilde S$. Let $c$ be a walk of length $\ell$ in $G$ with starting point $v=v(0)$. For any $0\le k\le \ell$, we denote by $v(k)$ the $k$-th vertex of $c$, and by $c(k)$ the restriction of $c$ to its $k$ first edges. Choose $w(0)$ in $\widetilde{Q}$ above the first vertex $q(v)$ of $q(c)$. Let also $w(k)$ be the endpoint of the lift of $q(c(k))$ starting at $w(0)$. We finally consider the map $\psi:k\mapsto (v(k), w(k))$.
\begin{lemma}
\label{le:simplicity-equivalence}
$\widetilde c$ is simple if and only if $\psi$ is injective.
\end{lemma}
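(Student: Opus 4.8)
The plan is to show the two directions of the equivalence between simplicity of $\widetilde c$ and injectivity of $\psi$, by relating the coordinate $w(k)$ in $\widetilde Q$ to the lift of $q(c(k))$ and, via Proposition~\ref{prop:Q-q}, to the homotopy class of $c(k)$. The key observation is that two vertices $v(i)$ and $v(j)$ of $\widetilde c$ coincide in $\widetilde G\subset\widetilde S$ if and only if the subwalk $c[i\dots j]$ is closed \emph{and} contractible in $S$; this is exactly the criterion recalled just after the statement of Theorem~\ref{th:simplicity}. So I would first reduce the claim to: $\widetilde c(i)$ and $\widetilde c(j)$ have the same endpoint in $\widetilde S$ $\iff$ $\psi(i)=\psi(j)$, i.e. $v(i)=v(j)$ in $G$ and $w(i)=w(j)$ in $\widetilde Q$.

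For the forward direction, suppose $\widetilde c$ is not simple, so there are $i<j$ with $\widetilde c(i)$ and $\widetilde c(j)$ having the same endpoint in $\widetilde S$. Then $c[i\dots j]$ is a closed walk at $v(i)=v(j)$ that is contractible in $S$. In particular $v(i)=v(j)$, giving the first coordinate. For the second coordinate, the lift of $q(c[i\dots j]) = q(c(i))^{-1}\cdot q(c(j))$ (using compatibility of $q$ with concatenation) starting at $w(i)$ ends at $w(j)$; but since $c[i\dots j]$ is contractible in $S$, Proposition~\ref{prop:Q-q} says $q(c[i\dots j])$ is freely homotopic to a trivial walk, hence lifts to a closed walk in $\widetilde Q$, so its lift starting at $w(i)$ also \emph{ends} at $w(i)$. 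Comparing endpoints of lifts of the same walk in the covering $\widetilde Q\to Q$ from the same start gives $w(i)=w(j)$. Hence $\psi(i)=\psi(j)$ and $\psi$ is not injective. Conversely, suppose $\psi(i)=\psi(j)$ for some $i<j$. Then $v(i)=v(j)$, so $c[i\dots j]$ is a closed walk in $G$, and $w(i)=w(j)$ means the lift of $q(c[i\dots j])$ at $w(i)$ is closed in $\widetilde Q$, i.e. $q(c[i\dots j])$ is contractible in $S$; by Proposition~\ref{prop:Q-q} this forces $c[i\dots j]$ itself to be contractible in $S$, so its lift in $\widetilde S$ starting at $\widetilde c(i)$ is closed, meaning $\widetilde c(i)=\widetilde c(j)$ and $\widetilde c$ is not simple.

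The one point that needs a little care — and which I expect to be the main technical obstacle — is checking that $w(k)$, defined as the endpoint in $\widetilde Q$ of the lift of $q(c(k))$ at the \emph{fixed} root $w(0)$, really is a faithful coordinate, i.e. that equality of the $\widetilde Q$-coordinates is equivalent to the intermediate walk's image under $q$ being contractible. This is where the property that $q$ is a genuine homotopy-with-fixed-endpoints invariant (second bullet of Proposition~\ref{prop:Q-q}) must be invoked precisely, together with the standard covering-space fact that a walk lifts to a closed walk at a point iff it lifts to a closed walk at every point in its fiber iff it is freely homotopic to a trivial walk. I would state that lifting fact explicitly and then the rest is the bookkeeping above; no nontrivial geometry beyond Proposition~\ref{prop:Q-q} is needed.
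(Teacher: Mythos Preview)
Your proposal is correct and follows essentially the same approach as the paper: both arguments reduce the equivalence to the covering-space fact that two lifts of prefix walks coincide in $\widetilde S$ (resp.\ $\widetilde Q$) iff the corresponding walks are homotopic rel endpoints, and then invoke Proposition~\ref{prop:Q-q} to pass between homotopy in $G$ and in $Q$. The only cosmetic difference is that you phrase everything in terms of the contractibility of the subwalk $c[i\dots j]$, whereas the paper works directly with the homotopy of the prefix walks $c(k)$ and $c(k')$; these formulations are equivalent and the logic is the same.
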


\begin{proof}
Assume $\widetilde c$ is simple. Then for every $k\neq k'$ such that $v(k)=v(k')$, the paths $c(k)$ and $c(k')$ are not homotopic. Thus $q(c(k))$ and $q(c(k'))$ are not homotopic by Proposition~\ref{prop:Q-q}. It follows that their lifts from $w(0)$ ends at distinct points, i.e. $w(k)\neq w(k')$. Hence, for every $k\neq k'$, $\psi(k)\neq\psi(k')$.

Assume $\widetilde{c}$ is not simple. Then there are $k\neq k'$ such that $\widetilde c(k)$ and $\widetilde c(k')$ end at the same vertex of $\widetilde G$. Thus $c(k)$ and $c(k')$ are homotopic. So $q(c(k))$ and $q(c(k'))$ are homotopic and $w(k)=w(k')$. Furthermore $v(k)=v(k')$. It follows that $\psi(k)=\psi(k')$ implying that $\psi$ is not injective.
\end{proof}

\begin{proof}[Proof of Theorem~\ref{th:simplicity}]
Thanks to Lemma~\ref{le:simplicity-equivalence}, the non-simplicity of $c$ is equivalent to the existence of two indices $k$ and $k'$ such that $\psi(k) = \psi(k')$. Via Proposition~\ref{prop:Q-q} and Proposition~\ref{pr:tilde-Q-update}, we compute $\widetilde Q(q(c))$ and $w(k)$ for $1\le k\le \ell$ in $O(\ell \log g)$ time. The intersection test is now reduced to the existence of a collision among the $\ell$ pairs $(v(k), w(k)) \in V(G) \times V(\widetilde{Q}(q(c)))$. This can be achieved in $O(\ell\, \log \ell)$ time.

The overall cost is hence dominated by the collision test and this concludes the proof.
\end{proof}

\section{Minor Kernel Algorithm}\label{sec:kernel-minor-algo}

In this section we prove Theorem~\ref{thm:minor-kernel}, which we repeat for convenience.

\medskip\noindent
{\sc T{\footnotesize HEOREM}~\ref{thm:minor-kernel}.}
{\it 
Let $G$ be a graph with $n$ edges cellularly embedded in a closed oriented surface $S$ of genus $g\ge 2$. A minor kernel of $G$ can be computed in $O(n^3 \log n)$ time.
}

\medskip\noindent
Thanks to the correspondence between vertex metric and cross metric, the theorem is equivalent to the following:

\begin{theorem}
\label{th:Curve-minimal-algo}
Let $S$ be a closed oriented surface of genus $g\ge 2$ and let $\mathcal C$ be a filling system of closed curves in $S$ with $n$ vertices. Then, a smoothing minimal system for $\mathcal C$ can be computed in $O(n^3 \log n)$ time.
\end{theorem}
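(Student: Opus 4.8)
The plan is to reduce the computation of a smoothing minimal system to the repeated detection and smoothing of empty monogons and minimal bigons, relying on Proposition~\ref{prop:minimal} to recognise when we have reached a tight system of primitive curves, and on Proposition~\ref{Smooth spectrum} together with the preceding proposition to guarantee that each smoothing preserves the $\nu$-spectrum. Concretely, the outer loop repeatedly looks for an empty monogon or a minimal bigon in $\widetilde{\mathcal C}$; if one is found, it is smoothed (at the projection of one of its corners) and the loop restarts; if none exists, Proposition~\ref{prop:minimal} certifies that the current system is a tight system of primitive curves with the same $\nu$-spectrum as the original $\mathcal C$, hence a smoothing minimal system. Since each smoothing removes one vertex, the outer loop runs at most $n$ times, so it suffices to show that one iteration --- detecting an empty monogon or a minimal bigon, or certifying their absence --- can be done in $O(n^2\log n)$ time.

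Empty monogons are easy: a monogon in $\widetilde{\mathcal C}$ is empty if and only if it projects to a contractible loop formed by two consecutive edges of $\mathcal C$ incident to a common vertex that bounds a face which is a disk with no other vertex on its boundary; scanning all vertices and their incident sectors detects an empty monogon in $O(n)$ time. The substantial work is finding a minimal bigon. Here I would exploit the three ingredients developed earlier. First, by Theorem~\ref{thm:a-minimal-bigon-has-linear-length}, every minimal bigon has length at most $8n$, so it is enough to search among \emph{balanced bigons of linear size}, i.e. bigons bounded by two walks of equal length at most $4n$. Second, for each vertex $v$ of $\mathcal C$ and each of the (constantly many) sectors at $v$, there is at most one balanced bigon of linear size with corner $v$ opening into that sector: following the two strands out of $v$ into the sector, at each step the two sides must close up synchronously, and the simplicity test of Theorem~\ref{th:simplicity} (applied to the candidate side as a walk in the medial/curve graph, lifted to $\widetilde S$) combined with the contractibility/homotopy test of Section~\ref{sec:simplicity-test} lets us certify in $O(n\log n)$ time whether the two traced sides do bound an embedded disk in $\widetilde S$, i.e. form a genuine bigon. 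Thus all candidate balanced bigons of linear size can be collected in $O(n\log n)$ time per sector, hence $O(n^2\log n)$ time overall.

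Third, among this collection we must isolate an actual \emph{minimal} bigon. The key observation is that a balanced bigon of minimal area (number of enclosed faces) is automatically minimal: if it strictly contained another curve, monogon or bigon, an argument analogous to the proof of Proposition~\ref{prop:minimal} would produce a bigon of strictly smaller area inside it, and by Theorem~\ref{thm:a-minimal-bigon-has-linear-length} that smaller bigon is again of linear size and balanced, hence already in our collection --- contradiction. So I would compute the area of each collected bigon using Theorem~\ref{thm:linear-time-area-computation} (each area computation costs $O(n)$ time since the bigon boundary has length $O(n)$, and the precomputation is a one-time $O(n)$ cost), giving $O(n^2)$ total, and return one of minimal area. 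If the collection is empty, then by Theorem~\ref{thm:a-minimal-bigon-has-linear-length} there is no minimal bigon at all, and combined with the empty-monogon check we conclude by Proposition~\ref{prop:minimal} that the system is already tight. This yields $O(n^2\log n)$ per iteration and $O(n^3\log n)$ in total.

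The main obstacle I expect is the second step: precisely formalising that for each vertex and sector there is at most one balanced bigon of linear size opening there, and that it can be traced and certified in $O(n\log n)$ time. One must argue that the two sides of such a bigon are forced edge-by-edge once the starting corner and sector are fixed (using that in a minimal bigon every curve entering one side exits the opposite side, which pins down how the strands must be paired), handle the degenerate case where the two corners of the bigon have the same projection (addressed by Schrijver's remark that $\mathcal C_{v,b}=\mathcal C_{w,b}$), and correctly invoke the simplicity test to verify that the traced closed curve --- the concatenation of the two sides --- is contractible and has a simple lift, so that it genuinely bounds a disk in $\widetilde S$ rather than merely being a pair of homotopic arcs that cross elsewhere. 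Getting the bookkeeping of sectors, orientations, and the lift base point right is where the care is needed; the complexity accounting is then routine.
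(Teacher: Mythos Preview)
Your proposal is correct and follows essentially the same approach as the paper: the outer loop alternates between eliminating empty monogons and smoothing a minimal bigon, the bigon detection enumerates the at most $O(n)$ candidate balanced bigons (one per sector) of length at most $8n$, certifies each using the incremental homotopy test and the simplicity test, and then selects one of minimal area; the complexity bookkeeping matches. Two minor remarks: an empty monogon is a face of degree~$1$ (a single loop edge), not two consecutive edges; and in your minimal-area argument you should make explicit that empty monogons have already been eliminated before the bigon search, so that a non-minimal balanced bigon necessarily contains a \emph{minimal} bigon (to which Theorem~\ref{thm:a-minimal-bigon-has-linear-length} applies), and your parenthetical about using minimality to ``pin down how the strands must be paired'' is unnecessary---the two sides are already determined by the sector.
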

The proof of Theorem~\ref{th:Curve-minimal-algo} essentially relies on the existence of an algorithm to detect minimal bigons in $O(n^2 \log n)$ time. Thanks to Proposition~\ref{Smooth spectrum}, we can safely smooth any of the two corners of such bigons without changing the length spectrum. We repeat this operation, also taking care of empty monogons, until there is no more empty monogon or minimal bigon. By Proposition~\ref{prop:minimal} we must have a tight system of primitive curves, hence a minor kernel. Now, the search for minimal bigons proceeds as follows. Thanks to Theorem~\ref{thm:a-minimal-bigon-has-linear-length}, we can restrict the search to bigons of linear length. The boundary of such bigons projects on $S$ to the concatenation of two homotopic subpaths of some curves in $\mathcal C$. We thus search among the curves in $\mathcal C$ for pairs of homotopic subpaths of (the same) linear length. We then check whether their concatenation lifts to a bigon in $\widetilde S$, i.e. to a simple closed walk. This can be done efficiently according to the simplicity test in Section~\ref{sec:simplicity-test}. In case the test succeeds, we compute the area of the bigon as explained in Section~\ref{sec:computation-area}. It remains to return a bigon with minimal area among the selected bigons.

Algorithm~\ref{Algo} below summarizes some of the above discussion. More precisely, the algorithm returns True if a given system $\mathcal{C}$ of closed curves, assumed without empty monogon, is a tight system of primitive curves. If not, it returns a minimal bigon. To this end, we search for a bigon by looking at every of the four sectors incident to every vertex of $\mathcal C$ (viewed as a 4-regular graph), trying to extend this sector to a bigon of length at most $8n$. If no such bigon is found, then we can certify that the system is a tight system of primitive curves and the algorithm stops. Otherwise, we may select a minimal bigon among the $O(n)$ bigons found (at most one bigon per sector). In practice a sector is fully determined by a vertex $v$ of $\mathcal C$ and the two oriented curves $C_i^\varepsilon$ and $C_j^\eta$ bounding the sector with $\varepsilon\in\{-1,1\}$ and $\eta\in\{-1,1\}$. In the following algorithm, we denote by $C^{\varepsilon}_i(v,\ell)$ the walk of length $\ell$ starting at $v$ along $C_i$ in the direction given by $\varepsilon$.

\begin{algorithm}[ht!]
  \caption{minimal bigon search}
  \label{Algo}
\begin{algorithmic}[1]
    \Require a system of closed curves $\mathcal C$ on $S$ in general position with no empty monogon
    \Ensure \texttt{True} if $\mathcal C$ is a tight system of primitive curves. Otherwise return a corner of a minimal bigon.
    \State $A_0\gets \infty$, $Sector\gets$\texttt{None}\Comment{Initialize area and sector}
    \For{each sector $s$ with corner vertex $v$ bounded by curves $C^{\varepsilon}_i$ and $C^\eta_j$\label{li:Forloop}}
        \State $\ell\gets0$
        \State $IsABalancedBigon \gets$ \texttt{None}
        \Repeat{\label{li:Repeatloop}}
            \State $\ell\gets\ell+1$
            \If{$C^{\varepsilon}_i(v,\ell)$ and $C^{\eta}_j(v,\ell)$ are homotopic\label{li:homotopy}}
                \If{$C^{\varepsilon}_i(v,\ell) \cdot (C^{\eta}_j(v,\ell))^{-1}$ lifts to $\widetilde{S}$ as a simple curve}\label{li:simplicity}
                    \State IsABalancedBigon $\gets$ \texttt{True}\label{li:set-IsABalancedBigon-True}
                \Else
                    \State IsABalancedBigon $\gets$ \texttt{False}\label{li:set-IsABalancedBigon-False}
                \EndIf
            \EndIf
        \Until{$\ell\ge4n$ or $IsABalancedBigon \not=$ \texttt{None} \label{li:ERepeatloop}}
        \If{$IsABalancedBigon =$ \texttt{True}}
            \State Compute the area $A$ of the found bigon\label{li:Area}
            \If{$A_0> A$}
                \State $A_0,Sector\gets A,s$
            \EndIf
        \EndIf
    \EndFor \label{li:endFor}
    \If{$Sector=$\texttt{ None}} \Comment{No bigon was found}
        \State \Return True
    \Else
        \State \Return $Sector$ \label{li:return-sector}
    \EndIf
\end{algorithmic}
\end{algorithm}

\begin{proposition}\label{prop:proof-algo-7.1}
Algorithm~\ref{Algo} correctly returns a corner of a minimal bigon if there is one or detects if the system of closed curves is a tight system of primitive curves.
\end{proposition}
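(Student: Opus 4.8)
The plan is to argue separately the two possible outcomes of Algorithm~\ref{Algo}: either every sector fails to produce a balanced bigon of length at most $4n$, in which case we must certify that $\mathcal{C}$ is a tight system of primitive curves; or at least one sector succeeds, in which case the bigon of minimal area returned is a minimal bigon. The key structural inputs are Proposition~\ref{prop:minimal} (tightness is equivalent to the absence of empty monogons and minimal bigons), Theorem~\ref{thm:a-minimal-bigon-has-linear-length} (a minimal bigon has length at most $8n$, hence each of its two equal sides has length at most $4n$), and the correctness of the homotopy test (Proposition~\ref{prop:Q-q}), the simplicity test (Theorem~\ref{th:simplicity}) and the area computation (Proposition~\ref{pr:correct2}).

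First I would analyze what the inner \textbf{Repeat} loop at Lines~\ref{li:Repeatloop}--\ref{li:ERepeatloop} actually computes for a fixed sector $s$ with corner $v$ bounded by $C_i^\varepsilon$ and $C_j^\eta$. The loop increments $\ell$ and, at each step, checks whether the two length-$\ell$ walks $C_i^\varepsilon(v,\ell)$ and $C_j^\eta(v,\ell)$ emanating from $v$ on either side of the sector are homotopic (with fixed endpoints); if so, it tests whether their concatenation $C_i^\varepsilon(v,\ell)\cdot(C_j^\eta(v,\ell))^{-1}$ lifts to a simple closed curve in $\widetilde{S}$. I would observe that a balanced bigon with corner $v$ contained in the sector $s$ is, by definition of a bigon, exactly a pair of interior-disjoint equal-length sides starting at $v$ along $C_i$ and $C_j$ in these directions whose union bounds a disk in $\widetilde{S}$; and a closed walk bounds a disk in $\widetilde{S}$ (equivalently, lifts to a simple closed loop) precisely when it is contractible and its lift is simple. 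Here homotopy with fixed endpoints of the two sides is equivalent to contractibility of the concatenation, so the two tests together detect exactly the existence of a balanced bigon in sector $s$ whose two sides each have length $\ell$. The loop terminates the first time such an $\ell$ is found, or when $\ell$ reaches $4n$; hence, after the loop, \texttt{IsABalancedBigon} is \texttt{True} iff sector $s$ contains a balanced bigon whose sides have length at most $4n$, and in that case the found bigon is recorded. (One should note the bigon detected need not itself be minimal, but that is immaterial at this stage.)

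Next I would handle the \textbf{certification} case: if no sector produces a balanced bigon, i.e.\ $Sector=\texttt{None}$ and the algorithm returns \texttt{True}. Suppose for contradiction that $\mathcal{C}$ is not a tight system of primitive curves. Since the input has no empty monogon, Proposition~\ref{prop:minimal} forces $\widetilde{\mathcal{C}}$ to contain a minimal bigon $b$. A minimal bigon has two sides of equal length, and by Theorem~\ref{thm:a-minimal-bigon-has-linear-length} its total length is at most $8n$, so each side has length at most $4n$. Let $v$ be a corner of $b$; the two sides of $b$ emanate from the lift of $v$ along two lifts of curves of $\mathcal{C}$, determining a sector $s$ at $p(v)$ (a choice of two oriented curves $C_i^\varepsilon$, $C_j^\eta$ among the four strands at $p(v)$, with the bigon on the side of that sector). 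When the \textbf{For} loop processes this sector and reaches $\ell$ equal to the common side length, the two walks $C_i^\varepsilon(v,\ell)$ and $C_j^\eta(v,\ell)$ are homotopic (their concatenation lifts to the contractible loop $\partial b$) and their concatenation lifts to a simple closed curve (namely $\partial b$, which bounds the disk of the bigon). Hence the inner loop would set \texttt{IsABalancedBigon} to \texttt{True} at or before that value of $\ell$, contradicting the assumption that no sector succeeds. Therefore $\mathcal{C}$ must be a tight system of primitive curves, and returning \texttt{True} is correct.

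Finally I would handle the \textbf{detection} case: if some sector succeeds, then $Sector$ holds a sector whose recorded bigon $b^\star$ has minimal area $A_0$ among all balanced bigons of side-length at most $4n$ found by the algorithm (Line~\ref{li:Area} computes the area correctly by Proposition~\ref{pr:correct2}, applied to the contractible closed walk bounding the bigon). I claim $b^\star$ is a minimal bigon. Indeed, the minimal-area argument already used in the proof of Proposition~\ref{prop:minimal} shows that a balanced bigon of minimal area among \emph{all} bigons is automatically minimal; the only thing to check is that $b^\star$ has minimal area among all balanced bigons in $\widetilde{\mathcal{C}}$, not merely among those the algorithm inspected. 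But if some balanced bigon $b'$ had strictly smaller area than $b^\star$, then a balanced bigon of globally minimal area would be a minimal bigon $b''$ (same argument), and by Theorem~\ref{thm:a-minimal-bigon-has-linear-length} its sides have length at most $4n$, so the sector at one of its corners would have produced $b''$, with $\mathrm{area}(b'')\le\mathrm{area}(b')<\mathrm{area}(b^\star)$, contradicting the minimality of $A_0$ over the recorded bigons. Hence $b^\star$ is a minimal bigon, and $Sector$ is (a sector at) one of its corners, so the returned value at Line~\ref{li:return-sector} is a corner of a minimal bigon, as required.

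I expect the main obstacle to be the bookkeeping in the certification case: one must be careful that the sector chosen by the algorithm's \textbf{For} loop genuinely coincides, as a choice of two oriented strands at a vertex of the $4$-regular graph $\mathcal{C}$, with the local data of the minimal bigon $b$, and that ``the homotopy test succeeds and the simplicity test succeeds at length $\ell$'' is equivalent to ``there is a balanced bigon with those sides'' — in particular that a contractible closed walk in $S$ whose universal-cover lift is simple bounds a disk, and conversely. This equivalence is where Proposition~\ref{prop:Q-q} and Theorem~\ref{th:simplicity} are invoked, together with the elementary fact that a simple closed curve in the simply connected surface $\widetilde S$ bounds a disk (Jordan--Schoenflies). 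Everything else is a direct combination of the earlier results.
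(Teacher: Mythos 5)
Your overall strategy is the same as the paper's: analyze what the inner \textbf{Repeat} loop computes, handle the certification case via Proposition~\ref{prop:minimal} and Theorem~\ref{thm:a-minimal-bigon-has-linear-length}, and in the detection case argue that the minimum-area balanced bigon found must be a minimal bigon. The detection argument and the use of the various subroutines are fine.

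There is, however, a genuine gap in your analysis of the inner loop, which propagates into the certification case. You state that ``the loop terminates the first time such an $\ell$ is found, or when $\ell$ reaches $4n$,'' and later that ``the inner loop would set \texttt{IsABalancedBigon} to \texttt{True} at or before that value of $\ell$.'' But the loop also terminates as soon as the variable is set to \texttt{False}, which happens the first time the homotopy test succeeds but the simplicity test fails. Your argument does not rule out this happening at some $\ell_0$ strictly smaller than the side length $\ell_b$ of the minimal bigon whose sector you are examining, in which case the loop would exit without ever reaching $\ell_b$. To close the gap you need one of two observations, which the paper supplies explicitly: either (i) for the sector of a genuine bigon, the homotopy test cannot succeed at any $\ell_0<\ell_b$, because the two length-$\ell_0$ prefixes of the bigon's sides would then close up in $\widetilde S$ at an interior point of both sides, contradicting that the sides of a bigon meet only at the corners; or (ii) more generally, if at some $\ell_0$ the closed lift is not simple, then no balanced bigon of any length starts at that sector, since a self-crossing of the length-$\ell_0$ prefixes persists in any longer prefixes and would violate simplicity of the would-be bigon boundary. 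The paper's proof carries out the second argument when establishing its three-way claim about the terminal value of \texttt{IsABalancedBigon} (\texttt{True}/\texttt{False}/\texttt{None}); your proof should include one of these observations to make the ``at or before'' step correct. Once that is added, your certification and detection arguments go through and are essentially identical in substance to the paper's.
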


\begin{proof}
We use the notation of Algorithm~\ref{Algo}. We say that a bigon in $\widetilde{S}$ is \define{balanced} if its two sides have the same length. Note that a minimal bigon is balanced but the converse does not hold in general.
We first claim that for each sector $s$ in the FOR loop (Lines \ref{li:Forloop}-\ref{li:endFor}), the state of $IsBalancedBigon$ at the end of the repeat loop at Line~\ref{li:ERepeatloop} is
\begin{itemize}
    \item \texttt{True} if $C^{\varepsilon}_i(v,\ell)$ and $C^{\eta}_j(v,\ell)$ lift in $\widetilde{S}$ as the sides of a balanced bigon of length $2 \ell$ starting with sector $s$.
    \item \texttt{False} or \texttt{None} if there is no balanced bigon of length smaller than $8n$ starting with sector $s$.
\end{itemize}
Let us prove this claim. In the case $IsBalancedBigon$ is \texttt{True} or \texttt{False}, then this is because the code entered Line~\ref{li:set-IsABalancedBigon-True} or Line~\ref{li:set-IsABalancedBigon-False}. Furthermore, as soon as $IsBalancedBigon$ is set to \texttt{True} or \texttt{False}, the algorithm exits the repeat loop at line~\ref{li:ERepeatloop}. Let $U$ and $D$ be the lifts of $C^{\varepsilon}_i(v,\ell)$ and $C^{\eta}_j(v,\ell)$, respectively, starting at the same  vertex above $v$. Then, $U$ and $D$ end at the same point in $\widetilde{S}$ since the test at Line~\ref{li:homotopy} succeeded. In the case $IsBalancedBigon$ is \texttt{True}, then $D \cdot U^{-1}$ is a simple closed curve in $\widetilde{S}$ as the test at Line~\ref{li:simplicity} succeeded. It proves that $D$ and $U$ form the sides of a bigon as, by the Jordan curve theorem, a simple curve bounds a disk. This bigon is balanced of length $2 \ell \le 8n$ since the length of both curves is $\ell$ and $\ell \le 4n$. Now if $IsABalancedBigon$ is \texttt{False}, then $D \cdot U^{-1}$ is not simple since the test at Line~\ref{li:simplicity} failed. This could only happen because either $\widetilde{D}$ or $\widetilde{U}$ has a self-intersection, or because $\widetilde{D}$ and $\widetilde{U}$ intersect more than at their endpoints. In both situations there exists no balanced bigon starting with sector $s$: it could not be shorter than $\ell$ because the repeat loop checked all lengths from $1$, and it could not be longer because the boundary of a bigon must be a simple closed curve. Finally, if $IsABalancedBigon$ is \texttt{None} then it means that for any $\ell$ from $1$ to $4n$ the condition in line~\ref{li:homotopy} failed so that none of the lifts of the pairs $C^{\varepsilon}_i(v,\ell)$ and $C^{\eta}_j(v,\ell)$ end at the same point and could have formed a bigon. In that case, there is no balanced bigon starting from $s$ of length smaller than $8n$. This concludes the proof of the claim.

If $\mathcal{C}$ is a tight system of primitive curves, then $\widetilde{\mathcal C}$ has no bigon by Proposition~\ref{prop:minimal}. Hence Algorithm~\ref{Algo} fails to detect a balanced bigon and indeed returns that $\mathcal C$ is a tight system of primitive curves.

Assume now that $\mathcal{C}$ is not a tight system of primitive curves and has no empty monogon. Then due to Proposition~\ref{prop:minimal}, $\widetilde{\mathcal{C}}$ has a minimal bigon which has length at most $8n$ by Theorem~\ref{thm:a-minimal-bigon-has-linear-length}. Let $s$ be the sector corresponding to one of the corners of this minimal bigon. Then, as we already mentioned this bigon is balanced and will be detected by the repeat loop at lines~\ref{li:Repeatloop}-\ref{li:ERepeatloop} when entering the for loop with $s$ at Line~\ref{li:Forloop}.
The remaining part of the algorithm just selects among all balanced bigons of length at most $8n$  one with minimal area and $Sector$ is set to the sector of this bigon, call it $b$. We claim that $b$ is minimal. Otherwise $b$ would contain another minimal bigon with stricly smaller area. This other bigon, say $b'$ would be balanced and have length at most $8n$ by Theorem~\ref{thm:a-minimal-bigon-has-linear-length}. The bigon $b'$ would thus have been detected starting with another sector. This would however contradict that $b$ has minimal area among the found bigons. It follows that the corner stored in $Sector$ at Line~\ref{li:return-sector} is indeed the corner of a minimal bigon.
\end{proof}
 
\begin{proposition}\label{prop:complexity-algo-7.1}
Algorithm~\ref{Algo} can be implemented to run in $O(n^2 \log n)$ time where $n$ is the number of vertices of $\mathcal C$.
\end{proposition}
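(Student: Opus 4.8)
The plan is to show that each of the $4n$ iterations of the outer loop of Algorithm~\ref{Algo} (Lines~\ref{li:Forloop}--\ref{li:endFor}) runs in $O(n\log n)$ time, after $O(n)$ one-time preprocessing. The preprocessing consists of a tree--cotree decomposition of the $4$-regular graph underlying $\mathcal C$, the system of quads $Q$ together with the table of walks $q(e)$ of Proposition~\ref{prop:Q-q} (of total linear size), and the area data structure of Theorem~\ref{thm:linear-time-area-computation}; each of these costs $O(n)$. Since $\mathcal C$ is filling, Euler's formula bounds the genus by $g=O(n)$, so $\log g=O(\log n)$ throughout.

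Fix a sector $s$ with corner $v$ bounded by the oriented curves $C_i^\varepsilon$ and $C_j^\eta$, and set $U:=C_i^\varepsilon(v,4n)$ and $D:=C_j^\eta(v,4n)$. The naive way of performing Line~\ref{li:homotopy} for every $\ell\le 4n$ would cost $\Theta(\ell)$ per value of $\ell$, hence $\Theta(n^2)$ per sector; the point is to amortize all these homotopy tests into a single near-linear computation. First I would form the walk $c:=q(U)\cdot q(U)^{-1}\cdot q(D)$ in $Q$, of length $O(n)$ and computable in $O(n)$ time from the precomputed table, pick a lift $w_0$ of its initial vertex, and build $\widetilde Q(c)$ by the incremental procedure of Proposition~\ref{pr:tilde-Q-update} in $O(n\log g)=O(n\log n)$ time; this is a star-shaped subgraph of $\widetilde Q$ of size $O(n)$, rooted at $w_0$, whose vertices carry integer indices. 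Running the procedure on $c$ makes the ``current vertex'' traverse, in order, the lift of $q(U)$ from $w_0$ (recording vertices $w^U_0=w_0,w^U_1,\dots$), then that same lift backwards, then the lift of $q(D)$ from $w_0$ (recording $w^D_0=w_0,w^D_1,\dots$). Maintaining incrementally the lengths $a(\ell):=|q(C_i^\varepsilon(v,\ell))|$ and $b(\ell):=|q(C_j^\eta(v,\ell))|$, each increasing by $0$ or $2$ per step (read off from $q(e)$), the walks $C_i^\varepsilon(v,\ell)$ and $C_j^\eta(v,\ell)$ are homotopic with fixed endpoints if and only if the lifts of $q(C_i^\varepsilon(v,\ell))$ and $q(C_j^\eta(v,\ell))$ from $w_0$ terminate at the same vertex of $\widetilde Q$ --- this is Proposition~\ref{prop:Q-q} combined with the fact that two walks from $w_0$ in the universal cover $\widetilde Q$ end at the same vertex exactly when their projections are homotopic with fixed endpoints --- that is, if and only if $w^U_{a(\ell)}=w^D_{b(\ell)}$ as indexed vertices of $\widetilde Q(c)$. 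Scanning $\ell=1,\dots,4n$ and comparing indices thus locates the first $\ell^*$ at which Line~\ref{li:homotopy} succeeds, or certifies there is none, in $O(n)$ additional time.

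It remains to handle, for that first $\ell^*$, Lines~\ref{li:simplicity} and~\ref{li:Area}. The simplicity test at Line~\ref{li:simplicity} applies Theorem~\ref{th:simplicity} to the closed walk $C_i^\varepsilon(v,\ell^*)\cdot(C_j^\eta(v,\ell^*))^{-1}$ of length $2\ell^*=O(n)$, costing $O\!\bigl(\ell^*\log(\ell^*+g)\bigr)=O(n\log n)$; and if the lift is simple, the boundary walk is contractible, as it lifts to a closed curve in the simply connected $\widetilde S$, so its area is computed at Line~\ref{li:Area} in $O(\ell^*)=O(n)$ time by Theorem~\ref{thm:linear-time-area-computation}. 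Each sector therefore costs $O(n\log n)$, and the final selection of a minimal-area bigon among the $O(n)$ candidates (one per sector) adds only $O(n)$. Summing over the $4n$ sectors yields the claimed $O(n^2\log n)$ bound. The only step that is not routine accounting is the second paragraph: replacing the $\Theta(n)$-time homotopy test that Line~\ref{li:homotopy} would require at each of the $\Theta(n)$ values of $\ell$ by a single $O(n\log n)$ star-shaped exploration of $\widetilde Q$ followed by constant-time vertex-index comparisons. Everything else follows directly from the results of Sections~\ref{sec:length-bigon}--\ref{sec:simplicity-test}.
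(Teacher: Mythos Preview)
Your argument reaches the right bound and its overall accounting is correct, but there is one small gap in the homotopy step, and your mechanism for Line~\ref{li:homotopy} differs from the paper's.

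\medskip
\noindent\textbf{The gap.} The equivalence you assert --- ``$C_i^\varepsilon(v,\ell)$ and $C_j^\eta(v,\ell)$ are homotopic with fixed endpoints if and only if $w^U_{a(\ell)}=w^D_{b(\ell)}$'' --- is not quite what Proposition~\ref{prop:Q-q} gives. That proposition only applies to walks in $G$ \emph{with the same endpoints}; since $q$ collapses a spanning tree of $G$, two walks ending at distinct vertices of $G$ can have $q$-images whose lifts from $w_0$ terminate at the same vertex of $\widetilde Q$ (for instance, two tree edges out of $v$ both have trivial $q$-image). Your test therefore admits false positives, and at such a false $\ell^*$ the concatenation $C_i^\varepsilon(v,\ell^*)\cdot(C_j^\eta(v,\ell^*))^{-1}$ is not even a closed walk. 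The fix is the one used in Lemma~\ref{le:simplicity-equivalence}: compare the pair (endpoint in $\mathcal C$, index in $\widetilde Q(c)$) rather than the $\widetilde Q$-index alone. This is an $O(1)$ extra check per $\ell$ and leaves your complexity analysis intact.

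\medskip
\noindent\textbf{Comparison with the paper.} For the incremental homotopy test the paper does not build $\widetilde Q(c)$; it invokes Proposition~\ref{prop:Q-pcan} instead, maintaining a geodesic representative $g(\ell)$ of $q\bigl(C_i^\varepsilon(v,\ell)\cdot C_j^\eta(v,\ell)^{-1}\bigr)$ in run-length turn encoding and updating it in $O(1)$ time per increment of $\ell$; the test at Line~\ref{li:homotopy} becomes ``is $g(\ell)$ the trivial walk''. Your route through Proposition~\ref{pr:tilde-Q-update} is a legitimate alternative and meshes naturally with the machinery of Theorem~\ref{th:simplicity}; it costs an extra $\log g$ factor during the homotopy phase, but that is absorbed by the $O(n\log n)$ simplicity test you perform once per sector anyway. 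The remaining steps --- the single simplicity test per sector via Theorem~\ref{th:simplicity}, the area computation via Theorem~\ref{thm:linear-time-area-computation}, and the final $O(n)$ scan for the minimum area --- match the paper's proof.
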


\begin{proof}
Note that the number of sectors is four times the number of vertices and thus linear in $n$.  We thus enter the for loop at Line~\ref{li:Forloop} a linear number of times. Let $\kappa(n)$ denote the complexity of executing the repeat loop from line~\ref{li:Repeatloop} to line~\ref{li:ERepeatloop} and let $\alpha(n)$ denote the complexity of computing the area of a bigon as in line~\ref{li:Area}. The total complexity of Algorithm~\ref{Algo} is $O\left(n\left(\kappa(n)+\alpha(n)\right)\right)$.

We first show that $\kappa(n) = O(n \log n)$.
In order to perform the homotopy test at line~\ref{li:homotopy}, we rely on Proposition~\ref{prop:Q-pcan}. More precisely, a system of quads $Q$ associated to $G$ and the map $q$ on walks is computed in $O(n)$ time according to Proposition~\ref{prop:Q-q}. Next, we maintain a geodesic representative $g(\ell)$ homotopic to the walk $q(C^{\varepsilon}_i(v,\ell)\cdot C^{\eta}_j(v,\ell)^{-1})$ in $Q$. Now $C^{\varepsilon}_i(v,\ell)$ and $C^{\eta}_j(v,\ell)$ are homotopic if and only if $g(l)$ is reduced to a vertex (in which case $C^{\varepsilon}_i(v,\ell)$ and $C^{\eta}_j(v,\ell)$ have the same endpoints). Proposition~\ref{prop:Q-pcan} provides that $g(\ell+1)$ is computed from $g(\ell)$ in constant time. Next, the fact that the simplicity test in line~\ref{li:simplicity} runs in $O(n \log n)$ time is a direct application of Theorem~\ref{th:simplicity}. This concludes the proof that $\kappa(n)=O(n \log n)$.

Moreover, Theorem~\ref{thm:linear-time-area-computation} states that we can compute the area of any bigon of length $l$ in $O(l)$ time. As the length of the bigon in line~\ref{li:Area} is at most $8n$, we deduce $\alpha(n)=O(n)$.

We conclude that Algorithm~\ref{Algo} takes $O(n^2 \log n)$ time.
\end{proof}

\begin{proof}[Proof of Theorem~\ref{th:Curve-minimal-algo}] Let $\mathcal C$ be a filling system of closed curves given as a combinatorial surface whose graph is 4-regular and contains $n$ vertices. 
We can search for empty monogons, i.e. for faces of degree $1$, in linear time simply by traversing the set of faces of the combinatorial surface. If there is an empty monogon, we smooth its corner and update the combinatorial surface accordingly in constant time. We may repeat this procedure until there is no more monogon. Using Algorithm~\ref{Algo}, we can then search for a minimal bigon and smooth one of its corners. This takes $O(n^2\log n)$ time by Propositions~\ref{prop:proof-algo-7.1} and~\ref{prop:complexity-algo-7.1}. 
We repeat this procedure, alternating between the smoothing of all empty monogons and the smoothing of a minimal bigon, until there is no more. As previously discussed this will provide a smoothing minimal system of curves. Since each smoothing removes a vertex of the combinatorial map, the total number of smoothings is bounded by $n$. It follows that the complexity of the above algorithm is $O(n^3\log n)$.
\end{proof}

\section{Applications}
\label{sec:applications}

\subsection{Testing Length Spectrum Equality}\label{ssec:spectrum-equality-test}

In this section, we provide an algorithm for the proof of Theorem~\ref{thm:H-G-equivalence}. With no additional cost we actually prove the following slightly stronger form.
\begin{theorem}\label{thm:H-H'-equivalence}
Let $(S,G)$ be a combinatorial surface of genus $g\ge 2$. Let $H$ and $H'$ be graph minors of $G$ given as lists of minor operations on the edges of $G$. We can decide whether $H$ and $H'$ have the same $\mu$-spectrum in $O(n^3 \log n)$ time.
\end{theorem}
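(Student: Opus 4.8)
The plan is to reduce the equality of $\mu$-spectra of $H$ and $H'$ to an equality of $\nu$-spectra of their medial graphs, then to compute minor kernels of both and finally test these kernels for equivalence using Schrijver's correspondence. First I would recall that $\mu_H = \mu_{H'}$ if and only if $\nu_{M(H)} = \nu_{M(H')}$ (since $\nu_{M(G)} = 2\mu_G$). Moreover, since $H$ and $H'$ are given as lists of minor operations on the edges of $G$, each such operation corresponds to a smoothing of a vertex of $M(G)$; thus $M(H)$ and $M(H')$ are obtained from $M(G)$ by performing the prescribed sequences of smoothings. This takes $O(n)$ time each, since there are at most $n$ operations and each smoothing of a $4$-regular combinatorial surface is a local, constant-time update. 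By Theorem~\ref{th:Curve-minimal-algo}, we then compute a smoothing minimal system $\mathcal K_H$ for $M(H)$ and $\mathcal K_{H'}$ for $M(H')$, each in $O(n^3 \log n)$ time. By construction $\nu_{\mathcal K_H} = \nu_{M(H)}$ and $\nu_{\mathcal K_{H'}} = \nu_{M(H')}$, and by Proposition~\ref{pr:schrijver-s-correspondence} and Proposition~\ref{prop:minimal} both $\mathcal K_H$ and $\mathcal K_{H'}$ are tight systems of primitive curves, i.e. kernels.

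The core step is then to decide whether the two kernels have the same $\nu$-spectrum. The essential tool is the uniqueness part of Schrijver's theory: two kernels on the same surface $S$ have the same $\mu$-spectrum (equivalently, the corresponding tight systems of primitive curves have the same $\nu$-spectrum) precisely when one is obtained from the other by a sequence of $\Delta Y$-exchanges, possibly followed by taking the surface dual. When translated to tight systems of primitive curves, this says that $\nu_{\mathcal K_H} = \nu_{\mathcal K_{H'}}$ if and only if $\mathcal K_H$ and $\mathcal K_{H'}$ are \emph{homotopic as systems of curves}, that is, there is a bijection between the curves of $\mathcal K_H$ and the curves of $\mathcal K_{H'}$ matching each curve of one system with a freely homotopic curve of the other (so that the intersection data of tight representatives coincides). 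Hence the problem reduces to a homotopy equivalence test for two systems of closed curves, which according to the excerpt is solved by Theorem~\ref{thm:homotopy-system-of-curves}: using the free homotopy test for curves from~\cite{LR12,EW13} (also discussed in Section~\ref{sec:simplicity-test}) combined with standard string algorithms, one decides whether two systems of curves of total complexity $O(n)$ are homotopy equivalent in time polynomial (in fact near-linear) in $n$.

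Putting these together: the homotopy equivalence test on $\mathcal K_H$ and $\mathcal K_{H'}$ runs well within $O(n^3 \log n)$ time, so the overall running time is dominated by the two minor kernel computations, for a total of $O(n^3 \log n)$. The correctness follows from the chain of equivalences $\mu_H = \mu_{H'} \iff \nu_{M(H)} = \nu_{M(H')} \iff \nu_{\mathcal K_H} = \nu_{\mathcal K_{H'}} \iff \mathcal K_H \text{ and } \mathcal K_{H'} \text{ are homotopy equivalent as systems of curves}$, where the second equivalence is because smoothing minimal systems preserve the $\nu$-spectrum (Propositions~\ref{Smooth spectrum} and the empty-monogon proposition), and the third is Schrijver's uniqueness of kernels recast for systems of curves. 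The main obstacle in making this rigorous is the precise formulation and justification of the third equivalence: one must carefully check that ``same $\nu$-spectrum'' for two tight systems of primitive curves is equivalent to a curve-by-curve free homotopy matching (and that no subtlety arises from the dual operation, which at the level of systems of curves is simply an ambient homeomorphism of $S$ and hence preserves both the $\nu$-spectrum and homotopy type of the system). Once this dictionary between Schrijver's graph-theoretic uniqueness statement and the homotopy equivalence of systems of curves is pinned down, the algorithm and its complexity analysis follow routinely from the results already established.
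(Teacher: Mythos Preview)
Your proposal is correct and follows essentially the same route as the paper: compute smoothing-minimal systems for $M(H)$ and $M(H')$, then invoke Schrijver's uniqueness to reduce equality of $\nu$-spectra to a homotopy test on the two systems of curves via Theorem~\ref{thm:homotopy-system-of-curves}.

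Two points where the paper is sharper than your sketch. First, the equivalence you flag as ``the main obstacle'' is dispatched in the paper by a direct citation to Proposition~7 of~\cite{Sch92}, which states precisely that two tight systems of primitive curves have the same $\nu$-spectrum if and only if they are freely homotopic \emph{up to orientation}; there is no need to go through $\Delta Y$-exchanges or duality at the level of curve systems. Second, you gloss over a genuine implementation step: Theorem~\ref{thm:homotopy-system-of-curves} requires both systems to be given as closed walks in a \emph{common} embedded graph, but $\mathcal K_H$ and $\mathcal K_{H'}$ live in different $4$-regular graphs. The paper resolves this by rewriting each curve of $\mathcal K_H$ and $\mathcal K_{H'}$ as a walk in $M(G)$ (each edge of the smoothed graph expands to a path of length $O(n)$ in $M(G)$), so the systems passed to the homotopy test have total length $O(n^2)$, not $O(n)$ as you assert. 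This does not affect the final bound, since $O(n^2\log g)$ is still dominated by the $O(n^3\log n)$ kernel computation.
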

Note that Theorem~\ref{thm:H-G-equivalence} reduces to Theorem~\ref{thm:H-H'-equivalence} when $H'=G$.
The overall cost of our algorithm for Theorem~\ref{thm:H-H'-equivalence} is dominated by the computation of minor kernels of $H$ and $H'$. The remaining part of the algorithm relies on the efficient comparison of systems of curves as stated in Theorem~\ref{thm:homotopy-system-of-curves} below. We say that two systems of curves $\mathcal C = (c_1, \ldots, c_k)$ and $\mathcal C' = (c'_1, \ldots, c'_{k'})$ are \emph{freely homotopic up to orientation} if $k=k'$ and there exists a permutation $\sigma$ of $\{1,\ldots,k\}$ such that for each $i \in \{1,\ldots,k\}$ either $c_i$ or $c^{-1}_i$ is freely homotopic to $c'_{\sigma(i)}$.
\begin{theorem}\label{thm:homotopy-system-of-curves}
Let $(S,G)$ be a combinatorial surface with $n$ edges and genus $g\ge 2$. Given two systems of closed walks in $G$ with total lengths $\ell$, we can decide whether they are freely homotopic up to orientation in $O(n + \ell\log g)$ time.
\end{theorem}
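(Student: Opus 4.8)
The plan is to compare the two systems curve-by-curve after replacing each closed walk by a canonical cyclic word, using the system of quads $Q$ and the map $q$ of Proposition~\ref{prop:Q-q} as a bridge. As a one-time preprocessing I would compute $Q$ together with the table of all values $q(e)$, $e\in E(G)$, in $O(n)$ time. For a closed walk $c$ of length $m$ in $G$, the image $q(c)$ is a closed walk of length $O(m)$ in $Q$, and by the construction of the system of quads two closed walks of $G$ are freely homotopic in $S$ if and only if their $q$-images are freely homotopic in $Q$; so from here on everything takes place in $Q$.

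Next, for each of the (at most $k+k'$) curves I would compute a \emph{cyclically reduced geodesic representative} of its $q$-image. Appending the edges of $q(c)$ one at a time and using the incremental tightening of Proposition~\ref{prop:Q-pcan} produces, in time proportional to $|q(c)|$, a geodesic path (encoded as a run-length turn sequence) homotopic with fixed endpoints to $q(c)$; closing this path up and removing the spur or bracket created at the juncture, as in~\cite{LR12,EW13,DL19}, yields a cyclic turn sequence $w_c$ which is empty exactly when $c$ is contractible. The main point to be established carefully --- and the step I expect to be the real obstacle --- is that $w_c$ is a genuine invariant of the free homotopy class of $c$, unique up to cyclic rotation and independent of the choices made while tightening; this is precisely what makes the free-homotopy test of~\cite{LR12,EW13} work, and I would invoke it for this.

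Finally I would turn each $w_c$ into a string and reduce to a multiset comparison. Writing $\widehat w$ for the lexicographically least cyclic rotation of a cyclic word $w$ (computable in $O(|w|)$ time by a standard least-rotation algorithm) and $\overline w$ for the reversal of $w$, which represents $c^{-1}$, the string $\sigma(c):=\min(\widehat{w_c},\widehat{\overline{w_c}})$ is a complete invariant of the free homotopy class of $c$ up to orientation. After checking $k=k'$, the two systems are freely homotopic up to orientation if and only if the multisets $\{\sigma(c_1),\dots,\sigma(c_k)\}$ and $\{\sigma(c'_1),\dots,\sigma(c'_k)\}$ are equal, which I would test either by inserting all the $\sigma$-words into a trie over the turn alphabet of $Q$ or by polynomial hashing. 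Accounting for the costs --- $O(n)$ to build $Q$, $O(\ell)$ to form the $q$-images and tighten them, and $O(\ell\log g)$ for the string manipulation over the $O(g)$-size turn alphabet of $Q$ --- gives the announced $O(n+\ell\log g)$ bound.
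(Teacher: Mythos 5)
You follow essentially the same route as the paper: pass to the system of quads $Q$ via the map $q$, compute canonical geodesic representatives of the closed walks, and reduce the problem to a multiset-equality test on cyclic words via a least-cyclic-rotation algorithm and a trie. The one-time $O(n)$ preprocessing, the use of Proposition~\ref{prop:Q-pcan} for incremental tightening in $O(\ell)$ total time, and the final $O(\ell\log g)$ accounting for the trie over an $O(g)$-size alphabet all match the paper's argument.

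There is, however, a genuine gap in the choice of invariant. You propose to compare cyclic \emph{turn} sequences $w_c$, i.e., words over the turn alphabet of $Q$, claiming that the cyclic turn sequence of the canonical geodesic is a complete invariant of the free homotopy class. This is not the case. Because $Q$ has $4g$ edges and only two vertices, a geodesic closed walk in $Q$ is not determined by its cyclic turn sequence alone: one must additionally know one of its oriented edges, after which the turn sequence reconstructs the rest of the walk. Two geodesic closed walks whose canonical forms have identical cyclic turn sequences but start on different edges of $Q$ are in general not freely homotopic, so your test would report spurious equivalences. Indeed, Proposition~\ref{prop:Q-pcan}, which you cite, explicitly carries along the first and last edges together with the turn sequence; you drop this anchoring when you pass to the cyclic word $w_c$. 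The paper sidesteps the issue by encoding canonical representatives as words over the $8g$ \emph{oriented edges} of $Q$, not over turns, before applying the least-rotation algorithm and the trie. A secondary issue: you represent $c^{-1}$ by the literal string reversal $\overline{w_c}$ of the turn word, but reversing a walk also complements each turn (a turn $t$ becomes $-t$) in addition to reversing the order, so $\overline{w_c}$ is not the canonical turn word of $c^{-1}$; the paper instead applies the canonicalization directly to $q(c^{-1}_i)$.
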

\begin{proof}
Let $\mathcal C = (c_1, \ldots, c_k)$ and $\mathcal C' = (c'_1, \ldots, c'_{k'})$ be the given curve systems. Our goal is to decide the existence of a permutation $\sigma$ such that $c_i$ or $c^{-1}_i$ is freely homotopic to $c'_{\sigma(j)}$.
We use the construction described in Section~\ref{sec:simplicity-test}. Recall that $Q$ is a system of quads associated to $G$ that comes with a map $q$ from the walks of $G$ to the walks of $Q$, and a map $can$ that gives a canonical representative in the free homotopy class of any closed walk of $Q$. After a precomputation in $O(n)$ time, for any closed walk $c$ in $G$, $can(q(c))$ may be computed in $O(|c|)$ time.
Hence, we can compute the canonical representatives $can(q(c_i))$ and $can(q(c'_j))$ of each walk $c_i$ or $c'_j$ in $O(\ell)$ total time.

As mentioned above, the walk $c_i$ or $c^{-1}_i$ is freely homotopic to $c'_j$ if and only if $can(q(c_i))$ or $can(q(c^{-1}_i))$ is a circular permutation of $can(q(c'_j))$. To solve this problem of comparison up to circular permutation, we order the canonical representatives as follows. We fix an arbitrary total order on the oriented edges of $Q$ so that walks are identified to words on $\{1,\ldots,8g\}$. Given such a word $w$, the computation of the lexicographically minimal word among its cyclic permutations is a standard problem that can be solved in linear time in the length $|w|$ (see~\cite{booth-llcs-1980,shiloach-fcocs-1981,duval-fwoaoa-1983}). We define $\overline{c_i}$ as the lexicographically smallest words among all cyclic permutations of $can(q(c_i))$ and of $can(q(c^{-1}_i))$. We define $\overline{c'_j}$ similarly. The computation of all $\overline{c_i}$ and $\overline{c'_j}$ is done in $O(\ell)$ time.

Now, to test whether the systems of curves $\mathcal C$ and $\mathcal C'$ are freely homotopic up to orientation it remains to test whether the multisets\footnote{Note that several curves in $\mathcal C$ may be homotopic, and similarly for $\mathcal C'$.} $\{\overline{c_1}, \ldots, \overline{c_k}\}$ and $\{\overline{c'_1}, \ldots, \overline{c'_k}\}$ are equal. The naive approach that consists in sorting the elements of each multiset in lexicographic order and test for equality would require $O(\ell^2)$ time. To achieve $O(\ell\log g)$ time, we use the following variant. We consider the prefix tree $T_{\mathcal C}$ corresponding to the set of words $\{\overline{c_1}, \ldots, \overline{c_k}\}$. 
The root of this tree corresponds to the empty word and each other node in the tree represents a prefix of some word in $\{\overline{c_1}, \ldots, \overline{c_k}\}$. We label every edge with a letter in $\{1, \ldots, 8g\}$, so that reading the letters along a root-to-leaf path gives one of the words. For each node of $T_{\mathcal C}$ we store its at most $8g$ children in a binary search tree to guarantee $O(\log g)$ time per search or insertion. We build $T_{\mathcal C}$ inductively from a root by inserting each word $\overline{c_i}$ at a time. If the insertion of $\overline{c_i}$ leads to a leaf already in the tree (meaning that the same word already occurred) we increment a counter for that leaf. The construction of $T_{\mathcal C}$ clearly takes $O(\ell\log g)$ time in total. We similarly build the prefix tree $T_{\mathcal C'}$ for $\mathcal C'$ in the same amount of time.
%a deterministic automaton $A$ that recognizes the finite set $\{\overline{c_1}, \ldots, \overline{c_k}\}$ built as follows. The automaton $A$ is a tree whose set of states are the prefixes of the words in $\{\overline{c_1}, \ldots, \overline{c_k}\}$. There is a transition with letter $a \in \{1, \ldots, 8g\}$ from a word $p$ to the word $p a$ if $p a$ is a prefix of some $\overline{c_i}$. This automaton has size $O(\ell)$ and can be constructed inductively in $O(\ell\log g)$ time. At each state of the automaton, the transitions form a subset of $\{1, \ldots, 8g\}$ and one can use red-black trees to store them and guarantee $O(\log g)$ time per search or insertion. Given the automaton $A_i$ for $\{\overline{c_1}, \ldots, \overline{c_i}\}$, we build the one for $A_{i+1}$. We start by finding the longest prefix of $c_{i+1}$ that is a state in $A_i$. If $c_{i+1}$ is already a state in $A_i$ then we register it as a final state. Otherwise, we add a branch to $A_i$ starting from the state of the prefix and reading the remaining suffix of $c_{i+1}$. The construction of $A$ only requires to read each word $c_i$ once and perform a single search or insertion for each letter. It thus takes $O(\ell\log g)$ time.
%
%We now decorate each terminal state of $A$ with the multiplicity of the corresponding curve in the multiset $\mathcal C$. This step is easily done in $O(\ell\log g)$ time as well.
%The analogous automaton $A'$ with multiplicities is computed for the multiset $\mathcal C'$ in the same amount of time.
Now, the prefix trees (with counters) are equal if and only if the corresponding multisets are equal. The comparison of the prefix trees is easily performed in $O(\ell\log g)$ time by traversing them in parallel. Adding the precomputation cost to the for the construction
and comparison of the prefix trees concludes the proof of the theorem.
\end{proof}

\begin{proof}[Proof of Theorem~\ref{thm:H-H'-equivalence}]
Let $K$ and $K'$ be minor kernels of respectively $H$ and $H'$. By Theorem~\ref{thm:minor-kernel} they can be computed in $O(n^3 \log n)$ time. Let $\mathcal C = (c_1, \ldots, c_k)$ and $\mathcal C' = (c'_1, \ldots, c'_{k'})$ be the systems of closed curves corresponding to the medial of $K$ and $K'$ respectively. By construction $\mathcal C$ and $\mathcal C'$ are tight systems of primitive curves and are obtained by sequences of smoothings from the system of closed curves corresponding to the medial of $G$. By Schrijver's correspondence (see Section~\ref{sec:schrijver-s-correspondence}), we have that $\nu_{\mathcal{C}} =\nu_{\mathcal{C}'}$ if and only if $\mu_H=\mu_{H'}$. In other words, after a reduction taking $O(n^3 \log n)$ time, we are left with the problem of deciding whether the tight system of primitive curves $\mathcal C$ and $\mathcal C'$ have the same $\nu$-spectrum. According to Proposition~7 in~\cite{Sch92} this will be the case if and only if $\mathcal C$ and $\mathcal C'$ are freely homotopic up to orientation.

If $k \neq k'$, then $\mathcal C$ and $\mathcal C'$ are not freely homotopic up to orientation. We therefore assume from now on that $k=k'$.
We rely on Theorem~\ref{thm:homotopy-system-of-curves} for the homotopy comparison. To do so, we represent the systems of curves $\mathcal C$ and $\mathcal C'$ as walks on $M(G)$. Let $m$ and $m'$ be number of smoothings of $G$ to obtain respectively $M(K)$ and $M(K')$ from $M(G)$. Then each curve $c_i$ of length $\ell_i$ in $\mathcal C$ and each curve $c'_j$ of length $\ell'_j$ in $\mathcal C'$ is freely homotopic to a walk on $M(G)$ that can be computed in $O(m \ell_i)$ and $O(m' \ell'_j)$ time respectively. Indeed, each edge in $M(K)$ or $M(K')$ corresponds to a walk in $M(G)$ of length respectively $O(m)$ and $O(m')$. The lift of any closed walk in $M(K)$ or $M(K')$ to $M(G)$ is simply the concatenation of these paths.

Since all of $m$, $m'$, $\ell$ and $\ell'$ are smaller than or equal to $n$, representing $\mathcal C$ and $\mathcal C'$ as closed walks in $M(G)$ can be performed in $O(n^2)$ time. We are now in the setup of Theorem~\ref{thm:homotopy-system-of-curves} that shows that the homotopy equivalence up to orientation takes an additional $O(n^2 \log g)$ time. This cost is negligible compared to the $O(n^3 \log n)$ time of the minor kernel computation and concludes the proof of Theorem~\ref{thm:H-H'-equivalence}. 
\end{proof}

\subsection{Computing the Length Spectrum}\label{sec:spectrum-computation}

Let us recall Theorem~\ref{thm:length-spectrum-computation} from the introduction.

\medskip\noindent
{\sc T{\footnotesize HEOREM}~\ref{thm:length-spectrum-computation}.}
{\it Let $G$ be a graph with $n$ edges embedded on a closed oriented surface $S$ of genus $g \ge 2$. After a preprocessing in $O(n^3 \log n)$ time, for every closed walk $c$ of length $\ell$ on the dual of $G$, $\mu_G([c])$ can be computed in $O(g (n+\ell) \log(n+\ell))$ time.
}

\medskip
The question of computing the length spectrum $\mu_{G}$ for a graph $G$ embedded on a closed surface $S$ has been extensively studied by Colin de Verdière and Erickson in~\cite{CdVE10}. They actually show that $\nu_G([c])$ can be computed in $O(g n \ell \log(n \ell))$ time after $O(g n \log g)$ preprocessing time. However, as explained in Section~\ref{ssec:relations-between-spectra}, the $\mu$- and $\nu$-spectra are deduced from one another by changing the underlying graph appropriately. The computations of $\mu_G([c])$ and of $\nu_{G}([c])$ thus reduce to one another at an additional cost of $O(n + \ell)$ that accounts for building the associated graph and rewriting the curve $c$. This additional cost is negligible compared to the $O(g (n+\ell) \log(n+\ell))$ in our result.

Our complexity improvement in Theorem~\ref{thm:length-spectrum-computation} relies on the minor kernel computation, the Schrijver correspondence, and an efficient computation of crossing numbers.
\begin{proof}[Proof of Theorem~\ref{thm:length-spectrum-computation}]
Given the embedded graph $G$, we can compute a minor kernel $K$ of $G$ in $O(n^3 \log n)$ time thanks to Theorem~\ref{thm:minor-kernel}. The medial $M(K)$ of $K$ corresponds to a tight system of primitive curves $\mathcal{C}$. For a walk $c$ in $G^*$ we have that $2 \mu_G([c]) = 2 \mu_K([c]) = \nu_{\mathcal{C}}([c])$. Moreover, $\nu_{\mathcal{C}}([c])$ is the total geometric intersection number between $[c]$ and the curves in $\mathcal{C}$.

We now explain how to transform a closed walk $c$ in $M(G)^*$ into a (freely homotopic) closed walk in $M(K)^*$ in time $O(|c|_G)$. Recall that $M(K)$ is obtained from $M(G)$ by a sequence of smoothings of corners of empty monogons or minimal bigons. Viewed in the dual graph $M(G)^*$, each smoothing amounts to merge some pairs or triples of edges of $M(G)^*$, while leaving the other edges untouched. We thus get a surjective map from the edges of $M(G)^*$ to the edges of the graph after smoothing. Moreover, this map preserves the free homotopy class of curves. The sequence of smoothings leads by composition to a (surjective) map from the edges of $M(G)^*$ to the edges of $M(K)^*$ that preserves free homotopy classes. We can finally apply this map to the edges of $c$ to obtain a freely homotopy closed walk of the same length in $M(K)^*$.

We are now left with the computation of the geometric intersection number between a curve $c$ of length $O(\ell)$ written as a closed walk on $M(K)^*$ and the system of closed curves $\mathcal{C}$. Building on~\cite{DL19}, Dubois~\cite{Dub24} designed an efficient algorithm to solve this question that runs in $O(g (n + \ell) \log (n + \ell))$ time. This concludes the proof.
\end{proof}

\appendix
\section{Area of a Bigon}\label{sec:area-of-bigon}

Let $\mathcal C$ be a system of closed curves in a closed oriented surface $S$ of genus $g \ge 2$. The goal of this appendix is to provide an upper bound on the area of bigons of $\widetilde{\mathcal C}$ in terms of their lengths. Note that contrarily to most of the sections in this article, we do not require for the bigons to be minimal.

\begin{theorem}
\label{area length}
Let $\mathcal C$ be a filling system of closed curves with $n$ vertices on a surface $S$ of genus $g\ge 2$. If $b$ is a bigon of $\widetilde{\mathcal C}$ of length $\ell$ and area $A$, then \[A\le\frac{4g}{4g-6}(n+2-2g)\ell\le\frac{4g}{4g-6}n\ell.\]
\end{theorem}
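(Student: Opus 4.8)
The plan is to use a discrete isoperimetric-type inequality on the combinatorial surface $\widetilde{\mathcal C}$, exploiting the fact that every vertex of $\widetilde{\mathcal C}$ has degree $4$ (it is a $4$-regular graph) and that every face has degree at least $3$ — except that in a filling system faces of degree $1$ (monogons) and $2$ (bigons) could a priori occur, so one must first argue that the relevant subcomplex has faces of degree $\ge 3$, or handle the low-degree faces separately. First I would let $b$ be a bigon of length $\ell$ bounding a disk $\Delta$ in $\widetilde S$, and let $P$ denote the finite planar subcomplex of $\widetilde{\mathcal C}$ consisting of all vertices, edges, and faces contained in $\overline\Delta$. Write $V,E,F$ for its numbers of vertices, edges, and faces, so that $A = F$ (or $F$ minus the outer face, depending on bookkeeping — I would count $A$ as the number of bounded faces). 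Euler's formula gives $V - E + F = 1$ (for the disk, after discarding the outer face, $V-E+A = 1$).

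Next I would set up the degree counts. Every interior vertex of $P$ has degree $4$; the boundary vertices (there are at most $\ell$ of them, in fact exactly $\ell$ counting corners appropriately) have degree $\le 4$ as well. Double counting incidences gives $2E \le 4V$, i.e. $E \le 2V$; more precisely $2E = \sum_v \deg(v) = 4V - (\text{boundary defect})$, and the boundary defect is at least $\ell$ since each of the $\ell$ boundary vertices contributes at least one "missing" half-edge toward the outside of $\Delta$ — actually one should be careful: a boundary vertex of $\widetilde{\mathcal C}$ on $\partial\Delta$ has two edges along the boundary and at most two edges inside, so its degree in $P$ is at most $4$, with equality not necessarily holding; summing, $2E \le 4V - \ell$ roughly, but I would track this constant carefully since the final bound has the specific coefficient $4g/(4g-6)$. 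Then, using that each bounded face has degree $\ge 3$ and each boundary edge borders exactly one bounded face while each interior edge borders two, I get $3A \le 2E - \ell$ (the $\ell$ boundary edges are single-counted). Combining $E \le 2V - \ell/2$-type inequality, $3A \le 2E - \ell$, and $V - E + A = 1$ should yield a linear bound $A \le c\,\ell + (\text{lower order})$ for an explicit constant $c$. The appearance of $g$ and the factor $4g/(4g-6)$ strongly suggests that one actually compares against the ambient combinatorial surface: the whole of $\widetilde{\mathcal C}$ is the preimage of $\mathcal C$, and $\mathcal C$ as a combinatorial surface has $n$ vertices, $2n$ edges, and $n+2-2g$ faces by Euler's formula, and the average face degree of $\mathcal C$ is $4n/(n+2-2g)$; so I would instead argue that the faces of $P$ cannot all be small and bound $A$ by relating $\ell$ (the boundary length) to the total face-degree $\sum_{\text{faces of }P}\deg(f) \ge (4g-6)/(2g)\cdot(\text{something})$ — i.e. derive that the "isoperimetric ratio" of $P$ is controlled by the minimum face degree, which here is $4g/(2g-3)$ for the generic faces coming from a filling system on a genus-$g$ surface, and then handle the finitely many degenerate small faces (monogons/bigons inside $b$) by a separate combinatorial argument or by noting they do not affect the asymptotics.

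The main obstacle I anticipate is getting the constant exactly $4g/(4g-6)$ rather than some other linear constant: this requires being precise about (i) the minimum degree of the faces of $\widetilde{\mathcal C}$ — one must show that in a filling system on a surface of genus $\ge 2$ the faces have degree at least some value tied to $g$, which is where the "$4g$" enters, presumably via the fact that the one-vertex one-face reduction has a single $4g$-gon face; (ii) the correct handling of boundary contributions so that the $(n+2-2g)$ factor (rather than $n$) appears, which means the bound is really $A \le \frac{4g}{4g-6}\cdot\frac{n+2-2g}{1}\cdot\frac{\ell}{?}$ — I would guess the argument normalizes by the number of faces $|F(\mathcal C)| = n+2-2g$ and uses a Gauss–Bonnet / combinatorial curvature accounting $\sum_f (6 - \deg f) + \sum_v (6 - 3\deg v)$-style identity adapted to this $4$-regular situation. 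So the plan in order is: (1) reduce to the finite disk subcomplex $P$ and write Euler's formula; (2) establish the degree inequalities for vertices ($=4$ interior) and faces ($\ge$ explicit bound depending on $g$), being careful with the boundary; (3) combine linearly to extract $A \le \frac{4g}{4g-6}(n+2-2g)\ell$; (4) note $n+2-2g \le n$ for the weaker stated bound. The delicate point throughout is step (2)'s face-degree lower bound and step (3)'s constant-chasing.
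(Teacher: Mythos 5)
Your first line of attack—applying Euler's formula and degree counting directly to the finite subcomplex $P\subset\widetilde{\mathcal C}$ inside the bigon's disk—cannot succeed, and it is worth seeing why. The faces of $\widetilde{\mathcal C}$ have no genus-dependent lower bound on their degree: $\mathcal C$ is just a filling $4$-regular graph, and its faces can be monogons, bigons, triangles, or anything else, independently of $g$. If your degree-counting scheme went through with a universal lower face degree, it would produce a bound of the shape $A\le c\,\ell$ with $c$ independent of $n$ and $g$, which is simply false (take $\mathcal C$ with many tiny faces concentrated where the bigon lies: the enclosed face count scales with $n$, not just $\ell$). The factor $n+2-2g$ in the statement is not lower-order noise—it is the number of faces of $\mathcal C$, and a purely local Euler argument on $P$ has no way to see it.

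You do gesture at the right structural idea near the end (\emph{``the one-vertex one-face reduction has a single $4g$-gon face''}), and indeed that is where the constant $\frac{4g}{4g-6}$ actually lives. But the argument needs to be organized as a genuine two-scale comparison, which your plan conflates. The paper first contracts a spanning tree of $\mathcal C$ and retains only the $2g$ leftover edges $L$ (a tree--cotree decomposition), producing a reduced one-vertex, one-face graph $G_L$; its lift tiles $\widetilde S$ by $4g$-gons. For such a tiling there is a genuine isoperimetric inequality, $|R|\le\frac{1}{4g-6}|\partial R|$, proved not by Euler's formula but by a Dehn-style concentric-rings argument using that the outermost tiles each expose at least $4g-3$ boundary edges. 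One then takes $R$ to be the set of $4g$-gon tiles meeting the (contracted) bigon, bounds $|\partial R|\le 4g\,\ell$ because each boundary tile of $R$ contains a boundary edge of the bigon and has $4g$ sides, and finally multiplies $|R|$ by $n+2-2g$, the number of faces of $\widetilde{\mathcal C}$ inside each $4g$-gon tile. That last multiplication is the step your plan is missing entirely: the area in $\widetilde{\mathcal C}$ is controlled by counting tiles in the coarse $4g$-gon tiling and then refining. Without separating those two scales, no amount of constant-chasing inside $\widetilde{\mathcal C}$ alone will produce both the $4g/(4g-6)$ and the $n+2-2g$.
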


To prove Theorem~\ref{area length}, we rely on the following isoperimetric inequality. Recall that the embedded graph $G$ is reduced if it has a single vertex and a single face.
\begin{lemma}
\label{Rbound}
Let $G$ be a reduced graph cellularly embedded in a closed surface $S$ of genus $g\ge 2$. Let $R$ be a finite set of faces in $\widetilde G$. Then
 the number $|\partial R|$ of boundary edges of $R$ and the number $|R|$ of faces in $R$ satisfies the following inequality
 \[
 |R| \le \frac{1}{4g-6} |\partial R|.
% |\partial R|\ge (4g-6).
 \]
\end{lemma}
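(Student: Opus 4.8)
The plan is to pass to the dual graph and reduce the estimate to Euler's bound $E\le 3V-6$ for simple planar graphs. Since $G$ is reduced, its single face is bounded by a closed walk of length $2|E(G)|=4g$, so \emph{every} face of $\widetilde G$ is a $4g$-gon; dually, every vertex of the dual graph $\widetilde G^{*}$ has degree $4g$. Moreover $\widetilde G^{*}$ is a simple planar graph: it is the universal cover of the (again reduced) dual graph $G^{*}$, that is, the regular $\{4g,4g\}$ tiling of $\widetilde S\cong\R^{2}$, and such tilings are simple graphs. I will use these facts; the simplicity is the only point requiring some care, and I discuss it at the end.

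A set $R$ of faces of $\widetilde G$ is the same thing as a finite set of vertices of $\widetilde G^{*}$. An edge of $\widetilde G$ is a boundary edge of $R$ exactly when precisely one of its two incident faces lies in $R$, i.e.\ when the corresponding dual edge has exactly one endpoint in $R$. Hence $|\partial R|$ is the number of edges of $\widetilde G^{*}$ leaving $R$, and summing the degrees of the vertices of $R$ in $\widetilde G^{*}$ gives
\[
4g\,|R|\;=\;\sum_{w\in R}\deg_{\widetilde G^{*}}(w)\;=\;2\bigl|E(\widetilde G^{*}[R])\bigr|+|\partial R|,
\]
where $\widetilde G^{*}[R]$ denotes the subgraph of $\widetilde G^{*}$ induced on $R$ (each internal edge is counted twice, each boundary edge once). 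Therefore $|\partial R|=4g\,|R|-2\,|E(\widetilde G^{*}[R])|$.

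It remains to bound the number of internal edges $|E(\widetilde G^{*}[R])|$. The graph $\widetilde G^{*}[R]$ is simple and planar with $|R|$ vertices, so when $|R|\ge 3$ Euler's formula gives $|E(\widetilde G^{*}[R])|\le 3|R|-6$ (the cases $|R|\in\{1,2\}$, where this subgraph has at most one edge, are immediate). Substituting, $|\partial R|\ge 4g\,|R|-2(3|R|-6)=(4g-6)|R|+12\ge(4g-6)|R|$, which is exactly the claimed inequality. The only genuinely non-combinatorial input is the simplicity of $\widetilde G^{*}$, equivalently that two $4g$-gons of the tiling share at most one edge and that $\widetilde G$ has no loop; this is classical, but can also be avoided by running the incidence and Euler counts directly on $R$: treat first the case where $R$ is a topological disk using $v-e+f=1$ together with the vertex–face incidence count (each $4g$-gon has $4g$ corners and each boundary vertex carries at least one of them), and reduce the general case to it by filling in all holes of $R$, an operation that only increases $|R|$ and decreases $|\partial R|$. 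I expect this last bookkeeping to be the main obstacle; the heart of the argument is the one-line consequence of $E\le 3V-6$ above.
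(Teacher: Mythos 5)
Your proof is correct but follows a genuinely different route from the paper. The paper proves Lemma~\ref{Rbound} by induction on $|R|$, using Dehn's concentric-ring decomposition of $\widetilde G$ around a root vertex: it picks a face $C$ of $R$ in the outermost occupied ring, observes that $C$ has at least $4g-3$ edges on the ring's outer boundary and shares at most $3$ edges with $R\setminus C$, deduces $|\partial R|\ge|\partial(R\setminus C)|+(4g-6)$, and closes the induction. You instead dualize and fold the whole inequality into the degree-sum identity $4g\,|R|=2\,|E(\widetilde G^{*}[R])|+|\partial R|$ combined with Euler's bound $|E|\le 3|V|-6$ for simple planar graphs; this is shorter, requires no induction, and even yields the slightly sharper $|\partial R|\ge(4g-6)|R|+12$ for $|R|\ge 3$. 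The trade-off is that your argument needs the input that $(\widetilde G)^{*}$ is a \emph{simple} planar graph, i.e.\ that in the $\{4g,4g\}$ tiling of $\widetilde S$ no $4g$-gon is self-adjacent and no two $4g$-gons share more than one edge. That fact is indeed classical for $g\ge 2$ (geometrically from convex hyperbolic fundamental polygons, or combinatorially because the face boundary is a $C'(1/6)$ small-cancellation relator), but it is a nontrivial structural fact about the universal cover that the paper's ring induction sidesteps by using only the local degree count at a single outermost face. Both arguments are sound; yours is the more streamlined once the simplicity of the dual tiling is granted, while the paper's is more self-contained.
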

\begin{proof}
We prove the inequality by induction on $|R|$.
If $|R|=1$, then $R$ contains a single domain and $|\partial R|=4g$ implying the inequality in the lemma.
For the inductive step, assume that $|\partial R'|\ge (4g-6)|R'|$ for all regions $R'$ such that $|R'|\le n$ and consider a region $R$ such that $|R|=n+1$.
Following Dehn~\cite{Sti87}, we fix a vertex $v_0\in R$ and denote by $R_1$ the set of domains incident to $v_0$. We then define annular regions $R_{i}$ inductively, so that $R_i$ is the set of domains not in $R_{i-1}$ but sharing a vertex with $R_{i-1}$ (see Figure~\ref{DehnAlgo}). For every domain $D\in R_i$, $D$ has either one or two vertices in $R_i\cap R_{i+1}$ and has two edges shared with domains in $R_i$. It follows that $D$ has at least $4g-3$ edges on $R_i\cap R_{i+1}$.

\begin{figure}[ht!]
    \centering
	\def\svgwidth{5cm}
	%% Creator: Inkscape 1.2.2 (b0a8486541, 2022-12-01), www.inkscape.org
%% PDF/EPS/PS + LaTeX output extension by Johan Engelen, 2010
%% Accompanies image file '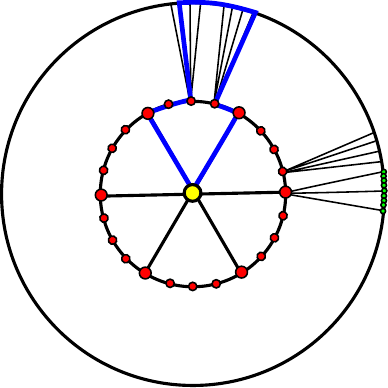' (pdf, eps, ps)
%%
%% To include the image in your LaTeX document, write
%%   \input{<filename>.pdf_tex}
%%  instead of
%%   \includegraphics{<filename>.pdf}
%% To scale the image, write
%%   \def\svgwidth{<desired width>}
%%   \input{<filename>.pdf_tex}
%%  instead of
%%   \includegraphics[width=<desired width>]{<filename>.pdf}
%%
%% Images with a different path to the parent latex file can
%% be accessed with the `import' package (which may need to be
%% installed) using
%%   \usepackage{import}
%% in the preamble, and then including the image with
%%   \import{<path to file>}{<filename>.pdf_tex}
%% Alternatively, one can specify
%%   \graphicspath{{<path to file>/}}
%% 
%% For more information, please see info/svg-inkscape on CTAN:
%%   http://tug.ctan.org/tex-archive/info/svg-inkscape
%%
\begingroup%
  \makeatletter%
  \providecommand\color[2][]{%
    \errmessage{(Inkscape) Color is used for the text in Inkscape, but the package 'color.sty' is not loaded}%
    \renewcommand\color[2][]{}%
  }%
  \providecommand\transparent[1]{%
    \errmessage{(Inkscape) Transparency is used (non-zero) for the text in Inkscape, but the package 'transparent.sty' is not loaded}%
    \renewcommand\transparent[1]{}%
  }%
  \providecommand\rotatebox[2]{#2}%
  \newcommand*\fsize{\dimexpr\f@size pt\relax}%
  \newcommand*\lineheight[1]{\fontsize{\fsize}{#1\fsize}\selectfont}%
  \ifx\svgwidth\undefined%
    \setlength{\unitlength}{186.08860036bp}%
    \ifx\svgscale\undefined%
      \relax%
    \else%
      \setlength{\unitlength}{\unitlength * \real{\svgscale}}%
    \fi%
  \else%
    \setlength{\unitlength}{\svgwidth}%
  \fi%
  \global\let\svgwidth\undefined%
  \global\let\svgscale\undefined%
  \makeatother%
  \begin{picture}(1,0.99832282)%
    \lineheight{1}%
    \setlength\tabcolsep{0pt}%
    \put(0,0){\includegraphics[width=\unitlength,page=1]{DehnAlgo.pdf}}%
    \put(0.46239893,0.40205499){\color[rgb]{0,0,0}\makebox(0,0)[lt]{\lineheight{0}\smash{\begin{tabular}[t]{l}$v_0$\end{tabular}}}}%
    \put(0.10626551,0.59277118){\color[rgb]{0,0,0}\makebox(0,0)[lt]{\lineheight{0}\smash{\begin{tabular}[t]{l}$R_2$\end{tabular}}}}%
    \put(0.62392364,0.81657075){\color[rgb]{0,0,1}\makebox(0,0)[lt]{\lineheight{0}\smash{\begin{tabular}[t]{l}$\tilde{c}$\end{tabular}}}}%
  \end{picture}%
\endgroup%

    \caption{The faces of the tiling are arranged in concentric rings}
    \label{DehnAlgo}
\end{figure}

Let $i$ be the maximal index such that $R_i\cap R\neq\varnothing$ and consider a face $C$ in this intersection. Then, by the induction hypothesis $|\partial(R\backslash C)|\ge (4g-6)n$. By the maximality of $i$ the face $C$ has at least $4g-3$ sides on the external boundary of $R_i$ and $C$ has at most $3$ sides in common with $R\backslash C$. So $|\partial R|\ge |\partial(R\backslash C)|+4g-3-3\ge(4g-6)(n+1)$.
\end{proof}

\begin{proof}[Proof of Theorem~\ref{area length}]
Let $(T,L,C)$ be a tree-cotree decomposition (see~\cite{Epp02}) where $T$ is a spanning tree of $\mathcal C$, $C^*$ is a spanning tree of $\mathcal{C}^*$ the dual of $\mathcal C$ with $T\cap C=\varnothing$ and $L=E\backslash(C\cup T)$ is the set of leftover edges. By Euler's formula, $|L|=2g$. Moreover, $S$ cut through the edges of $T\cup L$ is a topological disk.

Let $G$ be the graph obtained from $\mathcal C$ by contracting all the edges of $T$. Note that this operation preserves the number of faces. Let $b'$ be the image of $b$ through this contraction. Then $b'$ has the same area as $b$ and a smaller length $\ell'\le \ell$. Let $G_L$ be the graph induced from $G$ by $L$. By construction, $G_L$ is a graph with one single vertex and one single face. Let $R$ be the set of all the faces of $G_L$ intersecting $b'$. Moreover, each face of $R$ with an edge of $\partial R$ contains an edge of the boundary of $b'$. So the number of boundary face of $R$ is at most $\ell'$. As each face has $4g$ edges, we get $|\partial R|\le 4g\ell'$.

By Lemma~\ref{Rbound}, $(4g-6)|R|\le |\partial R|\le 4g\ell'$. So $|R|\le \frac{4g}{4g-6}\ell'$. Moreover, as $\mathcal C$ is in general position, each vertex has degree $4$. So by Euler's formula the number of faces of $G$ is $n+2-2g$. As $b'\subset R$ and each face of $\widetilde {G_L}$ contains $n+2-2g$ faces of $\widetilde G$, $A\le (n+2-2g)\cdot|R|\le \frac{4g(n+2-2g)}{4g-6}\ell'$. Recall that $\ell'\le \ell$, we deduce that $A\le \frac{4g(n+2-2g)}{4g-6}\ell\le \frac{4g}{4g-6}n\ell$.
\end{proof}

\section{Isoperimetric Inequality for System of Quads}
In this section we prove an edge isoperimetric inequality for system of quads that is used to bound the complexity of the simplicity test of Section~\ref{sec:simplicity-test}. It is of the same flavor as Theorem~\ref{Rbound} in the previous appendix.

Recall that a system of quads is a quadrangulation of a closed surface $S$ that is obtained from a reduced graph on $S$ by adding a new vertex in the center of its unique face, adding $4g$ edges from this central vertex in order to make a triangulation and removing the $2g$ edges of the initial reduced graph.

\begin{theorem}
\label{th:isoperimetric-quad-system}
Let $Q$ be a system of quads of a surface of genus at least $2$. Let $H$ be a finite subgraph of $\widetilde{Q}$. Then $|E(H)|\le \frac{5}{2} |\partial H|$ where $\partial H$ is the set of boundary edge sides of $H$.
\end{theorem}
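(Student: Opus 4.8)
The plan is to mimic the inductive "onion-peeling" argument of Lemma~\ref{Rbound} (Dehn's algorithm), but carried out on the dual side so that we control edges rather than faces. First I would reduce the statement to a purely combinatorial inequality about the finite subgraph $H$ of $\widetilde Q$: since $H$ is finite and $\widetilde Q$ is a quadrangulation of the plane, I may assume without loss of generality that $H$ is the union of all faces it encloses (adding enclosed faces and their interior edges only increases $|E(H)|$ and does not increase $|\partial H|$, the number of boundary edge-sides). So $H$ is a finite union of quad-faces forming a subcomplex of $\widetilde Q$, possibly with holes, and I want $|E(H)|\le \tfrac52|\partial H|$.

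Next I would set up the concentric-rings decomposition. Pick a root face (or vertex) of $H$ and, exactly as in the proof of Lemma~\ref{Rbound}, define rings $R_1,R_2,\dots$ where $R_i$ consists of the faces of $H$ at combinatorial distance $i$ from the center but not in $R_{i-1}$. The crucial local fact I need is the curvature/degree bound in $\widetilde Q$: since $Q$ comes from a reduced graph of genus $g\ge 2$ by starring the unique $4g$-gon and deleting the original loops, the central vertices of $\widetilde Q$ have degree $4g\ge 8$ while the "old" vertices have degree $4$ (each appearing once around each of the $2g$ arcs), and every face is a quadrilateral. The combinatorial Gauss–Bonnet / nonpositive curvature property then says a face in ring $R_i$ has at most two edges shared with faces of the same or previous ring and hence at least two edges on the outer boundary $R_i\cap R_{i+1}$, and moreover a boundary face of $H$ has at least two of its four sides on $\partial H$. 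I would make this precise by a local analysis of the at most two vertex-types of $\widetilde Q$, exactly the two neighborhood configurations already drawn in Figure~\ref{fig:Qtilde-neighborhoods}.

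Then I would run the induction on the number of faces of $H$. In the base case $H$ is a single quad, $|E(H)|=4$ and $|\partial H|=4$, so $4\le 10$ holds. For the inductive step, let $i$ be the largest index with $R_i\cap H\ne\varnothing$, and remove a face $C$ in that outermost ring: by maximality of $i$, the face $C$ has at least two sides on $\partial H$, it shares at most two sides (hence at most two edges) with $H\setminus C$, and removing it adds at most two new edges to the boundary while deleting at least two old ones — so $|\partial(H\setminus C)|\le |\partial H|$, while $|E(H)|\le |E(H\setminus C)|+2$ (the two, possibly fewer, interior edges of $C$ that become newly boundary or are deleted are among the $\le 2$ edges shared with $H\setminus C$, and the $\ge 2$ boundary sides of $C$ contribute no new edges). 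Combining with the inductive hypothesis $|E(H\setminus C)|\le \tfrac52|\partial(H\setminus C)|$ and tracking the constant $\tfrac52$ carefully yields $|E(H)|\le \tfrac52|\partial H|$. I expect the main obstacle to be bookkeeping the edge count correctly when $C$ meets the rest of $H$ in a disconnected way or along a single vertex only (so that $H\setminus C$ may fail to be the union of the faces it encloses, or may disconnect); to handle this I would, as in Lemma~\ref{Rbound}, choose $C$ more carefully — taking $C$ to be a face on the outermost ring that is also extremal in the boundary cyclic order, guaranteeing it shares at most two consecutive edges with $H\setminus C$ — and, if $H$ disconnects, apply the induction hypothesis to each component separately and sum, using that $\partial H$ and $E(H)$ are additive over components. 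Verifying that these extremal choices always exist in the quadrangulated plane is the one point that needs the nonpositive-curvature structure of $\widetilde Q$ rather than just Euler's formula.
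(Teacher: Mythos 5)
The onion-peeling induction you propose, modeled on Lemma~\ref{Rbound}, does not close here, and the reason is structural rather than a matter of bookkeeping. Lemma~\ref{Rbound} is powered by the fact that the faces of the reduced graph's universal cover are $4g$-gons with $4g\ge 8$: removing an outermost face adds at most $3$ boundary edges and deletes at least $4g-3$, for a net gain of $4g-6\ge 2$ per face. In $\widetilde Q$ every face is a \emph{quadrilateral}, so the analogous count gives $4-3-3=-2$, a loss. Concretely, run your inductive step: if the removed face $C$ shares $j$ of its $4$ edges with $H\setminus C$, then $|E(H)|=|E(H\setminus C)|+(4-j)$ and $|\partial(H\setminus C)|=|\partial H|+(2j-4)$, so the induction hypothesis gives
\[
|E(H)|\le\tfrac52|\partial H|+(4j-6),
\]
which only closes when $j\le 1$. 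Your local analysis only guarantees $j\le 2$, which yields $|E(H)|\le\tfrac52|\partial H|+2$, and $j\le 1$ is not achievable in general: take $H$ to be the full ring of $4g$ faces incident to a single vertex of $\widetilde Q$; every face of that ring shares exactly two edges with the rest, yet the theorem holds for it with slack. (A side remark: your claim that the ``old'' vertices of $\widetilde Q$ have degree $4$ is incorrect --- both vertex classes of a system of quads have degree $4g$; the small parameter that blocks the Dehn-style induction is the \emph{face} degree $4$, not a vertex degree.)

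The hyperbolicity that makes the theorem true lives in the vertex degrees ($4g\ge 8$), not the face degrees, so a one-face-at-a-time peeling cannot see it. The paper's proof extracts it differently: it combines the exact double-counting identity $4|F(H)|+|\partial H|=2|E(H)|$ (every edge side is either on a face of $H$ or on $\partial H$, and each face is a quadrilateral) with the face isoperimetric inequality $|F(H)|\le|\partial H|$ of Claim~\ref{cl:face-isoperimetric}. That claim is proved not by induction but by a discharging argument on the level structure: by Claim~\ref{cl:quad-face-oriented-degrees} (which does use $4g\ge 8$ via the result of Lazarus--Rivaud), each face of $\widetilde Q$ has exactly two adjacent faces at the next level and at most one at the previous level; one then sends weight $2^{-k}$ from each face of $H$ along directed paths of length $k$ to boundary edge sides, and the out-degree two / in-degree at most one immediately give $|F(H)|\le|\partial H|$. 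To salvage your plan you would need either to peel whole annular rings at once with an amortized analysis, or to switch to something with the flavor of the discharging argument.
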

In the following, we assume that $H$ is a non-empty finite subgraph of $\widetilde Q$. We fix a root vertex $v_0$ in $H$ and orient the edges of $\widetilde{Q}$ towards $v_0$. We let $F(H)$ denote the set of faces of $\widetilde{Q}$ such that its 4 adjacent edges belong to $H$ and call its elements the \emph{faces of $H$}. Each side of an edge of $H$ belongs to a face of $H$ or is called a \emph{boundary edge side}. The set of boundary edge sides of $H$ is denoted $\partial H$. We say that a face $f$ of $\widetilde{Q}$ has \emph{level $i$} if its vertices are at distances $(i-1,i,i+1,i)$ from the root $v_0$.

\begin{claim}
\label{cl:quad-face-oriented-degrees}
Each face of $\widetilde{Q}$ has a well defined level $i$ and is adjacent to exactly two faces of level $i+1$ and at most one face of level $i-1$.
\end{claim}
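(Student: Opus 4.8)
The plan is to analyze the local structure of $\widetilde{Q}$ near a face using the bipartite structure and the orientation of edges toward the root $v_0$. Recall that a system of quads is a quadrangulation, so every face is a square with four vertices; since $\widetilde{Q}$ is bipartite, the four vertices of a square alternate between the two color classes, and their distances to $v_0$ alternate in parity. I would first argue that for a square $f$ with vertices $a, b, c, d$ in cyclic order, the distances $d(v_0, \cdot)$ must take the form $(i-1, i, i+1, i)$ for some $i \ge 1$: consecutive vertices differ by exactly $1$ in distance (they are adjacent in the bipartite graph), so the sequence of four distances around the square is a closed walk of $\pm 1$ steps of length $4$ with alternating parity; the only possibilities are $(i-1,i,i-1,i)$ or $(i-1,i,i+1,i)$. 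The first case would force two opposite vertices at distance $i-1$ and the path between them through the square's boundary to be a geodesic of length $2$ passing through a vertex at distance $i$; I would rule this out (or rather, handle it) by appealing to the fact, recalled from~\cite{LR12} earlier in the excerpt, that the outgoing degree of any non-root vertex of $\widetilde{Q}$ is $1$ or $2$, and that when it is $2$ the two outgoing edges form a corner of a square. This structural fact pins down which geodesic configurations can occur and shows the level is well defined as the unique $i$ with distances $(i-1,i,i+1,i)$.

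Next, for the adjacency count, I would fix a face $f$ of level $i$ with its unique closest vertex $a$ at distance $i-1$, its unique farthest vertex $c$ at distance $i+1$, and the two "middle" vertices $b, d$ at distance $i$. Each of the four edges of $f$ is shared with exactly one other face of $\widetilde{Q}$. I would examine the two edges incident to $c$ (namely $bc$ and $cd$): since $c$ is at distance $i+1$, the face on the other side of $bc$ has $b$ (distance $i$) among its vertices and a fourth vertex; reasoning about its level, since it contains an edge from a distance-$i$ vertex to a distance-$(i+1)$ vertex, it is either level $i$ or level $i+1$; I would show it must be level $i+1$ by locating its closest vertex. This gives the two faces of level $i+1$ adjacent to $f$ (one across each edge at $c$). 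The remaining two edges of $f$ are $ab$ and $ad$, incident to the closest vertex $a$ at distance $i-1$; the faces across these either have $a$ as their closest vertex (level $i$) or have a vertex at distance $i-2$ (level $i-1$). The key point is that $a$ has outgoing degree $1$ or $2$: if $a$ has a unique outgoing edge, at most one of the two edges $ab, ad$ can "descend," so $f$ is adjacent to at most one face of level $i-1$; and I would check that when $a$ has outgoing degree $2$, its two outgoing edges are precisely $ab$ and $ad$ (they form the corner of the square $f$), so both $ab$ and $ad$ descend but the single face of level $i-1$ sits on the far side straddling both — giving at most one face of level $i-1$ in all cases.

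The main obstacle I anticipate is the careful bookkeeping in the last step: distinguishing the two cases for the outgoing degree of the closest vertex $a$ and verifying that in the degree-$2$ case the square $f$ is exactly the corner square formed by $a$'s two outgoing edges, so that the face of level $i-1$ is unique rather than two distinct faces. This requires using the precise description of $\widetilde{Q}$-neighborhoods recalled before Algorithm~\ref{Algo edge insertion} (see Figure~\ref{fig:Qtilde-neighborhoods}) and the fact that $Q$ is a quadrangulation coming from a triangulated reduced graph, so squares meeting at a vertex have controlled local combinatorics. Once the level is shown well-defined and the degree analysis at the farthest and closest vertices is done, the claim follows immediately; I would present it as two short lemmas (well-definedness of level, then the adjacency counts) and conclude. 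I expect the whole argument to be about half a page with one supporting figure already present.
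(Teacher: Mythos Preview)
Your strategy is sound, but the analysis at the closest vertex $a$ contains a concrete error. You assert that when $a$ has outgoing degree $2$, ``its two outgoing edges are precisely $ab$ and $ad$.'' This is backwards: $ab$ and $ad$ connect $a$ (at distance $i-1$) to $b,d$ (at distance $i$), so they point \emph{towards} $a$ and are incoming there. The outgoing edges of $a$ go to vertices at distance $i-2$ and do not lie on $f$ at all. The same confusion undermines your degree-$1$ case: whether the face across $ab$ drops to level $i-1$ is governed not by the edge $ab$ itself but by the \emph{other} edge of that face incident to $a$.

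The approach via $a$ can be salvaged as follows. In the cyclic order of edges around $a$ one has $\ldots, ax, ab, ad, ax', \ldots$, where $ax$ (resp.\ $ax'$) is the second edge at $a$ of the face across $ab$ (resp.\ $ad$). That face has level $i-1$ precisely when $x$ lies at distance $i-2$, i.e.\ when $ax$ is outgoing at $a$. Since the at most two outgoing edges of $a$ are consecutive around $a$ (the corner-of-a-square property recalled from~\cite{LR12}), while $ax$ and $ax'$ are separated by the two incoming edges $ab,ad$, at most one of $ax,ax'$ can be outgoing---hence at most one adjacent face of level $i-1$. The paper's own proof reaches the same conclusion by arguing from the middle vertices $b$ and $d$ rather than from $a$, but the structural input is the same consecutive-outgoing-edges fact.
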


\begin{proof}
Because $\widetilde{Q}$ is bipartite, the distances of the vertices of $f$ to the root is either of the form $(i-1,i,i+1,i)$ or $(i-1,i,i-1,i)$ for some $i \ge 1$. The fact that the second form is impossible follows from~\cite{LR12}.
Let now $f$ be a face of level $i$. Using that every vertex $v$ at distance $i$ is adjacent to at most two vertices at distance $i-1$, and that together with $v$ they share a same face, we infer that:
\begin{itemize}
    \item The two edges of $f$ with endpoints at distances $i$ and $i+1$ are adjacent to faces of level $i+1$.
    \item For the two edges of $f$  with endpoints at distances $i$ and $i-1$ we have two possibilities: either both of them are adjacent to faces at level $i$ or one is adjacent to level $i-1$ and the other to level $i$.
\end{itemize}
\end{proof}

\begin{claim}
\label{cl:face-isoperimetric}
$|F(H)| \le |\partial H|$.
\end{claim}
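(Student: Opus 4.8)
The plan is to establish Claim~\ref{cl:face-isoperimetric} by a counting argument that compares faces of $H$ with boundary edge sides, organized by the level structure provided by Claim~\ref{cl:quad-face-oriented-degrees}. The key observation is that each face $f$ of $\widetilde{Q}$ of level $i$ has exactly two ``children'' (faces of level $i+1$ adjacent to it) and at most one ``parent'' (a face of level $i-1$), so the faces of $\widetilde{Q}$ form a forest-like structure under the parent relation, rooted at the (finitely many) faces of level $1$ incident to $v_0$. I would first argue that the faces of $H$ form a subforest: if $f \in F(H)$ has level $i \ge 2$, I want to track whether its parent is also in $F(H)$.

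First I would set up the accounting. For each face $f \in F(H)$, consider its (at most one) parent face $f'$ in $\widetilde{Q}$. If $f' \notin F(H)$, then at least one of the four edges of $f'$ is not in $H$; but since $f$ shares an edge with $f'$ and all four edges of $f$ are in $H$, the edge shared between $f$ and $f'$ is in $H$, so this shared edge contributes a boundary edge side on the $f'$-side. The idea is to injectively charge each such ``orphaned'' face $f\in F(H)$ (whose parent is not in $F(H)$, including the level-$1$ faces which have no parent) to a boundary edge side, and then handle the remaining faces — those whose parent is also in $F(H)$ — by following parent pointers down to an orphaned ancestor, using the fact that each face has only two children to control the multiplicity.

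More precisely, here is the step I expect to carry the weight. Root the forest of faces of $F(H)$ at the orphaned faces. Since each face has at most two children, a subtree of this forest with $k$ leaves-or-orphans at its ``boundary with the complement'' satisfies $|F(H)| \le$ (number of orphaned faces) $+$ (number of faces of $F(H)$ whose parent-edge or some child-edge borders $\partial H$), and I would bound both contributions by $|\partial H|$. Concretely: every face $f\in F(H)$ that is not orphaned has a parent in $F(H)$; walking up, we reach an orphan. The orphans are charged to distinct boundary edge sides as above. For the non-orphans, I would instead observe that a face $f$ of level $i$ with both children in $F(H)$ and parent in $F(H)$ contributes nothing to $\partial H$ through itself, but the binary-tree structure forces $\#\{\text{non-orphans}\} \le \#\{\text{orphans}\} + \#\{\text{faces with a missing child in }F(H)\}$, and a missing child again means the shared edge is a boundary edge side. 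Summing, $|F(H)| \le 2\cdot(\text{something counted by }|\partial H|)$ — and here I would need to be careful that the bound comes out as $|F(H)|\le|\partial H|$ rather than $|F(H)|\le 2|\partial H|$; the sharper constant should follow because each boundary edge side borders exactly one face of $\widetilde{Q}$, and in a binary forest the number of internal nodes is strictly less than the number of leaves, so the ``orphans plus missing-child faces'' already exceeds the non-orphan count, and orphans and missing-child markers both inject into $\partial H$ without overlap (a shared edge between $f\in F(H)$ and $f'\notin F(H)$ is a single edge, giving one boundary side, attributable unambiguously either as ``$f$ is an orphan via parent $f'$'' or ``$f'$'s child $f$ is present while $f'$ is absent'', which are mutually exclusive situations).

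The main obstacle I anticipate is getting the constant exactly right: a naive charging scheme gives $|F(H)|\le 2|\partial H|$, and extracting the factor $1$ requires exploiting that the face-parent graph is a \emph{binary} forest (at most two children), so that the count of ``branch points plus leaves plus orphans'' dominates the total node count, combined with the fact that distinct boundary edge sides are genuinely distinct (no double-counting across the parent-missing and child-missing cases). I would verify this by a clean induction on $|F(H)|$: removing a face $f$ of maximal level from $F(H)$ decreases $|F(H)|$ by $1$ and changes $|\partial H|$ by at least... — here one checks that a maximal-level face has both children outside $F(H)$, hence at least two boundary edge sides attributable to it, and removing it loses at most those while possibly the shared edge with its parent becomes a new boundary side; a short case analysis on whether $f$'s parent survives in $F(H)$ closes the induction. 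This inductive route is likely the cleanest and avoids the global charging bookkeeping.
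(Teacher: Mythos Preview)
Your structural setup is correct: by Claim~\ref{cl:quad-face-oriented-degrees} each face has exactly two neighbours at the next level and at most one at the previous level, so the parent relation makes $F(H)$ into a binary forest. But the two concrete arguments you propose both fall short of the sharp constant. The claim ``in a binary forest the number of internal nodes is strictly less than the number of leaves'' fails whenever some node has exactly one child (a path is a counterexample), and your derived inequality $\#\{\text{non-orphans}\}\le\#\{\text{orphans}\}+\#\{\text{missing-child faces}\}$ fails already when $F(H)$ is a complete binary tree of depth~$2$ (there $6\not\le 1+4$). The induction by removing a maximal-level face also stalls: deleting the two upper edges of $f$ from $H$ loses two boundary sides (the child sides) but creates two new ones (the $f$-sides of the two surviving edges of $f$), so $|\partial H|$ is unchanged while $|F(H)|$ drops by one, and the induction only yields $|F(H)|\le|\partial H|+1$.

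The paper takes a different route. It builds an auxiliary directed graph on $F(H)\sqcup\partial H$ with an arc from each face to each of its two level-$(i{+}1)$ neighbours when that neighbour lies in $F(H)$, and to the corresponding boundary edge side otherwise. This graph has out-degree exactly $2$ at every face and in-degree at most $1$ everywhere. The paper then runs a discharging argument: set $\phi(f)_e=2^{-k}$ when a directed path of length $k$ joins $f$ to $e\in\partial H$; out-degree $2$ gives $\sum_e\phi(f)_e=1$ for each $f$, and in-degree $\le 1$ gives $\sum_f\phi(f)_e\le\sum_{k\ge1}2^{-k}=1$ for each $e$; summing both ways yields $|F(H)|\le|\partial H|$.

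Your forest picture can in fact be repaired with one change: count missing children \emph{with multiplicity} rather than counting faces that have some missing child. Each face contributes two child slots; those filled by a face of $F(H)$ are exactly the forest edges, of which there are $|F(H)|-r$ (with $r$ the number of roots), so the number of empty slots is $2|F(H)|-(|F(H)|-r)=|F(H)|+r$. Distinct empty slots give distinct boundary edge sides (the child-side of the shared edge), hence $|\partial H|\ge |F(H)|+r\ge|F(H)|$. This is just the edge double-count $2|F(H)|=\sum\text{out-deg}=\sum\text{in-deg}\le|F(H)|+|\partial H|$ in the paper's auxiliary graph, and is arguably cleaner than the discharging.
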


\begin{proof}
Our goal is to build a function $\phi: F(H) \to \mathbb{R}^{\partial H}$ such that
\[
\forall f \in F(H), \sum_{e \in \partial H} \phi(f)_e = 1
\quad \text{and} \quad
\forall e \in \partial H, \sum_{f \in F(H)} \phi(f)_e \le 1.
\]
One has to think of $\phi$ as a way of discharging the faces on the boundary edges, namely $\phi(f)_{e}$ should be thought of as the weight $f$ gives to $e$. If such function $\phi$ exists, it implies the desired inequality as by the first item $|F(H)| = \sum_{f \in F(H)} \sum_{e \in \partial H} \phi(f)_e$ and by the second item $|\partial H| \ge \sum_{e \in \partial H} \sum_{f \in F(H)} \phi(f)_e$.

In order to define $\phi$, we use an auxiliary oriented graph on $F(H) \sqcup \partial H$ built as follows. We put an oriented edge from $f \in F(H)$ to $f' \in F(H)$ if $f$ and $f'$ are adjacent and the level of $f'$ is one higher than the one of $f$. We also put an oriented edge from $f \in F(H)$ to $e \in \partial H$ if $e$ is adjacent to $f$ and the face across $e$ has a level one higher than the one of $f$. By Claim~\ref{cl:quad-face-oriented-degrees}, any $f \in F(H)$ has out-degree 2 and in-degree at most 1. Now for $f \in F(H)$ and $e \in \partial H$, we define
\[
\phi(f)_e :=
\left\{ \begin{array}{ll}
2^{-k} & \text{if there is a directed path of length $k$ from $f$ to $e$,} \\
0 & \text{otherwise.}
\end{array} \right.
\]
We claim that $\phi$ satisfies the two requirements. Indeed, because the out-degree is 2, for any $f \in F(H)$
\[
\sum_{e \in \partial H} \phi(f)_e = 1.
\]
Next, because the in-degree is at most one, we have that for any $e \in F(H)$
\[
\sum_{f \in F(H)} \phi(f)_e \le \sum_{n \ge 1} \frac{1}{2^n} = 1.
\]
\end{proof}

We now prove Theorem~\ref{th:isoperimetric-quad-system}. 
We first prove that $4|F(H)|+|\partial H|=2|E(H)|$. This is indeed a simple double counting argument. The quantity $2 |E(H)|$ is the number of edge sides in $H$. Now each edge side is either a boundary or inside a face. Since faces are quadrilaterals, we get 4 edge sides per faces.

Now, applying the inequality $F(H)\le|\partial H|$ from Claim~\ref{cl:face-isoperimetric} we get that $E(H)\le \frac{5}{2}|\partial H|$ proving the theorem.

\section*{Acknowledgments.}
 We would like to thank the referees for their thorough reviews whose numerous comments allowed to improve the presentation of the preliminary version. We also thank Arnaud de Mesmay for pointing to us that~\cite{CdVE10}
 applies to more general graphs than we do.

Vincent Delecroix and Oscar Fontaine are partially supported by ANR MOST (ANR-23-CE40-0020) and ANR CarteEtPlus (ANR-23-CE48-0018).

\bibliography{main.bib}
%\printbibliography

\end{document}